\def\drawx{\draw[-,solid] (-3pt,-3pt) -- (3pt,3pt);\draw[-,solid] (-3pt,3pt) -- (3pt,-3pt);}
\tikzset{
	root/.style={circle, draw=black, fill=white, inner sep=0pt, minimum size=0.7mm},
	%dot/.style={circle,fill=black,inner sep=0pt, minimum size=0.5mm},		
	dot/.style={circle,fill=black,draw=black, solid,inner sep=0pt,minimum size=0.5mm},
	dot1/.style={circle,fill=red,draw=red, solid,inner sep=0pt,minimum size=0.5mm},
	square/.style={rectangle,fill=red,draw=red, solid,inner sep=0pt,minimum size=0.5mm},
	empty/.style={circle,fill=white,draw=white, solid,inner sep=0pt,minimum size=0.5mm},
	var/.style={circle,fill=black!10,draw=black,inner sep=0pt, minimum size=
		2mm},
	symb/.style={circle,fill=symbols,draw=symbols, solid,inner sep=0pt,minimum size=0.5mm},
	yy/.style={circle,fill=gray!20,draw=black,inner sep=0pt,minimum size=0.8mm},
	>=stealth,
	dotred/.style={circle,fill=black!50,inner sep=0pt, minimum size=2mm},
	generic/.style={semithick,shorten >=1pt,shorten <=1pt},
	dist/.style={ultra thick,draw=testcolor,shorten >=1pt,shorten <=1pt},
	testfcn/.style={ultra thick,testcolor,shorten >=1pt,shorten <=1pt,<-},
	testfcnx/.style={ultra thick,testcolor,shorten >=1pt,shorten <=1pt,<-,
		postaction={decorate,decoration={markings,mark=at position 0.6 with {\drawx}}}},
	kprime/.style={semithick,shorten >=1pt,shorten <=1pt,densely dashed,->},
	kprimex/.style={semithick,shorten >=1pt,shorten <=1pt,densely dashed,->,
		postaction={decorate,decoration={markings,mark=at position 0.4 with {\drawx}}}},
	kernel/.style={semithick,shorten >=1pt,shorten <=1pt,->},
	multx/.style={shorten >=1pt,shorten <=1pt,
		postaction={decorate,decoration={markings,mark=at position 0.5 with {\drawx}}}},
	kernelx/.style={semithick,shorten >=1pt,shorten <=1pt,->,
		postaction={decorate,decoration={markings,mark=at position 0.4 with {\drawx}}}},
	kernel1/.style={->,semithick,shorten >=1pt,shorten <=1pt,postaction={decorate,decoration={markings,mark=at position 0.45 with {\draw[-] (0,-0.1) -- (0,0.1);}}}},
	kernel2/.style={->,semithick,shorten >=1pt,shorten <=1pt,postaction={decorate,decoration={markings,mark=at position 0.45 with {\draw[-] (0.05,-0.1) -- (0.05,0.1);\draw[-] (-0.05,-0.1) -- (-0.05,0.1);}}}},
	kernelBig/.style={semithick,shorten >=1pt,shorten <=1pt,decorate, decoration={zigzag,amplitude=1.5pt,segment length = 3pt,pre length=2pt,post length=2pt}},
	rho/.style={dotted,semithick,shorten >=1pt,shorten <=1pt},
	renorm/.style={shape=circle,fill=white,inner sep=1pt},
	labl/.style={shape=rectangle,fill=white,inner sep=1pt},
	xi/.style={circle,fill=symbols!10,draw=symbols,inner sep=0pt,minimum size=1.2mm},
	xix/.style={crosscircle,fill=symbols!10,draw=symbols,inner sep=0pt,minimum size=1.2mm},
	xib/.style={circle,fill=symbols!10,draw=symbols,inner sep=0pt,minimum size=1.6mm},
	xibx/.style={crosscircle,fill=symbols!10,draw=symbols,inner sep=0pt,minimum size=1.6mm},
	not/.style={circle,fill=symbols,draw=symbols,inner sep=0pt,minimum size=0.5mm},
	midarrow/.style={postaction={decorate,decoration={markings,mark=at position 0.6 with {\arrow{>}}}}},
	midarrow1/.style={postaction={decorate,decoration={markings,mark=at position 0.45 with {\arrow{>}}}}},
	midarrow2/.style={postaction={decorate,decoration={markings,mark=at position 0.75 with {\arrow{>}}}}},
	>=stealth,
}
\colorlet{symbols}{blue!90!black}
\def\DeclareSymbol#1#2#3{\expandafter\gdef\csname MH@symb@#1\endcsname{\tikz[baseline=#2,scale=0.15]{#3}}%
	\expandafter\gdef\csname MH@symb@#1s\endcsname{\scalebox{0.6}{\tikz[baseline=#2,scale=0.15]{#3}}}}
\def\<#1>{\csname MH@symb@#1\endcsname}
\definecolor{darkergreen}{rgb}{0.0, 0.5, 0.0}
\numberwithin{equation}{section}
\def\theequation{\arabic{section}.\arabic{equation}}
\newcommand{\be}{\begin{eqnarray}}
	\newcommand{\ee}{\end{eqnarray}}
\newcommand{\ce}{\begin{eqnarray*}}
	\newcommand{\de}{\end{eqnarray*}}
\newtheorem{theorem}{Theorem}[section]
\newtheorem{lemma}[theorem]{Lemma}
\newtheorem{remark}[theorem]{Remark}
\newtheorem{definition}[theorem]{Definition}
\newtheorem{proposition}[theorem]{Proposition}
\newtheorem{Examples}[theorem]{Example}
\newtheorem{corollary}[theorem]{Corollary}
\newcommand{\LL}{\mathscr{L}}
\def\Wick#1{\,\colon\! #1 \colon}
\def\Re{{\mathrm{Re}}}
\def\eps{\varepsilon}
\def\p{\partial}
\def\d{\dif}
\def\la{\langle}
\def\ra{\rangle}
\def\[{{\Big[}}
\def\]{{\Big]}}
\def\({{\Big(}}
\def\){{\Big)}}
\def\bx{{\mathbf{x}}}
\def\tr{\mathrm {tr}}
\def\dif{{\mathord{{\rm d}}}}
\def\min{{\mathord{{\rm min}}}}
\def\no{\nonumber}
\def\={&\!\!=\!\!&}
\newcommand{\gF}{\mathfrak{F}}
\newcommand{\norm}[1]{ \left| \! \left| #1 \right| \! \right| }
\def\bB{{\mathbf B}}
\def\bC{{\mathbf C}}
\def\cD{{\mathcal D}}
\def\cE{{\mathcal E}}
\def\cF{{\mathcal F}}
\def\cG{{\mathcal G}}
\def\cH{{\mathcal H}}
\def\cL{{\mathcal L}}
\def\cM{{\mathcal M}}
\def\cN{{\mathcal N}}
\def\cQ{{\mathcal Q}}
\def\cR{{\mathcal R}}
\def\cS{{\mathcal S}}
\def\cU{{\mathcal U}}
\def\cV{{\mathcal V}}
\def\cW{{\mathcal W}}
\def\cY{{\mathcal Y}}
\def\cZ{{\mathcal Z}}
\def\mB{{\mathbb B}}
\def\mC{{\mathbb C}}
\def\mH{{\mathbb H}}
\def\mM{{\mathbb M}}
\def\mN{{\mathbb N}}
\def\mR{{\mathbb R}}
\def\mT{{\mathbb T}}
\def\mW{{\mathbb W}}
\def\mY{{\mathbb Y}}
\def\mZ{{\mathbb Z}}
\def\bB{{\mathbf B}}
\def\bP{{\mathbf P}}
\def\1{{\mathbf{1}}}
\def\sA{{\mathscr A}}
\def\sD{{\mathscr D}}
\def\sH{{\mathscr H}}
\def\sI{{\mathscr I}}
\def\sJ{{\mathscr J}}
\def\sR{{\mathscr R}}
\def\sV{{\mathscr V}}
\def\sZ{{\mathscr Z}}
\def\E{\mathbf E}
\newcommand{\bT}{\mathbb{T}}
\newcommand{\bH}{\mathbb{H}}
\newcommand{\bZ}{\mathbb{Z}}
\newcommand{\bA}{\mathbb{A}}
\newcommand{\bW}{\mathbb{W}}
\newcommand{\gH}{\mathfrak{H}}
\newcommand{\ii}{\infty}
\renewcommand{\ge}{\geqslant}
\renewcommand{\le}{\leqslant}
\newcommand\dGamma{{\rm d}\Gamma}
\def\geq{\geqslant}
\def\leq{\leqslant}
\def\ge{\geqslant}
\def\le{\leqslant}
\def\C{\mathbb{C}}
\def\iint{\int\!\!\!\int}
\def\Re{{\mathrm{Re}}}
\def\eps{\varepsilon}
\def\p{\partial}
\def\d{\dif}
\def\la{\langle}
\def\ra{\rangle}
\def\[{{\Big[}}
\def\]{{\Big]}}
\def\({{\Big(}}
\def\){{\Big)}}
\def\bx{{\mathbf{x}}}
\def\tr{\mathrm {Tr}}
\def\dif{{\mathord{{\rm d}}}}
\def\min{{\mathord{{\rm min}}}}
\def\no{\nonumber}
\def\={&\!\!=\!\!&}
\def\bt{\begin{theorem}}
	\def\et{\end{theorem}}
\def\bl{\begin{lemma}}
	\def\el{\end{lemma}}
\def\br{\begin{remark}}
	\def\er{\end{remark}}
\def\bx{\begin{Examples}}
	\def\ex{\end{Examples}}
\def\bd{\begin{definition}}
	\def\ed{\end{definition}}
\def\bp{\begin{proposition}}
	\def\ep{\end{proposition}}
\def\bc{\begin{corollary}}
	\def\ec{\end{corollary}}
	\newcommand{\dG}{\mathrm{d}\Gamma}
	\newcommand{\gesim}{\gtrsim}
	\newcommand{\lesim}{\lesssim}
	\newcommand{\ssim}{\backsimeq}
	\newcommand{\cZl}{\cZ_\lambda}
	\newcommand{\Gammat}{\widetilde{\Gamma}}
\newcommand{\cHcl}{\cH_{\rm cl}}
\def\geq{\geqslant}
\def\leq{\leqslant}
\def\ge{\geqslant}
\def\le{\leqslant}
\def\iint{\int\!\!\!\int}
\def\bH{{\mathbb H}}
\def\bW{{\mathbb W}}
\def\bP{{\mathbf P}}
\def\bT{{\mathbb T}} \def\R{\mathbb R}\def\nn{\nonumber}
 \def\R{\mathbb R}    
\def\N{\mathbb N}
\DeclareMathOperator{\Tr}{{\rm Tr}}
\begin{document}

\title{$\Phi^4_3$ theory from many-body quantum Gibbs states}

%\title{The $\Phi^4_3$ theory as a limit of many-body quantum Gibbs states}

\author{Phan Th\`anh Nam}
\address[P. T. Nam]{LMU Munich, Department of Mathematics, Theresienstrasse 39, 80333 Munich, Germany}
\email{nam@math.lmu.de}

\author{Rongchan Zhu}
\address[R. Zhu]{Department of Mathematics, Beijing Institute of Technology, Beijing 100081, China}
\email{zhurongchan@126.com}

\author{Xiangchan Zhu}
\address[X. Zhu]{ Academy of Mathematics and Systems Science,
	Chinese Academy of Sciences, Beijing 100190, China}
\email{zhuxiangchan@126.com}
\date{\today}

\begin{abstract} We derive the $\Phi^4_3$ measure on the torus as a rigorous limit of the quantum Gibbs state of an interacting Bose gas. To be precise, starting from many-body quantum mechanics, where the problem is linear and regular but involving non commutative operators, we justify the emergence of the $\Phi^4_3$ measure as a semiclassical limit which captures the formation of Bose--Einstein condensation just above the critical temperature. 	We employ and develop several tools from both stochastic quantization and many-body quantum mechanics. Since the quantum problem is typically formulated using a nonlocal interaction potential, our first key step involves approximating the $\Phi^4_3$ measure through a Hartree measure with nonlocal interaction, achieved by developing new techniques in paracontrolled calculus. The connection between the quantum problem and the Hartree measure emerges through a  variational interplay between classical and quantum models.

\end{abstract}

\maketitle

\setcounter{tocdepth}{1}
\tableofcontents

\section{Introduction}\label{sec:defs}

The (complex) $\Phi^4_3$ measure on a domain $D\subset \R^3$  is a probability measure over complex-valued distributions $\Phi:D\to \mathbb{C}$ which is formally defined by
\begin{align}\label{eq:Phi-measure1-intro}
	\d\nu(\Phi) = \frac{1}{\sZ}\exp\bigg(-\int_{D} \Big( |\nabla \Phi(x)|^2+m_0 |\Phi(x)|^2 + \frac12 |\Phi(x)|^4 \Big) \d x \bigg) \cD \Phi
\end{align}
with a given parameter $m_0\in \R$, where $\cD \Phi=\prod_{x\in D}\dif \Phi(x)$ is the formal Lebesgue measure on the space of fields. This  formula is purely formal since all the relevant terms, namely the kinetic energy $\int_{D} |\nabla \Phi(x)|^2 \d x $, the mass  $\int_{D} |\Phi(x)|^2 \d x$ and the interaction energy $\int_{D}|\Phi(x)|^4 \d x$, are {\em infinite} almost surely in the support of the measure. Therefore, $\nu$ must be defined via an appropriate renormalization procedure.

Historically,  the measure in \eqref{eq:Phi-measure1-intro} is a typical example of a class of nonlinear Gibbs measures which were first defined in the 1960s and 1970s  in the context of \emph{Constructive Quantum Field Theory} ~\cite{Symanzik-66,Nelson-73,Simon-74,GliJafSpe-74,GueRosSim-75,GliJaf-87}. In this direction, the Gibbs measures serve as a tool to construct interacting quantum fields. One of the main achievements of the constructive quantum field theory is the mathematically rigorous construction of the $\Phi^4_3$ field in the Euclidean framework.

Afterwards, starting from the work of Lebowitz, Rose, and Speer \cite{LebRosSpe-88}, and a series of papers by Bourgain \cite{Bou,Bourgain-96,Bourgain-97}, the same Gibbs measures have been used to study
{\em nonlinear Schr\"odinger equations} (NLS). Heuristically, since $\nu$ in  \eqref{eq:Phi-measure1-intro} is formally the invariant measure of the cubic NSL,
it is helpful to establish the well-posedness with low regular data;  see  \cite{Tzvetkov-08,ThoTzv-10,BurThoTzv-13,BouBul-14a,BouBul-14b,CacSuz-14,OhTho-18, DNY19, DNY22} for more recent developments.

In another direction, in 1981, Parisi and Wu \cite{PW81} introduced a framework for Euclidean quantum field theory that seeks to obtain Gibbs states of classical statistical mechanics as limiting distributions of stochastic processes, particularly through solutions to nonlinear stochastic differential equations. These stochastic partial differential equations (SPDEs) can then be employed to study the properties of Gibbs states, a procedure known as  {\em stochastic quantization}. We refer to \cite{JLM85, AR91, DD03, Hai14,Kup16,RocZhuZhu-16,TsaWeb-18,MW17, MW18, CC15, GH18, GH18a} for some further works in this approach. In particular, for the $\Phi^4_d$ measure, the corresponding stochastic quantization equation is referred to as the dynamical $\Phi^4_d$ model.

Initiated by \cite{LebRosSpe-88}, the idea of using Gibbs measures is also widely applied in {quantum statistical mechanics}; see, e.g.,  \cite{ZinnJustin-89,Cardy-96,ZinnJustin-13,BauBrySla-19}. In particular, the $\Phi^4_3$ measure is closely related to the description of the \emph{Bose--Einstein phase transition} \cite{ArnMoo-01,BayBlaiHolLalVau-99,BayBlaiHolLalVau-01,HolBay-03,KasProSvi-01}.

The aim of the present paper is to provide a rigorous derivation of the $\Phi^4_3$ measure by making a link between two areas: stochastic partial differential equations and the physics of Bose gases. Starting from many-body quantum mechanics, where the problem is  linear and regular but involving non commutative operators, we will justify the emergence of the $\Phi^4_3$ as a semiclassical limit which captures the formation of the Bose--Einstein condensation just above the critical temperature, thus resolving a natural question  raised from a series of works  \cite{LewNamRou-15,FroKnoSchSoh-17,LewNamRou-18,FroKnoSchSoh-19,LewNamRou-21,FKSS22,FKSS23}. By developing techniques from stochastic quantization and many-body quantum mechanics, and in particular leveraging an interplay between classical and quantum models, we derive uniform estimates that are crucial to pass from the quantum setting involving a non-local interaction to the model \eqref{eq:Phi-measure1-intro} with a local interaction.

More precisely, we will study an interacting Bose gas on the torus $\bT^3$ in the grand canonical ensemble, which is described by the quantum Gibbs state $\Gamma_\lambda=\cZ_\lambda^{-1}e^{-\bH_\lambda}$ on the bosonic Fock space $\gF=\gF(L^2(\bT^3))$ with
the Hamiltonian
\begin{align*}
	\bH_\lambda = 0 \oplus (\lambda  (-\Delta -\vartheta) )  \bigoplus_{n=2}^\infty  \left( \lambda \sum_{j=1}^n (-\Delta_{x_j} -\vartheta) + \lambda^2 \sum_{1\le j<k\le n} v^\eps(x_j-x_k) \right).
\end{align*}
We can further write it via the second quantization formalism as follows:
\begin{align}\label{eq:many body hamil-x-intro}
	\bH_\lambda = \lambda \int_{\bT^3}  a_x^* (-\Delta_x - \vartheta) a_x  \d x+ \frac{\lambda^2}{2} \int_{\bT^3\times \bT^3}  v^\eps(x-y) a_x^* a_y^*  a_x a_y \, \d x \d y,
\end{align}
%Here we use the second quantization formalism
 where  the creation and annihilation operators $a_x^*,a_x$ satisfy the canonical commutation relations (CCR)
\begin{equation}\label{eq:CCR-x}
	[a_x, a_y] = 0 = [a^*_x,a^*_y], \quad [a_x,a^*_y] = \delta_0(x-y),\quad  \forall x,y\in \bT^3.
\end{equation}
We will take $\lambda\to 0^+$ as the inverse temperature and choose $\eps=\eps(\lambda) \to 0^+$ so that $v^\eps$ converges to the Dirac delta function $\delta_0$. Heuristically,  the reader may also think of $\lambda$ as a semiclassical parameter, and interpret $\sqrt{\lambda} a_x$ and $\sqrt{\lambda} a_x^*$ as  the second quantized version of $\Phi(x)$ and $\overline{\Phi(x)}$ in the classical field theory. In this way, $\bH_\lambda$ formally gives rise to the energy functional in a $\Phi^4_3$ measure.

The important parameter $\vartheta \in \mathbb{R}$, called the {\em chemical potential}, is used to adjust the number of particles (or equivalently, the density) of the system.  We will choose the chemical potential such that the number of particles in the zero-momentum mode is comparable to that in every nonzero momentum mode, which is appropriate to derive the classical field theory and essentially places the Gibbs state just slightly above the critical point of the Bose--Einstein phase transition. In particular, one of our main tasks is to adjust $\vartheta$ in order to correctly capture the renormalization required to define the classical $\Phi^4_3$ measure.

We are interested in the following question: {\em Is it possible to derive the $\Phi^4_3$ theory as a limiting description of the quantum Gibbs state $\Gamma_\lambda$?}  In statistical physics, the  macroscopic properties of a system are typically characterized by its volume, temperature, and number of particles. Here we have already chosen the first two parameters, so it is desirable to show that the nonlinear measure in \eqref{eq:Phi-measure1-intro} can be obtained  under an appropriate choice of the  chemical potential $\vartheta$. This type of question, in the more general context of nonlinear Gibbs measures,  was proposed in \cite{LewNamRou-15} and further studied in \cite{FroKnoSchSoh-17,LewNamRou-18,FroKnoSchSoh-19,LewNamRou-21,FKSS22,FKSS23}. While these works cover many important cases, including $\Phi^4_1$ theory \cite{LewNamRou-15,FroKnoSchSoh-17,LewNamRou-18,FroKnoSchSoh-19}, Hartree-type measures with  non-local interactions in 2D and 3D \cite{LewNamRou-21,FKSS22}, and $\Phi^4_2$ theory in  \cite{FKSS23}, the derivation of the $\Phi^4_3$ theory remains open. Our goal is to resolve this issue. As already argued in \cite[Section 1.3]{FKSS22}, we believe that this result reveals a profound duality between Euclidean field theory and interacting Bose gases, generating reciprocal conjectures: established results for Bose gases inspire new insights into $\Phi^4_3$ theories, while known properties of $\Phi^4_3$ theories suggest previously unrecognized features of Bose gases. In particular, the presence of a phase transition in $\Phi^4_3$ field theory (see \cite{FSS76}) strongly suggests that Bose-Einstein condensation arises in three-dimensional, translation-invariant Bose gases with repulsive two-body interactions, and we hope that our approach will stimulate further contributions to the mathematical theory of Bose-Einstein phase transitions.

On the mathematical side, our derivation of the $\Phi^4_3$ theory requires several tools from both stochastic quantization and many-body quantum mechanics. Since the quantum problem is typically formulated using a nonlocal interaction potential, our first key step involves approximating the $\Phi^4_3$ measure through a Hartree measure with nonlocal interaction, achieved by developing techniques in paracontrolled calculus. The connection between the quantum problem and the Hartree measure emerges through a deep variational interplay between classical and quantum models. Below we provide further details about these key ingredients.

The main challenge in deriving the $\Phi^4_3$ theory, compared to previous works \cite{LewNamRou-15,FroKnoSchSoh-17,LewNamRou-18,FroKnoSchSoh-19,LewNamRou-21,FKSS22,FKSS23}, is that Wick renormalization alone is insufficient. In particular, determining the correct counterterm is a subtle task. As a first step, we will study a simplified version of this problem at the classical field theory level. We will show that the $\Phi^4_3$ measure can be approximated by the Hartree measure
\begin{align}\label{eq:Phi_eps-measure1-intro}
	\d\nu^\eps(\Psi) = \frac{1}{\sZ_\eps}\exp\bigg(&-\int_{\mathbb T^3} (|\nabla \Psi(x)|^2+ m_\eps |\Psi(x)|^2) \d x
	-\frac12\int :\!{|\Psi(x)|^2}\!: v^\eps(x-y) :\! {|\Psi(y)|^2}\!: \d x \d y \bigg) \cD \Psi
\end{align}
in the limit $\eps \to 0$, under a suitable choice of $m_\eps \to \infty$. Here, $v^\eps(x)=\eps^{-3}v(\eps^{-1}x)$ is a nonlocal version of the delta function $\delta_0$, and $:\!{|\Psi(x)|^2}\!:$ is formally defined as $|\Psi(x)|^2 - \langle |\Psi(x)|^2 \rangle_{\mu_0}$, the Wick renormalization of $|\Psi(x)|^2$ with the (complex) Gaussian free field $\mu_0$ with covariance $(-\Delta+1)^{-1}$.

For every fixed $\eps>0$, the Hartree measure \eqref{eq:Phi_eps-measure1-intro} is absolutely continuous with respect to $\mu_0$, and its construction is standard. For example, this measure was constructed by Bourgain \cite{Bourgain-97} as the invariant measure for the Hartree (also called the Gross-Pitaevskii) equation
$$
\imath\partial_t u = -(\Delta-m_\eps) u + (v^\eps* |u|^2)u.
$$
Recently, the Hartree measure has been derived from the Gibbs state of quantum Bose gases through two independent methods by Lewin, Nam and Rougerie \cite{LewNamRou-21}, and Fr\"ohlich, Knowles, Schlein, and Sohinger \cite{FKSS22}.

On the other hand, in the present paper we have to deal with the limit $\eps\to 0$ of the Hartree measure. Since the $\Phi^4_3$ measure is {\em singular} with respect to the Gaussian free field $\mu_0$, the limit of the Hartree measure \eqref{eq:Phi_eps-measure1-intro} is well-defined only under a precise choice of $m_\eps$. The first contribution of $m_\eps$ is $a^\eps = \int_{\bT^3} v^\eps(y) G(y)  \d y \backsimeq \eps^{-1}$, where $G(x)$ is the Green's function of $-\Delta + 1$, which behaves as $(4\pi |x|)^{-1}$ as $|x| \to 0$ (see, e.g., \cite[Lemma 5.4]{RouSer-16}). The counterterm $a_\eps$ still arises from Wick renormalization, and it must be corrected by an additional counterterm $-6b_\eps \backsimeq \log \eps$, which will be given in detail later. Combining these, we find that the corresponding Hartree measure
\begin{equs}[e:Phi_eps-measure1-intro]
	\dif\nu^\eps(\Psi)= \frac{1}{\sZ_\eps}\exp\bigg(&-\int_{\mathbb T^3} (|\nabla \Psi|^2+m |\Psi|^2) \,\dif x
	-\frac12\int \Wick{|\Psi(x)|^2}v^\eps(x-y)\Wick{|\Psi(y)|^2}\,\dif x\dif y\no
	\\&+{(a^\eps- 6b^\eps)\int_{\mathbb T^3} \Wick{|\Psi(x)|^2}\,\dif x}\bigg)\mathcal D \Psi
\end{equs}
converges to a $\Phi^4_3$ measure in a suitable sense.

At this point, an interesting aspect is the appearance of a further correction of order $1$ to the mass coefficient $m$ in the limiting $\Phi^4_3$ measure. In fact, this correction would disappear if we take the alternative choice
\begin{equs}[e:Phi_eps-measure-intro]
	\dif\widetilde\nu^\eps(\Psi) =\frac{1}{\widetilde\sZ_\eps}\exp\bigg(&-\int_{\mathbb T^3} (|\nabla \Psi|^2+m |\Psi|^2) \,\dif x
	-\frac12\int \Wick{|\Psi(x)|^2}v^\eps(x-y)\Wick{|\Psi(y)|^2}\,\dif x\dif y\no
	\\&+{\int\Re(\overline{\Psi} (x) v^\eps(x-y) G(x-y) \Psi(y))\,\dif y\dif x-6b^\eps\int_{\mathbb T^3} \Wick{|\Psi(x)|^2}\dif x}\bigg)\mathcal D \Psi,
\end{equs}
with the nonlocal counterterm $\int\Re(\overline{\Psi} (x) v^\eps(x-y) G(x-y) \Psi(y))\,\dif y\dif x$. However, only \eqref{e:Phi_eps-measure1-intro} allows us to make a rigorous connection to the many-body quantum problem (since we can only alter the chemical potential).

Mathematically, defining the $\Phi^4_3$ measure from the Hartree measure $\nu^\eps$ requires that the measure is tight as $\eps\to 0$, whose proof turns out to be  quite demanding. Our first main contribution is to
prove this uniform bound and related estimates for $\nu^\eps$  by using the stochastic quantization approach. More precisely, we will employ the fact that the Hartree measure $\nu^\eps$ in  \eqref{eq:Phi_eps-measure1-intro} is the invariant measure of the equation on $\mT^3$
\begin{equs}[eq:mainnew-intro]
	(\p_t - \Delta + 1)  \Psi^\eps= -( v^\eps*\Wick{|\Psi^\eps|^2})\Psi^\eps+{(a^\eps-6b^\eps+1-m) \Psi^\eps}+\xi,
\end{equs}
where $\xi$ denotes complex-valued space-time white noise on a probability space $(\Omega, \mathcal{F}, \mathbf{P})$.

The construction of $\Phi^4_3$ theory by SPDE method has been a subject of many works in the last decades. In two spatial dimensions, the dynamical $\Phi_2^4$ model  was previously analyzed in \cite{JLM85, AR91, DD03, MW17}. However, the more irregular three-dimensional case ($\Phi_3^4$) remained unsolved for much longer, as it required fundamentally new ideas. A breakthrough was achieved with Hairer's theory of regularity structures \cite{Hai14}, which for the first time gave meaning to the dynamical $\Phi^4_3$ model. Now, local well-posedness for the dynamical $\Phi^4_3$ model can also be established using other methods such as paracontrolled calculus which was proposed by Gubinelli, Imkeller, and Perkowski \cite{GIP15} and applied to $\Phi^4_3$ model by Catellier and Chouk \cite{CC15}, renormalization group methods \cite{Kup16, Duc21, DGR24}, or the diagram-free approach \cite{LOT21, LOTT21, OSSW18, OW19}. These theories enable the treatment of a wide class of singular subcritical SPDEs (see \cite{BHZ19, CH16, BCCH21}), including the Kardar--Parisi--Zhang (KPZ) equation, the generalized parabolic Anderson model, and the stochastic quantization equations for quantum fields such as Yang--Mills model, Sine-Gordon model (see \cite{Hai14, CC15, GP17, CCHS20, CCHS22, CHS18, GM24, She21, BC23, BC24, BC24a} and references therein).

For the dynamical $\Phi^4_3$ model, global well-posedness has been established due to the strong damping term $-\Phi^3$, as demonstrated in \cite{MW18, AK17, GH18, MW20, JP21}. Additionally, a new PDE-based construction of the $\Phi^4_3$ field was developed in \cite{GH18a}. Recent progress has also been made in the construction of subcritical $\Phi^4$ fields \cite{DGR24}, the Abelian-Higgs model \cite{BC24}, the Sine-Gordon model \cite{CFW24, BC24a}, and the large $N$ limit of the $O(N)$ linear sigma model \cite{SSZZ20, SZZ21, SZZ23}, all through stochastic quantization.

We also mention the variational approach developed by Barashkov and Gubinelli in \cite{BG18}. Based on this technique the Hartree-type classical field for general potential $v$ has been constructed in \cite{Bri, OOT24}. Since the variational approach is very helpful in the quantum problem \cite{LewNamRou-15,LewNamRou-18,LewNamRou-21}, we also apply and develop this approach to give uniform bounds on the partition functions of the classical model, which also helps to derive sharper bounds for the partition functions of the quantum Gibbs state and improve the constraints on the parameters $\lambda$ and $\eps$.

Now we go back to the quantum model \eqref{eq:many body hamil-x-intro}. Given the insights from the classical field theory, we will choose the chemical potential in $\bH_\lambda$ as
\begin{equs}[def:gamma-intro]
	\frac{\zeta\left(\frac{3}2\right)}{(4\pi)^{3/2}} \lambda^{-1/2}+ a^\eps - 6b^\eps
\end{equs}
with $\zeta$ being the Riemann zeta function, plus a correction of order 1 to adjust the shift of the mass term in the limiting $\Phi^4_3$ measure. Here $(4\pi)^{-3/2} \zeta\left(\frac{3}2\right)=(2\pi)^{-3} \int_{\R^3} (e^{|p|^2}-1)^{-1}$ is the critical density of the ideal gas, which goes back to the computation of  Einstein \cite{Einstein-24}.

Using the variational approach in \cite{LewNamRou-15,LewNamRou-18,LewNamRou-21}, we  compare the correlation functions of $\Gamma_\lambda=\cZ_{\lambda}^{-1}e^{-\bH_\lambda}$ with those of the Hartree measure in \eqref{e:Phi_eps-measure1-intro}. In this aspect, we will use another characterization of the Hartree measure, namely it is the unique minimizer for the variational problem
\begin{equation} \label{eq:zr-rel-intro}
	-\log \sZ_\eps = \min_{\substack{\nu \text{~proba. meas.}\\ \nu ' \ll\mu_0}} \left\{ \cH_{\rm cl}(\nu', \mu_0) + \int \cD[u]\,\d\nu'(u)\right\}
\end{equation}
where
$$ \cH_{\rm cl} (\nu',\mu_0)\eqdef \int \frac{\d\nu'}{\d\mu_0}(u)\log\left(\frac{\d\nu'}{\d \mu_0 }(u)\right)\,\d\mu_0(u) \ge 0 $$
is the classical relative entropy between the probability measure $\nu'$ and the Gaussian free field $\mu_0$, and
\begin{equation} \label{eq:cDu1}
	\cD[u] = \frac12\int_{\bT^3\times \bT^3} :\!{|u(x)|^2}\!: v^\eps(x-y) :\! {|u(y)|^2}\!: \d x \d y  - {(a^\eps-6 b^\eps-m+1)\int_{\mathbb T^3} :\!{|u(x)|^2}\!: \d x}
\end{equation}
is the renormalized interaction associated with \eqref{e:Phi_eps-measure1-intro}.  This classical variational problem can be related to its quantum analogue in finite dimensions by using semiclassical techniques. Heuristically, the effect of the interaction should be only visible in low momenta. In high momenta, the particles move too fast and they do not really interact with the others, and hence the behavior of the interacting system is simply comparable to the non-interacting one. The main mathematical challenge in this approach is justifying the finite dimension reduction, namely removing the contribution from high momenta.

For the interacting Gibbs state $\Gamma_\lambda$, controlling the interaction contribution from high momenta requires second-order moment estimates on the Gibbs state $\Gamma_\lambda$. This task cannot be achieved by a standard perturbation method (since we cannot deal with a perturbation that is not bounded by the original Hamiltonian). The key idea from  \cite{LewNamRou-21} is that second-order moment estimates on the Gibbs state $\Gamma_\lambda$ can be obtained via first-order moment estimates on a family of perturbed Gibbs states, and the latter can be handled by a Hellmann--Feynman argument.  Rigorously, this was captured by an abstract correlation inequality in  \cite[Theorem~7.1]{LewNamRou-21}, where second-order moments of $\Gamma_\lambda$ are estimated via Duhamel two-point functions with perturbations. In the present paper, we will use two new correlation inequalities from the recent work of Deuchert, Nam, and Napirkowski \cite{DeuNamNap-25},  one of which is a considerably improved version of the previous one in \cite{LewNamRou-21} and is built on a sharp estimate of Duhamel two-point functions via Stahl's theorem \cite{Sta13} (also known as the Bessis--Moussa--Villani conjecture).

Note that if we assume $|\log \lambda|^{-\eta}\le  \eps \to0 $ for a constant  $0<\eta<1/2$, a condition similar to the one used in the derivation of the $\Phi^4_2$ measure in \cite{FKSS23}, then the analysis simplifies greatly. In this case, the techniques in \cite{LewNamRou-21} and the new tools in  \cite{DeuNamNap-25} allow us to derive straightforwardly the convergence of the free energy, and obtain the convergence of the correlation functions in the strong Hilbert--Schmidt topology:
\begin{align}
	\lim_{\lambda\to 0}\Tr \Big(\Big| n! \lambda^n \Gamma_\lambda^{(n)} - \int |\Psi^{\otimes n}\rangle \langle \Psi^{\otimes n}| \d\nu^\eps(\Psi)\Big|^2 \Big) =0, \quad \forall n\ge 1
\end{align}
provided that $\eps=\eps(\lambda)\to 0$ is chosen appropriately. Putting together with the convergence from the Hartree measure to the $\Phi^4_3$ theory, we thus conclude the proof.

In the present paper, we will consider the more general case where $ \lambda^{\eta}\le  \eps$ for a sufficiently small constant $\eta>0$. This condition is much more physically relevant, but requires substantial improvements over the approach in \cite{LewNamRou-21}. In particular, we will need a bootstrap argument: first, we combine the analysis of the quantum problem in the simpler case where $|\log \lambda|^{-\eta}\le  \eps \to0$ with the variational approach from \cite{BG18} to deduce quantitative approximations for the classical measures. This provides an interesting application of quantum methods to obtain information about classical problems. Then we consider the general case $ \lambda^{\eta}\le  \eps$ and derive the convergence of quantum correlation functions via purely the analysis of de Finetti measures, since the uniform Hilbert-Schmidt estimates for the correlation functions from \cite{LewNamRou-21} are no longer available. As a consequence, our analysis also applies to two-dimensional cases, thereby extending the derivation of the $\Phi^4_2$ measure in \cite{FKSS23} to a significantly larger class of potentials.

The precise mathematical setting and statements will be given in the next section.

\bigskip
\noindent
{\bf Acknowledgments.}  P.T.N. would like to thank Nguyen Viet Dang, Mathieu Lewin, and Nicolas Rougerie for various helpful discussions, and thank Quoc Hung Nguyen for his warm hospitality during a visit to Chinese Academy of Sciences in 2024, when the project was initiated. P.T.N. was  supported by the European Research Council via the ERC Consolidator Grant RAMBAS (Project No. 10104424).  R.Z. and X.Z. are grateful to the financial supports by National Key R\&D Program of China (No. 2022YFA1006300) and the financial supports of the NSFC (No. 12426205).
R.Z. is grateful to the financial supports of the NSFC (No. 12271030). X.Z. is grateful to the financial supports in part by National Key R\&D Program
of China (No. 2020YFA0712700) and the NSFC (No. 12090014, 12288201) and the support by key Lab of Random
Complex Structures and Data Science, Youth Innovation Promotion Association (2020003), Chinese Academy of
Science and the financial supports  by the Deutsche Forschungsgemeinschaft (DFG, German Research Foundation) – Project-ID 317210226--SFB 1283.
\section{Main results and Structure of the proof}

\subsection{Main result on the quantum Gibbs state}
We consider a homogeneous system of bosons in the torus $\mathbb{T}^3=[0,2\pi]^3$. The underlying many-body Hilbert space is the (complex) {\em bosonic Fock space}
\begin{equation}\label{eq:Fock}
	\gF = \gF(L^2(\bT^3)) = \C \oplus \gH \oplus \ldots \oplus \gH^n \oplus \ldots, \quad \gH^n = L^2_{\rm sym} (\mathbb{T}^{3n}).
\end{equation}
We are interested in the Gibbs state on Fock space
\begin{align}\label{eq:Gibbs-state-def}
	\Gamma_{\lambda} = \cZ_{\lambda}^{-1} e^{-\bH_\lambda},\quad \cZ_{\lambda} = \Tr e^{-\bH_\lambda},
\end{align}
with the many-body interacting Hamiltonian
\begin{align}\label{eq:bH-lambda}
	\bH_\lambda = 0 \oplus (\lambda  (-\Delta -\vartheta) )  \bigoplus_{n=2}^\infty  \left( \lambda \sum_{j=1}^n (-\Delta_{x_j} -\vartheta) + \lambda^2 \sum_{1\le j<k\le n} v^\eps(x_j-x_k) \right)
\end{align}
Here $\lambda\to 0^+$ is a semiclassical parameter, which plays the same role of the inverse temperature. We will choose $\eps=\eps(\lambda) \to 0^+$ and $\vartheta=\vartheta(\lambda)\to +\infty$ appropriately.
The interaction potential  $v^\eps: \bT^3\to \R$ will be chosen such that it is a positive-type interaction, namely $\widehat{ v^\eps}(k) \ge0$ and ${\widehat {v^\eps}(0)=1}$, and it converges to the delta function $\delta_0$ when $\eps\to 0^+$. More concrete assumptions will be given later. Under these conditions, $\bH_\lambda$ is bounded from below and it can be defined as a self-adjoint operator by Friedrichs' method. In fact, for any given $\lambda>0$, the partition function $\cZ_\lambda$ is finite and $\Gamma_\lambda$ is well defined quantum state, namely a nonnegative trace class operator on $\gF$ with $\Tr[\Gamma_\lambda]=1$.

\subsubsection*{Reduced density matrices} In the above grand canonical setting, the number of particles is not fixed and its expectation in a given quantum state $\Gamma$ is given by $\Tr [\cN \Gamma]$ with the {\em number operator} $\cN =\bigoplus_{n=0}^\infty n \1_{\fH^n}$. If  $\Gamma$ commutes with $\cN$, then it can be written in the diagonal form $\Gamma=\bigoplus_{\ell=0}^\infty (\Gamma)_\ell$, and we can define the  {\em reduced density matrices} (also called the correlation functions) of $\Gamma$ via partial traces:
$$
\Gamma^{(n)}= \sum_{\ell \ge n} {\ell \choose n} \Tr_{n+1\to \ell}[(\Gamma)_\ell] ,\quad \forall n \ge 1.
$$
Thus the $n$-body density matrix $\Gamma^{(n)}$ of a state $\Gamma$ is a nonnegative trace class operator on $\gH^n$ with
$$n! \Tr[\Gamma^{(n)}]=\Tr[\cN(\cN-1) \cdots (\cN-n+1)\Gamma].$$
The reader may think of the reduced density matrices as the quantum analogue of the marginal probability density functions, since for every self-adjoint operator $A_n$ on $\gH^n$ we have
$$
\Tr [A_n \Gamma^{(n)}] = \Tr [ \bA_n \Gamma],
$$
where $\bA_n$ is the second quantization of $A_n$, given by
\begin{equs}[def:mAn]
	\bA_n = 0\oplus \cdots \oplus A_n \oplus \bigoplus_{\ell=n+1}^\infty \left( \sum_{1\le i_1<i_2<\cdots <i_n\le \ell } (A_n)_{i_1,\cdots,i_n}\right).
\end{equs}

\subsubsection*{The ideal Bose gas} In the non-interacting case, the Gaussian quantum state $\Gamma_0=\cZ_0^{-1}e^{-\bH_0}$ with
\begin{align}\label{eq:GFF-quantum}
	\cZ_0=\tr[e^{-\bH_0}],\quad\bH_0 = 0 \oplus \bigoplus_{n=1}^\infty \left(\lambda \sum_{j=1}^n (-\Delta_{x_j}+1)\right)
\end{align}
is exactly solvable. It can be interpreted as the quantum analogue of the Gaussian free field
$$
\d\mu_0(u) = \frac{1}{\sZ_0} \exp \bigg(-\int_{\bT^3} \Big( |\nabla u(x)|^2 +  |u(x)|^2 \Big) \d x \bigg)
$$
over complex valued distributions $u$. In fact, for all $n\ge 1$ we have
\begin{align}\label{eq:CV-dm-Gamma0}
	n! \lambda^n \Gamma_0^{(n)} = n! \left( \frac{\lambda}{e^{\lambda(-\Delta+1)}-1} \right)^{\otimes n} \to n! \left( \frac{1}{-\Delta+1} \right)^{\otimes n} = \int |u^{\otimes n}\rangle \langle u^{\otimes n}|\, \d\mu_0(u)
\end{align}
strongly in the Hilbert--Schmidt topology when $\lambda\to 0$.  Here in the last expression in \eqref{eq:CV-dm-Gamma0},  $|u^{\otimes n}\rangle \langle u^{\otimes n}|$ is an operator on $\gH^n$ given by the quadratic form  $\langle f, ( |u^{\otimes n}\rangle \langle u^{\otimes n}| ) g\rangle = \langle f, u^{\otimes n}\rangle \langle u^{\otimes n}, g\rangle$, which explains the reasoning of the bra--ket notation. Here $|u^{\otimes n}\rangle \langle u^{\otimes n}|$ is {\em unbounded} $\mu_0$-almost surely, but the expression can be interpreted in a distributional sense. Moreover, since the system is translation-invariant, the one-body density $\rho_{0} = \Gamma_0^{(1)}(x,x)$  is just a constant,  namely
\begin{align}\label{eq:rho0}
	\rho_{0} =  \frac{1}{(2\pi)^3}\sum_{p\in \mathbb{Z}^3}  \frac{1}{e^{\lambda(|p|^2+1)}-1} = \frac{\zeta\left(\frac{3}2\right)}{(4\pi \lambda)^{3/2}} +\frac{C_0}{\lambda}+  O\left( \frac 1 {\lambda^{1/2}} \right)
\end{align}
with
\begin{align}\label{def:C0}
	C_0=-\frac{1}{4\pi} + \frac{1}{4\pi}\sum_{\ell\in2\pi\mZ^3\setminus\{0\}}\frac{e^{-|\ell|}}{|\ell|}.
\end{align}
Here in the leading term, the factor $(4\pi)^{-3/2} \zeta\left(\frac{3}2\right)$, {with $\zeta$ being the Riemann zeta function}, is the critical density of the ideal Bose gas in the infinite-volume limit \cite{Einstein-24}. The second order term is a finite-volume correction since we are working on the fixed  torus $\bT^3$. {Since we fix the chemical potential of the ideal gas as $1$, we place  the corresponding Gibbs state just slightly above the critical point of the Bose--Einstein phase transition, where the number of particles in each (zero or nonzero) momentum mode is of order $\lambda^{-1}$. We refer to Appendix \ref{sec:ideal-gas} for a more detailed discussion of the ideal Bose gas.}

\subsubsection*{The interacting case} We will establish a  picture similar to \eqref{eq:CV-dm-Gamma0}, where the limit of the reduced density matrices of the interacting Gibbs state $\Gamma_\lambda$ is described by the $\Phi^4_3$ measure
\begin{equs}[e:Phi-measure1-main]
	\dif\nu(\Phi)\eqdef \frac{1}{\sZ}\exp\bigg(-\int_{\mathbb T^3} \Big(|\nabla \Phi|^2+ m_0 |\Phi|^2 +  \frac12 |\Phi(x)|^4 \bigg) \dif x\bigg)\mathcal D \Phi
\end{equs}
over $\mC$ valued fields $\Phi$, where  $\sZ$ is normalization constant (partition function).  It is important to note that the formula \eqref{e:Phi-measure1-main} is only formal, as $\nu$ is singular with respect to the Gaussian free field. A rigorous construction can be obtained via stochastic quantization (see Theorem \ref{th:phi43} below).

We will  need the following concrete assumption on the interaction potential. To this end, we introduce the following notations:
The Fourier transform
and the inverse of the Fourier transform  which are defined by
$$\mathcal{F}f(k)=\hat f(k)=\int_{\mathbb{R}^3} f(x) e^{-\imath x\cdot k}\,\dif x,\quad \mathcal{F}^{-1}f(k)=(2\pi)^{-3}\int_{\mathbb{R}^3} f(x) e^{\imath x\cdot k}\,\dif x,\qquad k\in\mR^3.$$

\medskip

\noindent
{\bf Assumption (Hv).} Let $v \in L^1(\R^3)$ be a nonnegative function whose Fourier transform satisfies
\begin{align}\label{eq:def-v}
	\hat v(0)=1,\quad 0\le \hat v(k)\lesssim \frac1{1+|k|^{3+\delta_0}},\quad |D^m \hat v(k)|\lesssim \frac1{1+|k|^{m}},\quad \text{for }m\in \{1,2\},
\end{align}
where the constant $\delta_0>0$ and the implicit constants are independent of $k$,  and $|D^l v|=\sup_{|\alpha|=l}|\partial^\alpha v|$ for $l\in\mathbb{N}$.

For $\eps>0$ we define $v^\eps={\eps^{-3}}v(\eps^{-1}\cdot)$ which approximates the Dirac measure $\delta_0$ as $\eps\to 0$. By an abuse of notation we periodize $v^\eps$ so that $v^\eps$ is treated as periodic function defined on $\mathbb{T}^3$. In fact
\begin{align}\label{eq:v-eps-per-def}
	v^\eps (x)=\frac{1}{(2\pi)^3}\sum_{k\in \bZ^3} \hat v(\eps k) e^{\imath k\cdot x}.
\end{align}
The second condition in \eqref{eq:def-v} essentially means that $\sum_{k \in \bZ^3} \hat v(k) (1+|k|)^{\delta_0-} <\infty$, while the third condition in \eqref{eq:def-v} imposes some weak decay requirement for the derivative of $\hat v$.  Examples satisfying conditions $(\textbf{Hv})$  include the Bessel potential, where $\hat v(k) =(1+|k|^2)^{-\beta/2}$ with $\beta > 3$, and $\hat v(k) = e^{-c|k|^2}$ for some $c > 0$, as well as convex combinations of these functions.

Moreover, to renormalize the interaction in the limit, we choose the chemical potential $\vartheta$ in $\Gamma_\lambda$ in terms of the following {\em counterterms} from the classical field theory:
\begin{equs}[eq:counterterms]
	a^\eps &= \int_{\bT^3}v^\eps(y) G(y)\, \d y,\quad 	
	6b^{\eps}=\sum_{k_1,k_2\in\mZ^3}\frac{\widehat{v^\eps}(k_1)^2+\widehat{v^\eps}(k_2)\widehat{v^\eps}(k_1)}{(2\pi)^6\la k_1\ra^2\la k_2\ra^2\la k_1+k_2\ra^2},
	\\
	C_1 &=\frac1{(2\pi)^{6}}\int_{|x|\leq1} \frac{\hat v(x)(\hat v(y)-\hat v(x-y)-x\cdot \nabla \hat v(y))}{2|x|^4|y|^2}\,\d x\d y\\
	&\qquad+\frac1{(2\pi)^{6}}\int_{|x|>1} \frac{\hat v(x)(\hat v(y)-\hat v(x-y))}{2|x|^4|y|^2}\,\d x\d y, \\
	C_2 &=\frac1{(2\pi)^{6}}\Big(\int_{|x|\leq1} \frac{\hat v(y)-\hat v(x-y)-x\cdot \nabla \hat v(y)}{2|x|^4|y|^2}\,\d x\d y+\int_{|x|>1} \frac{\hat v(y)-\hat v(x-y)}{2|x|^4|y|^2}\,\d x\d y\Big).
\end{equs}
Here in the definition of $a^\eps$, we used the Green function $G(x-y)=(1-\Delta)^{-1}(x,y)$. Note that $G(x,y) \ssim (4\pi)^{-1}|x-y|^{-1}$ when $|x-y|\to 0$ (see, e.g.,  \cite[Lemma 5.4]{RouSer-16}), and hence $a^\eps\ssim \eps^{-1}$. This term comes from Wick normal order.  The term $6b^\eps \ssim |\log \eps| $ is the correction to the Wick renormalization, which is crucial to construct the $\Phi^4_3$ theory.  The two constants $C_1$, $C_2$ are finite thanks to our condition on $v$. The derivation of these terms will be explained later.

Now we are ready to state the main result of the paper.

\begin{theorem}\label{thm:main10} Let $\nu$ be the $\Phi^4_3$ measure associated with $m_0\in \mathbb{R}$ as in \eqref{e:Phi-measure1-main}. Consider the Gibbs state $\Gamma_\lambda=\cZ_\lambda^{-1}e^{-\bH_\lambda}$ in \eqref{eq:Gibbs-state-def}  with $v$ satisfying $(\mathbf{Hv})$  and the chemical potential
	\begin{equs}[def:gamma]
		\vartheta = \frac{\zeta\left(\frac{3}2\right)}{(4\pi)^{3/2}} \lambda^{-1/2} + C_0 + a^\eps - 6b^\eps + 2C_1 + 2C_2 -m_0,
	\end{equs}
	where $C_0$ is given in \eqref{eq:rho0}, and $a^\eps, 6b^\eps,C_1,C_2$ are given in \eqref{eq:counterterms}.  We consider the limit $\lambda,\eps \to 0$ with $ \lambda^{\eta}\le  \eps$ for a sufficiently small constant  $\eta>0$. Then for every continuous function $f:[0,\infty)\to \R$ such that $|f(x)-f(y)|\le C |x-y|(1+x^5+y^5)$ with a constant $C>0,$ and for every fixed function $\varphi\in L^2(\bT^3)$ with $\hat \varphi$ finitely supported, we have
	\begin{equs}[eq:CV-pdm-f]
 \Tr [  f(\lambda a^*(\varphi)a(\varphi)) \Gamma_\lambda)  ] \to \int f(|\langle \varphi, u\rangle|^2) \d\nu (u).
	\end{equs}
Consequently, we have the convergence of the density matrices with $6\ge n\ge 1$:
	\begin{equs}[eq:CV-pdm-f-x6]
 n! \lambda^n \Big\langle \varphi^{\otimes n},\Gamma_\lambda^{(n)}  \varphi^{\otimes n}\Big\rangle  \to  \int  |\langle \varphi, \Phi\rangle|^{2n} \,\d \nu(\Phi),
	\end{equs}
	for all  $\varphi\in L^2(\bT^3)$ with $\hat \varphi$ finitely supported. If we assume further that $|\log \lambda|^{-\eta}\le  \eps \to0 $ for a constant  $0<\eta<1/2$, then the convergence \eqref{eq:CV-pdm-f-x6} holds for all $n\ge 1$.
\end{theorem}

\br The result in \eqref{eq:CV-pdm-f} applies to $f(x)=e^{-\alpha x}$ with $\alpha>0$, for instance, thus providing a unique characterization of the $\Phi^4_3$ measure $\d\nu (u)$ in the limit. The convergence of the correlation function \eqref{eq:CV-pdm-f-x6} follows from  \eqref{eq:CV-pdm-f} and the choice $f(x)=x^n$ with $1\le n\le 6$. This convergence can be interpreted as
\begin{equs}[def:gamma-convergence]
	n! \lambda^n \Gamma_\lambda^{(n)}  \to  \int |u^{\otimes n}\rangle \langle u^{\otimes n}|\, \d \nu(u)
\end{equs}
in the distributional sense. We expect this convergence to hold for all $n\ge 1$ under the polynomial-type condition $ \lambda^{\eta}\le  \eps$, but we can currently prove it only either for $n\le 6$ or under the stronger condition $|\log \lambda|^{-\eta}\le  \eps\to0$. The constant $\eta$ can be determined explicitly from our proof, but since it depends on the constant $\delta_0$ in condition $(\mathbf{Hv})$   and we do not try to optimize this dependence, we decide to leaves it as an implicit constant. 
\er	
	
\br Here we add $C_1+C_2$ in the chemical potential. If we do not put them there, then we will see a new mass term in the limiting measure (see Remark \ref{C1C2e} below).
\er

\br We can also interpret the complex-valued $\Phi^4$ model as an $\mathbb{R}^2$-valued $\Phi^4$ model, i.e., as an $O(2)$ model. From the proof below, we see that Theorem \ref{thm:main10} holds for any $N$ complex components. In other words, the convergence of the {many body quantum Gibbs states with $\mathbb{R}^{2N}$-valued components} to the $O(2N)$ model follows (see \cite{SSZZ20, SZZ21} for the large-$N$ limit of the $O(N)$ model via stochastic quantization).
\er

\br Our approach naturally extends to the two-dimensional problem, leading to an improvement over the results in \cite{FKSS23}. We will discuss the 2D case separately in Section \ref{sec:2D}. 
\er

As we mentioned in the introduction, our result is inspired by a series of works \cite{LewNamRou-15,FroKnoSchSoh-17,LewNamRou-18,FroKnoSchSoh-19,LewNamRou-21,FKSS22,FKSS23} where nonlinear Gibbs measures are derived from many-body quantum mechanics. The initial work \cite{LewNamRou-15} already contained a derivation of the $\Phi^4_1$ measure. In this case, there is no renormalization needed to construct the measure, and the main difficulty lies in the implementation of semiclassical techniques in infinite dimensions.  The $\Phi^4_2$ theory, which requires Wick renormalization, was derived recently in \cite{FKSS23}. Based on the earlier work \cite{FKSS22} which handled the derivation of the Hartree-type measure, the paper \cite{FKSS23} made the analysis quantitative to allow $\eps\to 0$ and then resolved the rigorous connection between the nonlocal field theory and the local one in two dimensions via Nelson's argument, under the condition that $\exp(-|\log \lambda|^{\frac12-c})\le  \eps \to0 $ for $c>0$.

The problem in three dimensions is significantly harder since Wick renormalization is insufficient. While our proof of Theorem \ref{thm:main10} follows the same overall strategy in \cite{FKSS23}, namely the $\Phi^4_3$ measure and the many-body problem are related via an $\eps$-dependent Hartree measure, the implementation of this strategy requires several new key ideas.  On one hand we refine the approach in \cite{LewNamRou-21} and employ correlation inequalities from \cite{DeuNamNap-25} to derive the Hartree measure from the quantum Gibbs state. On the other hand we use an approach via stochastic quantization inspired by  the SPDE
method in  \cite{AK17, GH18a, SSZZ20, SZZ21} to establish the convergence from the Hartree measure to the $\Phi^4_3$ theory.  Moreover, in order to extend the result to the more physically relevant regime $ \lambda^{\eta}\le  \eps$, we combine the variational approach to the classical measure as proposed in \cite{BG18} and a deeper interplay between quantum and classical techniques. These ingredients are of independent interest and will be discussed in detail in the next subsections.

\subsection{Convergence of Hartree measure to $\Phi^4_3$}

Let $v,v^\eps$ as in $\mathbf{(Hv)}$ and $a^\eps,b^\eps$ as in \eqref{eq:counterterms}. We consider the following Hartree measure
\begin{equs}[e:Phi_eps-measure1]
	\dif\nu^\eps(\Psi)\eqdef  \frac{1}{\sZ_\eps}\exp\bigg(&-\int_{\mathbb T^3} (|\nabla \Psi|^2+m |\Psi|^2) \,\dif x
	-\frac12\int \Wick{|\Psi(x)|^2}v^\eps(x-y)\Wick{|\Psi(y)|^2}\,\dif x\dif y
	\\&+{(a^\eps- 6b^\eps)\int_{\mathbb T^3} \Wick{|\Psi(x)|^2}\dif x}\bigg)\mathcal D \Psi,
\end{equs}
over $\mC$ valued fields $\Psi$, with $m\in \mathbb{R}$ (the reader may think of $m=m_0-2C_1-2C_2$).
Here $\Wick{|\Psi|^2}$ denotes the Wick renormalization, formally defined
\begin{align}\Wick{|\Psi(x)|^2}  = |\Psi(x)|^2 - \langle |\Psi(x)|^2 \rangle_{\mu_0}, \label{eq:Wick}
\end{align}
and $\sZ_\eps$ represents the normalization constant (partition function).

For fixed $\eps > 0$, the measure $\nu^\eps$ can be rigorously constructed as a probability measure that is absolutely continuous with respect to the Gaussian free field, provided that $v$ satisfies condition $(\mathbf{Hv})$ (cf. \cite{LewNamRou-21} and Theorem \ref{th:global} below). As $\eps \to 0$, we expect that the limit of $\nu^\eps$ corresponds to the $\Phi^4_3$ field, which is formally expressed as
\begin{equs}[e:Phi-measure1]
	\dif\nu(\Phi)\eqdef \frac{1}{\sZ}\exp\bigg(-\int_{\mathbb T^3} \Big(|\nabla \Phi|^2+(m+{2C_1+2C_2}) |\Phi|^2 \Big)\dif x
	+ \frac12\int |\Phi(x)|^4\,\dif x\bigg)\mathcal D \Phi
\end{equs}
over $\mC$ valued fields $\Phi$,
where $C_1$ and $C_2$ are finite constants depending on $v$, as given \eqref{eq:counterterms}, and  $\sZ$ is normalization constant (partition function). Recall that the formula \eqref{e:Phi-measure1} is only formal and a rigorous construction can be found in  Theorem \ref{th:phi43} below.

To state the main result of this subsection, for $n\geq1$, we define $n$-point correlation function of $\nu^\eps$ and $\nu$ given by
\begin{align*}
	(\gamma_n^\eps)_{\bf{x},\tilde{\bf{x}}}\eqdef \int \overline{\Psi}(\tilde x_1)\dots \overline{\Psi}(\tilde x_n)\Psi(x_1)\dots \Psi(x_n)\,\dif \nu^\eps(\Psi),
\end{align*}
and
\begin{align*}
	(\gamma_n)_{\bf{x},\tilde{\bf{x}}}\eqdef \int \overline{\Phi}(\tilde x_1)\dots \overline{\Phi}(\tilde x_n)\Phi(x_1)\dots \Phi(x_n)\,\dif \nu(\Phi).
\end{align*}
Note that $\nu^\eps$ and $\nu$ are supported on $\bC^{-\frac12-\kappa}$ for $\kappa>0$. Hence, $\Phi(x_i)$, $\Psi(x_i)$ and $\overline{\Phi}(\tilde x_j)$, $\overline{\Psi}(\tilde x_j)$ are understood in the distributional sense.

{{The following key result allows us to approximate the singular $\Phi^4_3$ measure in \eqref{e:Phi-measure1} by the Hartree measure  $\nu^\eps$ in \eqref{e:Phi_eps-measure1}, which is more regular and easier to connect to the quantum model.}}

\bt\label{th:conm} Assume that $v$ satisfies condition $\mathbf{(Hv)}$. As $\eps \to 0$, the probability measure $\nu^\eps$ in \eqref{e:Phi_eps-measure1} converges weakly to $\nu$ in $\bC^{-\frac12-\kappa}$, $\kappa>0$, where $\nu$ represents the $\Phi^4_3$ field \eqref{e:Phi-measure1}. Moreover, any correlation function $\gamma_n^\eps$ associated with $\nu^\eps$ converges to the $n$-point correlation function $\gamma_n$ of the $\Phi^4_3$ field in $\cS'(\mT^{6n})$.
\et

\br If we consider the following measure
\begin{equs}[e:Phi_eps-measure]
	\dif\widetilde\nu^\eps(\Psi)\eqdef  \frac{1}{\widetilde\sZ_\eps}\exp\bigg(&-\int_{\mathbb T^3} (|\nabla \Psi|^2+m |\Psi|^2)\, \dif x
	-\frac12\int \Wick{|\Psi(x)|^2}v^\eps(x-y)\Wick{|\Psi(y)|^2}\,\dif x\dif y
	\\&+{\int\Re(\overline{\Psi} (x) v^\eps(x-y) G(x-y) \Psi(y))\,\dif y\dif x-6b^\eps\int_{\mathbb T^3} \Wick{|\Psi(x)|^2}\dif x}\bigg)\mathcal D \Psi,
\end{equs}
with a natural Wick renormalization counterterm $\int\Re(\overline{\Psi} (x) v^\eps(x-y) G(x-y) \Psi(y))\,\dif y\dif x$,
then its limit measure is given by the following $\Phi^4_3$ measure without new mass term $C_1+C_2$ in \eqref{e:Phi-measure1}:
\begin{equs}[e:Phi-measure]
	\dif\nu(\Phi)\eqdef \frac{1}{\sZ}\exp\bigg(-\int_{\mathbb T^3} (|\nabla \Phi|^2+m |\Phi|^2)\, \dif x
	+ \frac12\int |\Phi(x)|^4\,\dif x\bigg)\mathcal D \Phi
\end{equs}
over $\mC$ valued fields $\Phi$,
where $\sZ_\eps$, $\sZ$ are normalization constants (partition functions).
For further explanation, see Remark \ref{C1C2e} below.

The main reason for considering the measure \eqref{e:Phi_eps-measure1} instead of \eqref{e:Phi_eps-measure} is that \eqref{e:Phi_eps-measure1} is more directly related to the quantum many-body problem. The renormalization counterterm $(a^\eps - 6b^\eps ) \int_{\mT^3}\Wick{|\Psi^\eps|^2}\dif x$ can be understood as the limit of the chemical potential in quantum many-body problems. {As we explained in the introduction, the chemical potential is the only parameter we can adjust in the quantum problem,  and hence having a mass renormalization instead of the nonlocal renormalization keeps the model  more physically relevant.}
\er

\subsection{From quantum model to Hartree measure}\label{sec:defs Fock}

Recall the Hartree measure $\nu^\eps$ in \eqref{e:Phi_eps-measure1} with $m=m_0-2C_1-2C_2$, namely 
$$
\d\nu^\eps(\Psi)=\frac{e^{-\cD [\Psi]} \; \d\mu_{0}(\Psi)}{\int  e^{-\cD [\Psi]} \; \d\mu_{0}(\Psi)}
$$
with 
	$$ \cD[\Psi] = \frac12\int :\!{|\Psi(x)|^2}\!: v^\eps(x-y) :\! {|\Psi(y)|^2}\!: \d x \d y  - {(a^\eps-6 b^\eps-m+1)\int_{\mathbb T^3} :\!{|\Psi(x)|^2}\!: \d x}.
$$
 We have the following connection from the quantum Gibbs state $\Gamma_\lambda$ to this Hartree measure.

\begin{theorem}\label{thm:main1}
	Let $v$  satisfy $(\textbf{Hv})$. Consider the Gibbs state $\Gamma_\lambda=\cZ_\lambda^{-1}e^{-\bH_\lambda}$ in \eqref{eq:Gibbs-state-def}. When $\lambda,\eps \to 0^+$ and $\lambda^\eta \le \eps$ for a sufficiently small parameter $\eta>0$, we have the convergence of the Hartree partition function
\begin{equs}[eq:main-CV-Hartree-1]
	\left| \log \frac{\cZ_\lambda}{\cZ_0} -\log \left(  \int  e^{-\cD [\Psi]} \; \d\mu_{0}(\Psi) \right) \right| \to 0.
\end{equs}
Moreover, for every continuous function $f:[0,\infty)\to \R$ such that $|f(x)-f(y)|\le C |x-y|(1+x^5+y^5)$ with a constant $C>0$, and for every fixed function $\varphi\in L^2(\bT^3)$ with $\hat \varphi$ finitely supported, we have
\begin{equs}[eq:main-CV-Hartree-2]
\left| 	\int f(|\langle \varphi, \Psi\rangle|^2) \d\nu^\eps (\Psi) -  \Tr [  f(\lambda a^*(\varphi)a(\varphi)) \Gamma_\lambda  ] \right| \to 0.
\end{equs}
Consequently, we have the convergence of the density matrices with $6\ge k\ge 1$ in the sense that
\begin{equs}[eq:main-CV-Hartree-3]
\left| 	\int |\langle \varphi, \Psi\rangle|^{2k}\; \d\nu^\eps (\Psi) - k! \lambda^k  \langle \varphi^{\otimes k}, \Gamma_\lambda^{(k)} \varphi^{\otimes k}\rangle   \right| \to 0.
\end{equs}
If we assume further that $|\log \lambda|^{-\eta}\le  \eps \to0 $ for a constant  $0<\eta<1/2$, then the convergence \eqref{eq:main-CV-Hartree-3} holds for all $k\ge 1$.
\end{theorem}

From Theorem \ref{thm:main1} and Theorem \ref{th:conm} we obtain the desired result in Theorem \ref{thm:main10}.

\begin{remark} Theorem \ref{thm:main1} extends previous results with $\eps\ssim 1$ in \cite{LewNamRou-21,FKSS22} and improves the recent result in \cite[Proposition 5.2]{FKSS23} - which was limited to the case $\exp(-|\log \lambda|^{\frac12-c})\le  \eps \to0 $ for $c>0$. While the proof in \cite{FKSS23} is based on a refinement of the functional integral approach in \cite{FKSS23}, we employ the variational approach in both the quantum side \cite{LewNamRou-21} and the classical side \cite{BG18}. 
\end{remark}

\subsection{Ideas of the proof and structure of the paper}

\text{}

The rest of the paper is devoted to the proofs of Theorem \ref{th:conm} and Theorem \ref{thm:main1}.  Let us first explain the proof strategy of Theorem \ref{th:conm}.
We consider  the stochastic quantization of the measure \eqref{eq:mainnew}, given by:
\begin{equs}[eq:psi1-i]
	\LL \Psi^\eps= -( v^\eps*\Wick{|\Psi^\eps|^2})\Psi^\eps+{(a^\eps-6b^\eps+1-m) \Psi^\eps}+\xi,
\end{equs}
and
the stochastic quantization of the measure  \eqref{e:Phi-measure1}  formally given by
\begin{equs}[eq:phi1-i]
	\LL \Phi= -|\Phi|^2\Phi-{(2C_1+2C_2-1+m)\Phi}+\xi.
\end{equs}
Here and in the following, we define $\LL = \p_t - \Delta + 1$, with $m \in \mathbb{R}$, and $\xi$ denotes complex-valued space-time white noise on a probability space $(\Omega, \mathcal{F}, \mathbf{P})$. We refer to Section \ref{sec:ren} and \eqref{eq:mainn1} for more explain on the Wick renormalization $\Wick{|\Psi^\eps|^2}$.

The core idea behind the proof of Theorem \ref{th:conm} is to use the dynamics of \eqref{eq:psi1-i} and \eqref{eq:phi1-i} to establish the convergence of the stationary measures \eqref{e:Phi_eps-measure1} associated with the dynamics of \eqref{eq:psi1-i} to the stationary measure \eqref{e:Phi-measure1} corresponding to equation \eqref{eq:phi1-i}. To this end, we  first prove the convergence of the solutions to equation \eqref{eq:psi1-i} to the solutions of equation \eqref{eq:phi1-i} if the initial data converges in suitable Besov space $\bC^{-\frac12-\kappa}$, $\kappa>0$. We then focus on deriving uniform in $\eps$ estimates (coming down from infinity) for stationary solutions to \eqref{eq:psi1-i}. This is done by decomposing the equation into its low-frequency and high-frequency components, and then  combining an $L^2$-energy bound with Schauder estimates. Then, using the uniform in $\eps$ moments bounds for the stationary solutions, we establish tightness of the quantum fields \eqref{e:Phi-measure1} in $\bC^{-\frac12-\kappa}$.  Finally, by leveraging the convergence of the dynamics and the uniqueness of the invariant measure for equation \eqref{eq:phi1-i}, as provided by the theory of singular SPDEs, we identify  the tight limit as the $\Phi^4_3$ field, completing the proof of Theorem \ref{th:conm}.

In Section \ref{first}, we begin by analyzing the stochastic quantization equations via
paracontrolled calculus and demonstrate that if the initial data converges in $\bC^{-\frac12-\kappa}$, then the solutions to the corresponding dynamics also converge (see Theorem \ref{th:1} below). To compare the Wick renormalization between $v^\eps*\Wick{|\Psi^\eps|^2}$ and $\Psi^\eps$ with the mass renormalization $a^\eps \Psi^\eps$ in \eqref{eq:psi1-i}, we introduce a  new operator $\cR$. Formally, this operator vanishes in the limit. However, owing to the singular nature of the solution to \eqref{eq:phi1-i}, $\cR$ induces an additional mass term $2(C_1+C_2)\Phi$ in the limiting equation \eqref{eq:phi1-i}.
Section \ref{sec:uni} is focused on deriving global in $\eps$ estimates  for stationary solutions to \eqref{eq:mainnew} and establishing the convergence of the fields, which leads to the proof of Theorem \ref{th:conm}. Concerning the uniform in $\eps$ moments bounds for the stationary solutions,  the term $-|\Phi|^2\Phi$ plays a crucial role in the dynamical $\Phi^4_3$ model. In that case, we can apply $L^p$-energy estimates or the maximum principle to obtain uniform control of the solutions. However, in our modified setting, the nonlinear term  $-v^\eps * |\Psi|^2\Psi$ only provides a weaker damping effect in $L^2$-energy estimates. As a result, we need to present a more refined analysis of the nonlinearity to achieve uniform control (see Remark \ref{reglobal} below).
All stochastic estimates are collected in Section \ref{sec:sto} below.

At this stage, the analysis of the relationship between the Hartree measure and the $\Phi^4_3$ measure, as summarized in Theorem \ref{th:conm}, is complete. The remainder of the proof focuses on establishing convergence from the quantum Gibbs state to the Hartree measure, as stated in  Theorem \ref{thm:main1}. 

Our starting point is the variational approach proposed in \cite{LewNamRou-15,LewNamRou-18,LewNamRou-21}. On one hand, the Hartree measure $\nu^\eps$ in \eqref{e:Phi_eps-measure1} is the unique minimizer for the variational problem
\begin{equation} \label{eq:zr-rel}
	-\log \sZ_\eps = \min_{\substack{\nu \text{~proba. meas.}\\ \nu ' \ll\mu_0}} \left\{ \cH_{\rm cl}(\nu', \mu_0) + \int \cD[u]\,\d\nu'(u)\right\},
\end{equation}
where
$$ \cH_{\rm cl} (\nu',\mu_0)\eqdef \int \frac{\d\nu'}{\d\mu_0}(u)\log\left(\frac{\d\nu'}{\d \mu_0 }(u)\right)\,\d\mu_0(u) \ge 0 $$
is the classical relative entropy between the probability measure $\nu'$ and the Gaussian free field $\mu_0$, and
\begin{equation} \label{eq:cDu}
	\cD[u] = \frac12\int :\!{|u(x)|^2}\!: v^\eps(x-y) :\! {|u(y)|^2}\!: \d x \d y  - {(a^\eps-6 b^\eps-m+1)\int_{\mathbb T^3} :\!{|u(x)|^2}\!: \d x}
\end{equation}
is the Wick renormalization of the interaction term. On the other hand, the interacting Gibbs state $\Gamma_\lambda=\cZ_\lambda^{-1}e^{-\bH_\lambda}$ in \eqref{eq:Gibbs-state-def} is the {\em unique minimizer} for the variational problem
\begin{equation} \label{eq:rel-energy}
	-\log \frac{\cZ_\lambda}{\cZ_0} = \min_{\substack{\Gamma\ge 0\\ \Tr \Gamma=1}} \Big\{ \cH(\Gamma,\Gamma_{0}) +  \Tr\left[\bW \Gamma\right] \Big\}.
\end{equation}
Here
$$
\cH(\Gamma,\Gamma_{0}) = \Tr [\Gamma (\log \Gamma-\log \Gamma_0)] \ge 0
$$
is the relative entropy between a quantum state $\Gamma$ and the non-interacting Gibbs state $\Gamma_0=\cZ_0^{-1}e^{-\bH_0}$ in \eqref{eq:GFF-quantum}, and $\bW$ is the renormalized interaction in $\bH_\lambda$. We have chosen the chemical potential in $\bH_\lambda$ such that $\bW$ is exactly the second quantized version of the classical term $\cD[u]$ in \eqref{eq:cDu}. 

In this approach, we will derive upper and lower bound for the (relative) free
energy, and then use the minimizing property of $\Gamma_\lambda$ to deduce desired properties of the state. In contrast to \cite{LewNamRou-21}, our approach in the present paper requires deriving new uniform estimates for singular potentials $v^\eps$, which involves an interesting interplay between quantum and classical methods in multiple steps. We now outline these steps in detail.

Since our overall strategy is to reduce the quantum problem to a finite-dimensional setting via the variational approach, as a first step we will need to establish similar bounds for the classical partition function and correlation functions with suitable finite-dimensional cutoffs. This analysis will be presented in Section \ref{sec:part}. The desired bounds do not follow from the dynamical approach in the previous sections, and we will extend the variational method proposed in \cite{BG18} for this purpose. 
Moreover, we will establish a uniform estimate for the classical partition function with suitable perturbations. To be precise, we prove in particular the following uniform bound in Theorem \ref{th:partition-revised}
\begin{equs}[mom:mutildeP-revised-intro]
	\Big|\log\int &\exp\Big(-\cD(P_N\Psi) +c_0|\la \varphi,\Psi\ra|^2 + c_1 \int \Wick{|P_N\Psi|^2} \Big)\,\dif \mu_0(\Psi)\\ &- \log  \int \exp\Big(-\cD(P_N\Psi)\Big)\d \mu_0(\Psi) \Big| 
\lesssim 1,
\end{equs}
where  $P_N$ is a smooth momentum cutoff localized to $|p|\leq N$ (with $N\to\infty$), and the bound depends only on the constants $c_0,c_1$ and the test function $\varphi$. This bound is inspired by the linear response argument from the quantum problem, eventually leading to sharper bounds for the quantum partition function under refined constraints on the parameters $\lambda$ and $\eps$. On the technical side, unlike the classical variational approach used for the quantum part in \eqref{eq:zr-rel}, our method relies on the Boué--Dupuis variational formula (Lemma \ref{lem:B-D}), which reformulates the partition functions in terms of a stochastic control problem. This transformation allows us to reduce the problem to deriving analytic estimates akin to the $L^2$-energy estimates in stochastic quantization. Our analysis is robust against mass perturbations, a key feature needed to establish \eqref{mom:mutildeP-revised-intro}.

In Section \ref{sec:gibbs}, we start the consideration of the quantum model. Similarly to \cite{LewNamRou-21}, we will estimate the free energy by splitting to low and high momenta. The high-momentum estimate will be done by using a new abstract correlation inequality from \cite{DeuNamNap-25}, which helps greatly to improve the analysis and obtain good quantitative estimates in terms of $\eps$. Then in Section \ref{sec:CV-energy}, we perform a semiclassical approximation in finite dimensions, where the nonlinear classical field theory naturally emerges from a quantitative quantum de Finetti theorem. Here we need to adjust the analysis in \cite{LewNamRou-21} in order to obtain error bounds that remain small when $\eps$ depends only polynomially on $\lambda$. Recall that the de Finetti measure at scale $\lambda$, associated with a state $\Gamma$ and a finite-dimensional projection $P$, is defined via the \emph{lower symbol} (c.f. \eqref{eq:Husimi}):
		\begin{equs}[eq:Husimi]
			\d\mu_{P,\Gamma}^{\lambda}(u) \eqdef (\lambda\pi)^{-\Tr(P)}  \Tr \Big[ |W(u/\sqrt{\lambda}) \rangle \langle W(u/\sqrt{\lambda})|  \Gamma \Big]  \d u ,
		\end{equs}
		where $W(u/\sqrt{\lambda}) = \exp(a^{*}(u)-a(u)) |0\rangle$ denotes the standard coherent state. One of the new ingredients of our proof is a pointwise inequality between the de Finetti measure of the non-interacting Gibbs state and the cylindrical projection of the Gaussian free field, which is of independent interest.  This allows us to approximate the quantum partition function by the classical one (see Theorem \ref{thm:local-W-P})
\begin{equs}[eq:partition-ul-1-intro]
	-\log \frac{\cZ_\lambda}{\cZ_0} = -\log \left(  \int  e^{-\cD(P_N u)} \; \d\mu_{0}(u) \right) + O(\eps^{-8}{N^{-1/8}}) + O(\lambda^{\delta}),
\end{equs}
where the constant $\delta>0$ depending only on $\delta_0>0$ in \eqref{eq:def-v}.

Unlike \cite{LewNamRou-21}, the convergence of the quantum correlation functions in our case cannot follow directly from the free energy convergence in \eqref{eq:partition-ul-1-intro}. This occurs because uniform Hilbert-Schmidt estimates for the quantum correlation functions in \cite{LewNamRou-21} are unavailable (the bound is only good under the strong restriction $|\log \lambda|^{-\eta}\le  \eps \to0 $ for a constant  $0<\eta<1/2$), and a deeper analysis is needed. Roughly speaking, our approach to quantum correlation functions relies on a sharper understanding of the relation between the de Finetti and Hartree measures. In particular, by refining the energy estimate in \eqref{eq:partition-ul-1-intro} and applying Pinsker’s inequality (see, e.g., \cite{CarLie-14}), we obtain a quantitative $L^1$-comparison between these two measures in finite dimensions (see \eqref{eq:Chiribella-app-1-proof}). The convergence of the quantum correlation function then follows from this $L^1$-bound together with an interpolation argument that exploits higher moment estimates.

		More precisely, as a first step, in Section \ref{sec:gibbs-higher} we replace the Hilbert–Schmidt estimates with higher-order moment estimates, obtained via a correlation inequality from \cite{DeuNamNap-25} together with crucial input on the classical Hartree field from Section \ref{sec:part}. In particular, in Theorem \ref{thm:higher-moment} we prove the moment bound (see \eqref{eq:moment-m})
		\begin{equs}[eq:moment-m-intro]
			\lambda^k \Tr \big[ (a^*(\varphi)a(\varphi))^k \Gamma_\lambda\big] \le C_{\varphi,k} \eps^{-6}
		\end{equs}
		for all test functions $\varphi$ and $k \le 7$. The condition $k \le 7$ here is the main reason for our restriction on the growth of the function $f$ in Theorems \ref{thm:main10} and \ref{thm:main1} (if we assume $|\log \lambda|^{-\eta}\le  \eps \to0 $ for a constant  $0<\eta<1/2$, then the condition $k \le 7$ can be removed by using \eqref{eq:moment-m-weak}).

In the second step, in Section \ref{sec:approximation-classical-measure} we explain how quantum estimates can be used to improve the classical measure analysis. In particular, we prove in Theorem \ref{th:partition-revised-improved} a quantitative convergence of the correlation functions of the Hartree  field with a momentum cut-off: 
\begin{equs}[sec:9-2-intro]
	\text{} \left|  \int |\langle \varphi,u\rangle|^{2k} \d \mu^\eps_N (u) - \int |\langle \varphi,u\rangle|^{2k}   \d \nu^\eps(u)  \right|  \le   C_{\varphi,k} (\eps^{-8} N^{-1/8 })^{{1/2}}\end{equs}
with $\d \mu_N^\eps(u)\propto \exp(-\cD(P_N u))\dif \mu_0(u)$. 
Here, the freedom in choosing the semiclassical parameter $\lambda$ offers a crucial advantage. This powerful interplay between classical and quantum techniques proves particularly effective for obtaining precise quantitative estimates, a feature we hope will be helpful in other contexts.

Finally, in Section \ref{sec:CV-Gamma} we establish the convergence of the quantum correlation functions. We introduce an explicit representation of the quantum de Finetti measure in terms of generalized quantum correlation functions of the form $\Tr[f(\lambda a^*(\varphi)a(\varphi))\Gamma_\lambda]$ (see Lemma \ref{lem:deF-measure-f}), which is of independent interest, and then prove higher-moment convergence through $L^1$-norm convergence and interpolation.

%%%%%%%%%%%%%%%%%%%%%%%%%%%%%%%%%%%%%%%%%%%%

The rest of the paper is devoted to proving Theorems \ref{th:conm} and \ref{thm:main1}  as outlined above. Moreover, in Section \ref{sec:2D} we include remarks on extending our approach to the two-dimensional problem.

\subsection{Notations}

Throughout the paper, we use the notation $a\lesssim b$ if there exists a constant $c>0$ such that $a\leq cb$, and we write $a\backsimeq b$ if $a\lesssim b$ and $b\lesssim a$. {We will denote by $C$ a general positive constant independent of relevant variables, whose value may change from line to line.} For $k\in\mR^d$, let $\la k\ra^2=1+|k|^2$.

We use the convention that the inner product $\langle \cdot, \cdot\rangle$ of a (complex) Hilbert space is linear in the second argument and antilinear in the first. In particular, we set  $\langle f,g\rangle=\int_{\mT^3} \overline f g\,\dif x$ for $f,g\in L^2(\mT^3)$
and we write $L^p=L^p(\mT^3)$. We also set $e_k(x)=(2\pi)^{-\frac{3}{2}}e^{\imath x\cdot k},x\in\mathbb{T}^3$.

For index variables $i$ and $j$ of Littlewood-Paley decomposition we write $i\lesssim j$ if $2^i\lesssim 2^j$, meaning  there exists $N\in\mN$, independent of $j$ such that $i\leq j+N$, and we use $i\sim j$ to denote both $i\lesssim j$ and $j\lesssim i$.  Given a Banach space $E$ with norm $\|\cdot\|_E$ and $T\in(0,\infty]$, we write $C_TE=C([0,T];E)$ for the space of continuous functions from $[0,T]$ to $E$, equipped with the supremum norm $\|\cdot\|_{C_TE}$. We also write $CE=C_\infty E$, and $L^pE=L^p([0,\infty);E)$ equipped with the  $L^p$-norm $\|\cdot\|_{L^pE}$.  For $\alpha\in(0,1)$ we also define $C^\alpha_TE$ as the space of $\alpha$-H\"{o}lder continuous functions from $[0,T]$ to $E$, endowed with the seminorm $\|f\|_{C^\alpha_TE}=\sup_{s,t\in[0,T],s\neq t}\frac{\|f(s)-f(t)\|_E}{|t-s|^\alpha}.$

We have for $\mathcal{F}^{-1}m\in L^1(\mR^3)$ and  $f\in L^2(\mT^3)$
\begin{align}\label{eq:fourier}
	\sum_k m(k)\langle f,e_k\rangle e_k=\cF^{-1}m*f.
\end{align}
Here and in what follows, the convolution is taken on $\mathbb{R}^3$, and we consider $f$ as a periodic function on $\mathbb{R}^3$. We denote by $\bB^\alpha_{p,q}$ the Besov spaces on the torus, with general indices $\alpha \in \mathbb{R}$ and $p, q \in [1, \infty]$. Additionally, we use the notations $\bC^\alpha = \bB^\alpha_{\infty, \infty}$, $\bB^\alpha_p = \bB^\alpha_{p, \infty}$, and $H^\alpha = \bB^\alpha_{2,2}$. The definition of Besov spaces and some useful lemmas are provided in Appendix \ref{sec:pre}.

Our proof in Sections \ref{first} through \ref{sec:part} is based on the paracontrolled calculus introduced in \cite{GIP15} and the variational approach developed in \cite{BG18}. This calculus relies on Bony's paraproducts \cite{Bon81}, specifically $f \prec g$, $f \succ g$, and the resonant term $f \circ g$. For the definitions and basic estimates related to paracontrolled calculus, we refer to Appendix \ref{sec:pre}.

\section{Convergence of the dynamics associated with \eqref{e:Phi_eps-measure1}}\label{first}

We now consider the system of SPDEs arising from the stochastic quantization of the measure \eqref{e:Phi_eps-measure1}, given by:
\begin{equs}[eq:mainnew]
	\LL \Psi^\eps= -( v^\eps*\Wick{|\Psi^\eps|^2})\Psi^\eps+{(a^\eps-6b^\eps+1-m) \Psi^\eps}+\xi.
\end{equs}
Here and in the following, we define $\LL = \p_t - \Delta + 1$, with $m \in \mathbb{R}$, and $\xi$ denotes complex-valued space-time white noise on a probability space $(\Omega, \mathcal{F}, \mathbf{P})$. More precisely, we write $\xi = \xi_1 + \imath \xi_2$, where $(\xi_1, \xi_2)$ are independent, real-valued space-time white noises. The term $\Wick{|\Psi^\eps|^2}$ refers to Wick renormalization, as defined in Section \ref{sec:ren} and \eqref{eq:mainn1} below.

In this section, we assume condition $\mathbf{(Hv)}$. Under this assumption, we obtain that $(1 + |x|^\kappa)v \in L^1(\mathbb{R}^3)$ for $\kappa \in [0, 1/2)$. In fact,
\begin{equs}[v1]&\|(1+|x|^{\kappa})v\|_{L^1(\mR^3)}=\|(1+|x|^\kappa)\mathcal{F}^{-1}\hat v\|_{L^1(\mR^3)}\lesssim \|(1+|x|^2)\mathcal{F}^{-1}\hat v\|_{L^2(\mR^3)}
	\\\lesssim&\|\mathcal{F}^{-1}(I-\Delta)\hat v\|_{L^2(\mR^3)}\lesssim \|(I-\Delta)\hat v\|_{L^2(\mR^3)}<\infty.
\end{equs}

The stochastic quantization of the measure \eqref{e:Phi-measure1} is formally given by
\begin{equs}[eq:mainnew1]
	\LL \Phi= -|\Phi|^2\Phi-{(2C_1+2C_2-1+m)\Phi}+\xi.
\end{equs}
Here $C_1, C_2$ are defined in \eqref{eq:counterterms}.
We use $(\mathbf{Hv})$ to derive  the finiteness of $C_1,C_2$.

Equation \eqref{eq:mainnew1} is referred to as the dynamical $\Phi^4_3$ model. Due to the singular nature of the space-time white noise, the nonlinear term $|\Phi|^2\Phi$ from \eqref{eq:mainnew1} does not have a classical interpretation. Instead, it must be understood through advanced frameworks such as regularity structures \cite{Hai14} or paracontrolled calculus \cite{GIP15}, which introduce a specific structure for the solutions and allow the analytically ill-defined products to be made sense of using probabilistic tools and renormalization techniques.

The primary objective of this section is to demonstrate the convergence of the stochastic quantization \eqref{eq:mainnew} of the measure \eqref{e:Phi_eps-measure1} to the dynamical $\Phi^4_3$ model \eqref{eq:mainnew1}, as stated in Theorem \ref{th:1} below. The main strategy is to compare the respective nonlinear terms: $v^\eps * \Wick{|\Psi^\eps|^2} \Psi^\eps$ from equation \eqref{eq:mainnew}, and $|\Phi|^2 \Phi$ from the dynamical $\Phi^4_3$ model \eqref{eq:mainnew1}, using paracontrolled calculus. To this end, we express the convolution $v^\eps * f$ in terms of a shift operator $\tau_y f \eqdef f(\cdot - y)$ as $\int v^\eps(y) \tau_y f\,\d y$, which allows us to apply paracontrolled calculus to equation \eqref{eq:mainnew}.

\br \label{C1C2e} An interesting aspect is the appearance of a new linear term $2(C_1 + C_2)\Phi$ in the limiting equation \eqref{eq:mainnew1}, where $C_1$ and $C_2$ depend on $v$. We can also consider the stochastic quantization of the measure \eqref{e:Phi_eps-measure} given by
\begin{equs}[eq:main]
	\LL \Psi^\eps= -( v^\eps*\Wick{|\Psi^\eps|^2})\Psi^\eps+{(v^\eps G)*\Psi^\eps+(1-m-6b^\eps) \Psi^\eps}+\xi.
\end{equs}
The nonlocal term $(v^\eps G) * \Psi^\eps$ arises from the Wick renormalization for $(v^\eps * \Wick{|\Psi^\eps|^2}) \Psi^\eps$, which is the natural renormalization for the Hartree $\Phi^4_3$ model and also appears as a renormalization term in \cite{Bri, OOT24}.

As $\eps \to 0$, the limit of equation \eqref{eq:main} corresponds to equation \eqref{eq:mainnew} with $C_1 = C_2 = 0$. The transition from the Wick renormalization term $(v^\eps G) * \Psi^\eps$ in \eqref{eq:main} to $a^\eps \Psi^\eps$ in equation \eqref{eq:mainnew} creates a new mass term $(2C_1 + 2C_2) \Phi$ in the limiting equation \eqref{eq:mainnew1}. Formally, this transition introduces a new term $\cR\Psi^\eps \in \bC^{-\frac32-\kappa}$ (see Lemma \ref{comnew}) in the equation, which leads to new renormalization terms $c_1^\eps$ and $c_2^\eps$ when dealing with products involving $\cR\Psi^\eps$, similar to the appearance of $b^\eps$ (see Proposition \ref{thcomm1} Proposition \ref{prop:Zn} below). Through direct computation, these renormalization terms converge to finite constants $C_1$ and $C_2$, owing to the special structure of the operator $\cR$.
\er

The basic idea to analyze equations \eqref{eq:mainnew} and \eqref{eq:mainnew1} is to decompose them into an irregular part and a regular part. The most irregular part of the equation is given by the solution to the following linear equation:
\begin{equs}[eq:lin]
	\LL Z = \xi,
\end{equs}
where $Z$ denotes the stationary solution to the linear equation \eqref{eq:lin}. Since space-time white noise is a random distribution with space-time regularity $-\frac52 - \kappa$ (for $\kappa > 0$ under parabolic scaling), we obtain $Z \in C_T\bC^{-\frac12 - \frac\kappa2} \cap C_T^{\frac12-\sigma} \bC^{-\frac32+2\sigma-\frac\kappa2}$, $\bP$-a.s., for every $\kappa, \sigma \in( 0,\frac12)$. Here and in the sequel, $T \in (0, \infty)$ represents an arbitrary finite time.
By substituting $Z$ into the nonlinear terms of equations \eqref{eq:mainnew} and \eqref{eq:mainnew1}, the singularity of $Z$ causes the products involving $Z$ to lose their meaning in the classical analytical framework. Renormalization and probabilistic tools are required for these terms.

In the following, we first introduce the stochastic objects required for the rigorous formulation of the dynamical $\Phi^4_3$ model \eqref{eq:mainnew1} and the approximation equation \eqref{eq:mainnew}. Then, we present the paracontrolled solution framework, incorporating the paracontrolled ansatz for \eqref{eq:mainnew1} and \eqref{eq:mainnew}. Finally, we prove the convergence of equation \eqref{eq:mainnew} to equation \eqref{eq:mainnew1}.

\subsection{Stochastic Objects and Renormalization}\label{sec:ren}

In this section we   introduce the renormalized terms. To begin, we define some useful notations.  Let $P_tf=e^{t(\Delta-1)}f$ and $\sI f=\LL^{-1}f=\int_0^\cdot P_{\cdot-s}f\,\dif s.$
Let $Z_{\delta}$ be the stationary solution to $\LL {Z}_{\delta}=\xi_{\delta}$ with $\xi_\delta$ being $\xi$ mollified with a bump function.

We also introduce graph notation as follows. We use $\cZ^{\<X>}$ to denote space--time white noise $\xi$ and use $\cZ^{\<X1>}$ to denote the conjugate of space--time white noise $\overline{\xi}$. We use $\cZ^{\<1>}$ to denote $Z$ and use  $\cZ^{\<1r>}$ to denote $\overline{Z}$.
After subtracting the $Z$-term from equation \eqref{eq:mainnew1}, we encounter terms involving the square and cubic powers of
$Z$ in the remainder equation. These terms are defined through the Wick product, as follows:
%For convenience, we assume that  the noise is mollified with a  bump function.
%The Wick products are defined as
\begin{equs}[e:wick-tilde]
	\begin{array}{lllllll}
		\cZ^{\<2m>}&\eqdef& \lim_{\delta\to0}\cZ_\delta^{\<2m>}\eqdef
		\lim\limits_{\delta\to0}(|Z_{\delta}|^2-a_\delta),  & & \cZ^{\<2>}&\eqdef& \lim_{\delta\to0}\cZ_\delta^{\<2>}\eqdef
		\lim\limits_{\delta\to0}Z_{\delta}^2,
		\\
		\cZ^{\<20vm1>}&\eqdef & \sI(\cZ^{\<2m>}), & &\cZ^{\<20vm2>}&\eqdef& \sI(\cZ^{\<2>}),
		\\
		\cZ^{\<30m>}&\eqdef&
		\lim_{\delta\to0}\sI( |Z_\delta|^2Z_\delta-2a_\delta Z_\delta), & & \cZ^{\<30mc>}&\eqdef&\lim_{\delta\to0}\sI(|Z_\delta|^2\overline Z_\delta-2a_\delta \overline{Z}_\delta),
	\end{array}
\end{equs}
where $a_\delta=\mathbf{E}[ |Z_{\delta}|^2(t)]\sim \frac1\delta$  is a divergent constant independent of $t$ due to stationarity. In the remainder equation for $\Phi-Z$ the most irregualr part is the third Wick power and we need to continue with the decomposition in the same spirit to cancel it (see \eqref{eq:decphi} below).

We further introduce the corresponding $\eps$ dependent stochastic objects from the nonlinear term of equation \eqref{eq:mainnew}:
\begin{equs}[e:wick-tilde-eps]
	\begin{array}{lllllll}
		\cZ_\eps^{\<2c>}&\eqdef&
		(v_\eps*\overline Z) Z-a^\eps,
		& &
		\cZ_\eps^{\<2v>}&\eqdef&
		v_\eps* Z Z,
		\\ \cZ_\eps^{\<2vc>}&\eqdef&
		(v_\eps* \overline Z) \overline Z& &	\cZ^{\<2vm>}_\eps&\eqdef&
		v_\eps* \cZ^{\<2m>}, \\
		\cZ^{\<20vm>}_\eps&\eqdef&\sI( \cZ^{\<2vm>}),
		& &
		\cZ_\eps^{\<2c0>}&\eqdef&\sI(\cZ_\eps^{\<2c>}), \\ \cZ_\eps^{\<2v0>}&\eqdef&\sI(\cZ_\eps^{\<2v>}),
		& &
		\cZ_\eps^{\<2vc0>}&\eqdef&\sI(\cZ_\eps^{\<2vc>}),
	\end{array}
\end{equs}
and
\begin{equs}
	\cZ_\eps^{\<3vm>}\eqdef&
	\cZ^{\<2vm>}_\eps Z-(v^\eps G)*Z, \qquad \cZ_\eps^{\<3v0m>}\eqdef\sI(\cZ_\eps^{\<3vm>}),
\end{equs}
where we recall $a^\eps=\E [v_\eps*\overline Z Z]$ and  $G(x-y)=\E Z(x)\overline Z(y)$.

The renormalization described in \eqref{e:wick-tilde} is insufficient for handling the dynamical $\Phi^4_3$ model \eqref{eq:mainnew1}. Even further expansions do not completely resolve the issue, as ill-defined products persist. To address this, we employ paracontrolled calculus, which requires the construction of the following stochastic objects:
\begin{equs}[sto:3]
	\begin{array}{llllll}
		\cZ^{\<22oc>}&\eqdef&\lim_{\delta\to0}(\sI(\cZ^{\<2m>}_\delta)\circ \cZ^{\<2m>}_\delta-\tilde{b}_\delta),&  \cZ^{\<22oc1>}&\eqdef&\lim_{\delta\to0}(\sI(\cZ^{\<2>}_\delta)\circ \cZ^{\<2m>}_\delta),
		\\ \cZ^{\<22oc2>}&\eqdef&\lim_{\delta\to0}(\sI(\overline\cZ^{\<2>}_\delta)\circ \cZ^{\<2>}_\delta-2\tilde{b}_\delta), &\cZ^{\<22oc3>}&\eqdef&\lim_{\delta\to0}(\sI(\cZ^{\<2m>}_\delta)\circ \cZ^{\<2>}_\delta),
		\\ \cZ^{\<31oc>}&\eqdef&\lim_{\delta\to0}\cZ^{\<30m>}_\delta\circ Z_\delta,&  \cZ^{\<31oc3>}&\eqdef&\lim_{\delta\to0}\cZ^{\<30mc>}_\delta\circ Z_\delta,\\
		\cZ^{\<31oc4>}&\eqdef&\lim_{\delta\to0}\cZ^{\<30m>}_\delta\circ \overline{Z}_\delta,& &
		\\ \cZ^{\<32oc>}&\eqdef&\lim_{\delta\to0}(\cZ^{\<30m>}_\delta\circ \cZ^{\<2m>}_\delta-2\tilde{b}_\delta Z_\delta),&  \cZ^{\<32oc3>}&\eqdef&\lim_{\delta\to0}(\cZ^{\<30mc>}_\delta\circ \cZ^{\<2>}_\delta-2\tilde{b}_\delta Z_\delta),
	\end{array}
\end{equs}
with $\tilde{b}_\delta=\E (\sI(\cZ^{\<2m>}_\delta)\circ \cZ^{\<2m>}_\delta)$. Here $\tilde{b}_\delta$ depends on $t$ and satisfies
$\sup_{t\geq0} |\tilde{b}_\delta(t)|\lesssim |\log \delta|$.

Before proceeding we decompose $6b^\eps$ in \eqref{eq:counterterms} in the following result.

\bl\label{lem:b0} It holds that
\begin{align*}
	6b^\eps &=b_1^\eps+2b_2^\eps+2b_3^\eps+b_4^\eps ,\end{align*}
with
\begin{align*}
	\begin{array}{llllll}
		b_1^\eps&\eqdef&\sum_{k_1,k_2\in\bZ^3}\widehat{v^\eps}(k_1+k_2)^2 b(k_1,k_2),& b_2^\eps&\eqdef&\sum_{k_1,k_2\in\bZ^3}\widehat{v^\eps}(k_1+k_2)\widehat{v^\eps}(k_1)b(k_1,k_2),\\
		b_3^\eps&\eqdef&\sum_{k_1,k_2\in\bZ^3}\widehat{v^\eps}(k_1)^2b(k_1,k_2),& b_4^\eps&\eqdef&\sum_{k_1,k_2\in\bZ^3}\widehat{v^\eps}(k_1)\widehat{v^\eps}(k_2)b(k_1,k_2),
	\end{array}
\end{align*}
and
\begin{equs}[def:bk1k2]
	b(k_1,k_2) \eqdef\frac1{(2\pi)^{6}\la k_1\ra^2\la k_2\ra^2(\la k_1\ra^2+\la k_2\ra^2+\la k_1+k_2\ra^2)}.
\end{equs}
\el
\begin{proof}
	Let \begin{equs}[def-F]F(k_1,k_2)\eqdef&\,\widehat{v^\eps}(k_1+k_2)^2+\widehat{v^\eps}(k_1+k_2)\widehat{v^\eps}(k_1)
		\\&+\widehat{v^\eps}(k_1+k_2)\widehat{v^\eps}(k_2)+\widehat{v^\eps}(k_1)^2+\widehat{v^\eps}(k_2)^2+\widehat{v^\eps}(k_2)\widehat{v^\eps}(k_1).\end{equs}
	Using symmetry we obtain
	\begin{equs}[eq:b-s]
		6b^{\eps}=&\,\frac13\sum_{k_1,k_2} \frac{F(k_1,k_2)}{(2\pi)^6\la k_1\ra^2\la k_2\ra^2\la k_1+k_2\ra^2}
		\\=&\,\frac13\sum_{k_1,k_2}\frac{F(k_1,k_2)}{(2\pi)^6(\la k_1\ra^2+\la k_2\ra^2+\la k_1+k_2\ra^2)}
		\\&\bigg(\frac1{\la k_1\ra^2\la k_1+k_2\ra^2}+\frac1{\la k_2\ra^2\la k_1+k_2\ra^2}+\frac1{\la k_1\ra^2\la k_2\ra^2}\bigg)
		\\=&\,\sum_{k_1,k_2}\frac{F(k_1,k_2)}{(2\pi)^6(\la k_1\ra^2+\la k_2\ra^2+\la k_1+k_2\ra^2)\la k_1\ra^2\la k_2\ra^2}.
	\end{equs}
	The RHS is just $b_1^\eps+2b_2^\eps+2b_3^\eps+b_4^\eps$ by symmetry and	the result then follows.
\end{proof}
We further introduce the corresponding $\eps$-dependent  renormalization terms:
\begin{equs}[sto:4]
	\begin{array}{llllllllll}
		\cZ_\eps^{\<22vm>}\eqdef&\cZ_\eps^{\<20vm>}\circ \cZ_\eps^{\<2vm>}-{\tilde b_1^\eps},&
		\cZ_\eps^{\<22cvm>}\eqdef&\cZ^{\<2c0>}_\eps\circ \cZ^{\<2vm>}_\eps-{\tilde b_2^\eps},&
		\cZ_\eps^{\<22ccvm>}\eqdef&\cZ^{\<2v0>}_\eps\circ \cZ^{\<2vm>}_\eps,
	\end{array}
\end{equs}
use the condition $\E ZZ = 0$ to ensure that several renormalization constants vanish. Specifically, we define $\tilde b_1^\eps = \E \cZ_\eps^{\<20vm>} \circ \cZ_\eps^{\<2vm>}$ and $\tilde b_2^\eps = \E \cZ_\eps^{\<2c0>} \circ \cZ_\eps^{\<2vm>}$. Additionally, we introduce
$\tilde b_4^\eps=\E \cZ_\eps^{\<2c0>}\circ \cZ_\eps^{\<2c>}, \tilde{b}_5^\eps=\E \cZ_\eps^{\<2vc0>}\circ \cZ_\eps^{\<2v>},\tilde b_3^\eps=\tilde b_5^\eps-\tilde b_4^\eps,$
which will be used below.
Note that 	$\tilde b_i^\eps$ is given as $b_i^\eps$ in Lemma \ref{lem:b0} with $b(k_1,k_2)$ in Lemma \ref{lem:b0} replaced by $\tilde b_t(k_1,k_2)$
\begin{align}\label{def:tib}
	\tilde b_t(k_1,k_2)\eqdef\, b(k_1,k_2)(1-e^{-t(\la k_1\ra^2+\la k_2\ra^2+\la k_1+k_2\ra^2)}).
\end{align}
We have for $i=1,2,3, 4, $ and $\kappa>0$
$$\sup_{\eps>0}|\tilde{b}_i^\eps(t)-b_i^\eps|\lesssim t^{-\kappa}.$$
We also introduce the following $\eps$-dependent renormalization terms:
\begin{equs}[sto:5]
	\begin{array}{llllll}
		\cZ_\eps^{\<3v0m1ci>}&\eqdef&(v^\eps*\cZ_\eps^{\<3v0m>})\circ Z,& \cZ_\eps^{\<3v0m1cci>}&\eqdef&(v^\eps*\overline{\cZ}_\eps^{\<3v0m>})\circ Z,\\
		\cZ_\eps^{\<3v0m1ci2>}&\eqdef&\cZ^{\<3v0m>}_\eps\circ \overline Z,& \cZ_\eps^{\<3v0m1cci2>}&\eqdef&\overline{\cZ}_\eps^{\<3v0m>}\circ {Z},\\
		\cZ_\eps^{\<3v0m1ci10>}&\eqdef&\cZ^{\<3v0m>}_\eps\circ v^\eps*\overline{Z},& \cZ_\eps^{\<3v0m1cci10>}&\eqdef&\cZ^{\<3v0m>}_\eps\circ v^\eps*Z,
		\\
		\cZ^{\<3v02vm>}_\eps&\eqdef& \cZ^{\<3v0m>}_\eps\circ \cZ_\eps^{\<2vm>}-{(\tilde b_1^\eps+\tilde b_2^\eps) Z}.&
	\end{array}
\end{equs}
We also set
\begin{align*}
	\cZ^{\<3v02vmyn>}_\eps	\eqdef\cZ^{\<3v02vm>}_\eps+\cZ_\eps^{\<3v0m>}(\prec+\succ)\cZ_\eps^{\<2vm>}.
\end{align*}
The nonlinearity in equation \eqref{eq:mainnew} involves convolution with $v^\eps$. To apply paracontrolled calculus to this term, we introduce a shift operator defined for $y \in \mathbb{R}^3$ as follows:
\begin{equs}[def:tauy]
	\tau_y f\eqdef f(\cdot-y),
\end{equs}
and express $v^\eps*f=\int v(y)\tau_yf\dif y$.
Using this shift operator, we define the following random fields that depend on $y$:
\begin{equs}[sto:ny0]
	\cZ^{\<2m>}_y\eqdef&\Wick{\tau_y \overline ZZ}\eqdef\lim_{\delta\to0}\big(\tau_y \overline Z_\delta Z_\delta-\E[\tau_y \overline Z_\delta Z_\delta]\big)= \lim_{\delta\to0}\big(\tau_y \overline Z_\delta Z_\delta -G_\delta(y)\big),\\
	\cZ^{\<2>}_y\eqdef &\Wick{\tau_y  ZZ}\eqdef\lim_{\delta\to0}\big(\tau_y  Z_\delta Z_\delta\big), \qquad \cZ^{\<2yc>}_y\eqdef \overline{\cZ}^{\<2>}_y\\
	\cZ^{\<20vm1>}_y\eqdef & \sI(\cZ^{\<2m>}_y), \qquad \cZ^{\<20vm2>}_y\eqdef \sI(\cZ^{\<2>}_y),\qquad \cZ^{\<20vm2y>}_y\eqdef \sI(\cZ^{\<2yc>}_y).
\end{equs}
with the limit in $L^p(\Omega;C_T\bC^{-1-\kappa})$ for $p\geq1, \kappa>0$. Here $G_\delta$ is $G$ mollified with a bump function. As in \eqref{sto:3} we also define the following $y$-dependent  random fields and the related renormalization counterterms:
\begin{equs}[sto:ny2]
	\begin{array}{llllll}
		
		\cZ_{y,\eps}^{\<3v0m1ci>}&\eqdef&(\tau_y\cZ_\eps^{\<3v0m>})\circ Z,& \cZ_{y,\eps}^{\<3v0m1cci>}&\eqdef&(\tau_y\overline{\cZ}_\eps^{\<3v0m>})\circ Z,\\
		\cZ_{y,\eps}^{\<3v0m1ci2>}&\eqdef&\cZ^{\<3v0m>}_\eps\circ \tau_y\overline Z,& \cZ_{y,\eps}^{\<3v0m1cci2y>}&\eqdef&\cZ_\eps^{\<3v0m>}\circ \tau_y{Z},\\

		\cZ_{y,\eps}^{\<22vmy>}&\eqdef&\cZ^{\<20vm1>}_y\circ\cZ^{\<2vm>}_\eps-\tilde b_2^\eps(t,y),& \cZ_{y,\eps}^{\<22vmy1>}&\eqdef&\cZ^{\<20vm2>}_y\circ\cZ^{\<2vm>}_\eps,
		\\
		\cZ^{\<22ocy>}_{y,\eps}&\eqdef&\tau_y\cZ^{\<20vm>}_\eps\circ \cZ^{\<2m>}_y-\tilde b_{2}^\eps(t,y),&\cZ^{\<22ocy1>}_{y,\eps}&\eqdef&\tau_y\cZ^{\<20vm>}_\eps\circ \cZ^{\<2>}_y,
		\\
		\cZ^{\<22oc1>}_{y,y_1}&\eqdef&\tau_y \cZ^{\<20vm2>}_{y_1}\circ \cZ^{\<2m>}_y,&	\cZ^{\<22oc>}_{y,y_1}&\eqdef&\tau_y \cZ^{\<20vm1>}_{y_1}\circ \cZ^{\<2m>}_y-\tilde b_3(t,y,y_1),\\
		\cZ^{\<22oc3>}_{y,y_1}&\eqdef&\tau_y\overline{\cZ}^{\<20vm1>}_{y_1}\circ \cZ^{\<2>}_y,&	\cZ^{\<22oc2>}_{y,y_1}&\eqdef&\tau_y\cZ^{\<20vm2y>}_{y_1}\circ \cZ^{\<2>}_y-\tilde b_5(t,y,y_1),
	\end{array}
\end{equs}
where the RHS of the last two lines can be defined as in \eqref{sto:3} via probabilistic estimates and
\begin{equs}[def:tby]
	\tilde b_2^\eps(t,y)\eqdef&\,\E[\tau_y\cZ^{\<20vm1>}\circ\cZ^{\<2vm>}]=\sum_{k_1,k_2\in\mathbb{Z}^3}\widehat{v^\eps}(k_1+k_2)b_t(k_1,k_2)e^{-\imath k_1y},\\
	\tilde b_3(t,y,y_1)\eqdef&\,\E[\tau_y \cZ^{\<20vm1>}_{y_1}\circ \cZ^{\<2m>}_y]= \sum_{k_1,k_2\in\mathbb{Z}^3} b_t(k_1,k_2)e^{-\imath k_1(y+y_1)},\\
	\tilde b_5(t,y,y_1)\eqdef&\,\E[\tau_y\cZ^{\<20vm2y>}_{y_1}\circ\cZ^{\<2>}_y]=\sum_{k_1,k_2\in\mathbb{Z}^3}b_t(k_1,k_2)(e^{-\imath (k_1y+k_2y_1)}+e^{-\imath k_1(y+y_1)}).
\end{equs}
Here $\tilde b_3(y,y_1)$ and $\tilde b_5(y,y_1)$ can be viewed as $L^2$-functions w.r.t. the variables $y,y_1$.
We also introduce the following random fields through $y$-dependent random fields:
\begin{equs}[sto:ny1]
	\begin{array}{llllll}
		\mathbf{Z}^{\<2v2v>}_\eps&\eqdef&	\int v^\eps(y)\tau_y\cZ^{\<20vm>}_\eps\circ \cZ^{\<2m>}_y\dif y-\tilde b_2^\eps,&
		\mathbf{Z}^{\<2cvr2v>}_\eps&\eqdef&\int v^\eps(y)v^\eps(y_1)\tau_y\cZ^{\<20vm2>}_{y_1} \circ \cZ^{\<2m>}_y\dif y\dif y_1,\\
		\mathbf{Z}^{\<2cv2v>}_\eps&\eqdef&\int v^\eps(y)v^\eps(y_1)\tau_y\cZ^{\<20vm1>}_{y_1}\circ \cZ^{\<2m>}_y\dif y\dif y_1-\tilde b_3^\eps,&
		\mathbf{Z}^{\<2v2vr>}_\eps&\eqdef&	\int v^\eps(y)\tau_y\cZ^{\<20vm>}_\eps\circ \cZ^{\<2>}_y\dif y,\\
		\mathbf{Z}^{\<2cv2vr>}_\eps&\eqdef&\int v^\eps(y)v^\eps(y_1)\tau_y\overline{\cZ}^{\<20vm1>}_{y_1}\circ \cZ^{\<2>}_y\dif y\dif y_1,&
		\mathbf{Z}^{\<2cv2v0>}_\eps&\eqdef&\int v^\eps(y)v^\eps(y_1)\tau_y\cZ^{\<20vm2y>}_{y_1}\circ \cZ^{\<2>}_y\dif y\dif y_1-\tilde b_5^\eps.
	\end{array}
\end{equs}

For $y$-dependent random fields in \eqref{sto:ny2} and $\eps$-dependent random fields in \eqref{sto:4}, \eqref{sto:5} and \eqref{sto:ny1} we have
\begin{equs}[sto:re]
	\begin{array}{llllll}
		\cZ^{\<22cvm>}_\eps&=&\int v^\eps(y)	\cZ^{\<22vmy>}_{y,\eps}\dif y, & \cZ_\eps^{\<22ccvm>}&=&\int v^\eps(y)\cZ_{y,\eps}^{\<22vmy1>}\dif y,\\
		\cZ^{\<3v0m1ci>}_\eps&=&\int v^\eps(y)\cZ^{\<3v0m1ci>}_{y,\eps}\dif y,& \cZ_{\eps}^{\<3v0m1cci>}&=&\int v^\eps(y)\cZ_{y,\eps}^{\<3v0m1cci>}\dif y,\\
		\cZ_\eps^{\<3v0m1ci10>}&=&\int v^\eps(y)	\cZ_{y,\eps}^{\<3v0m1ci2>}\dif y,& \cZ_\eps^{\<3v0m1cci10>}&=&\int v^\eps(y)	\cZ_{y,\eps}^{\<3v0m1cci2>}\dif y,\\
		\mathbf{Z}^{\<2v2v>}_\eps&=&\int v^\eps(y)	\cZ^{\<22ocy>}_{y}\dif y,&	\mathbf{Z}^{\<2v2vr>}_\eps&=&\int v^\eps(y)	\cZ^{\<22ocy1>}_{y}\dif y,\\
		\mathbf{Z}^{\<2cvr2v>}_\eps&=&\int v^\eps(y)v^\eps(y_1)\cZ^{\<22oc1>}_{y,y_1,\eps}\dif y\dif y_1,
		&\mathbf{Z}^{\<2cv2v>}_\eps&=&\int v^\eps(y)v^\eps(y_1)\cZ^{\<22oc>}_{y,y_1,\eps}\dif y\dif y_1,
		\\
		\mathbf{Z}^{\<2cv2vr>}_\eps&=&\int v^\eps(y)v^\eps(y_1)\cZ^{\<22oc3>}_{y,y_1}\dif y\dif y_1,&
		\mathbf{Z}^{\<2cv2v0>}_\eps&=&\int v^\eps(y)v^\eps(y_1)\cZ^{\<22oc2>}_{y,y_1,\eps}\dif y\dif y_1,
	\end{array}
\end{equs}
and
\begin{equs}[sto:re1]
	\int v^\eps(y)\tilde b_2^\eps(t,y)\dif y=\tilde b_2^\eps(t),\qquad\int v^\eps(y)v^\eps(y_1)\tilde b_i(t,y,y_1)\dif y\dif y_1=\tilde b_i^\eps(t),
\end{equs}
for $i=3, 5$.

We then recall the following moments bounds for the stochastic terms from \cite{CC15}, \cite{Hos}. To this end we also introduce $|\tau|$ for every $\cZ^\tau$ in the following table.

\begin{table}[!htbp]
	\begin{tabular}{c|c|c|c|c|c|c|c}
		\toprule
		$\cZ^\tau$  &  $Z$ &$\cZ^{\<2m>}$ & $\cZ^{\<2>}$& $\cZ^{\<30mc>}$& $\cZ^{\<30m>}$& $\cZ^{\<22oc>}$& $\cZ^{\<22oc1>}$
		\\ \hline
		$|\tau|$ & $-\frac12-\frac\kappa2$ & $-1-\kappa$ & $-1-\kappa$& $\frac12-\kappa$& $\frac12-\kappa$ &$-\kappa$ &$-\kappa$
		\\ \hline \hline
		$\cZ^\tau$  & $\cZ^{\<22oc2>}$& $\cZ^{\<22oc3>}$& $\cZ^{\<31oc>}$& $\cZ^{\<31oc3>}$& $\cZ^{\<31oc4>}$& $\cZ^{\<32oc>}$& $\cZ^{\<32oc3>}$
		\\ \hline
		$|\tau|$ & $-\kappa$& $-\kappa$& $-\kappa$& $-\kappa$& $-\kappa$& $-\frac12-\kappa$& $-\frac12-\kappa$
		\\\midrule
		\bottomrule
	\end{tabular}
	\caption{Regularity of stochastic objects: $\cZ^\tau$ }\label{ta:1}
\end{table}

\bl\label{lem:bdZ} It holds that for every $p\geq 1, \kappa>0$
and every $\cZ^\tau$ from Table \ref{ta:1}
\begin{align*}
	\E \|\cZ^{\tau}\|_{C_T\bC^{|\tau|}}^p+\E\|\cZ^{\<30m>}\|^p_{C_T^{\frac18}L^\infty}\lesssim 1.
\end{align*}
The convergence in \eqref{e:wick-tilde} and \eqref{sto:3} hold in $L^p(\Omega;C_T\bC^{|\tau|})$.
\el

We use $\mZ=(\cZ^\tau)$ for the tree $\tau$ appear in Lemma \ref{lem:bdZ}   and $\|\mZ\|$ to denote the smallest number bigger than $1$ and  $C_T\bC^{|\tau|}$-norm of  $\cZ^\tau$ from Table \ref{ta:1} and $C_T^{\frac18}L^\infty$-norm of $\cZ^{\<30m>}$.

Using probabilistic methods analogous to those in \cite{CC15, ZZ18}, we establish the convergence of the corresponding renormalized terms. For clarity, we summarize the renormalized quantity $\cZ_\eps^{\tau_\eps}$ and its limiting counterpart $\cZ^\tau$ in the  table below.

\begin{table}[!htbp]
	\begin{tabular}{c|c|c|c|c|c|c|c|c}
		\toprule
		$\cZ_\eps^{\tau_\eps}$   &$\cZ_\eps^{\<2c>}$ &$\cZ_\eps^{\<2vm>}$ & $\cZ^{\<2v>}$& $\cZ^{\<2vc>}$&  $\cZ_\eps^{\<3v0m>}$ &  $\cZ_\eps^{\<22vm>}$ & $\cZ_\eps^{\<22cvm>}$& $\cZ_\eps^{\<22ccvm>}$
		\\ \hline
		$\cZ^{\tau}$  &$\cZ^{\<2m>}$ & $\cZ^{\<2m>}$& $\cZ^{\<2>}$ & $\overline{\cZ}^{\<2>}$ & $\cZ^{\<30m>}$&  $\cZ^{\<22oc>}$  &$\cZ^{\<22oc>}$& $\cZ^{\<22oc1>}$
		\\ \hline \hline
		$\cZ^{\tau_\eps}_\eps$  &     $\cZ_\eps^{\<3v0m1ci>}$ &$ \cZ_\eps^{\<3v0m1cci>}$& $\cZ_\eps^{\<3v0m1ci10>}$ &$ \cZ_\eps^{\<3v0m1cci10>}$& $\cZ_\eps^{\<3v0m1ci2>}$& $\cZ_\eps^{\<3v0m1cci2>}$ &$\cZ^{\<3v02vm>}_\eps$%& $\cZ_\eps^{\<3v0m1ci1>}$ & $\cZ_\eps^{\<3v0m1cci1>}$
		%&$\cZ_\eps^{\<3v0m2o>}$&$\cZ_\eps^{\<3v0m2co>}$
		\\ \hline
		$\cZ^\tau$ & $\cZ^{\<31oc>}$& $\cZ^{\<31oc3>}$& $\cZ^{\<31oc4>}$&  $\cZ^{\<31oc3>}$& $\cZ^{\<31oc4>}$&$\cZ^{\<31oc3>}$ &$\cZ^{\<32oc>}$ %&  $\cZ^{\<32oc>}$& $\cZ^{\<32oc3>}$%&$\cZ^{\<32oc>}$& $\cZ^{\<32oc3>}$
		\\\midrule
		\bottomrule
	\end{tabular}
	\caption{Convergence of stochastic objects $\cZ^{\tau_\eps}_\eps$ to $\cZ^\tau$ }\label{ta:2}
\end{table}

\bp\label{prop:Z} For every $\cZ^{\tau_\eps}_\eps$ and corresponding $\cZ^\tau$ in Table \ref{ta:2} it holds that for every $\kappa>0$ and $p\geq1$
\begin{align*}
	\E\|\cZ^{\tau_\eps}_\eps-\cZ^\tau\|_{C_T\bC^{|\tau|-\kappa}}^p+\E\|\cZ_\eps^{\<3v0m>}-\cZ^{\<30m>}\|^p_{C_T^{\frac18}L^\infty}\lesssim \eps^{\kappa p},
\end{align*}
and
\begin{align*}
	\E\|\cZ^{\tau_\eps}_\eps\|_{C_T\bC^{|\tau|}}^p+\E\|\cZ_\eps^{\<3v0m>}\|^p_{C_T^{\frac18}L^\infty}\lesssim 1,
\end{align*}
with the proportional constant independent of $\eps$.
\ep
\begin{proof}
	The proof  follows the same probabilistic calculations as in \cite{CC15, ZZ18}, utilizing the chaos expansion of the stochastic terms, Gaussian hypercontractivity and the Besov embedding Lemma \ref{lem:emb}. The only difference is the appearance of the additional term $ v^\eps $, which introduces a Fourier multiplier $ \hat{v}^\eps(k) \to 1 $ as $ \eps \to 0 $.  We replace $1$ with $\widehat{v^\eps}(k)$, and by applying $|\widehat{v^\eps}(k)-1|\lesssim \eps^\kappa |k|^\kappa$ we derive the result using the same approach.
\end{proof}

We collect the renormalized quantity $\mathbf{Z}_\eps^{\tau_\eps}$ and the corresponding limit $\cZ^\tau$ in the following table.

\begin{table}[!htbp]
	\begin{tabular}{c|c|c|c|c|c|c}
		\toprule
		$\cZ_\eps^{\tau_\eps}$   & $\mathbf{Z}_\eps^{\<2v2v>}$ & $\mathbf{Z}_\eps^{\<2cv2v>}$& $\mathbf{Z}_\eps^{\<2cvr2v>}$&$\mathbf{Z}_\eps^{\<2cv2v0>}$&
		$\mathbf{Z}_\eps^{\<2cv2vr>}$  &    $\mathbf{Z}_\eps^{\<2v2vr>}$
		\\ \hline
		$\cZ^{\tau}$   &  $\cZ^{\<22oc>}$&  $\cZ^{\<22oc>}$ & $\cZ^{\<22oc1>}$ &  $\cZ^{\<22oc2>}$
		&$\cZ^{\<22oc3>}$ &  $\cZ^{\<22oc3>}$
		\\\midrule
		\bottomrule
	\end{tabular}
	\caption{Convergence of stochastic objects $\mathbf{Z}^{\tau_\eps}_\eps$ to $\cZ^\tau$ }\label{ta:3}
\end{table}

\bp\label{prop:Zny} For every $\mathbf{Z}^{\tau_\eps}_\eps$ and corresponding $\cZ^\tau$ in Table \ref{ta:3} it holds that for every $\kappa>0$ and $p\geq1$
\begin{align*}
	\E\|\mathbf{Z}^{\tau_\eps}_\eps-\cZ^\tau\|_{C_T\bC^{|\tau|-\kappa}}^p\lesssim \eps^{\kappa p},
\end{align*}
and
\begin{align*}
	\E\|\mathbf{Z}^{\tau_\eps}_\eps\|_{C_T\bC^{|\tau|}}^p+\E\sup_{y\in\mathbb{R}^3}\|\cZ^{\tau}_y\|_{C_T\bC^{-1-\kappa}}^p\lesssim 1,
\end{align*}
for $\cZ^{\tau}_y\in\{\cZ^{\<2m>}_y, \cZ^{\<2>}_y\}$ and
\begin{align*}\E\sup_{y\in\mR^3}\|\cZ^{\tau_\eps}_{y,\eps}\|^p_{C_T\bC^{-\kappa}}+\E\sup_{y,y_1\in\mR^3}\|\cZ^{\tau}_{y,y_1}\|^p_{C_T\bC^{-\kappa}}\lesssim 1,
\end{align*}
for $\cZ^{\tau_\eps}_{y,\eps}\in\{\cZ_{y,\eps}^{\<22vmy>},  \cZ_{y,\eps}^{\<22vmy1>},
\cZ^{\<22ocy>}_{y,\eps}, \cZ^{\<22ocy1>}_{y,\eps},\cZ_{y,\eps}^{\<3v0m1ci>},\cZ_{y,\eps}^{\<3v0m1cci>},\cZ_{y,\eps}^{\<3v0m1ci2>},\cZ_{y,\eps}^{\<3v0m1cci2y>}\}$  and $\cZ^{\tau}_{y,y_1}\in\{ \cZ^{\<22oc1>}_{y,y_1}, 	\cZ^{\<22oc>}_{y,y_1},
\cZ^{\<22oc3>}_{y,y_1},\cZ^{\<22oc2>}_{y,y_1}\}$,
where the proportional constants are independent of $\eps$. Moreover, we have
\begin{align*} \E\sup_{y\in\mathbb{R}^3}\Big(\frac{\|\cZ^{\<2m>}_y-\cZ^{\<2m>}\|_{C_T\bC^{-1-2\kappa}}+\|\cZ^{\<2>}_y-\cZ^{\<2>}\|_{C_T\bC^{-1-2\kappa}}}{|y|^{\kappa}}\Big)^p\lesssim 1.\end{align*}	
\ep
\begin{proof}
	The convergence results follow exactly as in Proposition \ref{prop:Z}. For further details, we also refer to the proof of Proposition \ref{thcomm1} below. Regarding the uniform bounds in $y$ for $\cZ^{\tau}_y$, $\cZ^{\tau\eps}_{y,\eps}$, and $\cZ^{\tau}_{y,y_1}$, we can use the estimate
	\begin{equs}[eq:dify]
		|e^{-\imath k \cdot y_1}-e^{-\imath k\cdot y_2}|\lesssim |y_1-y_2|^\lambda |k|^{\lambda},\end{equs}
	for $0<\lambda<1$, and apply Kolmogorov's continuity criterion to conclude the result for $y \in \mathbb{T}^3$. Since $\cZ^{\tau}_y$, $\cZ^{\tau\eps}_{y,\eps}$, and $\cZ^{\tau}_{y,y_1}$ are periodic in $y$, we extend this result to $y \in \mathbb{R}^3$. For the final result, we also use \eqref{eq:dify} to deduce that it holds for $\sup_{y \in [-\pi, \pi]^3}$ by the same argument, which can be easily extended to $y \in \mathbb{R}^3$, thereby implying the last result.
\end{proof}

We set
$$\cZ_{y,\eps}=(\cZ^{\tau}_y,\cZ^{\tau_\eps}_{y,\eps}, \cZ^{\tau}_{y,y_1} ),$$
for  $\cZ^{\tau}_y\in\{\cZ^{\<2m>}_y, \cZ^{\<2>}_y\}$ and
$\cZ^{\tau_\eps}_{y,\eps}\in\{\cZ_{y,\eps}^{\<22vmy>},  \cZ_{y,\eps}^{\<22vmy1>},
\cZ^{\<22ocy>}_{y,\eps}, \cZ^{\<22ocy1>}_{y,\eps}, \cZ_{y,\eps}^{\<3v0m1ci>},\cZ_{y,\eps}^{\<3v0m1cci>},\cZ_{y,\eps}^{\<3v0m1ci2>},\cZ_{y,\eps}^{\<3v0m1cci2y>}\}$ and $\cZ^{\tau}_{y,y_1}\in\{ \cZ^{\<22oc1>}_{y,y_1}, 	\cZ^{\<22oc>}_{y,y_1},$
$\cZ^{\<22oc3>}_{y,y_1},\cZ^{\<22oc2>}_{y,y_1}\}$. We also
set $\mZ_\eps=(\cZ_\eps^{\tau_\eps}, \mathbf{Z}_\eps^{\tau_\eps}, \cZ_{y,\eps})$ for the tree $\tau_\eps$, which appears in Propositions  \ref{prop:Z}, \ref{prop:Zny},   and use $\|\mZ_\eps\|$ to denote the smallest number bigger than $1$ and  $C_T\bC^{|\tau|}$-norm of  $\cZ_\eps^{\tau_\eps}$ and $\mathbf{Z}_\eps^{\tau_\eps}$ from Table \ref{ta:2} and Table \ref{ta:3}. Additionally, $\|\mZ_\eps\|$ is also greater than $\sup_y \|\cZ_y^\tau\|_{C_T\bC^{-1-\kappa}}$  and $\sup_y \|\cZ^{\tau_\eps}_{y,\eps}\|_{C_T\bC^{-\kappa}}$  and $\sup_{y,y_1} \|\cZ^{\tau}_{y,y_1}\|_{C_T\bC^{-\kappa}}$.

In the following we  fix $\kappa>0$ small enough.

\subsection{Paracontrolled calculus for equations}
In this section we apply paracontrolled calculus for the dynamical $\Phi^4_3$ model \eqref{eq:mainnew1} and equation \eqref{eq:mainnew}.

\subsubsection{Paracontrolled calculus for equation \eqref{eq:mainnew1}}

At the level of approximation, equation \eqref{eq:mainnew1} can be viewed as the limiting case of the following equations as $\delta$ approaches $0$:
\begin{equs}[eq:phidelta]
	\LL \Phi_\delta= -(|\Phi_\delta|^2\Phi_\delta-(2a_\delta-6 b_\delta)\Phi_\delta)-(2C_1+2C_2+m-1)\Phi_\delta+\xi_\delta,
\end{equs}
where $a_\delta$ is given in Section \ref{sec:ren}, $b_\delta=\E (\int_{-\infty}^\cdot P_{\cdot-s}\cZ^{\<2m>}_\delta\dif s\circ \cZ^{\<2m>}_\delta)$ is a constant, and $\xi_\delta$ is the mollification of $\xi$ with a bump function. Since both $a_\delta$ and $b_\delta$ diverge to infinity as $\delta \to 0$, directly analyzing equation \eqref{eq:phidelta} becomes challenging. To address this, we decompose equation \eqref{eq:mainnew1} into an irregular part and a regular part as follows:
\begin{equs}[eq:decphi]
	\Phi=Z-\cZ^{\<30m>}+\phi,
\end{equs}
with $\phi$ solving the following equation
\begin{equs}[phi]
	\LL \phi=&\, 2 \cZ^{\<32oc1>}-2{\cZ^{\<2m>}\phi}-\cZ^{\<2>}\overline{\phi}- |-\cZ^{\<30m>}+\phi|^2Z-|-\cZ^{\<30m>}+\phi|^2(-\cZ^{\<30m>}+\phi)\\&+\cZ^{\<32oc2>}-2\Re[(-\cZ^{\<30m>}+\phi)\overline Z](-\cZ^{\<30m>}+\phi)
	\\&-(2C_1+2C_2+m-1+6b)(Z-\cZ^{\<30m>}+\phi),
\end{equs}
where
\begin{align*}
	\cZ^{\<32oc1>}\eqdef\cZ^{\<32oc>}+\cZ^{\<30m>}(\prec+\succ)
	\cZ^{\<2m>},\qquad 	\cZ^{\<32oc2>}\eqdef \cZ^{\<32oc3>}+\cZ^{\<30mc>}(\prec+\succ)
	\cZ^{\<2>},
\end{align*}
with $\cZ^\tau$ being random objects introduced in Section \ref{sec:ren}. From \eqref{sto:3} $\cZ^{\<32oc1>}$ and $\cZ^{\<32oc2>}$ evolve renormalization from $\tilde b_\delta$ and $$b(t)=\lim_{\delta\to0}( b_\delta-\tilde b_\delta(t))=\sum_{k_1,k_2}e^{-(|k_1+k_2|^2+|k_1|^2+|k_2|^2+3)t}b(k_1,k_2)\lesssim t^{-\kappa}$$
for any $\kappa>0$. Here $b(k_1,k_2)$ is given in \eqref{def:bk1k2}.  Note that the Wick renormalization part $2a_\delta \Phi_\delta$ from the approximation \eqref{eq:phidelta} has been incoporated into $\cZ^{\<30m>}$ and $ \cZ^{\<32oc1>}$, $\cZ^{\<2m>}$ in equation \eqref{phi}, whereas $6\tilde b_\delta \Phi_\delta$ appears in the definition of  $\cZ^{\<32oc>}$, $\cZ^{\<32oc3>}$ in \eqref{sto:3},  as well as in the definition of ${\cZ^{\<2m>}\circ\phi}$  and ${\cZ^{\<2>}\circ\overline\phi}$ below. Additionally, in \eqref{eq:phidelta} and the construction of the $\Phi^4_3$ field in \eqref{e:Phi-measure1} we use the renormalization constant $b_\delta$, while in the definition of $\cZ^{\<32oc>}$, $\cZ^{\<32oc3>}$ we use renormlization involving  $\tilde b_\delta$ to ensure that $\cZ^{\<32oc>}$, $\cZ^{\<32oc3>}$ stay in $C_T\bC^{-\frac12-\kappa}$ $\bP$-a.s.. As a result, we have an additional $b$ term in the last line of equation \eqref{phi}.

The main difficulty to understand equation \eqref{phi} comes from  ${\cZ^{\<2m>}\phi}$  and ${\cZ^{\<2>}\overline\phi}$. In fact these two terms are not well-defined in the classical sense as the expected sum of their regularities is not strictly positive for the resonant product ${\cZ^{\<2m>}\circ\phi}$  and ${\cZ^{\<2>}\circ\overline\phi}$ to be well-defined, cf. Lemma \ref{lem:para}. Collecting the terms which makes $\phi$ too irregular leads to the  following paraproduct ansatz
\begin{equs}[para:phi]
	\phi=&\,2(\cZ^{\<30m>}-\phi)\prec \cZ^{\<20vm1>}+({\cZ}^{\<30mc>}-\overline\phi)\prec\cZ^{\<20vm2>}
	+\phi^\sharp.
\end{equs}
Then $\phi^\sharp$ becomes more regular than $\phi$ since
\begin{align*}\phi^\sharp=&\,\phi+2\sI[(\phi-\cZ^{\<30m>})\prec \cZ^{\<2m>}]+\sI[ (\overline\phi-{\cZ}^{\<30mc>})\prec\cZ^{\<2>}]
	\\&-2[\sI,(\phi-\cZ^{\<30m>})\prec] \cZ^{\<2m>}-[\sI, (\overline\phi-{\cZ}^{\<30mc>})\prec]\cZ^{\<2>},
\end{align*}
where $[\sI,f\prec]g$ denotes the commutator between $\sI$ and $f\prec$ given by
$$[\sI,f\prec]g\eqdef \sI(f\prec g)-f\prec \sI(g).$$
Here the second and the third terms on the right hand side cancel the irregular terms in $\phi$ whereas the last two terms have better regularity by Lemma \ref{lem:com2}.
Using \eqref{para:phi} and   the  commutator Lemma \ref{lem:com1} we define $\cZ^{\<2m>}\circ \phi, \overline\phi \circ \cZ^{\<2>}$ as follows
\begin{equs}[def:phicirc]
	\phi\circ \cZ^{\<2m>}\eqdef &\,2[(\cZ^{\<30m>}-\phi)\prec \cZ^{\<20vm1>}]\circ\cZ^{\<2m>} +[({\cZ}^{\<30mc>}-\overline\phi)\prec\cZ^{\<20vm2>}]\circ \cZ^{\<2m>}+\phi^\sharp\circ \cZ^{\<2m>},\\
	\overline{\phi}\circ \cZ^{\<2>}\eqdef &\,2[(\cZ^{\<30mc>}-\overline{\phi})\prec \cZ^{\<20vm1>}]\circ\cZ^{\<2>} +[({\cZ}^{\<30m>}-\phi)\prec\overline{\cZ}^{\<20vm2>}]\circ \cZ^{\<2>}+\overline{\phi}^\sharp\circ \cZ^{\<2>},
\end{equs}
with
\begin{equs}[phi:circ]
	(	(\cZ^{\<30m>}-\phi)\prec \cZ^{\<20vm1>})\circ \cZ^{\<2m>}\eqdef &\,C(\cZ^{\<30m>}-\phi, \cZ^{\<20vm1>}, \cZ^{\<2m>})+(\cZ^{\<30m>}-\phi)  \cZ^{\<22oc>},\\
	(	({\cZ}^{\<30mc>}-\overline\phi)\prec \cZ^{\<20vm2>})\circ \cZ^{\<2m>}\eqdef&\, C({\cZ}^{\<30mc>}-\overline\phi, \cZ^{\<20vm2>}, \cZ^{\<2m>})+({\cZ}^{\<30mc>}-\overline\phi) \cZ^{\<22oc1>},
	\\ 	(	(\cZ^{\<30mc>}-\overline{\phi})\prec \cZ^{\<20vm1>})\circ \cZ^{\<2>}\eqdef &\,C(\cZ^{\<30mc>}-\overline{\phi}, \cZ^{\<20vm1>}, \cZ^{\<2>})+(\cZ^{\<30mc>}-\overline{\phi})\cZ^{\<22oc3>},\\
	(({\cZ}^{\<30m>}-\phi)\prec 	\overline{\cZ}^{\<20vm2>})\circ \cZ^{\<2>}\eqdef&\, C({\cZ}^{\<30m>}-\phi, \overline{\cZ}^{\<20vm2>}, \cZ^{\<2>})+({\cZ}^{\<30m>}-\phi)  \cZ^{\<22oc2>},
\end{equs}
where $C$ is the  commutator from Lemma \ref{lem:com1}.
Note that the stochastic terms $\cZ^{\<22oc>}$,  $\cZ^{\<22oc2>}$ defined in \eqref{sto:3} require the renormalization  $\tilde b_\delta$.

Let us now formulate the definition of paracontrolled solution to \eqref{phi}. We also note that $|-\cZ^{\<30m>}+\phi|^2Z$ and $2\Re[(-\cZ^{\<30m>}+\phi)\overline Z]\cZ^{\<30m>}$ also evolve terms not well-defined in the classical sense and will be defined in Lemma \ref{lem:conn1} and Lemma \ref{lem:dif1} below using stochastic objects introduced in \eqref{sto:3}. We introduce the following solution space: for a small number $\kappa>0$
\begin{align*}\sD=\{(\phi,\phi^\sharp)\in (C_T\bC^{-\frac12-\kappa})^2, &\sup_{t\in[0,T]}\|\phi(t)\|_{\bC^{-\frac12-\kappa}}+\sup_{t\in[0,T]}t^{\frac{1+3\kappa}{2}}\|\phi(t)\|_{\bC^{\frac{1}{2}+2\kappa}}
	\\&+\sup_{t\in[0,T]}t^{\frac{3+8\kappa}{4}}\|\phi^{\sharp}(t)\|_{\bC^{1+3\kappa}}<\infty\}.\end{align*}

\bd\label{def:sol:phi}We say that a pair $(\phi,\phi^\sharp)\in \sD$ is a
paracontrolled solution to equation \eqref{phi} provided
$\phi^\sharp$ given by \eqref{para:phi}
and $\phi$ satisfies \eqref{phi}  in the analytic weak sense with ${\cZ^{\<2m>}\circ\phi}$ and $\overline{\phi}\circ\cZ^{\<2>} $ given by \eqref{def:phicirc}.
\ed

Based on this definition we also call $\Phi$ is a solution to the dynamical $\Phi^4_3$ model if $\Phi$ satisfies \eqref{eq:decphi} with $\phi$ in \eqref{eq:decphi} is a solution to equation \eqref{phi} in the sense of Definition \ref{def:sol:phi}.

We recall the following result from \cite{CC15,MW18}.

\bt\label{th:phi4} For any $\Phi(0)\in \bC^{-\frac12-\kappa}$ there exists a global in time  unique solutions $(\phi,\phi^\sharp)\in \sD$ $\bP$-a.s. to equation \eqref{phi}.
\et

We take  solution $\phi$ from Theorem \ref{th:phi4} and define
\begin{align*}
	\Phi=Z-\cZ^{\<30m>}+\phi,
\end{align*}
which gives a unique solution to the dynamical $\Phi^4_3$ model \eqref{eq:mainnew1}.
According to \cite{HM18}, these solutions also constitute a Markov process within the space $\bC^{-\frac12-\kappa}$.
By leveraging the uniform estimates provided in \cite{MW18} or \cite{GH18a}, one can utilize the dynamical $\Phi^4_3$
model \eqref{eq:mainnew1} to offer a novel construction of the $\Phi^4_3$ field $\nu$.
This can be achieved through the Krylov-Bogoliubov method applied to the Markov semigroup or via lattice approximation (c.f. \cite{GH18a}). Additionally, we employ general ergodicity results from \cite{HM18, HS22} along with lattice approximation techniques from \cite{GH18a, HM18a, ZZ18} to establish the following conclusion.

\bt\label{th:phi43} There exists a unique invariant measure to  the dynamical $\Phi^4_3$ model \eqref{eq:mainnew1} given by $\nu$ from \eqref{e:Phi-measure1}.
\et

\subsubsection{Paracontrolled calculus for equation \eqref{eq:mainnew}}\label{sec:parapsi}

In equation \eqref{eq:mainnew} $\Wick{|\Psi^\eps|^2}$ denotes the Wick renormalization, which is understood as follows: More precisely, we decompose $\Psi^\eps=Z+u^\eps$,
where $u^\eps$ satisfies the  equation:
\begin{equs}[eq:mainn1]
	\LL u^\eps= -\big( v^\eps*(\cZ^{\<2m>}+2\Re(u^\eps\overline Z)+|u^\eps|^2)\big)(Z+u^\eps)+(a^\eps -6b^\eps-m+1)(Z+u^\eps).
\end{equs}
In this equation we interpret $\Wick{|\Psi^\eps|^2}$ as $\cZ^{\<2m>}+2\Re(u^\eps\overline Z)+|u^\eps|^2$, where the Wick renormalization counterpart is encapsulated in $\cZ^{\<2m>}$. For fixed $\eps$, due to the smoothing effect of $v$, we  obtain local well-posedeness of solutions to equation \eqref{eq:mainn1} and also \eqref{eq:mainnew}, similar to the approach used for the dynamical $\Phi^4_2$ model (c.f. \cite{DD03, MW17}).  We first state  the following result by using Bourgain's argument \cite{Bou}.

\bt\label{th:global} For fixed $\eps>0$ and any $\Psi^\eps(0)\in \bC^{-\frac12-\kappa}$  there exist a unique solution to equation \eqref{eq:mainnew} in $C([0,T];\bC^{-\frac12-\kappa})$ $\bP$-a.s. Moreover, $\nu^\eps$  from \eqref{e:Phi_eps-measure1} is  the unique invariant measure of the solutions to equation  \eqref{eq:mainnew}.
\et
The proof of this result is standard and we put it in appendix.

As we need to prove the solution to equation \eqref{eq:mainnew} converges to the solution of equation \eqref{eq:mainnew1}, paracontrolled calculus is required to further analyze equation \eqref{eq:mainnew}. To facilitate this, we need a more detailed decomposition of equation  \eqref{eq:mainn1}. Notably the renormalized random field $\cZ_\eps^{\<3vm>}$ for
$v^\eps*\cZ^{\<2m>}Z$ corresponds to the Wick renormalization  $(v^\eps G)*Z$, while in equation \eqref{eq:mainnew1} we transform this Wick renormalization to mass renormalization $a^\eps \Psi^\eps$, as explained in Remark \ref{C1C2e}. To this end, we rewrite equation \eqref{eq:mainnew} as
\begin{equs}[eq:mainn]
	\LL \Psi^\eps= -( v^\eps*\Wick{|\Psi^\eps|^2})\Psi^\eps+(v^\eps G)*\Psi^\eps+\cR\Psi^\eps-6b^\eps\Psi^\eps-(m-1)\Psi^\eps+\xi,
\end{equs}
with the operator $\cR$ defined by
\begin{equs}[def:cR]
	\cR f\eqdef a^\eps f-(v^\eps G)*f.
\end{equs}
Compared with \eqref{eq:main} we have an additional term $\cR \Psi^\eps$, which depends linearly on the solution. For fixed $\eps$, the operator $\cR f$ has the same regularity as $f$, but this regularity is dependent on $\eps$. The following lemma regarding $\cR f$ demonstrates that this operator uniformly reduces regularity by one in $\eps$.

\bl\label{comnew} Let $f\in \bB_p^\alpha,\alpha\in\mathbb{R},p\in[1,\infty]$. Then for $\eta\in (0,1)$
\begin{align*}
	\|\cR f\|_{\bB_p^{\alpha-1-\eta}}\lesssim \eps^{\eta}\|f\|_{\bB_p^{\alpha}}.
\end{align*}
Here the implicit constant is independent of $\eps$.
\el
We put the proof of this lemma in Subsection \ref{sec:comnew}.

To address the singular SPDE \eqref{eq:mainn} as well as the new term $\cR \Psi^\eps$, we introduce the following decomposition for equation \eqref{eq:mainnew}, omitting
$\eps$ in $\Psi$ and renormalized random fields for simplicity in the sequel:
\begin{equ}[eq:decPsi]
	\Psi=Z-\cZ^{\<3v0m>}+\sI(\cR Z)+\psi.
\end{equ}
Since $Z\in C_T\bC^{-\frac{1+\kappa}2}$, by Lemma \ref{comnew} we have $\cR Z\in C_T\bC^{-\frac32-\kappa}$ $\bP$-a.s. and $\sI(\cR Z)\to 0$ in $C_T\bC^{\frac12-\kappa}$ $\bP$-a.s..
We can then easily check that $\psi$ satisfies the following equation:
\begin{equs}[eq:main2new]
	\LL \psi=& \, \cZ^{\<3v02vmyn>}-\sI(\cR Z)\cZ^{\<2vm>}-\cZ^{\<2vm>}\psi-2\big(v^\eps*\Re[(Y+\psi)\Wick{\overline Z]\big)Z}
	\\&-6\tilde{b}^\eps(Y+\psi)-(\tilde{b}_2^\eps+2\tilde{b}_3^\eps+\tilde{b}_4^\eps)Z\\&-\big(v^\eps* |Y+\psi|^2\big)Z-\big(v^\eps*|Y+\psi|^2\big)(Y+\psi)-2\big(v^\eps*\Re[(Y+\psi)\overline Z]\big)(Y+\psi)
	\\&+\cR Y+\cR\psi+(6\tilde{b}^\eps-6b^\eps-m+1)\Psi,
\end{equs}
where $6\tilde{b}^\eps=\tilde{b}_1^\eps+2\tilde{b}_2^\eps+2\tilde{b}_3^\eps+\tilde{b}_4^\eps$ and $b^\eps$ given in \eqref{e:Phi_eps-measure}
\begin{equs}[def:Y]
	Y\eqdef-\cZ^{\<3v0m>}+\sI(\cR Z).
\end{equs}
Here and in subsequent discussions, we adopt the following notation for the Wick product  over $\overline ZZ$:
\begin{equs}[wickf]2\big(v^\eps*\Re[f:\overline Z]\big)Z:\eqdef2\big(v^\eps*\Re[f\overline Z]\big)Z-(v^\eps G)*f.\end{equs}
Regarding the renormalization counterterms $(v^\eps G)*\Psi$ and $6 b^\eps \Psi$ in equation \eqref{eq:mainn} we have
\begin{itemize}
	\item $(v^\eps G)*Z$ is incoporated into $\cZ^{\<3v0m>}$, while $(v^\eps G)*(Y+\psi)$ from \eqref{eq:mainn} is incoporated into  $2\big(v^\eps*\Re[(Y+\psi)\Wick{\overline Z]\big)Z}$, respectively;
	\item By \eqref{sto:5} $(\tilde{b}_1^\eps+\tilde{b}_2^\eps) Z$ is  incoporated into $\cZ^{\<3v02vmyn>}$ and $(\tilde{b}_2^\eps+2\tilde{b}_3^\eps+\tilde{b}_4^\eps)Z$ serves as the renomalization counterterm for ${2\big(v^\eps*\Re[Y\Wick{\overline Z]\big)Z}}$;
	\item The term $6\tilde{b}^\eps(Y+\psi)$ serves as the renormalization counterterm for $\cZ^{\<2vm>}\psi$
	and ${2\big(v^\eps*\Re[\psi\Wick{\overline Z]\big)Z}}$.
\end{itemize}

Using Lemma \ref{comnew} we know that $\sI(\cR Z)\in \bC^{\frac12-\kappa}$ $\bP$-a.s. uniform in $\eps$. Hence, we note that the convergence of $\sI(\cR Z)\circ Z$ and  several new stochastic terms invovling $\cR Z$  requires renormalization (c.f. Lemma \ref{lem:para}). We employ probabilistic calculations to prove that they converge to zero in suitable spaces, after subtracting suitable renormalization counterterms.
\bp\label{prop:Zn}
It holds that  for $\kappa>0$ and $p\geq1$
\begin{equs}[con:RZ]
	\E\|\big(v^\eps*\sI(\cR Z)\big)\circ Z\|^p_{C_T\bC^{-2\kappa}}+\E\| \big(v^\eps*\sI(\overline{\cR Z})\big)\circ Z- c^\eps_1\|^p_{C_T\bC^{-2\kappa}}\lesssim \eps^{p\kappa},
\end{equs}
and
\begin{equs}[con:RZ1]
	\E\|(v^\eps*\overline Z)\circ\sI(\cR Z)-c^\eps_1\|^p_{C_T\bC^{-2\kappa}}+\E\| (v^\eps* Z)\circ\sI(\cR Z)\|^p_{C_T\bC^{-2\kappa}}\lesssim \eps^{p\kappa},
\end{equs}
\begin{equs}[con:RZ2]
	\E\|\Re[\overline{Z}\circ\sI(\cR Z)]-c^\eps_2\|^p_{C_T\bC^{-2\kappa}}\lesssim \eps^{p\kappa},
\end{equs}
where \begin{equs}[c1c2]c_1^\eps\eqdef\E\big[\big(v^\eps*\sI(\overline{\cR Z})\big)\circ Z\big]=\E\big[(v^\eps*\overline Z)\circ\sI(\cR Z)\big],\qquad c_2^\eps\eqdef\E\Re[\overline{Z}\circ\sI(\cR Z)]
\end{equs} depending on $t$. Moreover,
$$|c_1^\eps(t)-C_1|\lesssim (t^{-\kappa}+1)\eps^\kappa,\quad |c_2^\eps(t)-C_2|\lesssim (t^{-\kappa}+1)\eps^{\kappa}.$$
Here $C_1$ and $C_2$ are defined in  \eqref{eq:counterterms}, and the proportional constants are independent of $\eps$.
\ep

The proof of Proposition \ref{prop:Zn} is given in Section \ref{proof2}.

\br
From Proposition \ref{prop:Zn}, we deduce that the renormalization constants from $\cR Z$ converge to finite values $C_1$ and $C_2$, which correspond to the newly introduced mass term in the limiting equation \eqref{eq:mainnew1}. Furthermore, we will observe the emergence of such new mass terms in the limiting behavior of the nonlinear terms $\big(v^\eps* |Y+\psi|^2\big)Z,\big(v^\eps*\Re[(Y+\psi)\overline Z]\big)(Y+\psi)$, and $\sI(\cR Z)\circ\cZ^{\<2vm>}$, $v^\eps*\Re[\sI(\cR Z)\circ\Wick{\overline Z]Z}$ (see Proposition \ref{prop:Zn1}, Lemmas \ref{lem:conn1} and \ref{lem:dif1} below).
\er

Analogous to the approach outlined in \eqref{phi},  deriving  uniform in $\eps$ estimates for $\cZ^{\<2vm>}\psi, 2\big(v^\eps*\Re[(Y+\psi)\Wick{\overline Z]\big)Z}$ necessitates the use of paracontrolled calculus. The latter term $2v^\eps*\Re[(Y+\psi)\Wick{\overline Z]Z}$ involves the convolution with $v^\eps$. To analyze this term, we employ the paraproduct decomposition and leverage the random fields introduced in \eqref{sto:ny0}, expressing it as follows:
\begin{equs}[vepsde]&
	2v^\eps*\Re[(Y+\psi)\Wick{\overline Z]Z}
	\\=&\int v^\eps(y)\tau_y(Y+\psi)\cZ^{\<2m>}_y\,\dif y+\int v^\eps(y)\tau_y\overline{(Y+\psi)}\cZ^{\<2>}_y\,\dif y
	\\=&\int v^\eps(y)\tau_y(Y+\psi)\prec\cZ^{\<2m>}_y\,\dif y+\int v^\eps(y)\tau_y\overline{(Y+\psi)}\prec\cZ^{\<2>}_y\,\dif y
	\\&+\int v^\eps(y)\tau_y(Y+\psi)\succcurlyeq\cZ^{\<2m>}_y\,\dif y+\int v^\eps(y)\tau_y\overline{(Y+\psi)}\succcurlyeq\cZ^{\<2>}_y\,\dif y.
\end{equs}
Here for fixed $\eps>0$, $Y, \psi\in \bC^{\frac32-\kappa}$ $\bP$-a.s., and each term on both sides is well-defined.
%For the first equality, we can rigorously deduce it by approximating $Z$ with $Z_\delta$.
To prove convergence to the corresponding terms in the dynamical $\Phi^4_3$ model, we need to introduce renormalization counterterms for the terms involving the paraproduct $\circ$.
For the purely stochastic term independent of $\psi$, we have the following decomposition
\begin{align*}&
	\int v^\eps(y)\tau_y Y\circ\cZ^{\<2m>}_y\dif y+\int v^\eps(y)\tau_y\overline{Y}\circ\cZ^{\<2>}_y\,\dif y
	\\=&-\int v^\eps(y)\tau_y \cZ^{\<3v0m>}\circ\cZ^{\<2m>}_y\,\dif y-\int v^\eps(y)\tau_y\overline{\cZ}^{\<3v0m>}\circ\cZ^{\<2>}_y\,\dif y
	\\&+\int v^\eps(y)\tau_y \sI(\cR Z)\circ\cZ^{\<2m>}_y\,\dif y+\int v^\eps(y)\tau_y\overline{\sI(\cR Z)}\circ\cZ^{\<2>}_y\,\dif y.
\end{align*}

\bp\label{thcomm1} It holds that for $\kappa>0,p\geq1$
\begin{align*}\Big(\E \Big\|\int v^\eps(y)\tau_y \cZ^{\<3v0m>}\circ\cZ^{\<2m>}_y\,\dif y-(\tilde b_2^\eps+\tilde b_3^\eps)Z-\cZ^{\<32oc>}\Big\|^p_{C_T\bC^{-\frac12-\kappa}}\Big)^{\frac1p}\lesssim \eps^{\frac\kappa2},
\end{align*}
and
\begin{align*}\Big(\E \Big\|\int v^\eps(y)\tau_y\overline{\cZ}^{\<3v0m>}\circ\cZ^{\<2>}_y\,\dif y-\tilde b_5^\eps Z-\cZ^{\<32oc3>}\Big\|^p_{C_T\bC^{-\frac12-\kappa}}\Big)^{\frac1p}\lesssim \eps^{\frac\kappa2},
\end{align*}
with the proportional constant independent of $\eps$.
\ep
The proof of this result is purely probabilistic, and we will provide the proof of Proposition \ref{thcomm1} in Section \ref{proof1}.

Using Lemma \ref{comnew} we know that $\sI(\cR Z)\in \bC^{\frac12-\kappa}$ $\bP$-a.s.. Hence, we note that $\sI(\cR Z)\circ Z$ and  several new stochastic terms involving $\cR Z$  do not make sense in the classical setting (c.f. Lemma \ref{lem:para}). We employ probabilistic calculations to prove that they converge to zero in suitable spaces, after subtracting suitable renormalization counterterms.
\bp\label{prop:Zn1}
It holds that  for $\kappa>0$ and $p\geq1$
\begin{align*}
	\E\|\sI(\cR Z)\circ\cZ^{\<2vm>}-c^\eps_1Z\|^p_{C_T\bC^{-\frac12-\kappa}}\lesssim \eps^{\frac{p\kappa}2},
\end{align*}
and
\begin{align*}	
	\E\Big\|\int v^\eps(y)\tau_y \sI(\cR Z)\circ\cZ^{\<2m>}_y\,\dif y-c^\eps_2 Z\Big\|^p_{C_T\bC^{-\frac12-\kappa}}\lesssim \eps^{\frac{\kappa p}2},\end{align*}
\begin{align*}	
	\E\Big\|\int v^\eps(y)\tau_y \sI(\cR \overline Z)\circ\cZ^{\<2>}_y\,\dif y-c^\eps_1 Z-c^\eps_2 Z\Big\|^p_{C_T\bC^{-\frac12-\kappa}}\lesssim \eps^{\frac{\kappa p}2}.\end{align*}
Here $c_1^\eps,c_2^\eps$ are defined in \eqref{c1c2} and the proportional constant is independent of $\eps$.
Moreover, it holds that
\begin{align*}\E\sup_y\Big\|\tau_y\overline{\sI(\cR Z)}\circ Z\Big\|_{C_T\bC^{-2\kappa}}^p+\E\sup_y\Big\|\tau_y\sI(\cR Z)\circ Z\Big\|_{C_T\bC^{-2\kappa}}^p\lesssim 1,\end{align*}
\begin{align*}\E\sup_y\Big\|{\sI(\cR Z)}\circ\tau_y\overline Z\Big\|_{C_T\bC^{-2\kappa}}^p+\E\sup_y\Big\|\sI(\cR Z)\circ \tau_yZ\Big\|_{C_T\bC^{-2\kappa}}^p\lesssim 1.\end{align*}
\ep
The proof of Proposition \ref{prop:Zn1} is given in Section \ref{proof2}.

By Lemma \ref{lem:para} the worst part of the RHS of \eqref{eq:main2new} is
$$-(\psi+Y)\prec \cZ^{\<2vm>}-\int v^\eps(y)\tau_y(Y+\psi)\prec\cZ^{\<2m>}_y\,\dif y-\int v^\eps(y)\tau_y\overline{(Y+\psi)}\prec\cZ^{\<2>}_y\,\dif y,$$ which leads to the following paracontrolled ansatz:
\begin{equs}[para]
	\psi=&-(\psi+Y)\prec \cZ^{\<20vm>}-\int v^\eps(y)\tau_y(Y+\psi)\prec\cZ^{\<20vm1>}_y\,\dif y
	-\int v^\eps(y)\tau_y\overline{(Y+\psi)}\prec\cZ^{\<20vm2>}_y\,\dif y+\psi^\sharp.
\end{equs}

Similar to the discussion after \eqref{para:phi}, $\psi^\sharp$ exhibits better regularity compared to $\psi$. Specifically, we have $\psi^\sharp \in \bB_p^{1+2\kappa}$.
As with the dynamical $\Phi^4_3$ model, the paracontrolled ansatz \eqref{para} can be used to decompose the terms ${\cZ^{\<2vm>}\circ\psi}$ , $\int v^\eps(y)\tau_y\psi\circ\cZ^{\<2m>}_y\,\dif y,  \int v^\eps(y)\tau_y\overline{\psi}\circ\cZ^{\<2>}_y\,\dif y $ and give uniform in $\eps$ bounds after adding suitable renormalization counterterms.

For $\cZ^{\<2vm>}\circ\psi$ we have the following result.
\bl\label{lem:sto1-n} It holds that for any $p\geq1$ and $\kappa\in (0,\frac12)$
\begin{equs}
	&\|\cZ^{\<2vm>}\circ\psi+(\tilde b_1^\eps+\tilde b_2^\eps)(Y+\psi)\|_{\bB_p^{-\kappa}}
	\lesssim \|Y+\psi\|_{\bB^{3\kappa}_p}(\eps^{\frac\kappa2}\|\mZ_\eps\|+\|\mZ_\eps\|^2)+\|\psi^\sharp\|_{\bB^{1+2\kappa}_p}\|\mZ_\eps\|.
\end{equs}
\el
\begin{proof} Similar to \eqref{def:phicirc}, we substitute the paracontrolled ansatz \eqref{para} into $\cZ^{\<2vm>}\circ\psi$ and have the following decomposition :
	\begin{equs}[psi:sto]
		&\cZ^{\<2vm>}\circ\psi+(\tilde b_1^\eps+\tilde b_2^\eps)(Y+\psi)=-\sum_{i=1}^3J_i+\psi^\sharp\circ \cZ^{\<2vm>},
	\end{equs}
	with
	\begin{equs}
		J_1\eqdef&\,[(Y+\psi)\prec \cZ^{\<20vm>}]\circ \cZ^{\<2vm>}-\tilde b_1^\eps(Y+\psi),
		\\
		J_2\eqdef&\int v^\eps(y)[\tau_y(Y+\psi)\prec \cZ_y^{\<20vm1>}]\circ \cZ^{\<2vm>}\,\dif y -\tilde b_2^\eps(Y+\psi),
		\\
		J_3\eqdef&\int v^\eps(y)[\tau_y\overline{(Y+\psi)}\prec\cZ^{\<20vm2>}_y]\circ \cZ^{\<2vm>}\,\dif y.
	\end{equs}
	Using the paraproduct estimates Lemma \ref{lem:para} we have that
	\begin{align*}
		\|\psi^\sharp\circ \cZ^{\<2vm>}\|_{\bB_p^{-\kappa}}\lesssim \|\psi^\sharp\|_{\bB^{1+2\kappa}_p}\|\cZ^{\<2vm>}\|_{\bC^{-1-\kappa}}.
	\end{align*}
	Furthermore $J_1$ can be decomposed as follows by using the classical commutator from Lemma \ref{lem:com1}:
	\begin{equs}
		&J_1 =C(\psi+Y, \cZ^{\<20vm>}, \cZ^{\<2vm>})+(Y+\psi)\cZ^{\<22vm>},
	\end{equs}
	where $\tilde b_1^\eps(Y+\psi)$ is incoporated into $\cZ^{\<22vm>}(Y+\psi)$ from \eqref{sto:4}.
	Using Lemma \ref{lem:com1} and Proposition \ref{prop:Z}, we obtain
	\begin{align*}
		\|J_1\|_{\bB^{-\kappa}_p}\lesssim \|Y+\psi\|_{\bB^{3\kappa}_p}\|\mZ_\eps\|^2.
	\end{align*}
	For $J_2$ further refinement is required. In fact, we use the $y$-dependent random fields introduced in \eqref{sto:ny2} to have
	\begin{equs}[dec:y1]
		J_2=&\int v^\eps(y) C(\tau_y(Y+\psi),\cZ^{\<20vm1>}_y, \cZ^{\<2vm>})\,\dif y+\int v^\eps(y)\tau_y(Y+\psi)\cZ^{\<22vmy>}_{y}\,\dif y
		\\&+\int v^\eps(y)\tau_y(Y+\psi)\tilde b_2^\eps(y)\,\dif y-\tilde b_2^\eps(Y+\psi)
		\\=&\sum_{i=1}^3J_{2i},\end{equs}
	with
	\begin{equs}
		J_{21}\eqdef&\int v^\eps(y) C(\tau_y(Y+\psi),\cZ^{\<20vm1>}_y, \cZ^{\<2vm>})\,\dif y+(Y+\psi)\cZ^{\<22cvm>}
		\\
		J_{22}\eqdef&\int v^\eps(y)\big[\tau_y(Y+\psi)-(Y+\psi) \big]\cZ^{\<22vmy>}_{y}\,\dif y
		\\
		J_{23}\eqdef&\int v^\eps(y)[\tau_y(Y+\psi)-(Y+\psi)]\tilde b_2^\eps(y)\,\dif y.
	\end{equs}
	for $\tilde b_2^\eps(y)=\tilde b_2^\eps(t,y)$ defined in \eqref{def:tby}, where we used \eqref{sto:re} and \eqref{sto:re1}.

	For $J_{23}$ we  have by Lemma \ref{veps} and \eqref{v1}
	\begin{equs}[vepsb] \|J_{23}\|_{\bB_p^{-\kappa}}
		\lesssim&\,\eps^{-\frac\kappa2}\|\mZ_\eps\|\int |v^\eps(y)||y|^{\kappa}\|Y+\psi\|_{\bB_p^{2\kappa}}\dif y
		\lesssim\eps^{\frac\kappa2}\|\mZ_\eps\|\|Y+\psi\|_{\bB_p^{2\kappa}},
	\end{equs}
	where we used $|\widehat{v^\eps}(k_1+k_2)|\lesssim \eps^{-\frac\kappa2}|k_1+k_2|^{-\frac\kappa2}$ to deduce $|\tilde b_2^\eps(y)|\lesssim \eps^{-\frac\kappa2}.$
	For $J_{22}$ we have by Lemma \ref{veps} and Proposition \ref{prop:Zny} and \eqref{v1}
	\begin{equs}[vepsnew]\|J_{22}\|_{\bB^{-\kappa}_p}
		\lesssim &\int |v^\eps(y)|\|\tau_y(Y+\psi)-(Y+\psi)\|_{\bB^{2\kappa}_p}\|\cZ^{\<22vmy>}_{y}\|_{\bC^{-\kappa}}\,\dif y
		\\\lesssim &\int |v^\eps(y)||y|^\kappa\|Y+\psi\|_{\bB^{3\kappa}_p}\,\dif y\sup_y\|\cZ^{\<22vmy>}_{y}\|_{\bC^{-\kappa}}
		\lesssim \eps^\kappa \|Y+\psi\|_{\bB^{3\kappa}_p}\|\mZ_\eps\|.
	\end{equs}
	For $J_{21}$ we use Lemma \ref{lem:com1} and Proposition \ref{prop:Z} and Proposition \ref{prop:Zny} and \eqref{eq:tauy} to have
	\begin{align}\label{est:J21}
		\|J_{21}\|_{\bB^{-\kappa}_p}\lesssim \|Y+\psi\|_{\bB^{3\kappa}_p}\|\mZ_\eps\|^2.
	\end{align}
	
	For $J_3$ in \eqref{psi:sto} we use similar decomposition to obtain
	\begin{equs}
		J_3
		=&\int v^\eps(y) C(\tau_y(\overline{Y+\psi}),\cZ^{\<20vm2>}_y, \cZ^{\<2vm>})\,\dif y+\overline{(Y+\psi)}\cZ^{\<22ccvm>}\\&+\int v^\eps(y)[\tau_y(\overline{Y+\psi})-\overline{Y+\psi} ]\cZ^{\<22vmy1>}_{y}\,\dif y
		.
	\end{equs}
	The desired bounds for $\|J_3\|_{\bB^{-\kappa}_p}$ follow the same line as that for $J_{21}$ and $J_{22}$. Hence, the result follows.
\end{proof}

Before proceeding we introduce the following operator, which will be used in the sequel:
\begin{align*}
	\cR_1f\eqdef\sum_k \widehat{\cR_1}(k)\langle f,e_k\rangle e_k ,
\end{align*}
with
\begin{equs}[defg]
	\widehat{\cR_1}(k)\eqdef\sum_{k_1,k_2} \tilde b_t(k_1,k_2)(\widehat{v^\eps}(k+k_1)^2-\widehat{v^\eps}(k_1)^2),
\end{equs}
for $\tilde b_t$ defined in \eqref{def:tib}.
We have the following result for $\cR_1$.

\bl\label{boundr1} It holds that for $p\geq1$, $\kappa\in(0,\frac12)$
\begin{align*}\|\mathcal R_1 f\|_{\bB^{\kappa}_p}\lesssim \eps^\kappa\|f\|_{\bB^{3\kappa}_p},
\end{align*}
with the proportional constant independent of $\eps$.
\el
The proof of this lemma is provided in Section \ref{sec:comnew}.
Furthermore, we obtain the following result for $\int v^\eps(y)\tau_y\psi\circ \cZ^{\<2m>}_y\,\dif y+(\tilde b_2^\eps+\tilde b_3^\eps)(Y+\psi)$ and $\int v^\eps(y)\tau_y\overline{\psi}\circ\cZ^{\<2>}_y\,\dif y +\tilde b_5^\eps(Y+\psi)$, which appear in the RHS of \eqref{vepsde}.

\bl \label{lem:sto1} It holds that for any $p\geq1$
\begin{equs}
	&\Big\|\int v^\eps(y)\tau_y\psi\circ \cZ^{\<2m>}_y\,\dif y+(\tilde b_2^\eps+\tilde b_3^\eps)(Y+\psi)\Big\|_{\bB_p^{-\kappa}}+\Big\| \int v^\eps(y)\tau_y\overline{\psi}\circ\cZ^{\<2>}_y\,\dif y +\tilde b_5^\eps(Y+\psi)\Big\|_{\bB_p^{-\kappa}}
	\\\,\lesssim &\|Y+\psi\|_{\bB^{3\kappa}_p}(\eps^{\kappa}\|\mZ_\eps\|+\|\mZ_\eps\|^2)+\|\psi^\sharp\|_{\bB^{1+2\kappa}_p}\|\mZ_\eps\|,
\end{equs}
\el
\begin{proof} We start with the first term and  substitute the paracontrolled ansatz \eqref{para} to obtain
	\begin{equs}[psi:sto1]
		\int v^\eps(y)\tau_y\psi\circ \cZ^{\<2m>}_y\dif y+(\tilde b_2^\eps+\tilde b_3^\eps)(Y+\psi)&=-\sum_{i=1}^3J_i+\int v^\eps(y)\tau_y\psi^\sharp\circ \cZ^{\<2m>}_y\dif y\end{equs}
	with
	\begin{equs}
		J_1\eqdef &\int v^\eps(y)[\tau_y(Y+\psi)\prec \tau_y\cZ^{\<20vm>}]\circ \cZ^{\<2m>}_y\,\dif y-\tilde b_2^\eps(Y+\psi),\\
		J_2\eqdef & \int v^\eps(y)v^\eps(y_1)[\tau_{y+y_1}\overline{(Y+\psi)}\prec \tau_y \cZ^{\<20vm2>}_{y_1}]\circ \cZ^{\<2m>}_y\,\dif y\dif y_1,\\
		J_3\eqdef &\int v^\eps(y)v^\eps(y_1)[\tau_{y+y_1}(Y+\psi)\prec\tau_y \cZ^{\<20vm1>}_{y_1}]\circ \cZ^{\<2m>}_y\,\dif y\dif y_1-\tilde b_3^\eps(Y+\psi)
		.
	\end{equs}
	We use the paraproduct estimates Lemma \ref{lem:para} and Proposition \ref{prop:Zny} and \eqref{eq:tauy} to have
	\begin{align*}
		\Big\|\int v^\eps(y)\tau_y\psi^\sharp\circ \cZ^{\<2m>}_y\,\dif y\Big\|_{\bB^{\kappa}_p}\lesssim \|\psi^\sharp\|_{\bB^{1+2\kappa}_p}\|\mZ_\eps\|.
	\end{align*}
	
	Furthermore, we use $\mathbf{Z}^{\<2v2v>}_\eps$, $\mathbf{Z}^{\<2cvr2v>}_\eps$, $\mathbf{Z}^{\<2cv2v>}_\eps$ introduced in \eqref{sto:ny1} and $	\cZ^{\<22ocy>}_{y,\eps}$, $
	\cZ^{\<22oc1>}_{y,y_1}$, $\cZ^{\<22oc>}_{y,y_1}$ introduced in \eqref{sto:ny2}
	to decompose $J_1, J_2,J_3$ as in \eqref{dec:y1}. We only give details for $J_3$ and others follow the same line: Similar as in \eqref{dec:y1} we have
	\begin{align*}
		J_3= &\sum_{i=1}^3J_{3i},
	\end{align*}
	with
	\begin{align*}
		J_{31}\eqdef& \int v^\eps(y)v^\eps(y_1)C(\tau_{y+y_1}(Y+\psi),\tau_y \cZ^{\<20vm1>}_{y_1}, \cZ^{\<2m>}_y)\,\dif y\dif y_1+(Y+\psi)\mathbf{Z}^{\<2cv2v>}_\eps,
		\\
		J_{32}\eqdef&\int v^\eps(y)v^\eps(y_1)[\tau_{y+y_1}(Y+\psi)-(Y+\psi)]\cZ^{\<22oc>}_{y,y_1}\,\dif y\dif y_1,
		\\
		J_{33}\eqdef&\int v^\eps(y)v^\eps(y_1)[\tau_{y+y_1}(Y+\psi)-(Y+\psi)]\tilde b_3(y,y_1)\,\dif y\dif y_1,
	\end{align*}
	where
	$\tilde b_3(y,y_1)=\tilde b_3(t,y,y_1)$ defined in \eqref{def:tby} and  we used \eqref{sto:re} and \eqref{sto:re1}.
	
	For $J_{31}$, we apply Lemma \ref{lem:com1}, Proposition \ref{prop:Z}, Proposition \ref{prop:Zny} as in \eqref{est:J21} and \eqref{eq:tauy} to have
	\begin{align*}
		\|J_{31}\|_{\bB^{-\kappa}_p}\lesssim \|Y+\psi\|_{\bB^{3\kappa}_p}\|\mZ_\eps\|^2.
	\end{align*}
	We note that $J_{33}$ cannot be bounded as in \eqref{vepsb}, since we cannot bound $|\tilde b_3(y,y_1)|$ by $\eps^{-\frac\kappa2}$. Instead, we employ the Fourier basis to obtain
	\begin{align*}
		J_{33} = \mathcal{R}_1 (Y+\psi).
	\end{align*}
	By using Lemma \ref{boundr1}  we have
	\begin{align*}
		\|J_{33}\|_{\bB^{\kappa}_p}\lesssim \eps^\kappa \|Y+\psi\|_{\bB^{3\kappa}_p}.
	\end{align*}
	Similar as in \eqref{vepsnew} we obtain for $J_{32}$
	and $p\geq1$
	\begin{equs}[vepsnew1]\|J_{32}\|_{\bB^{-\kappa}_p}
		\lesssim &\int |v^\eps(y)v^\eps(y_1)|\|\tau_{y+y_1}(Y+\psi)-(Y+\psi)\|_{\bB^{2\kappa}_p}\|\cZ^{\<22oc>}_{y,y_1}\|_{\bC^{-\kappa}}\,\dif y\dif y_1
		\\\lesssim &\int |v^\eps(y)v^\eps(y_1)|(|y|^\kappa+|y_1|^\kappa)\|Y+\psi\|_{\bB^{3\kappa}_p}\,\dif y\dif y_1\sup_{y,y_1}\|\cZ^{\<22oc>}_{y,y_1}\|_{\bC^{-\kappa}}
		\\\lesssim &\,\eps^\kappa \|Y+\psi\|_{\bB^{3\kappa}_p}\|\mZ_\eps\|,
	\end{equs}
	where in the second step we used Lemma \ref{veps} and \eqref{v1}.
	
	Similarly we can also use $\mathbf{Z}^{\<2v2vr>}_\eps$, $\mathbf{Z}^{\<2cv2vr>}_\eps$ and $	\mathbf{Z}^{\<2cv2v0>}_\eps$ introduced in \eqref{sto:ny1} and $\cZ^{\<22ocy1>}_{y,\eps}$, $\cZ^{\<22oc3>}_{y,y_1}$,
	$\cZ^{\<22oc2>}_{y,y_1}$ from \eqref{sto:ny2}
	to  decompose  $ \int v^\eps(y)\tau_y\overline{\psi}\circ(\cZ^{\<2>}_y)\,\dif y +\tilde b_5^\eps(Y+\psi)$ and derive the same bounds. Thus the result follows.
\end{proof}

\subsection{Control the difference}\label{sec:dif}
In this section, we present the proof of the convergence of the dynamics stated in Theorem \ref{th:1}. The main approach is to compare the corresponding terms in equations \eqref{eq:main2new} and \eqref{phi}. We can estimate each term through  paracontrolled calculus and employ similar arguments as in \cite{ZZ18}.

\bt\label{th:1} Suppose that $\Psi^\eps(0)\to \Phi(0)$ in $\bC^{-\frac12-\kappa}$ for $\kappa>0$ and $v$ satisfies $(\mathbf{Hv})$. It holds that the unique solutions $\Psi^\eps$ and $\Phi$ to equations \eqref{eq:mainnew} and \eqref{eq:mainnew1} satisfy for any $T>0$
$$\|\Psi^\eps-\Phi\|_{C_T \bC^{-\frac12-\kappa}}\to^{\bP} 0,\qquad \eps\to0.$$
\et

The proof of Theorem \ref{th:1} follows the same reasoning as in \cite[Section 5]{ZZ18}, while accounting for the differences in the corresponding terms as given in Lemma \ref{lem:con1}-- Lemma \ref{lem:dif1},  as well as \eqref{bdd:mZ}, \eqref{dif:esb}--\eqref{bd:cRY},  below. Further details can be found in Appendix \ref{app:pro}.

In the sequel we compare the corresponding terms on the RHS of  equation \eqref{eq:main2new} and equation \eqref{phi}. To this end, we introduce the following notations:
We adjust $\|\mZ_\eps\|$ from Section \ref{sec:ren} to be greater than the $\|\mZ_\eps\|$ from Section \ref{sec:ren} and also larger than
$$ \sup_{y\in\mathbb{R}^3}\frac{\|\cZ^{\<2m>}_y-\cZ^{\<2m>}\|_{C_T\bC^{-1-2\kappa}}+\|\cZ^{\<2>}_y-\cZ^{\<2>}\|_{C_T\bC^{-1-2\kappa}}}{|y|^{\kappa}}$$ and  $ \|\cZ_\eps^{\<3v0m>}\|_{C_T^{\frac18}L^\infty}$, as well as
\begin{equs}\sup_y\Big\|\tau_y\overline{\sI(\cR Z)}\circ Z\Big\|_{C_T\bC^{-2\kappa}},\qquad \sup_y\Big\|\tau_y\sI(\cR Z)\circ Z\Big\|_{C_T\bC^{-2\kappa}}, \\
	\sup_y\Big\|{\sI(\cR Z)}\circ\tau_y\overline Z\Big\|_{C_T\bC^{-2\kappa}},\qquad \sup_y\Big\|\sI(\cR Z)\circ \tau_yZ\Big\|_{C_T\bC^{-2\kappa}}\end{equs}
from Proposition \ref{prop:Zn1}.
Furthermore, we also use $\|\mZ-\mZ_\eps\|$ to denote the smallest number bigger than  $\|\cZ_{\eps}^{\tau_\eps}-\cZ^{\tau}\|_{C_T\bC^{|\tau|-\kappa}}$  from  Proposition \ref{prop:Z} and $\|\mathbf{Z}_{\eps}^{\tau_\eps}-\cZ^{\tau}\|_{C_T\bC^{|\tau|-\kappa}}$  from  Proposition \ref{prop:Zny} and $C_T\bC^{|\tau|-\kappa}$-norm of all the random fields $\cZ_\eps^{\tau_\eps}-\cZ^\tau$ from
Propositions \ref{prop:Zn} and \ref{thcomm1}, and the first three terms in Proposition \ref{prop:Zn1},  and $\|\cZ_\eps^{\<3v0m>}-\cZ^{\<30m>}\|_{C_T^{\frac18}L^\infty}$.
Here, we also view the random fields from Propositions  \ref{prop:Zn} and \ref{thcomm1} and the first three terms in Proposition \ref{prop:Zn1} as $\cZ_\eps^{\tau_\eps}-\cZ^\tau$.

Using Propositions \ref{prop:Z}, \ref{prop:Zny},  \ref{prop:Zn1} and Lemma \ref{lem:bdZ} we also have for every $p\geq1$
\begin{align}\label{bd:mZ}
	\sup_\eps\E\|\mZ_\eps\|^p+\E\|\mZ\|^p\lesssim 1,
\end{align}
with the proportional constant independent of $\eps$.
Using Propositions \ref{prop:Z}, \ref{prop:Zny}  and Propositions \ref{prop:Zn}, \ref{prop:Zn1}, we obtain
\begin{align}\label{bdd:mZ}
	\E\|\mZ_\eps-\mZ\|^p\lesssim \eps^{\frac{p\kappa}2},
\end{align}
where the proportional constant is independent of $\eps$.

Now we analyze each term in equation \eqref{eq:main2new} as follows. We define
\begin{equs}[def:M]
	\mM\eqdef&\, \|\psi\|_{\bC^{\frac12+2\kappa}}+\|\phi\|_{\bC^{\frac12+2\kappa}}+\|\mZ\|+\|\mZ_\eps\|+1,
	\\
	\mM_\infty\eqdef&\, \|\psi\|_{L^\infty}+\|\phi\|_{L^\infty}+\|\mZ\|+\|\mZ_\eps\|+1.
\end{equs}

We first consider the terms $\cZ^{\<2vm>}(Y+\psi)$ and $2\big(v^\eps*\Re[(Y+\psi):\overline Z]\big)Z:$ from the first two lines of \eqref{eq:main2new}.

\bl\label{lem:con1} It holds that for $t>0$
\begin{equs}[con1]&\|2\big(v^\eps*\Re[(Y+\psi):\overline Z]\big)Z:+(\tilde b_3^\eps+\tilde b_2^\eps+\tilde b_5^\eps)(Z+Y+\psi)
	\\&+\cZ^{\<32oc1>}- \cZ^{\<2m>}\phi+\cZ^{\<32oc2>}-\cZ^{\<2>}\overline\phi-(C_1+2C_2)Z\|_{\bC^{-1-2\kappa}}
	\\\lesssim&\, \|\mZ_\eps-\mZ\|\mM\|\mZ\|+\|\psi-\phi\|_{\bC^{\frac12+2\kappa}}\|\mZ\|^2+\eps^{\kappa/2} \mM\|\mZ_\eps\|^2(1+t^{-\kappa})\\&+\|\psi^\sharp-\phi^\sharp\|_{\bC^{1+2\kappa}}\|\mZ\|+(\|\mZ_\eps-\mZ\|+\eps^\kappa)\|\psi^\sharp\|_{\bC^{1+3\kappa}},
\end{equs}
and
\begin{align*}&\|\cZ^{\<2vm>}(Y+\psi)+(\tilde b_1^\eps+\tilde b_2^\eps)(Z+Y+\psi)+\cZ^{\<32oc1>}- \phi\cZ^{\<2m>}-C_1Z\|_{\bC^{-1-2\kappa}}
	\\\lesssim&\, \|\mZ_\eps-\mZ\|\mM\|\mZ\|+\|\psi-\phi\|_{\bC^{\frac12+2\kappa}}\|\mZ\|^2+\eps^{\kappa/2} \mM\|\mZ_\eps\|^2(1+t^{-\kappa})\\&+\|\psi^\sharp-\phi^\sharp\|_{\bC^{1+2\kappa}}\|\mZ\|+(\|\mZ_\eps-\mZ\|+\eps^\kappa)\|\psi^\sharp\|_{\bC^{1+3\kappa}},
\end{align*}
where
$$\cZ^{\<2m>}\phi\eqdef \cZ^{\<2m>}(\prec+\succ)\phi+\cZ^{\<2m>}\circ\phi,\qquad \cZ^{\<2>}\overline\phi\eqdef \cZ^{\<2>}(\prec+\succ)\overline\phi+\cZ^{\<2>}\circ\overline\phi,$$
with $\cZ^{\<2m>}\circ\phi$ and  $\cZ^{\<2>}\circ\overline\phi$ defined in \eqref{def:phicirc}.
\el
\begin{proof} We only prove the first one and the second one is similar. Recall \eqref{vepsde} and it suffices to prove
	\begin{equs}[confirst]&
		\Big\|\int v^\eps(y)\tau_y(Y+\psi)\cZ^{\<2m>}_y\,\dif y+(\tilde b_3^\eps+\tilde b_2^\eps)(Z+Y+\psi)+\cZ^{\<32oc1>}- \cZ^{\<2m>}\phi-C_2Z\Big\|_{\bC^{-1-2\kappa}}
	\end{equs}
	controlled by the RHS of \eqref{con1},
	and the other part follows the same line.
	We first have
	\begin{equs}&\int v^\eps(y)\tau_y(Y+\psi)(\prec+\succ) \cZ^{\<2m>}_y\,\dif y
		\\=&\int v^\eps(y)[\tau_y(Y+\psi)-(Y+\psi)](\prec+\succ) \cZ^{\<2m>}_y\,\dif y+(Y+\psi)(\prec+\succ) \int v^\eps(y)\cZ^{\<2m>}_y\,\dif y.
	\end{equs}
	By using the paraproduct estimates in Lemma \ref{lem:para} and Lemma \ref{veps} and \eqref{v1} we obtain
	\begin{align*}&\Big\|\int v^\eps(y)[\tau_y(Y+\psi)-(Y+\psi)](\prec+\succ) \cZ^{\<2m>}_y\,\dif y\Big\|_{\bC^{-1-\kappa}}
		\\\lesssim&\int |v^\eps(y)|\|\tau_y(Y+\psi)-(Y+\psi)\|_{\bC^{\kappa}}\|\cZ^{\<2m>}_y\|_{\bC^{-1-\kappa}}\,\dif y
		\lesssim \,\eps^\kappa \|Y+\psi\|_{\bC^{2\kappa}}\|\mZ_\eps\|,
	\end{align*}
	which by Propositions \ref{prop:Z}, \ref{prop:Zny} and Lemma \ref{comnew} implies that
	\begin{equs}[eq:con1]
		&\Big\|\int v^\eps(y)\tau_y(Y+\psi)(\prec+\succ) \cZ^{\<2m>}_y\,\dif y- \cZ^{\<2m>}(\prec+\succ)(-\cZ^{\<30m>}+\phi)\Big\|_{\bC^{-1-2\kappa}}
		\\\lesssim& \,\|\mZ_\eps-\mZ\|\mM+\|\psi-\phi\|_{\bC^{2\kappa}}\|\mZ\|+\eps^\kappa \mM\|\mZ_\eps\|.
	\end{equs}
	Here we also used \eqref{v1} and
	\begin{equs}[est:difZy]
		\|\cZ^{\<2m>}_y-\cZ^{\<2m>}\|_{\bC^{-1-2\kappa}}\lesssim |y|^{\kappa}\|\mZ_\eps\|.
	\end{equs}
	
	We then consider $\int v^\eps(y)\tau_y\psi\circ\cZ^{\<2m>}_y\dif y+(\tilde b_3^\eps+\tilde b_2^\eps)(Y+\psi)$ and recall Lemma \ref{lem:sto1} and we only concentrate on $J_3$ on the RHS of \eqref{psi:sto1} and the bounds for the other terms are similar. From the proof of Lemma \ref{lem:sto1} we have $J_3=\sum_{i=1}^3J_{3i}$ and we  know that the $\bC^{-\kappa}$-norms of $J_{32},J_{33}$ are bounded by the RHS of \eqref{con1}.
	For  $J_{31}$ we  write it as follows
	\begin{equs}[veps1]&\int v^\eps(y)v^\eps(y_1)C(\tau_{y+y_1}(Y+\psi)-(Y+\psi),\tau_y \cZ^{\<20vm1>}_{y_1}, \cZ^{\<2m>}_y)\,\dif y\dif y_1
		\\&+\int v^\eps(y)v^\eps(y_1)C(Y+\psi,\tau_y \cZ^{\<20vm1>}_{y_1}, \cZ^{\<2m>}_y- \cZ^{\<2m>})\,\dif y\dif y_1
		\\&+C(Y+\psi,\int v^\eps(y)v^\eps(y_1)\tau_y \cZ^{\<20vm1>}_{y_1}\,\dif y\dif y_1,  \cZ^{\<2m>})+(Y+\psi)\mathbf{Z}^{\<2cv2v>}_\eps
	\end{equs}
	Using Proposition \ref{prop:Zny} and \eqref{eq:tauy} for $\sup_{y,y_1}\|\tau_y \cZ^{\<20vm1>}_{y_1}\|_{\bC^{1-\kappa}}, \sup_y\|\cZ^{\<2m>}_y\|_{\bC^{-1-\kappa}}$, along with \eqref{est:difZy}, Lemma \ref{veps} for $\tau_{y+y_1}(Y+\psi)-(Y+\psi)$, Lemma \ref{lem:com1}, and \eqref{v1}, we conclude that the $\bC^\kappa$-norms of the first two terms on the RHS of \eqref{veps1} are bounded by the RHS of \eqref{con1}. Furthermore, the $\bC^{-\kappa}$-norm of the difference between the last line in \eqref{veps1} and $C(-\cZ^{\<30m>}+\phi,\cZ^{\<20vm1>},  \cZ^{\<2m>})+(-\cZ^{\<30m>}+\phi)\cZ^{\<22oc>}$ in the definition of $\cZ^{\<2m>}\circ\phi$ from \eqref{def:phicirc} can also be bounded by the RHS of \eqref{con1}. Moreover, $J_1$ and $J_2$ can be bounded in a similar manner using Proposition \ref{prop:Zny}, Lemma \ref{veps}, Lemma \ref{lem:para}, and Lemma \ref{lem:com1}. Hence, we obtain
	$$\Big\|\int v^\eps(y)\tau_y\psi\circ\cZ^{\<2m>}_y\,\dif y+(\tilde b_3^\eps+\tilde b_2^\eps)(Y+\psi)-\cZ^{\<2m>}\circ\phi \Big\|_{\bC^{-\kappa}}$$ can be bounded by the RHS of \eqref{con1}.

	It remains to consider $\int v^\eps(y)\tau_yY\circ\cZ^{\<2m>}_y\,\dif y-C_2Z$.  We use Propositions \ref{prop:Zn} and \ref{prop:Zn1} to  obtain
	\begin{align}
		\Big\|\int v^\eps(y)\tau_y\sI(\cR Z)\circ\cZ^{\<2m>}_y\,\dif y-C_2Z\Big\|_{\bC^{-\frac12-2\kappa}}\lesssim \|\mZ_\eps-\mZ\|+\eps^{\kappa}\|\mZ_\eps\|(1+t^{-\kappa}).
	\end{align}
	Thus, the conclusion that \eqref{confirst} is controlled by the RHS of \eqref{con1} follows from the application of Proposition \ref{thcomm1}.
\end{proof}

We then consider the last line of equation \eqref{eq:main2new}.

\textbf{I}.  $(6b^\eps-6\tilde{b}^\eps+m-1)\Psi-(6b+m-1)\Phi$: By definition we know that
$$b_1^\eps-\tilde{b}_1^\eps(t)=\sum_{k_1,k_2}e^{-(\la k_1\ra^2+\la k_2\ra^2+\la k_1+k_2\ra^2)t}\widehat{v^\eps}(k_1+k_2)^2 b(k_1,k_2).$$
Thus $|\tilde{b}_1^\eps-b_1^\eps-b|\lesssim \eps^{\kappa}t^{-\kappa}.$ For other $b_i^\eps$ and $\tilde b_i^\eps$ we have similar bounds. Hence we obtain
\begin{equs}[dif:esb]
	&\|(6b^\eps-6\tilde{b}^\eps+m-1)\Psi-(6b+m-1)\Phi\|_{\bC^{-\frac12-\kappa}}
	\\\lesssim &\,(t^{-\kappa}+1) \big(\eps^{\kappa}(\|\mZ\|+\|\psi\|_{\bC^{-\frac12-\kappa}})+\|\psi-\phi\|_{\bC^{-\frac12-\kappa}}+\|\mZ_\eps-\mZ\|\big).\end{equs}

\textbf{II}. $\cR Y$ and $\cR\psi$:  By Lemma \ref{comnew} we know that
\begin{align}\label{bd:cRY}
	\|\cR Y\|_{\bC^{-\frac12-2\kappa}}\lesssim \eps^{\kappa}\|\mZ_\eps\|,\qquad \|\cR\psi\|_{\bC^{-\frac12+\kappa}}\lesssim \eps^\kappa \|\psi\|_{\bC^{\frac12+2\kappa}}.
\end{align}

Next, we analyze the contribution from the third line of equation \eqref{eq:main2new} in the following lemmas. We begin by examining the difference between
$\big(v^\eps* |Y+\psi|^2\big)Z$ from the RHS of equation \eqref{eq:main2new} and $|-\cZ^{\<30m>}+\phi|^2Z$ from equation \eqref{phi}.

\bl\label{lem:conn1} It holds that for $t>0$
\begin{align}\label{bd:sto1}
	&\|\big(v^\eps* |Y|^2\big)Z(t)\|_{\bC^{-\frac12-\kappa}}\lesssim (\|\mZ_\eps\|^3+1)(t^{-\kappa}+1),
\end{align}
and
\begin{equs}[est:dif1]
	&\big\|v^\eps* |Y+\psi|^2Z-( |-\cZ^{\<30m>}+\phi|^2Z +C_1(-\cZ^{\<30m>}+\phi))\big\|_{\bC^{-\frac12-\kappa}}
	\\\lesssim&\,\mM \mM_\infty\big(\|\mZ_\eps-\mZ\|+\eps^\kappa(\|\mZ\|+\|\mZ_\eps\|)\big)(1+t^{-\kappa})
	+\big(\|\psi-\phi\|_{\bC^{\frac12+2\kappa}}\mM_\infty+\|\psi-\phi\|_{L^\infty}\mM\big)\|\mZ\|,\end{equs}
where the proportional constants are independent of $\eps$ and
\begin{align}\label{def:zz}
	|-\cZ^{\<30m>}+\phi|^2Z\eqdef |\cZ^{\<30m>}|^2Z- \phi\cZ^{\<31oc2>}-\overline{\phi}\cZ^{\<31oc1>}+|\phi|^2Z, \end{align}
with
\begin{align*}
	|\cZ^{\<30m>}|^2\circ Z\eqdef(\cZ^{\<30m>}\circ {\cZ}^{\<30mc>})\circ Z+C(\cZ^{\<30m>}, \cZ^{\<30mc>}, Z)+\cZ^{\<30m>}\cZ^{\<31oc3>}+C( \cZ^{\<30mc>},\cZ^{\<30m>}, Z)+\cZ^{\<30mc>}\cZ^{\<31oc>},
\end{align*}
\begin{align*}
	\cZ^{\<31oc1>}\eqdef\cZ^{\<30m>}(\prec+\succ)Z+\cZ^{\<31oc>},\qquad \cZ^{\<31oc2>}\eqdef\cZ^{\<30mc>}(\prec+\succ)Z+\cZ^{\<31oc3>}.
\end{align*}
Here $C$ is the commutator given by Lemma \ref{lem:com1} and all the $\cZ^\tau$ are introduced in \eqref{sto:3}.
\el
\begin{proof} First note that by Lemma \ref{comnew} \begin{equs}[est:IRZ]
		\|\sI(\cR Z)\|_{\bC^{\frac12-2\kappa}}\lesssim \eps^\kappa \|\mZ\|.
	\end{equs} We have the following decomposition
	\begin{equs}[eq:dec1]
		\big(v^\eps* |Y+\psi|^2\big)Z
		=v^\eps* |Y|^2Z+v^\eps*|\psi|^2Z+2v^\eps*\Re[\psi \overline Y]Z.
	\end{equs}
	We compare  the RHS of  \eqref{eq:dec1} with the RHS of \eqref{def:zz}.
	
	\textbf{I. }	For the first term we decompose
	\begin{align}\label{eq:decz}
		v^\eps* |Y|^2Z=&(v^\eps* |Y|^2)(\prec+\succ) Z+(v^\eps* |Y|^2)\circ Z.
	\end{align}
	By Lemmas \ref{lem:para}, \ref{veps} and Proposition \ref{prop:Z}, \eqref{est:IRZ}, we know that for 	the first term
	\begin{align*}
		&\|(v^\eps* |Y|^2)(\prec+\succ) Z-|\cZ^{\<30m>}|^2(\prec+\succ)Z\|_{\bC^{-\frac12-\kappa}}
	\end{align*}
	can be controlled by the RHS of \eqref{est:dif1}.
	
	For the second term on the RHS of \eqref{eq:decz} we have the decomposition
	\begin{equs}[eq:dec2l]
		(v^\eps* |Y|^2)\circ Z=&\,(v^\eps* (Y\prec \overline Y))\circ Z+(v^\eps* (Y\succ \overline Y))\circ Z+(v^\eps* (Y\circ \overline Y))\circ Z.
	\end{equs}
	Using Lemmas \ref{lem:para}, \ref{veps} and Proposition \ref{prop:Z}, \eqref{est:IRZ}, we know that for	the last term on the RHS of \eqref{eq:dec2l}
	\begin{align*}
		\|	(v^\eps* (Y\circ \overline Y))\circ Z-(\cZ^{\<30m>}\circ {\cZ}^{\<30mc>})\circ Z\|_{\bC^{\frac12-4\kappa}}
	\end{align*}
	can be bounded by the RHS of \eqref{est:dif1}.
	
	The first two terms on the RHS of \eqref{eq:dec2l} are similar and we only concentrate on the first one. We use the commutator $C$ from Lemma \ref{lem:com1} to have
	\begin{equs}[vepsc]&v^\eps* (Y\prec \overline Y)\circ Z=\int v^\eps(y) (\tau_y Y\prec \tau_y \overline Y)\circ Z\,\dif y
		\\=& \int v^\eps(y)C(\tau_yY, \tau_y\overline Y, Z)\,\dif y+\int v^\eps(y)\tau_yY(\tau_y\overline Y\circ Z)\,\dif y.
	\end{equs}
	For the first term we use Lemmas \ref{lem:com1}, \ref{veps} and \eqref{v1}, \eqref{est:IRZ} to have
	\begin{align*}
		\Big\|\int v^\eps(y)C(\tau_yY, \tau_y\overline Y, Z)\,\,\dif y-C(\cZ^{\<30m>}, \cZ^{\<30mc>}, Z)\Big\|_{\bC^{\kappa}}
	\end{align*}
	controlled by the RHS of \eqref{est:dif1}. For the second term we write it as
	\begin{align*}\int v^\eps(y)\tau_yY(\tau_y\overline Y\circ Z)\,\dif y=\int v^\eps(y)(\tau_yY-Y)(\tau_y\overline Y\circ Z)\,\dif y+Y(v^\eps*\overline Y\circ Z).
	\end{align*}
	Here we used \eqref{sto:re} to have
	\begin{equs}[sto:re2]
		&v^\eps*\overline Y\circ Z=- \cZ^{\<3v0m1cci>}+v^\eps*\sI(\cR \overline Z)\circ Z
		\\=&\int v^\eps(y)\big(-\cZ^{\<3v0m1cci>}_y+\tau_y\sI(\cR \overline Z)\circ Z\big)\,\dif y =\int v^\eps(y)\tau_y\overline Y\circ Z\,\dif y,\end{equs}
	as the both sides are well-defined for fixed $\eps$. Using Lemma \ref{veps}, Proposition \ref{prop:Zny} and Proposition \ref{prop:Zn1}, \eqref{v1} we obtain that the first term is bounded by the RHS of \eqref{est:dif1}.
	By Lemma \ref{lem:para} and Propositions \ref{prop:Z}, \ref{prop:Zn} , we have
	\begin{align*}
		&	\|Y(v^\eps*\overline Y\circ Z)+\cZ^{\<30m>}C_1-\cZ^{\<30m>}\cZ^{\<31oc3>}\|_{\bC^{-2\kappa}}
		\\\lesssim &\,\|Y\|_{\bC^{\frac12-\kappa}}\|\cZ^{\<31oc3>}+v^\eps*\overline Y\circ Z- C_1\|_{\bC^{-2\kappa}}+\|Y+\cZ^{\<30m>}\|_{\bC^{\frac12-\kappa}}(\|\cZ^{\<31oc3>}\|_{\bC^{-2\kappa}}+C_1).
	\end{align*}
	In fact, we write
	\begin{align*}
		\|\cZ^{\<31oc3>}+v^\eps*\overline Y\circ Z- C_1\|_{\bC^{-2\kappa}}\lesssim \|\cZ^{\<31oc3>}-\cZ^{\<3v0m1cci>}\|_{\bC^{-2\kappa}}+\|v^\eps*\sI(\cR \overline Z)\circ Z- C_1\|_{\bC^{-2\kappa}}
	\end{align*}
	and apply Proposition \ref{prop:Z} and \eref{con:RZ}.
	
	Thus  using \eqref{est:IRZ} $\|v^\eps* |Y|^2Z-( |\cZ^{\<30m>}|^2Z -C_1\cZ^{\<30m>})\|_{\bC^{-\frac12-\kappa}} $ is bounded by the RHS of \eqref{est:dif1}.  \eqref{bd:sto1} follows similarly.

	\textbf{II.}	Concerning the second term on the RHS of \eqref{eq:dec1}, using Lemmas \ref{lem:para} and \ref{veps} we have
	\begin{equs}[est:dif3.4n]
		&\|v^\eps*|\psi|^2Z-|\phi|^2Z\|_{\bC^{-\frac12-\kappa}}\\\lesssim& \Big[\|\psi-\phi\|_{\bC^{\frac12+\kappa}}(\|\psi\|_{L^\infty}+\|\phi\|_{L^\infty})+\|\psi-\phi\|_{L^\infty}(\|\psi\|_{\bC^{\frac12+\kappa}}+\|\phi\|_{\bC^{\frac12+\kappa}})
		\\&+\eps^\kappa \|\psi\|_{\bC^{\frac12+2\kappa}}\|\psi\|_{L^\infty}\Big]\|Z\|_{\bC^{-\frac12-\frac\kappa2}}.
	\end{equs}
	
	\textbf{III.}	For the third term on the RHS of \eqref{eq:dec1}, we  have
	\begin{equs}[v2]
		2v^\eps*\Re[\psi \overline Y]Z=\int v^\eps(y) \tau_y \psi \tau_y\overline Y Z\,\dif y +\int v^\eps(y) \tau_y \overline\psi {\tau_yY}Z\,\dif y.
	\end{equs}
	We only concentrate on the first one and the second one is similar. For the first one we have
	\begin{equs}[est:III]
		\int v^\eps(y) \tau_y \psi \tau_y\overline{Y} Z\dif y=\int v^\eps(y) (\tau_y \psi -\psi)\tau_y(\sI(\cR \overline Z)-\overline{\cZ}^{\<3v0m>}) Z\dif y+\psi(v^\eps*\overline{\sI(\cR Z)}Z-\cZ_\eps^{\<3v0m1c>}),
	\end{equs}
	with
	\begin{align*}
		\cZ_\eps^{\<3v0m1c>}\eqdef(v^\eps*\overline{\cZ}^{\<3v0m>})(\prec+\succ)Z+\cZ_\eps^{\<3v0m1cci>},
	\end{align*}
	for $\cZ_\eps^{\<3v0m1cci>}$ defined in \eqref{sto:5}.
	The $\bC^{-\frac12-\kappa}$ norm of the first term on the RHS of \eqref{est:III} is bounded by
	\begin{align*}&\int |v^\eps(y) |\|\tau_y \psi -\psi\|_{\bC^{\frac12+\kappa}}\sup_y\|\tau_y(\sI(\cR \overline Z)-\overline{\cZ}^{\<3v0m>})Z\|_{\bC^{-\frac{1+\kappa}2}}\,\dif y
		\lesssim \|\psi\|_{\bC^{\frac12+2\kappa}}\eps^\kappa \|\mZ_\eps\|^2.
	\end{align*}
	Here in the last step we used Lemma \ref{veps}, \eqref{v1}, and Lemma \ref{lem:para}, \eqref{eq:tauy}, and Propositions \ref{prop:Zn1},  \ref{prop:Zny}, Lemma \ref{comnew} to have
	\begin{align*}
		&	\sup_y\|\tau_y(\sI(\cR \overline Z)-\overline{\cZ}^{\<3v0m>})Z\|_{\bC^{-\frac{1+\kappa}2}}\\\lesssim &\,\sup_y\|\tau_y(\sI(\cR \overline Z)-\overline{\cZ}^{\<3v0m>})(\prec+\succ)Z\|_{\bC^{-\frac{1+\kappa}2}}+	\sup_y\|\tau_y\sI(\cR \overline Z)\circ Z\|_{\bC^{-2\kappa}}+\sup_y\|\cZ_{y}^{\<3v0m1cci>}\|_{\bC^{-\kappa}}
		\\\lesssim&\,(\| Z\|_{\bC^{-\frac{1+\kappa}2}}+\|\overline{\cZ}^{\<3v0m>}\|_{\bC^{\frac12-\kappa}})\|Z\|_{\bC^{-\frac{1+\kappa}2}}+	\sup_y\|\tau_y\sI(\cR \overline Z)\circ Z\|_{\bC^{-2\kappa}}+\sup_y\|\cZ_{y}^{\<3v0m1cci>}\|_{\bC^{-\kappa}}\lesssim  \|\mZ_\eps\|^2.
	\end{align*}
	By the paraproduct estimates Lemma \ref{lem:para}  and \eqref{est:IRZ} we know that the $\bC^{-\frac12-\kappa}$-norm of the difference between the second term on the RHS of \eqref{est:III}  and $\phi (C_1-\cZ^{\<31oc2>})$ can be controlled by the RHS of \eqref{est:dif1}. Hence, we deduce the result.

\end{proof}

\bl It holds that
\begin{align}\label{bd:sto2}
	&\|\big(v^\eps* |Y|^2\big)Y\|_{C_T\bC^{\frac12-\kappa}}\lesssim \|\mZ_\eps\|^3+1,
\end{align}
and
\begin{align*}
	&\|\big(v^\eps*|Y+\psi|^2\big)(Y+\psi) - |-\cZ^{\<30m>}+\phi|^2(-\cZ^{\<30m>}+\phi)\|_{L^\infty}
	\\\lesssim&\,\mM^2_\infty\big(\|\mZ_\eps-\mZ\|+\eps^\kappa\mM
	+\|\psi-\phi\|_{L^\infty}\big),
\end{align*}
where the proportional constants are independent of $\eps$.
\el
\begin{proof}
	Since $Y\to -\cZ^{\<30m>}$ in $C_T\bC^{\frac12-2\kappa}$,  using the paraproduct estimates Lemma \ref{lem:para} and Lemma \ref{veps}, the  bounds follow.
\end{proof}

\bl\label{lem:dif1} It holds that for $t>0$
\begin{align}\label{bd:sto3}
	&\|v^\eps*\Re[Y\overline Z]Y(t)\|_{\bC^{-\frac12-\kappa}}\lesssim (\|\mZ_\eps\|^3+1)(1+t^{-\kappa}),
\end{align}
and
\begin{equs}[est:dif2]
	&\big\|2\big(v^\eps*\Re[(Y+\psi)\overline Z]\big)(Y+\psi) -2\Re[(-\cZ^{\<30m>}+\phi)\overline Z](-\cZ^{\<30m>}+\phi)
	\\&\qquad-(C_1+2C_2)(-\cZ^{\<30m>}+\phi)\big\|_{\bC^{-\frac12-\kappa}}
	\\\lesssim&\,\mM \mM_\infty\big(\|\mZ_\eps-\mZ\|+\eps^{\frac\kappa2}\|\mZ\|\big)(1+t^{-\kappa})
	+\big(\|\psi-\phi\|_{\bC^{\frac12+2\kappa}}\mM_\infty+\|\psi-\phi\|_{L^\infty}\mM\big)\|\mZ\|,
\end{equs}
where the proportional constants are independent of $\eps$ and
\begin{equs}[def:cZphi]
	&2\Re[(-\cZ^{\<30m>}+\phi)\overline Z](-\cZ^{\<30m>}+\phi)
	\\\eqdef&\,2 \Re[\cZ^{\<30m>}\overline Z]\cZ^{\<30m>}-2\Re(\cZ^{\<31oc5>})\phi+2\Re[\phi\overline Z]\phi-(\phi \cZ^{\<31oc5>} +\overline{\phi} \cZ^{\<31oc1>}),
\end{equs}
with
\begin{align*}
	2\Re[\cZ^{\<30m>}\overline Z]\cZ^{\<30m>}\eqdef&\; 2\Re[\cZ^{\<30m>}\prec\overline Z](\prec+\succ)\cZ^{\<30m>}+ 2\Re[\cZ^{\<30m>}\succ\overline Z]\cZ^{\<30m>}+ (\cZ^{\<31oc4>}+\cZ^{\<31oc3>})\cZ^{\<30m>}
	\\&+C(\cZ^{\<30m>}, \overline Z,\cZ^{\<30m>} )+\cZ^{\<30m>}\cZ^{\<31oc4>}+C(\cZ^{\<30mc>},  Z,\cZ^{\<30m>} )+\cZ^{\<30mc>}\cZ^{\<31oc>},
\end{align*}
and
\begin{align*}
	\cZ^{\<31oc5>}\eqdef\cZ^{\<30m>}(\prec+\succ)\overline{Z}+\cZ^{\<31oc4>},\qquad \cZ^{\<31oc1>}\eqdef\cZ^{\<30m>}(\prec+\succ)Z+\cZ^{\<31oc>} .
\end{align*}
\el
\begin{proof}
	We have the following decomposition
	\begin{equs}[eq:dec3]
		&
		\big(v^\eps*\Re[(Y+\psi)\overline Z]\big)(Y+\psi)
		\\=&\,v^\eps*\Re[Y\overline Z]Y+v^\eps*\Re[Y\overline Z]\psi+v^\eps*\Re[\psi\overline Z]\psi+v^\eps*\Re[\psi \overline Z]Y.
	\end{equs}

	\textbf{I.}	We start with the first term on the RHS of \eqref{eq:dec3} and
	by the paraproduct decomposition:
	\begin{align}\label{eq:dec4l}
		(v^\eps*\Re[Y \overline Z])Y=(v^\eps*\Re[Y\prec \overline Z])Y+(v^\eps*\Re[Y \succ\overline Z])Y+(v^\eps*\Re[Y\circ \overline Z])Y.
	\end{align}
	We have for the last term
	\begin{align*}
		&\|(v^\eps*\Re[Y\circ \overline Z])Y+C_2\cZ^{\<30m>}-\frac12(\cZ^{\<31oc4>}+\cZ^{\<31oc3>})\cZ^{\<30m>}\|_{\bC^{-2\kappa}}
		\\\lesssim &\,\|(v^\eps*\Re[\sI(\cR Z)\circ \overline Z])Y+C_2\cZ^{\<30m>}\|_{\bC^{-2\kappa}}+\|(v^\eps*\Re[-\cZ_\eps^{\<3v0m1ci2>}])Y-\frac12(\cZ^{\<31oc4>}+\cZ^{\<31oc3>})\cZ^{\<30m>}\|_{\bC^{-2\kappa}},
	\end{align*}
	which by Propositions \ref{prop:Z}, \ref{prop:Zn},  Lemmas \ref{lem:para}, \ref{veps} and \eqref{est:IRZ} is bounded by the RHS of \eqref{est:dif2}.
	For the second term
	$(v^\eps*\Re[Y \succ\overline Z])Y$ we use Lemma \ref{lem:para}, Proposition \ref{prop:Z} and \eqref{est:IRZ} to have that
	\begin{align*}
		\|(v^\eps*\Re[Y \succ\overline Z])Y-\Re[\cZ^{\<30m>}\succ\overline Z]\cZ^{\<30m>}\|_{\bC^{-3\kappa}}
	\end{align*}
	is bounded by the  RHS of \eqref{est:dif2}.
	For the first term on the RHS of \eqref{eq:dec4l} we  decompose it as
	\begin{align*}
		(v^\eps*\Re[Y\prec \overline Z])Y=(v^\eps*\Re[Y\prec \overline Z])(\prec+\succ)Y+(v^\eps*\Re[Y\prec \overline Z])\circ Y.
	\end{align*}
	By the paraproduct estimates Lemma \ref{lem:para}, Lemma \ref{veps}  and \eqref{est:IRZ} we have for the first term,
	\begin{align*}
		\|(v^\eps*\Re[Y\prec \overline Z])(\prec+\succ)Y-\Re[\cZ^{\<30m>}\prec\overline Z](\prec+\succ)\cZ^{\<30m>}\|_{\bC^{-\frac12-\kappa}}
	\end{align*}
	is bounded by the RHS of \eqref{est:dif2}.
	For the resonant term we have
	\begin{align*}
		&(v^\eps*[Y\prec \overline Z])\circ Y=\int v^\eps(y)(\tau_yY\prec \tau_y \overline Z)\circ Y\,\dif y
		\\=&\int v^\eps(y) C(\tau_yY, \tau_y \overline Z, Y)+\int v^\eps(y) (\tau_yY-Y)(\tau_y \overline Z\circ Y)\,\dif y-Y\cZ_\eps^{\<3v0m1ci10>}+Y\sI(\cR Z)\circ v^\eps*\overline Z,
	\end{align*}
	with $\cZ_\eps^{\<3v0m1ci10>}$ introduced in \eqref{sto:5}, where we also used \eqref{sto:re} and similar calculation as \eqref{sto:re2} to have $\int v^\eps(y) \tau_y \overline Z\circ Y\,\dif y=-\cZ_\eps^{\<3v0m1ci10>}+\sI(\cR Z)\circ v^\eps*\overline Z$.
	Using similar argument as that for  \eqref{vepsc} in the proof of Lemma \ref{lem:conn1},  we know  that the difference between these terms and
	$C(\cZ^{\<30m>},  \overline{Z},\cZ^{\<30m>} )+ \cZ^{\<30m>}\cZ^{\<31oc4>}-C_1\cZ^{\<30m>}$
	in $\bC^{-2\kappa}$ is bounded by the RHS of \eqref{est:dif2}. For $(v^\eps*[\overline Y\prec  Z])\circ Y$ the required bounds follow exactly the same way.
	Thus we obtain that	$\|	v^\eps*\Re[Y\overline Z]Y- \Re[\cZ^{\<30m>}\overline Z]\cZ^{\<30m>}+(\frac12C_1+C_2) \cZ^{\<30m>}
	\|_{\bC^{-\frac12-\kappa}}$
	is bounded by the RHS of \eqref{est:dif2}.
	\eqref{bd:sto3} follows by a similar argument.
	
	\textbf{II.}	For the second term on the RHS of \eqref{eq:dec3},  we have
	\begin{align*}
		&\|v^\eps*\Re[Y\overline Z]\psi+\Re(\cZ^{\<31oc5>})\phi-C_2\phi\|_{\bC^{-\frac12-\kappa}}
		\\\lesssim&\,\|v^\eps*\Re[\sI(\cR Z)\overline Z]\psi-C_2\phi\|_{\bC^{-\frac12-\kappa}}+\|-v^\eps*\Re[\cZ_\eps^{\<3v0m1ci3>}]\psi+\Re(\cZ^{\<31oc5>})\phi\|_{\bC^{-\frac12-\kappa}},
	\end{align*}
	with
	\begin{align*}
		\cZ_\eps^{\<3v0m1ci3>}\eqdef \cZ_\eps^{\<3v0m>}(\prec+\succ)\overline Z+	\cZ_\eps^{\<3v0m1ci2>},
	\end{align*}
	which by Lemmas \ref{lem:para},  \ref{veps},  Proposition \ref{prop:Z} and \eqref{con:RZ2},  \eqref{est:IRZ}, can be bounded by the RHS of \eqref{est:dif2}.
	
	\textbf{III.}	For the third term $v^\eps*\Re[\psi\overline Z]\psi$ on the RHS of \eqref{eq:dec3}
	$$v^\eps*\Re[\psi\overline Z]\psi=v^\eps*\Re[\psi\prec\overline Z]\psi+v^\eps*\Re[\psi\succcurlyeq\overline Z]\psi.$$
	Then by Lemmas \ref{lem:para} and \ref{veps} we  have
	\begin{align*}
		\|v^\eps*\Re[\psi\overline Z]\psi-\Re[\phi\overline Z]\phi\|_{\bC^{-\frac12-\kappa}}
	\end{align*}
	can be bounded by the RHS of \eqref{est:dif2}.
	
	\textbf{IV. }	We have that for the fourth term on the RHS of \eqref{eq:dec3}
	\begin{equs}[v3]
		2v^\eps*\Re[\psi \overline Z]Y=\int v^\eps(y) \tau_y \psi \tau_y\overline Z Y\,\dif y +\int v^\eps(y) \tau_y \overline\psi {\tau_yZ}Y\,\dif y.
	\end{equs}
	Using similar argument as  \textbf{III.} in the proof of Lemma \ref{lem:conn1} we obtain that $\|v^\eps*\Re[\psi \overline Z]Y	+\frac12(\phi \cZ^{\<31oc5>} +\overline{\phi} \cZ^{\<31oc1>})-\frac12 C_1\phi\|_{\bC^{-\frac12-\kappa}}$ is bounded by the RHS of \eqref{est:dif2}.

\end{proof}

\subsection{Proof of Lemma \ref{comnew} and Lemma \ref{boundr1}}\label{sec:comnew}
We first recall the following notations from Appendix \ref{sec:pre}. We will use the dyadic partition of unity  $(\theta_{-1},\theta)$ in the proofs, and refer to Appendix \ref{sec:pre} for its definition. We write $\theta_j=\theta(2^{-j}\cdot)$ for $j\geq0$. Let $\Delta_j$ be the Littlewood-Paley blocks.  Let $\tilde \theta\in C_c^\infty(\mR^3)$ with support in an annulus such that $\tilde\theta \theta=\theta$.  Define $K_j=\mathcal{F}^{-1}\theta_j$, $\tilde{K}_j=\mathcal{F}^{-1}\tilde \theta_j$ with
$\tilde\theta_j=\tilde\theta(2^{-j}\cdot)$.
We recall the following result from \cite{ZZ15}.

\bl\label{lem:sum} (\cite[Lemma 3.10]{ZZ15}) Let $0<l,m<d,l+m-d>0$. Then we have for $k\neq0$
$$\sum_{k_1\in \mathbb{Z}^d\backslash\{0\},|k-k_1|\geq\frac12}\frac{1}{|k_1|^{l}|k-k_1|^{m}}\lesssim \frac{1}{|k|^{l+m-d}}.$$
\el

\begin{proof}[Proof of Lemma \ref{comnew}]

	We use \eqref{eq:fourier} to have for $j\geq0$
	\begin{align*}\Delta_j(\cR f)=&\sum_{k_1}\theta_j(k_1)\widehat{\cR}(k_1)\langle f,e_{k_1}\rangle e_{k_1}=\sum_{k_1}\theta_j(k_1)\tilde \theta_j(k_1)\widehat{\cR}(k_1)\langle f,e_{k_1}\rangle e_{k_1}
		\\=&\,\mathcal{F}^{-1}(\tilde \theta_j \widehat{\cR})*\Delta_j f,
	\end{align*}
	where
	\begin{equs}[def:FvG]
		\widehat{\cR}(k_1)\eqdef\frac1{(2\pi)^{3}}\sum_k\frac1{\la k\ra^2}\Big(\widehat{v^\eps} (k)-\widehat{v^\eps}(k_1-k)\Big).\end{equs}
	We then obtain
	$$\|\Delta_j(\cR f)\|_{L^p}\lesssim \|\mathcal{F}^{-1}(\tilde \theta_j \widehat{\cR})\|_{L^1(\mR^3)}\|\Delta_j f\|_{L^p}.$$
	It remains to bound $\|\mathcal{F}^{-1}(\tilde \theta_j \widehat{\cR})\|_{L^1(\mR^3)}$,  which equals to
	\begin{align*}&\big\|\mathcal{F}^{-1}(\tilde \theta \widehat{\cR}(2^j\cdot))\big\|_{L^1(\mR^3)}\lesssim \big\|(1+|x|^2)\mathcal{F}^{-1}(\tilde \theta \widehat{\cR}(2^j\cdot))\big\|_{L^2(\mR^3)}
		\\\lesssim&\,\big\|\mathcal{F}^{-1}(I-\Delta)(\tilde \theta \widehat{\cR}(2^j\cdot))\big\|_{L^2(\mR^3)}=\big\|(I-\Delta)(\tilde \theta \widehat{\cR}(2^j\cdot))\big\|_{L^2(\mR^3)}.
	\end{align*}
	We have
	\begin{equs}[vgg]
		\widehat{\cR}(2^jk_1)=&\frac1{(2\pi)^{3}}\sum_k \frac1{\la k\ra^2}\Big(\widehat{v^\eps} (k)-\hat{ v^\eps} (2^jk_1-k)-\nabla\widehat{v^\eps} (k)\cdot 2^j k_1\Big),
	\end{equs}
	where  we used the fact that $\nabla\widehat{v^\eps} (k)$ is odd in $k$. Recall that $\tilde \theta$ is support on an annulus.  Subsequently, we employ $(\mathbf{Hv})$ to have for $k_1$ on the support of $\tilde\theta$
	\begin{align*}|\widehat{v^\eps} (k)-\widehat{v^\eps} (2^jk_1-k)-\nabla\widehat{v^\eps} (k)\cdot 2^jk_1|\lesssim&\,(2^j|k_1|)^2\int_0^1\int_0^1 | \nabla^2\widehat{v^\eps}(k-su2^jk_1)|\,\dif u\dif s
		\\\lesssim &\,\eps^2|2^j k_1|^2\int_0^1\int_0^1\frac1{1+\eps^2 |su2^jk_1-k|^{2}}\,\dif u\dif s,
	\end{align*}
	which by Lemma \ref{lem:sum} implies that for $\eta\in(0,1)$ and  for $k_1$ on the support of $\tilde \theta$
	\begin{equs}[vg]
		|\widehat{\cR}(2^jk_1)|\lesssim&\,\eps^{2}|2^jk_1|^2\int_0^1\int_0^1\sum_{k\neq0}\frac1{\la k\ra^2} \frac1{1+\eps^2 |su2^jk_1-k|^{2}}\,\dif u\dif s+\eps^{\eta}|2^jk_1|^{\eta}
		\\	\lesssim&\,\eps^{\eta}|2^jk_1|^2\int_0^1\int_0^1\sum_{k\neq0,|su2^jk_1-k|\geq\frac12} \frac1{\la k\ra^2}\frac1{|su2^jk_1-k|^{2-\eta}}\,\dif u\dif s
		\\&+\eps^{2}|2^jk_1|^2\int_0^1\int_0^1\sum_{k\neq0,|su2^jk_1-k|<\frac12} \frac1{\la k\ra^2}\,\dif u\dif s+\eps^{\eta}|2^jk_1|^{\eta}
		\\\lesssim&\,\eps^{\eta}|2^jk_1|^2\int_0^1\int_0^1\frac1{|su2^jk_1|^{1-\eta}}\,\dif u\dif s+\eps^{\eta}|2^jk_1|^{\eta}\lesssim\eps^{\eta}2^{(1+\eta)j}|k_1|^{1+\eta}.
	\end{equs}
	In the first line, the last term arises from the calculation for $k=0$.   Additionally, we utilized the fact that  when $|k|\geq1,|su2^jk_1-k|<\frac12$, $|su2^jk_1|\simeq |k|$ in the third step.
	Moreover, we have
	\begin{equs}[nablavg]
		\nabla \widehat{\cR}(2^j\cdot)(k_1)=&\frac{2^j}{(2\pi)^3} \sum_k \frac1{\la k\ra^2}(\nabla\widehat{v^\eps} (k-2^jk_1)-\nabla\widehat{v^\eps} (k) ).
	\end{equs}
	Thus using \eqref{nablavg} and similar as \eqref{vg} we obtain on the support of $\tilde\theta$
	\begin{equs}[est:vgd1]
		|\nabla \widehat{\cR}(2^j\cdot)(k_1)|\lesssim&\,2^{2j}\eps^2\int_0^1 \sum_{k\neq0,|2^jk_1s-k|\geq\frac12}\frac1{\la k\ra^2}\frac1{\eps^{2-\eta}|2^jk_1s-k|^{2-\eta}}\,\dif s
		\\&+2^{2j}\eps^2\int_0^1 \sum_{k\neq0,|2^jk_1s-k|<\frac12}\frac1{\la k\ra^2}\,\dif s+\eps 2^j
		\lesssim \eps^{\eta}2^{(1+\eta)j}.
	\end{equs}
	Furthermore, we have
	\begin{equs}
		\nabla^2 \widehat{\cR}(2^j\cdot)(k_1)=&-\frac{2^{2j}}{(2\pi)^3}\sum_k\frac1{\la k\ra^2}\nabla^2\widehat{v^\eps} (2^jk_1-k).
	\end{equs}
	We subsequently  apply Lemma \ref{lem:sum},  \textbf{(Hv)} and similar calculation as \eqref{vg} to have on the support of $\tilde\theta$
	\begin{equs}[est:Fvgd2]
		&|\nabla^2 \widehat{\cR}(2^j\cdot)(k_1)|\lesssim 2^{2j}\eps^2\sum_{k\neq0,|2^jk_1-k|\geq\frac12}\frac1{\la k\ra^2|\eps (2^jk_1-k)|^{2-\eta}}+\eps^\eta 2^{j(1+\eta)}
		\lesssim\eps^{\eta}2^{j(1+\eta)}.
	\end{equs}
	Combining \eqref{vg}, \eqref{est:vgd1} and \eqref{est:Fvgd2} we obtain for $j\geq0$
	$$\|\mathcal{F}^{-1}(\tilde \theta_j \widehat{\cR})\|_{L^1(\mR^3)}\lesssim \eps^{\eta} 2^{j(1+\eta)}.$$
	Moreover, since $\theta_{-1}$ is supported in a ball,  we employ the Bernstein-type lemma along with \eqref{vg} to have
	\begin{equs}[-1est]\|\Delta_{-1}(\cR f)\|_{L^p}^2\lesssim& \,\|\Delta_{-1}(\cR f)\|_{L^2}^2\lesssim \sum_{k_1}|\widehat{\cR}(k_1)|^2\theta_{-1}(k_1)^2|\langle f,e_{k_1}\rangle |^2
		\\\lesssim &\,\eps^{2\eta}\sum_{k_1}\theta_{-1}(k_1)^2|\langle f,e_{k_1}\rangle |^2\lesssim \eps^{2\eta}\|\Delta_{-1}(f)\|_{L^p}^2,
	\end{equs}
	which implies Lemma \ref{comnew}.
\end{proof}

\br We can also prove Lemma \ref{comnew} using the properties of the Green function $G(x-y)$. Here we  use the Fourier analysis-based proof, as we will later rely on some estimates established in the lemma.
\er

\begin{proof}[Proof of Lemma \ref{boundr1}] We have for $j\geq0$
	$$\Delta_j \mathcal R_1 f=\sum_k \widehat{\cR_1}(k)\theta_j(k)\langle f,e_k\rangle e_k=\mathcal{F}^{-1}(\tilde \theta_j \widehat{\cR_1})* \Delta_j f.$$
	We then obtain
	$$\|\Delta_j(\cR_1 f)\|_{L^p}\lesssim \|\mathcal{F}^{-1}(\tilde \theta_j \widehat{\cR_1})\|_{L^1(\mR^3)}\|\Delta_j f\|_{L^p}.$$
	It remains to bound $\|\mathcal{F}^{-1}(\tilde \theta_j \widehat{\cR_1})\|_{L^1(\mR^3)}$,  which equals to
	\begin{align*}&\big\|\mathcal{F}^{-1}(\tilde \theta \widehat{\cR_1}(2^j\cdot))\big\|_{L^1(\mR^3)}\lesssim \big\|(1+|x|^2)\mathcal{F}^{-1}(\tilde \theta \widehat{\cR_1}(2^j\cdot))\big\|_{L^2(\mR^3)}
		\\\lesssim&\,\big\|\mathcal{F}^{-1}(I-\Delta)(\tilde \theta \widehat{\cR_1}(2^j\cdot))\big\|_{L^2(\mR^3)}=\big\|(I-\Delta)(\tilde \theta \widehat{\cR_1}(2^j\cdot))\big\|_{L^2(\mR^3)}.
	\end{align*}
	We use the condition $(\mathbf{Hv})$ to have $$|\widehat{v^\eps}(2^jk+k_1)^2-\widehat{v^\eps}(k_1)^2|\lesssim \eps^{2\kappa} 2^{2j\kappa}|k|^{2\kappa}\Big(\frac1{|\eps (2^jk+k_1)|^{\kappa}}+\frac1{|\eps k_1|^{\kappa}}\Big),$$
	which combined with Lemma \ref{lem:sum} and similar argument as \eqref{vg} implies that
	\begin{align*}|\widehat{\cR_1}(2^jk)|\lesssim &\,\eps^\kappa 2^{2j\kappa} \sum_{k_1\neq0,| 2^jk+k_1|\geq1/2}\frac1{|k_1|^3+1}\frac1{| 2^jk+k_1|^{\kappa}}+\eps^{\kappa } 2^{2j\kappa}
		\lesssim \eps^\kappa 2^{2j\kappa}.
	\end{align*}
	Moreover, the $m$-th ($|m|\leq 2$) derivative of $\widehat{v^\eps}(2^jk+k_1)^2-\widehat{v^\eps}(k_1)^2$ is bounded by
	$$\frac{\eps^{m} 2^{jm}}{|\eps (2^jk+k_1)|^{m}+1}.$$
	Using Lemma \ref{lem:sum} and similar argument as in the proof of Lemma \ref{comnew} we  obtain the result.
\end{proof}

\section{Uniform in $\eps$  estimates and convergence of the measures $\nu^\eps$}\label{sec:uni}
In this section, we prove the convergence of the measure $\nu^\eps$ given in \eqref{e:Phi_eps-measure1} to the $\Phi^4_3$ field $\nu$ from \eqref{e:Phi-measure1}. The first aim of this section is to derive the following uniform in $\eps$ moment bounds for $\nu^\eps$ via the uniform estimates from the dynamic \eqref{eq:mainnew}. In this section we assume that $v$  satisfies $\textbf{(Hv)}$.

\bt\label{uniformmea} Let $\Psi^\eps$ be the stationary solution to equation \eqref{eq:mainnew}. Then it holds that for any $t\geq0$ and $p\geq2$
\begin{align*}\sup_\eps\E\|\Psi^\eps(t)\|^2_{\bC^{-\frac12-\kappa}}+
	\sup_\eps\E\|\Psi^\eps(t)\|^p_{\bB_2^{-\frac12-\kappa}}\lesssim1.
\end{align*}
\et

As $\nu^\eps$ from \eqref{e:Phi_eps-measure1} is the unique invariant measure of the solutions to equation \eqref{eq:mainnew} (c.f. Lemma \ref{lem:zz1} below), Theorem \ref{uniformmea} is enough to derive tightness of $\nu^\eps,\eps>0$ in $\bC^{-\frac12-\kappa}$.
Furthermore, based on Theorem \ref{uniformmea} and Theorem \ref{th:1} we can further identify the limit of $\nu^\eps$  and derive the convergence of $\nu^\eps$ to the $\Phi^4_3$ field $\nu$ and the convergence of the related $n$-point functions.

\bt\label{th:main1} As $\eps\to 0$, $\nu^\eps$ converges weakly to $\nu$ in $\bC^{-\frac12-\kappa}$ with $\nu$ being the $\Phi^4_3$ field given by the unique invariant measure of the solutions to  equation \eqref{eq:mainnew1}. Furthermore, any correlation function $\gamma_n^\eps$ of $\nu^\eps$
converge to the $n$-point correlation function $\gamma_n$ of $\Phi^4_3$ field in $\cS'(\mT^{6n})$.
\et

The main idea of the proof of Theorem \ref{uniformmea} is to establish uniform in $\eps$ moment bounds for the stationary solutions $\Psi^\eps$ through the  dynamics \eqref{eq:mainnew}.  By the decomposition \eqref{eq:decPsi}, equation \eqref{eq:mainnew} can be reduced to equation \eqref{eq:main2new}. Hence, in the sequel we mainly concentrate on  equation \eqref{eq:main2new}. To this end, we rewrite \eqref{eq:main2new} as follows:
\begin{equs}[eq:psi]
	\LL \psi=& \, -v^\eps*|\psi|^2\psi-(Y+\psi)\cZ^{\<2vm>}-2v^\eps*\Re[(Y+\psi) \Wick{\overline Z]Z}
	\\&-6\tilde b^\eps(Z+Y+\psi)+f_2(\psi)+f_1(\psi)+f_0,%+\cZ^{\<3v02vmp>}-\sI(\cR(Z))\cZ^{\<2vm>}-2\Re[Y \Wick{v^\eps*\overline Z]Z}.
	\\\psi(0)=&\,\Psi^\eps(0)-Z(0).
\end{equs}
Here \begin{align*}
	f_0\eqdef&-v^\eps* |Y|^2(Z+Y)-2v^\eps*\Re[Y\overline Z]Y+\mathcal{R}Y
	\\&+(6(\tilde b^\eps-b^\eps)-m+1)(Z+Y),\\
	f_1(\psi)\eqdef &-2v^\eps*\Re[\psi \overline Y]Z-2v^\eps*\Re(\psi \overline Y)Y-v^\eps*|Y|^2\psi\\&-2v^\eps*\Re[Y\overline Z]\psi-2v^\eps*\Re[\psi \overline Z]Y+\mathcal{R}\psi
	\\&+(6(\tilde b^\eps-b^\eps)-m+1)\psi,\\
	f_2(\psi)\eqdef&-Zv^\eps*|\psi|^2-2v^\eps*\Re(\psi \overline Y)\psi-v^\eps*|\psi|^2Y-2v^\eps*\Re[\psi\overline Z]\psi.
\end{align*}
We note that $f_0$ arises from the purely stochastic terms in the third and fourth lines of equation \eqref{eq:main2new}, while $f_1(\psi)$ corresponds to the linear part  of these lines, and $f_2(\psi)$ depends quadratically on $\psi$.
Fix $T>0$. For fixed $\eps>0$ by Theorem \ref{th:global} there exists a unique solution $\psi\in C_T\bC^{-\frac12-\kappa}$ to \eqref{eq:psi} for any $\Psi^\eps(0)\in\bC^{-\frac12-\kappa}$. In the following we will prove global and uniform in $\eps$ estimates for the solution $\psi$.

We adjust $\|\mZ_\eps\|$ from Section \ref{sec:dif} to represent the smallest number greater than both $\|\mZ_\eps\|$ and $\|\mZ-\mZ_\eps\|$ in Section \ref{sec:dif}.
Similar as in Section \ref{sec:dif} we know that for any $r\geq1$
\begin{equs}[bd:mZ1]\sup_\eps\E\|\mZ_\eps\|^r\lesssim1.
\end{equs}
We also use $K(\|\mZ_\eps\|)$ to denote a generic polynomial depending on $\|\mZ_\eps\|$ for the stochastic terms. The coefficients of $K$ are independent of $\eps$ and may change from line to line.

Using \eqref{bd:sto1}, \eqref{bd:sto2}, \eqref{bd:sto3} and \eqref{bd:cRY}, \eqref{dif:esb},  we find that $f_0\in \bC^{-\frac12-2\kappa}$ and for $t>0$
\begin{align}\label{bd:f_0}
	\|f_0(t)\|_{\bC^{-\frac12-2\kappa}}\lesssim (\|\mZ_\eps\|^3+1)(1+t^{-\kappa}).
\end{align}

In the following we decompose $\psi$ into $\psi_l+\psi_h$, as detailed in \eqref{eqv} and \eqref{eqw} below. To this end, we introduce the  localizers in terms of Littlewood--Paley expansions. Let $J\in\mathbb{R}_{+}$. For $f\in\mathcal{S}'(\mathbb{T}^{d})$ we define
\begin{equs}[def:loc]
	\Delta_{> J}f=\sum_{j> J}\Delta_{j}f,\qquad \Delta_{\leq J}f=\sum_{j\leq J}\Delta_{j}f.
\end{equs}
Then, in particular, for $\alpha\leq \beta\leq\gamma$, it holds
\begin{equation}\label{eq:loc}
	\begin{aligned}
		\|\Delta_{> J}f\|_{\bC^{\alpha}}\lesssim 2^{-J(\beta-\alpha)}\|f\|_{\bC^{\beta}},\qquad
		\|\Delta_{\leq J}f\|_{\bC^{\gamma}}\lesssim 2^{J(\gamma-\beta)}\|f\|_{\bC^{\beta}},
	\end{aligned}
\end{equation}
where the proportional constants are independent of $f$.

Now we decompose $\psi=\psi_l+\psi_h$ with $\psi_l$ and $\psi_h$ satisfying
\begin{equs}[eqv]	\LL \psi_h=&\cG_{\prec,> R}(Y+\psi),
\end{equs}
\begin{equs}[eqw]\LL \psi_l=&-v^\eps*|\psi|^2\psi+f_2(\psi)+f_1(\psi)+f_0+\cG_{\prec,\leq R}(Y+\psi)+F_{\succcurlyeq}+\cG_{\succcurlyeq},
\end{equs}
with initial condition $\psi_l(0)=\psi(0), \psi_h(0)=0$ and
\begin{align*}
	\cG_{\succcurlyeq}\eqdef &-\Big(\psi\succcurlyeq\cZ^{\<2vm>}+(\tilde b_1^\eps+\tilde b^\eps_2)(Y+\psi)\Big) -\Big(\int v^\eps(y)\tau_y\psi\succcurlyeq\cZ^{\<2m>}_y\,\dif y+(\tilde b_2^\eps+\tilde b^\eps_3)(Y+\psi)\Big)
	\\&	-\Big(\int v^\eps(y)\tau_y\overline{\psi}\succcurlyeq\cZ^{\<2>}_y\,\dif y+\tilde b_5^\eps(Y+\psi)\Big),
\end{align*}
and
\begin{align*}
	F_\succcurlyeq\eqdef&-\Big(Y\succcurlyeq\cZ^{\<2vm>}+(\tilde b_1^\eps+\tilde b^\eps_2)Z\Big) -\Big(\int v^\eps(y)\tau_yY\succcurlyeq\cZ^{\<2m>}_y\,\dif y+(\tilde b_2^\eps+\tilde b^\eps_3)Z\Big)
	\\&	-\Big(\int v^\eps(y)\tau_y\overline{Y}\succcurlyeq\cZ^{\<2>}_y\,\dif y+\tilde b_5^\eps Z\Big),
\end{align*}
and
\begin{align*}
	\cG_{\prec,> R}(f)\eqdef- f\prec\Delta_{> R}\cZ^{\<2vm>}-\int v^\eps(y)\tau_yf\prec\Delta_{> R}\cZ^{\<2m>}_y\,\dif y
	-\int v^\eps(y)\tau_y\overline{f}\prec\Delta_{> R}\cZ^{\<2>}_y\,\dif y,
\end{align*}
\begin{align*}
	\cG_{\prec,\leq R}(f)\eqdef- f\prec\Delta_{\leq R}\cZ^{\<2vm>}-\int v^\eps(y)\tau_yf\prec\Delta_{\leq R}\cZ^{\<2m>}_y\,\dif y
	-\int v^\eps(y)\tau_y\overline{f}\prec\Delta_{\leq R}\cZ^{\<2>}_y\,\dif y.
\end{align*}
Here $\Delta_{>R}$ and $\Delta_{\leq R}$ are localizers introduced in \eqref{def:loc}, where $R$, (which depends on time,) will be given later and we used \eqref{vepsde}.

In the following we will prove the following coming down from infinity result for the solutions obtained by Theorem \ref{th:global}.

\bt\label{coming} Suppose that $\psi(0)\in \bC^{-\frac12-\kappa}$. For every $t\in(0,T]$, we have
\begin{align}\label{eq:psi:com}
	\|\psi_h(t)\|_{\bB^{\frac1{17}}_4}+\|\psi_l(t)\|_{L^2}\lesssim K(\|\mZ_\eps\|)(1+t^{-\frac12}).\end{align}
Here  the implicit constant is independent of $\eps$ and initial value.
\et

Note that the RHS of \eqref{eq:psi:com} is independent of the initial data $\psi(0)$. In the following we will prove Theorem \ref{coming} under assumption that $\psi(0)\in L^2\cap \bC^{-\frac12-\kappa}$.  Since the solution is continuous w.r.t. the initial data in $\bC^{-\frac12-\kappa}$, as established in Section \ref{first}, we can deduce that \eqref{eq:psi:com} holds for general initial data $\psi_0\in \bC^{-\frac12-\kappa}$ by approximating the initial data (see \cite[Remark 2.4]{MW18}).

The idea to prove Theorem \ref{coming} is to combine $L^2$-energy estimates for the low frequency part of $\psi$ (i.e. $\psi_l$) and smoothing effect of heat operators to establish  uniform in $\eps$ estimates for $\psi$. When  performing the $L^2$-energy estimate for $\psi_l$, we encounter a useful term $\cV^\eps(\psi_l)$ from the nonlinearity, which is defined for $f\in L^2$
\begin{equs}[def:veps]
	\mathcal{V}^\eps(f)\eqdef\int v^\eps(x-y)|f|^2(x)|f|^2(y)\,\dif x\dif y.\end{equs}
We have a useful lower bound for $\cV^\eps(f)$:
\begin{align}\label{lowerbound}\mathcal{V}^\eps(f)=&\sum_k\widehat{v^\eps}(k)|\langle |f|^2,e_k\rangle|^2\gtrsim \|f\|_{L^2}^4,
\end{align}
where  in the last step we used $\widehat{v^\eps}(k)\geq0$, and when $k=0$ in the Fourier transform, we obtain $\|f\|_{L^2}^4$. Using condition $(\mathbf{Hv})$ we can also derive
\begin{align}\label{lowerbound1}
	\|v^\eps*|f|^2\|_{L^2}^2=\sum_k\widehat{v^\eps}(k)^2|\langle |f|^2,e_k\rangle|^2\lesssim \cV^\eps(f).
\end{align}

\br \label{reglobal} For the dynamical $\Phi^4_3$ model, one can use the strong damping term $-|\phi|^2\phi$ and the maximum principle (see \cite{GH18, MW20}) or $L^p$-energy estimates (see \cite{MW18}) to obtain global estimates. However, for $v^\eps * |\psi|^2 \psi$, it appears that only the $L^2$-energy estimate is effective. In \cite{GH18a}, the authors proved a global estimate using only the $L^2$-energy estimate, without relying on Schauder estimates for the dynamical $\Phi^4_3$ model, through a duality argument between the paraproducts $\prec$ and $\circ$. The technique used there may also apply here (see also \cite{SZZ21}), but our proof combines Schauder estimates and the  $L^2$-energy estimate, which simplifies such kind of argument in \cite{AK17, MW18} by introducing localizers. We believe this proof is of independent interest and is more compatible with the analysis in Section \ref{first}, so we have chosen to present it this way.
\er

In the following, we employ  localizer and smoothing effect of heat operator to derive regularity estimates for $\psi_h$, and $\psi^\sharp$ in Sections \ref{sec:epsih}-\ref{sec:epsis}.  We then perform $L^2$-energy estimates for $\psi_l$ in  Section \ref{sec:en} and make use of the dissipation effect from the cubic nonlinearity to derive uniform estimates. The proofs of Theorems \ref{uniformmea}--\ref{coming} are given in Section \ref{proof}.
\subsection{Estimates of $\psi_h$}\label{sec:epsih} In this section we use Duhamel's formula and the smoothing effect of heat operator  to derive a priori estimate for $\psi_h$.
We make use of  the localizers $\Delta_{>R}$ present in the equation for $\psi_h$ in \eqref{eqv}. Namely, by an appropriate choice of $R$ we can always apply \eqref{eq:loc} to get a small constant in front of terms which contain $\psi$. We are therefore able to obtain a bound independent of $\psi_l$ (see \eqref{boundpsih} below).

We use Duhamel's formula to write for $0\leq t\leq T$
\begin{align*}
	\psi_h(t)=&\int_0^tP_{t-s}\cG_{\prec,> R}(Y+\psi)\,\dif s.
\end{align*}
We then use the smoothing effect of heat operator  Lemma \ref{lem:heat} and the paraproduct estimates Lemma \ref{lem:para} and \eqref{eq:tauy} to obtain
\begin{align*}\|\psi_h(t)\|_{\bB^{\frac1{17}}_4}\lesssim& \,K(\|\mZ_\eps\|)+\int_0^t (t-s)^{-1+\frac\kappa2}\|\psi\|_{L^4}\Big(\sup_y\|\Delta_{>R}\cZ^{\<2m>}_y\|_{\bC^{-\frac{33}{17}+\kappa}}\\&+\sup_y\|\Delta_{>R}\cZ^{\<2>}_y\|_{\bC^{-\frac{33}{17}+\kappa}}
	+\|\Delta_{>R} \cZ^{\<2vm>}\|_{\bC^{-\frac{33}{17}+\kappa}}\Big)\,\dif s
	\\\lesssim&\, K(\|\mZ_\eps\|)+\|\mZ_\eps\|\int_0^t (t-s)^{-1+\frac\kappa2}\|\psi(s)\|_{L^4}2^{-R(s)(\frac{16}{17}-2\kappa)}\,\dif s,
\end{align*}
where we used the localizer estimate \eqref{eq:loc} in the last step.

We choose $2^{R(s)(\frac{16}{17}-2\kappa)}=\|\psi(s)\|_{L^4}+1$ and obtain for $t\in[0,T]$
\begin{equs}[boundpsih]\|\psi_h(t)\|_{\bB^{\frac1{17}}_4}\lesssim K(\|\mZ_\eps\|).
\end{equs}
Hence we also derive for $s\in[0,T]$
\begin{equs}[boundr]2^{R(s)(\frac{16}{17}-2\kappa)}\lesssim K(\|\mZ_\eps\|)+ \|\psi_l(s)\|_{L^4}.
\end{equs}
Moreover, we further apply the smoothing effect of heat operator to derive improved regularity estimates for $\psi_h$ in different Besov spaces $\bB^{1-2\kappa}_p$ and $\bB^{\frac12+2\kappa}_p$ in terms of $\psi_l$, which will be used for the estimates of $\psi_l$ below. More precisely, we use Duhamel's formula to express the solution for $0 \leq s \leq t \leq T$
\begin{equs}[duh:psih]
	\psi_h(t)=&\,P_{t-s}\psi_h(s)+\int_s^tP_{t-r}\cG_{\prec,> R}(Y+\psi)\,\dif r
	\\\eqdef&\,\psi_h^{(1)}(t)+\psi_h^{(2)}(t).
\end{equs}
We also use the smoothing effect of heat operator  Lemma \ref{lem:heat} and the paraproduct estimates Lemma \ref{lem:para} to obtain  for $1\leq p\leq 4$
\begin{equs}[bd:psih2]
	\|\psi_h^{(2)}(t)\|_{\bB^{1-2\kappa}_p}\lesssim& \int_s^t (t-r)^{-1+\frac\kappa2}(\|\psi\|_{L^p}+\|\mZ_\eps\|)\|\mZ_\eps\|\,\dif r
	\\\lesssim&\,\|\mZ_\eps\|\int_s^t (t-r)^{-1+\frac\kappa2}\|\psi(r)\|_{L^p}\,\dif r+K(\|\mZ_\eps\|)
	\\\lesssim&\, K(\|\mZ_\eps\|)\Big(1+\int_s^t (t-r)^{-1+\frac\kappa2}\|\psi_l(r)\|_{L^p}\,\dif r\Big),
\end{equs}
where we used \eqref{boundpsih}.
Furthermore we use the smoothing effect of heat operator  Lemma \ref{lem:heat} for $\psi_h^{(1)}$ to have for $1\leq p\leq 4$
\begin{align*}\|\psi_h(t)\|_{\bB^{1-2\kappa}_p}\lesssim&\, (t-s)^{-\frac{8}{17}+\kappa}\|\psi_h(s)\|_{\bB^{\frac1{17}}_p}
	\\&+K(\|\mZ_\eps\|)\Big(1+\int_s^t (t-r)^{-1+\frac\kappa2}\|\psi_l(r)\|_{L^p}\,\dif r\Big),
\end{align*}
which combined with \eqref{boundpsih} implies that for $1\leq p\leq 4$
\begin{equs}[boundpsih1]\|\psi_h(t)\|_{\bB^{1-2\kappa}_p}\lesssim K(\|\mZ_\eps\|)\Big((t-s)^{-\frac8{17}+\kappa}+\int_s^t (t-r)^{-1+\frac\kappa2}\|\psi_l(r)\|_{L^p}\,\dif r\Big).
\end{equs}
Similarly we have
\begin{equs}[psih:j1]\|\psi_h(t)\|_{\bB^{\frac12+2\kappa}_2}\lesssim K(\|\mZ_\eps\|)\Big((t-s)^{-\frac14}+\int_s^t(t-r)^{-\frac34-2\kappa}\|\psi_l(r)\|_{L^2}\,\dif r\Big).
\end{equs}

\subsection{Estimate of $\psi^\sharp$}\label{sec:epsis}
The main aim of this section is to prove a uniform in $\eps$ bound for $\psi^\sharp$, which is defined in \eqref{para}. Our main technique is also the smoothing effect of heat operator Lemma \ref{lem:heat} and the paraproduct estimates Lemma \ref{lem:para} and Lemma \ref{lem:com2}. By \eqref{para}  we have
\begin{align}\label{dec:psisharp}
	\psi^\sharp=&\psi^\sharp_1+\psi^\sharp_2,\end{align}
with
\begin{align}
	\psi^\sharp_1\eqdef&-\int v^\eps(y)\Big([\sI,\tau_y\overline { Y}\prec]\cZ^{\<2>}_y+[\sI,\tau_yY\prec]\cZ^{\<2m>}_y\Big)\,\dif y-[\sI,Y\prec]\cZ^{\<2vm>},\no
	\\\psi^\sharp_2\eqdef&\,\psi-\sI(\cG_{\prec}(Y)) +\psi\prec\cZ^{\<20vm>}+\int v^\eps(y)\Big(\tau_y\psi\prec\cZ^{\<20vm1>}_y+\tau_y\overline{\psi}\prec\cZ^{\<20vm2>}_y\Big)\,\dif y,\label{def:psi2-n}
\end{align}
where $\cG_\prec\eqdef \cG_{\prec,\leq R}+\cG_{\prec,>R}$ and we recall $[\sI,f\prec]g$ denotes the commutator between $\sI$ and $f\prec$ given by
$$[\sI,f\prec]g= \sI(f\prec g)-f\prec \sI(g).$$

In the following we concentrate on proving the following uniform in $\eps$ estimates for $\psi^\sharp$.

\bp \label{boundsharpp}It holds that for any $0\leq s\leq t\leq T$
\begin{equs}[boundsharpf]
	\|\psi^\sharp_1\|_{C_T\bC^{1+3\kappa}}\lesssim K(\|\mZ_\eps\|),
\end{equs}
and
\begin{equs}[eqhighr] \int_s^t \|\psi_2^\sharp\|_{\bB^{1+24\kappa}_2}\,\dif r\lesssim \delta\int_s^t(\mathcal{V}^\eps(\psi_l)+\|\psi_l\|_{H^1}^2)\,\dif r+K(\|\mZ_\eps\|)+\|\psi_l(s)\|_{L^2}^2,\end{equs}
\begin{equs}[boundsharpf1]\int_s^t\|\psi^\sharp_2(r)\|^{\frac32}_{\bB^{1+2\kappa}_2}\,\dif r\lesssim \delta\int_s^t(\mathcal{V}^\eps(\psi_l)+\|\psi_l\|_{H^1}^2)\,\dif r+K(\|\mZ_\eps\|) +\|\psi_l(s)\|_{L^2}^2,
\end{equs}
where $\delta\in(0,1)$ is a small number and the proportional  constant is independent of $\eps$.
\ep

\begin{proof} The first result follows from Lemma \ref{lem:com2} and Proposition \ref{prop:Z}. We then concentrate on the second result. We have
	\begin{align*}\LL \psi^\sharp_2=&\cG+\cG_1-\cG_2,
	\end{align*}
	where \begin{equs}[def:G0]\cG=-v^\eps*|\psi|^2\psi+f_2(\psi)+f_1(\psi)+f_0+F_{\succcurlyeq}+\cG_{\succcurlyeq},
	\end{equs}
	\begin{equs}[def:G1]\cG_1=\LL\psi\prec\cZ^{\<20vm>}+\int v^\eps(y)\Big(\tau_y\LL\psi\prec\cZ^{\<20vm1>}_y+\tau_y\overline{\LL\psi}\prec\cZ_y^{\<20vm2>}\Big)\,\dif y,
	\end{equs}
	\begin{equs}[def:G2]\cG_2=&\,2\nabla\psi\prec\nabla\cZ^{\<20vm>}+2\int v^\eps(y)\Big(\tau_y\nabla\psi\prec\nabla\cZ^{\<20vm1>}_y+\tau_y\overline{\nabla\psi}\prec\nabla\cZ^{\<20vm2>}_y\Big)\,\dif y
		\\&+\psi\prec\cZ^{\<20vm>}+\int v^\eps(y)\Big(\tau_y\psi\prec\cZ^{\<20vm1>}_y+\tau_y\overline{\psi}\prec\cZ^{\<20vm2>}_y\Big)\,\dif y.
	\end{equs}
	We use Duhamel's formula to have for $s\leq r\leq t$
	$$\psi^\sharp_2(r)=P_{r-s}\psi^\sharp_2(s)+\sI_s (\cG+\cG_1-\cG_2)(r),$$
	with $\sI_s(f)=\int_s^\cdot P_{\cdot-r} f\,\dif r$.
	We first use the smoothing effect of heat kernel Lemma \ref{lem:heat} to have
	\begin{align*}
		\int_s^t\|P_{r-s}\psi^\sharp_2(s)\|_{\bB^{1+24\kappa}_2}\,\dif r\lesssim& \int_s^t(r-s)^{-\frac12-12\kappa}\|\psi^\sharp_2(s)\|_{L^2}\,\dif r\lesssim \|\psi^\sharp_2(s)\|_{L^2}
		\\\lesssim&\, \|\psi_l(s)\|_{L^2}\|\mZ_\eps\|+K(\|\mZ_\eps\|)\lesssim  \|\psi_l(s)\|^2_{L^2}+K(\|\mZ_\eps\|).
	\end{align*}
	where we  used the definition of $\psi_2^\sharp$ given in \eqref{def:psi2-n}, \eqref{boundpsih}, and Proposition \ref{prop:Z}, the paraproduct estimates Lemma \ref{lem:para} in the last step.
	
	Using Lemma \ref{boundg}--Lemma \ref{boundg2} below we obtain that $\int_s^t\|\psi_2^\sharp\|_{\bB^{1+24\kappa}_2}\,\dif r$ is bounded by
	$$\delta\int_s^t(\|\psi_2^\sharp\|_{\bB^{1+24\kappa}_2}+\mathcal{V}^\eps(\psi_l)+\|\psi_l\|_{H^1}^2)\,\dif r+K(\|\mZ_\eps\|)+\|\psi_l(s)\|^2_{L^2}.$$
	Then we choose $\delta$ small enough such that $\delta\int_s^t\|\psi_2^\sharp\|_{\bB^{1+24\kappa}_2}\,\dif r$ can be absorbed by the LHS, which implies the second result.
	
	For the third result, we use interpolation to derive it. To this end, we  give a $\bB^{1-2\kappa}_{2}$-norm bound of $\psi_2^\sharp$ using the definition of $\psi_2^\sharp$. More precisely,
	we
	use  the smoothing effect of heat operator Lemma \ref{lem:heat} and the paraproduct estimates Lemma \ref{lem:para} and \eqref{boundpsih}, \eqref{boundpsih1} to estimate each term on the RHS of \eqref{def:psi2-n} and have for any $t\geq s$
	\begin{equs}
		\|\psi_2^\sharp(t)\|_{\bB^{1-2\kappa}_{2}}\lesssim&\, K(\|\mZ_\eps\|)\Big(1+\|\psi_l(t)\|_{L^{2}}+\int_s^t(t-r)^{-1+\frac\kappa2} \|\psi_l\|_{L^{2}}\,\dif r
		\\&\qquad+(t-s)^{-\frac8{17}+\kappa}\Big)+\|\psi_l\|_{\bB^{1-2\kappa}_{2}}.
	\end{equs}
	Thus we apply the interpolation Lemma \ref{lem:interpolation} and Young's inequality to have
	\begin{equs}[bd:psisharpp]\int_s^t\|\psi_2^\sharp(r)\|^{\frac32}_{\bB^{1+2\kappa}_2}\,\dif r\lesssim & \int_s^t\|\psi_2^\sharp(r)\|^{\frac3{13}}_{\bB^{1+24\kappa}_{2}}\|\psi_2^\sharp(r)\|^{\frac{33}{26}}_{\bB^{1-2\kappa}_{2}}\,\dif r
		\\\lesssim &\,\delta\int_s^t\|\psi_2^\sharp(r)\|_{\bB^{1+24\kappa}_{2}}\,\dif r+\int_s^t\|\psi_2^\sharp(r)\|^{\frac{33}{20}}_{\bB^{1-2\kappa}_{2}}\,\dif r
		\\\lesssim &\,\delta\int_s^t\|\psi_2^\sharp(r)\|_{\bB^{1+24\kappa}_{2}}\,\dif r+K(\|\mZ_\eps\|)+\delta\int_s^t(\mathcal{V}^\eps(\psi_l)+\|\psi_l\|_{H^1}^2)\,\dif r,
	\end{equs}
	where we used  \eqref{lowerbound} in the last step. Thus the third result follows.
\end{proof}

\bl\label{boundg} It holds that
\begin{equs}[eqhighr1]\int_s^t \|(\sI_s\cG)(r)\|_{\bB^{1+24\kappa}_2}\,\dif r\lesssim \delta\int_s^t(\|\psi_2^\sharp\|_{\bB^{1+24\kappa}_2}+\mathcal{V}^\eps(\psi_l)+\|\psi_l\|_{H^1}^2)\,\dif r+K(\|\mZ_\eps\|),\end{equs}
where $\delta\in(0,1)$ is a small number and the proportional  constant is independent of $\eps$.
\el	
\begin{proof}

	We start with the  pure stochastic term $f_0+F_{\succcurlyeq}$ on the RHS of \eqref{def:G0}. They stay in $\bC^{-\frac12-2\kappa}$.   We use \eqref{bd:f_0}, Lemma \ref{lem:heat}, Propositions \ref{prop:Z}, \ref{thcomm1}--\ref{prop:Zn1} and the paraproduct estimates Lemma \ref{lem:para} to find that
	\begin{align*}
		&\int_s^t\big\|\sI_s(f_0+F_{\succcurlyeq})(r)\big\|_{\bB^{1+24\kappa}_2}\,\dif r
		\\\lesssim& \int_s^t\int_s^r(r-\tau)^{-\frac34-13\kappa}(\tau^{-\kappa}+1)K(\|\mZ_\eps\|)\,\dif \tau\dif r\lesssim K(\|\mZ_\eps\|).
	\end{align*}

	{\bf{Step I. }} In this step we derive that the contribution from $\cG_\succcurlyeq$ on the RHS of \eqref{def:G0} can be controlled by the RHS of \eqref{eqhighr1}.

	We first consider $\psi\circ \cZ^{\<2vm>}+(\tilde b_1^\eps+\tilde b^\eps_2)(Y+\psi)$.
	Using the Besov embedding Lemma \ref{lem:emb} and the smoothing effect of heat operator Lemma \ref{lem:heat}  we have
	\begin{align*}
		\|\sI_s(\psi\circ \cZ^{\<2vm>}+(\tilde b_1^\eps+\tilde b^\eps_2)(Y+\psi))(r)\|_{\bB^{1+24\kappa}_2}\lesssim&  \int_s^r (r-\tau)^{-\frac12-13\kappa}\|\psi\circ \cZ^{\<2vm>}+(\tilde b_1^\eps+\tilde b^\eps_2)(Y+\psi)\|_{\bB^{-\kappa}_{2}}\,\dif \tau.
	\end{align*}
	Using  Lemma \ref{lem:sto1-n}, \eqref{boundpsih} and take integration w.r.t. $r$ we obtain the following terms:
	\begin{align}\label{bd:est-G1}
		K(\|\mZ^\eps\|)\Big(\int_s^t	\int_s^r (r-\tau)^{-\frac12-13\kappa}(\|\psi^\sharp\|_{\bB^{1+2\kappa}_{2}}+1+\|\psi_l\|_{\bB^{3\kappa}_{2}})\,\dif \tau\dif r\Big).
	\end{align}
	We first take integration w.r.t. $r$ and reduce these terms to the following terms:
	\begin{align*}
		K(\|\mZ_\eps\|)\Big(	\int_s^t\|\psi^\sharp\|_{\bB^{1+2\kappa}_{2}}\,\dif r+\int_s^t\|\psi_l\|_{\bB^{3\kappa}_{2}}\,\dif r+1\Big),
	\end{align*}
	which can be estimated by the RHS of \eqref{eqhighr1} using \eqref{bd:psisharpp} and \eqref{boundsharpf}.

	We then consider $\psi\succ\cZ^{\<2vm>}$. By the paraproduct estimates Lemma \ref{lem:para} we have
	\begin{equs}[est:G1-2]
		\|\sI_s(\psi\succ\cZ^{\<2vm>})(r)\|_{\bB^{1+24\kappa}_2}
		\lesssim& \,\|\mZ_\eps\|\int_s^r (r-\tau)^{-\frac{33}{34}-13\kappa}\|\psi\|_{\bB^{\frac1{17}}_2}\,\dif \tau
		\\\lesssim&\,\|\mZ_\eps\|\int_s^r (r-\tau)^{-\frac{33}{34}-13\kappa}(\|\psi_h\|_{\bB^{\frac1{17}}_2}+\|\psi_l\|_{H^1})\,\dif \tau.
	\end{equs}
	Thus, by \eqref{boundpsih}, and by changing the order of integration, we deduce that  $\int_s^t \|\sI_s(\psi\succ\cZ^{\<2vm>})(r)\|_{\bB^{1+24\kappa}_2}\,\dif r$ is bounded by  the RHS of \eqref{eqhighr}.  Regarding the other two terms in $\cG_{\succcurlyeq}$,  for those involving the paraproduct $\circ$, we employ Lemma \ref{lem:sto1} and \eqref{eq:tauy} to ensure that their $\bB^{1+24\kappa}_2$-norm is controlled by \eqref{bd:est-G1}. Consequently, the same reasoning applies, yielding identical bounds. For the terms involving the  paraproduct $\succ$, the required bounds follow a similar approach as in \eqref{est:G1-2}, utilizing the paraproduct estimates Lemma \ref{lem:para} and \eqref{eq:tauy}.

	{\bf{Step II. }} In this step we derive that the contribution from the first three terms  on the RHS of \eqref{def:G0} can be controlled by the RHS of \eqref{eqhighr1}.
	
	\textbf{II.1} We start with $-v^\eps*|\psi|^2\psi$. We also use the Besov embedding Lemma \ref{lem:emb} and  the smoothing effect of heat operator Lemma \ref{lem:heat} to have
	\begin{align*}&
		\|\sI_s(v^\eps*|\psi|^2\psi)(r)\|_{\bB^{1+24\kappa}_2}
		\lesssim \|\sI_s(v^\eps*|\psi|^2\psi)(r)\|_{\bB^{\frac74+24\kappa}_{\frac43}}
		\\\lesssim& \int_s^r (r-\tau)^{-\frac78-12\kappa}\|v^\eps*|\psi|^2\psi\|_{L^{\frac43}}\,\dif \tau
		\lesssim\int_s^r (r-\tau)^{-\frac78-12\kappa}\|v^\eps*|\psi|^2\|_{L^2}\|\psi\|_{L^4}\,\dif \tau.
	\end{align*}
	We then separate $\psi=\psi_l+\psi_h$ and use the Sobolev embedding $H^{\frac34}\subset L^4$ and the interpolation Lemma \ref{lem:interpolation},  \eqref{boundpsih}, \eqref{lowerbound}, \eqref{lowerbound1} to have it bounded by
	\begin{align*}
		&\int_s^r(r-\tau)^{-\frac78-12\kappa}\big(\|v^\eps*|\psi_l|^2\|_{L^2}+\|\psi_l\|_{L^4}\|\psi_h\|_{L^4}+\|\psi_h\|_{L^4}^2\big)\big(\|\psi_l\|^{\frac14}_{L^2}\|\psi_l\|^{\frac34}_{H^1}+K(\|\mZ_\eps\|)\big)\,\dif r
		\\\lesssim&\int_s^r(r-\tau)^{-\frac78-12\kappa}\Big(\mathcal{V}^\eps(\psi_l)^{\frac9{16}}(\|\psi_l\|^{\frac34}_{H^1}+1)+K(\|\mZ_
		\eps\|)\|\psi_l\|^{\frac12}_{L^2}\|\psi_l\|^{\frac32}_{H^1}\Big)\,\dif r+K(\|\mZ_\eps\|)
		\\\lesssim&\int_s^r (r-\tau)^{-\frac78-12\kappa}\delta[\mathcal{V}^\eps(\psi_l)+\|\psi_l\|^{2}_{H^1}]\,\dif \tau+K(\|\mZ_\eps\|).
	\end{align*}
	Thus taking integral w.r.t. $r$ we obtain that
	$
	\int_s^t \|\sI_s(v^\eps*|\psi|^2\psi)(r)\|_{\bB^{1+24\kappa}_2}\,\dif r$
	is bounded by the RHS of \eqref{eqhighr}.

	\textbf{II.2}
	We then consider $f_2(\psi)$.  We also have two type of terms from $f_2(\psi)$ given by
	$v^\eps*|\psi|^2\cZ$ and $v^\eps*\Re[\psi\overline \cZ]\psi$ with $\|\cZ\|_{C_T\bC^{-\frac12-\kappa}}\lesssim \|\mZ_\eps\|$.
	For the first type $v^\eps*|\psi|^2\cZ$, we use the paraproduct decomposition to have
	\begin{align}\label{dec:par}
		v^\eps*|\psi|^2\cZ=(v^\eps*|\psi|^2)\prec \cZ+(v^\eps*|\psi|^2)\succcurlyeq \cZ.\end{align}
	For the first term from the RHS of \eqref{dec:par} by using Lemma \ref{lem:heat} and Lemma \ref{lem:para} we have
	\begin{equs}[est:G1-3]
		\|\sI_s(v^\eps*|\psi|^2\prec \cZ)(r)\|_{\bB^{1+24\kappa}_2}\lesssim K(\|\mZ_\eps\|) \int_s^r (r-\tau)^{-\frac34-13\kappa}\|\psi\|_{L^4}^2\,\dif \tau.\end{equs}
	We then take integral w.r.t. $r$ and obtain
	\begin{align*}
		\int_s^t	\|\sI_s(v^\eps*|\psi|^2\prec \cZ)(r)\|_{\bB^{1+24\kappa}_2}\,\dif r\lesssim K(\|\mZ_\eps\|)\Big(1+\int_s^t\|\psi\|_{L^4}^2\,\dif r\Big).
	\end{align*}
	We then decompose $\psi=\psi_l+\psi_h$ and apply \eqref{boundpsih}, the Sobolev embedding $H^{\frac34}\subset L^4$ and \eqref{lowerbound} to obtain that $\int_s^t \|\sI_s(v^\eps*|\psi|^2\prec \cZ)(r)\|_{\bB^{1+\kappa}_2}\,\dif r$ is bounded by  the RHS of \eqref{eqhighr1}. For the contribution from the second term of \eqref{dec:par} we use the Besov embedding Lemma \ref{lem:emb}, followed by Lemma \ref{lem:heat} and Lemma \ref{lem:para} to have
	\begin{align*}&
		\|\sI_s(v^\eps*|\psi|^2\succcurlyeq \cZ)(r)\|_{\bB^{1+24\kappa}_2}
		\\\lesssim&\, \|\sI_s(v^\eps*|\psi|^2\succcurlyeq \cZ)(r)\|_{\bB^{\frac74+24\kappa}_{\frac43}}\lesssim K(\|\mZ_\eps\|)\int_s^r (r-\tau)^{-\frac78-12\kappa}\||\psi|^2\|_{\bB^{\frac12+2\kappa}_{\frac43}}\,\dif \tau.
	\end{align*}
	We also first take integration w.r.t. $r$ and by Lemma \ref{lem:para} have
	\begin{align*}&
		\int_s^t\|\sI_s(v^\eps*|\psi|^2
		\succcurlyeq \cZ)(r)\|_{\bB^{1+24\kappa}_2}\,\dif r
		\lesssim \|\mZ_\eps\|\int_s^t\||\psi|^2\|_{\bB^{\frac12+2\kappa}_{\frac43}}\,\dif r
		\lesssim \|\mZ_\eps\|\int_s^t \|\psi\|_{\bB^{\frac12+2\kappa}_{2}}\|\psi\|_{L^4}\,\dif r.
	\end{align*}
	We  also decompose  $\psi=\psi_l+\psi_h$ and use \eqref{boundpsih}, Lemma \ref{lem:interpolation} and the Sobolev embedding $H^{\frac34}\subset L^4$ to have it controlled by
	\begin{align*}
		&\|\mZ_\eps\|\int_s^t ( \|\psi_h\|_{\bB^{\frac12+2\kappa}_{2}}+\|\psi_l\|_{H^1}^{\frac12+2\kappa}\|\psi_l\|_{L^2}^{\frac12-2\kappa})(\|\psi_l\|_{H^1}^{\frac34}\|\psi_l\|_{L^2}^{\frac14}+K(\|\mZ_\eps\|))\,\dif r
		\\\lesssim&\int_s^t \Big(\|\psi_h\|_{\bB^{\frac12+2\kappa}_{2}}^2+K(\|\mZ_\eps\|)(\|\psi_l\|_{H^1}^{\frac32}\|\psi_l\|_{L^2}^{\frac12}
		+\|\psi_l\|_{H^1}^{\frac54+2\kappa}\|\psi_l\|_{L^2}^{\frac34-2\kappa})\Big)\,\dif r+K(\|\mZ_\eps\|).
	\end{align*}
	We then use Young's inequality and \eqref{lowerbound}, \eqref{psih:j1} to have it bounded by the RHS of \eqref{eqhighr1}.
	
	On the other hand, for the second type term $v^\eps*\Re[\psi\overline \cZ]\psi$ from $f_2(\psi)$ we use the paraproduct decomposition to have
	\begin{equs}[f2]
		v^\eps*\Re[\psi\overline \cZ]\psi=&\,v^\eps*\Re[\psi\prec \overline \cZ]\psi+v^\eps*\Re[\psi\succcurlyeq \overline \cZ]\psi
		\\=&\,v^\eps*\Re[\psi\prec \overline \cZ](\prec+\succ)\psi+v^\eps*\Re[\psi\prec \overline \cZ]\circ\psi+v^\eps*\Re[\psi\succcurlyeq \overline \cZ]\psi.
	\end{equs}
	Using Lemma \ref{lem:para} we easily find that the contribution from	the first term on the RHS of \eqref{f2} can be controlled by the RHS of \eqref{est:G1-3}, 	which can be bounded the same way as $(v^\eps*|\psi|^2)\prec \cZ$.  For the last two terms on the RHS of \eqref{f2} we also use the paraproduct estimates Lemma \ref{lem:para} and Lemma \ref{lem:heat} to find
	\begin{align*}
		\int_s^t\|\sI_s(v^\eps*\Re[\psi\prec \overline \cZ]\circ\psi+v^\eps*\Re[\psi\succcurlyeq \overline \cZ]\psi)(r)\|_{\bB^{1+24\kappa}_2}\,\dif r\lesssim  \|\mZ_\eps\|\int_s^t \|\psi\|_{\bB^{\frac12+2\kappa}_{2}}\|\psi\|_{L^4}\,\dif r,
	\end{align*}
	which can be bounded exactly the same way as that for $v^\eps*|\psi|^2\succcurlyeq \cZ$.

	\textbf{II.3} We then consider $ f_1(\psi)$. We note that we have three type of terms:
	\begin{enumerate}
		\item $v^\eps*\Re[\psi\overline Y]Z$ and $v^\eps*\Re[\psi\overline Z]Y$;
		\item $\cR\psi$ and $(6(\tilde b^\eps-b^\eps)-m+1)\psi$;
		\item  $\psi \cZ$ with $\|\cZ(t)\|_{\bC^{-\frac12-\kappa}}\lesssim \|\mZ_\eps\|(t^{-\kappa}+1)$ and $v^\eps*\Re(\psi \overline{Y})Y$.
	\end{enumerate}
	For the first type, we recall that the $\bC^{-\frac12-\kappa}$-norm of
	the purely stochastic terms $\tau_y\overline YZ$, $\tau_yYZ$, $\tau_y\overline ZY$ and $\tau_y ZY$ on the RHS of \eqref{v2},   \eqref{v3} can by bounded by $\|\mZ_\eps\|^2$ using the paraproduct estimates Lemma \ref{lem:para}, Propositions \ref{prop:Z}, \ref{prop:Zny}  and Propositions \ref{prop:Zn}, \ref{prop:Zn1}. Hence, we have by \eqref{eq:tauy}
	\begin{align*}
		\|v^\eps*\Re[\psi\overline Y]Z(t)\|_{\bB^{-\frac12-\kappa}_2}+\|v^\eps*\Re[\psi\overline Z]Y\|_{\bB^{-\frac12-\kappa}_2}\lesssim \|\psi\|_{\bB^{\frac12+2\kappa}_2}K(\|\mZ_\eps\|).
	\end{align*}
	We then use Lemma \ref{comnew} and \eqref{dif:esb} for the second type, and the paraproduct estimates from Lemma \ref{lem:para} for the third type to deduce
	\begin{align}\label{bd:f1com} \|f_1(\psi)(t)\|_{\bB^{-\frac12-\kappa}_2}\lesssim \|\psi\|_{\bB^{\frac12+2\kappa}_2}K(\|\mZ_\eps\|)(t^{-\kappa}+1),
	\end{align}
	which implies that
	\begin{align*}&
		\|\sI_s(f_1(\psi))(r)\|_{\bB^{1+24\kappa}_2}
		\lesssim K(\|\mZ_\eps\|)\int_s^r (r-\tau)^{-\frac34-13\kappa}(\tau^{-\kappa}+1)\|\psi\|_{\bB^{\frac12+2\kappa}_2}\,\dif \tau.
	\end{align*}
	Using \eqref{psih:j1} and  changing the order of integral as before, we obtain that $\int_s^t \|\sI_s(f_1(\psi))(r)\|_{\bB^{1+24\kappa}_2}\,\dif r$ is bounded by
	$$K(\|\mZ_\eps\|)\int_s^tr^{-\kappa} \|\psi\|_{\bB^{\frac12+2\kappa}_2}\,\dif r\lesssim \int_s^t \|\psi\|^2_{\bB^{\frac12+2\kappa}_2}\,\dif r+K(\|\mZ_\eps\|),$$
	which, by \eqref{psih:j1}, is bounded by the RHS of \eqref{eqhighr1}.

	Combining the above calculations, the result follows.
\end{proof}

\bl\label{boundg1} For $\cG_1$ in \eqref{def:G1}, it holds that
\begin{equs}[eqhighr2]\int_s^t \|(\sI_s\cG_1)(r)\|_{\bB^{1+24\kappa}_2}\,\dif r\lesssim \delta\int_s^t(\|\psi_2^\sharp\|_{\bB^{1+24\kappa}_2}+\mathcal{V}^\eps(\psi_l)+\|\psi_l\|_{H^1}^2)\,\dif r+K(\|\mZ_\eps\|),\end{equs}
where $\delta\in(0,1)$ is a small number and the proportional  constant is independent of $\eps$.
\el	
\begin{proof} From \eqref{def:G1} and $\LL\psi = \cG+\cG_\prec(Y+\psi)$ we know
	\begin{equs}\cG_1=&( \cG+\cG_\prec(Y+\psi))\prec\cZ^{\<20vm>}
		\\&+\int v^\eps(y)\Big(\tau_y( \cG+G_\prec(Y+\psi))\prec\cZ^{\<20vm1>}_y+\tau_y\overline{( \cG+\cG_\prec(Y+\psi))}\prec\cZ^{\<20vm2>}_y\Big)\,\dif y.
	\end{equs}
	Since the regularity of the terms $\cG\prec\cZ^{\<20vm>}$ and $\int v^\eps(y)\tau_y \cG\prec\cZ^{\<20vm1>}_y\,\dif y$, $\int v^\eps(y) \tau_y\overline{ \cG}\prec\cZ^{\<20vm2>}_y\,\dif y$, which involve $\cG$, is enhanced compared with $\cG$ due to the paraproduct estimates provided by Lemma \ref{lem:para}, they can be bounded in the same manner as demonstrated in the proof of Lemma \ref{boundg}.
	For the other terms involving $\cG_\prec(Y+\psi)$ we
	use the paraproduct estimates Lemma \ref{lem:para} and \eqref{eq:tauy} to have
	\begin{equs}[est:G-4]
		\|\cG_\prec(Y+\psi)\|_{\bB^{-1-\kappa}_2}\lesssim K(\|\mZ_\eps\|)\|Y+\psi\|_{L^2}.
	\end{equs}
	Hence, we have by \eqref{boundpsih}
	\begin{align*}\|\cG_\prec(Y+\psi)\prec\cZ^{\<20vm>}\|_{\bB^{-2\kappa}_2}\lesssim K(\|\mZ_\eps\|)\|Y+\psi\|_{L^2}\lesssim K(\|\mZ_\eps\|)(1+\|\psi_l\|_{L^2}),
	\end{align*}
	which implies that
	$$ \|(\sI_s(\cG_\prec(Y+\psi)\prec\cZ^{\<20vm>}))(r)\|_{\bB^{1+24\kappa}_2}\,\dif r\lesssim \int_s^r(r-\tau)^{-\frac12-13\kappa}K(\|\mZ_\eps\|)(1+\|\psi_l\|_{L^2})\,\dif \tau.$$
	Taking integral w.r.t $r$ we obtain that it is bounded by the RHS of \eqref{eqhighr2}. For the other two terms involving $\cG_\prec(Y+\psi)$ we use \eqref{est:G-4}, \eqref{eq:tauy}, and Proposition \ref{prop:Zny} to have the same bound and the result follows.
\end{proof}

\bl\label{boundg2} For $\cG_2$ in \eqref{def:G2}, it holds that
\begin{equs}[eqhighr3]\int_s^t \|(\sI_s\cG_2)(r)\|_{\bB^{1+24\kappa}_2}\,\dif r\lesssim \delta\int_s^t(\mathcal{V}^\eps(\psi_l)+\|\psi_l\|_{H^1}^2)\,\dif r+K(\|\mZ_\eps\|),\end{equs}
where $\delta\in(0,1)$ is a small number and the proportional  constant is independent of $\eps$.
\el	
\begin{proof}	Recall \eqref{def:G2} and we use the paraproduct estimates Lemma \ref{lem:para} and \eqref{eq:tauy} to have
	$$\|\cG_2\|_{\bB^{-3\kappa}_2}\lesssim \|\psi\|_{\bB_2^{1-2\kappa}}K(\|\mZ_\eps\|),$$
	which implies that
	$$ \|\sI_s(\cG_2)(r)\|_{\bB^{1+24\kappa}_2}\,\dif r\lesssim \int_s^r(r-\tau)^{-\frac12-14\kappa}K(\|\mZ_\eps\|)(\|\psi_h\|_{\bB_2^{1-2\kappa}}+\|\psi_l\|_{H^1})\,\dif \tau.$$
	Taking integral w.r.t $r$ and using \eqref{boundpsih1} we obtain it is bounded by the RHS of \eqref{eqhighr3}.
\end{proof}

\subsection{Energy estimates for $\psi_l$}\label{sec:en}
In this section, we undertake $L^2$-energy estimates for $\psi_l$. Our objective is to derive energy estimates for $\psi_l$ that are uniform in $\eps$, specifically Proposition \ref{energy}, by leveraging the dissipative impact of the nonlinearity. It's worth noting that when we compute the $L^2$
-inner product of $\psi_l$ on both sides of equation \eqref{eqw}, the cubic nonlinearity yields $\cV^\eps(\psi_l)$. Thanks to the lower bounds provided by \eqref{lowerbound} and \eqref{lowerbound1}, $\cV^\eps(\psi_l)$
exhibits a dissipative effect, enabling us to gain uniform control over the other nonlinear terms. However, the dissipative effect from $\cV^\eps(\psi_l)$ is not as potent as $\|\phi\|_{L^4}^4$
in the context of energy estimates for the classical $\Phi^4$
model (refer to \cite{GH18a} for details). To obtain a finer control over the nonlinearity, we must exploit the specific structure of $\cV^\eps$.

Consider the energy estimates for $\psi_l$, we first have the following result.

\bl\label{lem:energy} It holds that
\begin{equs}
	&\frac12\frac{\dif}{\dif t}\| \psi_l\|_{L^2}^2+\|\nabla\psi_l\|_{L^2}^2+\|\psi_l\|_{L^2}^2+\cV^\eps(\psi_l)
	= \Theta+\Xi,
\end{equs}
with
\begin{align*}
	\Theta\eqdef& -\langle v^\eps*|\psi_l+\psi_h|^2\psi_h+v^\eps*(2\Re(\psi_l\overline{\psi_h})+|\psi_h|^2) \psi_l,{\psi_l}\rangle,
	\\
	\Xi\eqdef &\langle \cG_{\succcurlyeq}+F_{\succcurlyeq},{\psi_l}\rangle+\langle f_2(\psi)+f_1(\psi)+f_0,{\psi_l}\rangle
	+ \langle \cG_{\prec,\leq R}(Y+\psi),{\psi_l}\rangle.
\end{align*}
\el
\begin{proof}
	The result follows by taking $L^2$-inner product on both sides of equation \eqref{eqw}. We also use $\langle v^\eps*|\psi_l|^2 \psi_l,\psi_l\rangle=\cV^\eps(\psi_l)$.
\end{proof}

The aim of this section is to prove the following uniform in $\eps$ estimate of $\psi_l$.

\bp \label{energy}It holds that for $0\leq s\leq t\leq T$
\begin{equs}[mest:psil]
	&\| \psi_l(t)\|_{L^2}^2+\int_s^t\|\nabla\psi_l\|_{L^2}^2\,\dif r+\int_s^t\cV^\eps(\psi_l)\,\dif r
	\lesssim K(\|\mZ_\eps\|)+\|\psi_l(s)\|_{L^2}^2,
\end{equs}
with the proportional constant independent of $\eps$.
\ep
\begin{proof} The idea of the proof is to bound $\int_s^t (\Theta+\Xi)\,\dif r$ by
	$$\delta\Big(\int_s^t\|\nabla\psi_l\|_{L^2}^2\,\dif r+\int_s^t\cV^\eps(\psi_l)\,\dif r\Big)+C_\delta\Big(K(\|\mZ_\eps\|)+\|\psi_l(s)\|_{L^2}^2\Big),$$
	for $\delta\in (0,1)$ and $C_\delta$ independent of $\eps$, which follows by  Lemma \ref{lem:Theta}-- Lemma \ref{lem:Xil} and \eqref{bd:f0p} below. The result then follows by choosing $\delta$ small and applying Lemma \ref{lem:energy}.
\end{proof}

To derive Proposition \ref{energy} we start with the control of $\Theta$ in the following lemma.

\bl\label{lem:Theta} It holds that
\begin{align*}
	|\Theta|\lesssim\delta\cV^\eps(\psi_l)+\delta
	\|\psi_l\|_{H^{1}}^{2}+K(\|\mZ_\eps\|),
\end{align*}
for  $\delta\in(0,1)$, where the proportional constant is independent of $\eps$.
\el
\begin{proof}
	For $\Theta$ we use \eqref{boundpsih}, \eqref{lowerbound1} and the Besov embedding Lemma \ref{lem:emb} to have
	\begin{align*}
		|\Theta|\lesssim& \,(\|v^\eps* |\psi_l|^2\|_{L^2}+\|v^\eps* |\psi_h|^2\|_{L^2})\|\psi_h\|_{L^4}\|\psi_l\|_{L^4}+\|v^\eps* |\psi_l|^2\|_{L^2}\|\psi_h\|_{L^4}^2
		\\\lesssim&\, K(\|\mZ_\eps\|)\Big(\cV^\eps(\psi_l)^{\frac12}+K(\|\mZ_\eps\|)\Big)\Big(
		\|\psi_l\|_{H^{1}}^{\frac34}\|\psi_l\|_{L^2}^{\frac14}+1\Big).
	\end{align*}
	Here we used $\langle v^\eps*f,g\rangle=\langle f,v^\eps*g\rangle$.
	We then use \eqref{lowerbound} to bound $$\|\psi_l\|_{L^2}^{\frac14}\lesssim \cV^\eps(\psi_l)^{\frac1{16}},$$
	which combined with Young's inequality implies the result.
\end{proof}

We then continue with the more complicated term $\Xi$. We will consider each term separately.

Using \eqref{bd:f_0}, we have for $0<\delta <1$
\begin{align}\label{bd:f0p}
	|	\langle f_0,\psi_l\rangle|\lesssim \|\psi_l\|_{H^{\frac12+3\kappa}}K(\|\mZ_\eps\|)(t^{-\kappa}+1)\lesssim \delta \|\psi_l\|_{H^1}^2+K(\|\mZ_\eps\|)(t^{-2\kappa}+1).
\end{align}

We then consider $\langle (v^\eps *|\psi_l|^2)\psi_l, Z\rangle$ from $f_2(\psi)$, which follows  essentially the same argument as in the proof of \cite[Lemma 6.1]{OOT24}. Here we use the fact that $v\geq0$.

\bl\label{cubic}Suppose that $\cZ\in C_T\bC^{-\frac12-\frac\kappa2}$ with $\|\cZ\|_{\bC^{-\frac12-\frac\kappa2}}\lesssim K(\|\mZ_\eps\|)$.  It holds that for $\delta>0$
\begin{equs}[eqcubic]
	|\langle (v^\eps *|f|^2)f, \cZ\rangle |+|\langle (v^\eps *|f|^2)\overline{f}, \cZ\rangle|\lesssim \delta(\mathcal{V}^\eps(f)+\|f\|_{H^1}^2)+K(\|\mZ_\eps\|),\end{equs}
where the implicit constant is independent of $\eps$.
\el
\begin{proof} We only consider the first term, as the second term follows exactly the same way. We use \cite[Lemma A.5]{SSZZ20} to have
	\begin{align}\label{eq:gcubic}
		|\langle g,\cZ\rangle |\lesssim \big(\|\nabla g\|_{L^1}^{\frac12+\kappa}\|g\|_{L^1}^{\frac12-\kappa}+\|g\|_{L^1}\big)\|\cZ\|_{\bC^{-\frac12-\kappa}}.
	\end{align}
	We take $g=(v^\eps *|f|^2)f$ and use \eqref{lowerbound}, \eqref{lowerbound1} to have
	\begin{align*}
		\|(v^\eps *|f|^2)f\|_{L^1}\leq \|v^\eps *|f|^2\|_{L^2}\|f\|_{L^2}\leq \cV^\eps(f)^{\frac34}.
	\end{align*}
	On the other hand, we use \eqref{lowerbound1} to have
	\begin{align*}
		\|\nabla g\|_{L^1}\leq& \int v^\eps*|f|^2 |\nabla f|\,\dif x+\int v^\eps*(|\nabla f| |f|) |f|\,\dif x
		\\\lesssim &\, \cV^\eps(f)^{\frac12}\|f\|_{H^1}+\||f|v^\eps*|f|\|_{L^2}\|f\|_{H^1}
		\lesssim\cV^\eps(f)^{\frac12}\|f\|_{H^1},
	\end{align*}
	where we used integration by parts in the second step and used H\"older's inequality to have
	$$\int|f|^2(v^\eps*|f|)^2\,\dif x\leq \int|f|^2(v^\eps*|f|^2)\,\dif x=\cV^\eps(f)$$ in the last step.
	Substituting the above two estimates into \eqref{eq:gcubic} we obtain
	\begin{equs}[est8:cubic]
		|\langle (v^\eps *|f|^2)f, \cZ\rangle |\lesssim \Big[\cV^\eps(f)^{\frac34}+\cV^\eps(f)^{\frac34(\frac12-\kappa)+\frac12(\frac12+\kappa)}\|f\|_{H^1}^{\frac12+\kappa}\Big]\|\cZ\|_{\bC^{-\frac12-\kappa}}.
	\end{equs}
	Applying Young's inequality, we derive the result.
\end{proof}

\bl\label{lem:f2} It holds that for $\delta>0$ and $0\leq s\leq t\leq T$
\begin{align}\label{eq:f2pl}
	\int_s^t	|\langle f_2(\psi),\psi_l\rangle|\,\dif r\lesssim \delta\int_s^t(\mathcal{V}^\eps(\psi_l)+\|\psi_l\|_{H^1}^2)\,\dif r+K(\|\mZ_\eps\|),
\end{align}
with the proportional constant independent of $\eps$.
\el

\begin{proof} We have the following decomposition:
	\begin{equs}[eest:f2]
		\langle f_2(\psi),\psi_l\rangle=&\,\langle v^\eps*|\psi_l|^2\cZ,\psi_l\rangle+2\langle v^\eps*|\psi_l|^2, \Re(\cZ_1\psi_l)\rangle+\sum_{i=1}^4J_i,
	\end{equs}
	with $\cZ=-Z-Y$, $\cZ_1=-\overline{Z}-\overline{Y}\in C_T\bC^{-\frac12-\frac\kappa2}$ and
	\begin{align*}
		J_1\eqdef&\, 2\langle v^\eps*\Re(\psi_l\overline{\psi}_h)\cZ,{\psi_l}\rangle,\qquad J_2\eqdef \,\langle v^\eps*|\psi_h|^2\cZ,{\psi_l}\rangle,
		\\ J_3\eqdef&\, 2\langle v^\eps*\Re(\psi \cZ_1)\psi_h,{\psi_l}\rangle,\qquad J_4\eqdef\, 2\langle v^\eps*\Re(\psi_h \cZ_1)\psi_l,{\psi_l}\rangle.
	\end{align*}
	We know that
	\begin{align*}
		\|\cZ\|_{C_T\bC^{-\frac12-\frac\kappa2}}+\|\cZ_1\|_{C_T\bC^{-\frac12-\frac\kappa2}}\lesssim \|\mZ_\eps\|.
	\end{align*}
	Concerning the first two terms on the RHS of \eqref{eest:f2} we use Lemma \ref{cubic} to derive the desired estimates. It remains to consider $J_i$, $i=1,\dots,4$.
	
	\textbf{I.} For $J_1$ we use the paraproduct decomposition to have
	\begin{align*}
		\frac12J_1=	\langle (v^\eps*\Re(\psi_l\overline{\psi}_h))\prec\cZ,{\psi_l}\rangle+	\langle (v^\eps*\Re(\psi_l\overline{\psi}_h))\succcurlyeq\cZ,{\psi_l}\rangle.
	\end{align*}
	By Lemma \ref{lem:multi}, the paraproduct estimates Lemma \ref{lem:para} and \eqref{boundpsih} we have
	\begin{align*}
		&|\langle (v^\eps*\Re(\psi_l\overline{\psi}_h))\prec\cZ,\psi_l\rangle|\lesssim \|\psi_l\|_{L^4}\|\psi_h\|_{L^4}\|\psi_l\|_{H^{\frac12+\kappa}}\|\mZ_\eps\|
		\\\lesssim&\, \delta\|\psi_l\|_{H^1}^2+\|\psi_l\|_{L^2}^2K(\|\mZ_\eps\|)\lesssim \delta\|\psi_l\|_{H^1}^2+\delta \cV^{\eps}(\psi_l)+K(\|\mZ_\eps\|),
	\end{align*}
	where we used the interpolation Lemma \ref{lem:interpolation}, Young's inequality in the second step and \eqref{lowerbound} in the last step.
	
	For the remaining part we need a more delicate estimate for $\psi_h$. More precisely, we recall the decomposition in \eqref{duh:psih}.
	In the following we estimate $\psi^{(1)}_h$ and $\psi^{(2)}_h$ separately.
	By the paraproduct estimates Lemma \ref{lem:para}, Lemma \ref{lem:multi} and \eqref{boundpsih} we have
	\begin{equs}[bd:psih1]
		&|\langle (v^\eps*\Re(\psi_l\overline{\psi^{(1)}_h}))\succcurlyeq\cZ,\psi_l\rangle|\lesssim \|\psi_l\|_{L^4}\|\psi_l\overline{\psi^{(1)}_h}\|_{\bB^{\frac12+2\kappa}_{\frac43}}\|\mZ_\eps\|
		\\\lesssim&\, \|\psi_l\|_{L^4}\Big(\|\psi_l\|_{\bB^{\frac12+2\kappa}_{2}}\|\psi^{(1)}_h\|_{L^4}+\|\psi_l\|_{L^2}\|\psi^{(1)}_h\|_{\bB^{\frac12+2\kappa}_{4}}\Big)\|\mZ_\eps\|
		\\\lesssim&\, \|\psi_l\|_{H^1}^{\frac54+2\kappa}\|\psi_l\|_{L^2}^{\frac34-2\kappa}K(\|\mZ_\eps\|)+\|\psi_l\|_{L^4}\|\psi_l\|_{L^2}(r-s)^{-\frac{15}{68}-\kappa}\|\psi_h(s)\|_{B^{\frac1{17}}_4}\|\mZ_\eps\|,
	\end{equs}
	where we also used the Sobolev embedding $H^{\frac34}\subset L^4$, the interpolation Lemma \ref{lem:interpolation} and Lemma \ref{lem:heat} in the last step. By applying Young's inequality, the first term
	on the RHS of \eqref{bd:psih1} can be bounded by $\delta(\mathcal{V}^\eps(\psi_l)+\|\psi_l\|_{H^1}^2)+K(\|\mZ_\eps\|)$.	Using \eqref{boundpsih} and Young's inequality with exponents $(\frac83,\frac{16}5,\frac{16}5)$ we obtain that the second term on the RHS of \eqref{bd:psih1} can be bounded by
	\begin{align*}
		&\|\psi_l\|_{H^1}^{\frac34}\|\psi_l\|_{L^2}^{\frac54}(r-s)^{-\frac{15}{68}-\kappa}
		\lesssim\delta(\mathcal{V}^\eps(\psi_l)+\|\psi_l\|_{H^1}^2)+(r-s)^{-\frac{12}{17}-4
			\kappa}K(\|\mZ_\eps\|).
	\end{align*}
	For the term including $\psi^{(2)}_h$ we also use the paraproduct estimates Lemma \ref{lem:para}, Lemma \ref{lem:multi}, the Sobolev embedding $H^{\frac34}\subset L^4$ and the interpolation Lemma \ref{lem:interpolation} to have
	\begin{equs}[eq:psi2e]
		&|\langle (v^\eps*\Re(\psi_l\overline{\psi^{(2)}_h}))\succcurlyeq\cZ,\psi_l\rangle|\lesssim \|\psi_l\|_{L^4}\|\psi_l\overline{\psi^{(2)}_h}\|_{\bB^{\frac12+2\kappa}_{\frac43}}\|\mZ_\eps\|
		\\\lesssim&\, \|\psi_l\|_{L^4}\|\mZ_\eps\|\Big(\|\psi_l\|_{\bB^{\frac12+2\kappa}_{2}}\|{\psi^{(2)}_h}\|_{L^4}+\|\psi_l\|_{L^3}\|{\psi^{(2)}_h}\|_{\bB^{\frac12+2\kappa}_{\frac{12}5}}\Big)
		\\\lesssim&\, \|\psi_l\|_{H^1}^{\frac54+2\kappa}\|\psi_l\|_{L^2}^{\frac34-2\kappa}K(\|\mZ_\eps\|)+\|\psi_l\|_{L^4}\|\psi_l\|_{L^3}\|{\psi^{(2)}_h}\|_{L^4}^{1-\sigma}\|{\psi^{(2)}_h}\|_{\bB^{1-2\kappa}_2}^{\sigma}\|\mZ_\eps\|.
	\end{equs}
	Here $\sigma=\frac{1+4\kappa}{2-4\kappa}$ and we used \eqref{boundpsih}.
	The first term on the RHS can be bounded the same as the first term in \eqref{bd:psih1}.	
	Using \eqref{bd:psih2}, \eqref{boundpsih} and the Sobolev embedding $H^{\frac12}\subset L^3, H^{\frac34}\subset L^4$, we obtain that the second term on the RHS of \eqref{eq:psi2e} can be bounded by
	\begin{align*}
		&\|\psi_l\|_{H^1}^{\frac54}\|\psi_l\|_{L^2}^{\frac34}\Big(1+\int_s^r(r-\tau)^{-1+\frac\kappa2}\|\psi_l(\tau)\|_{L^2}\,\dif \tau\Big)^{\sigma}K(\|\mZ_\eps\|)
		\\\lesssim&\,\delta(\mathcal{V}^\eps(\psi_l)+\|\psi_l\|_{H^1}^2)+K(\|\mZ_\eps\|)\Big(1+\int_s^r(r-\tau)^{-1+\frac\kappa2}\|\psi_l(r)\|_{L^2}^{6\sigma}\,\dif \tau\Big),
	\end{align*}
	where we use \eqref{lowerbound} and  Young's inequality with exponents $(\frac85,\frac{16}3, \frac{16}3)$.
	Taking integration w.r.t. $r$ we use $6\sigma <4$ to obtain
	\begin{align*}
		&\Big|\int_s^t\langle (v^\eps*\Re(\psi_l\overline{\psi}_h))\succcurlyeq\cZ,\psi_l\rangle \,\dif r\Big|\lesssim\delta\int_s^t(\mathcal{V}^\eps(\psi_l)+\|\psi_l\|_{H^1}^2)\,\dif r+K(\|\mZ_\eps\|).
	\end{align*}
	
	\textbf{II.} For $J_2$ we also use the paraproduct decomposition to have
	\begin{align*}
		J_2=	\langle (v^\eps*|\psi_h|^2)\prec\cZ,\psi_l\rangle+	\langle (v^\eps*|\psi_h|^2)\succcurlyeq\cZ,\psi_l\rangle.
	\end{align*}
	By the paraproduct estimates Lemma \ref{lem:para} and \eqref{boundpsih}, \eqref{lowerbound}, the interpolation Lemma \ref{lem:interpolation} we have
	\begin{align*}
		&|\langle (v^\eps*|\psi_h|^2)\prec\cZ,\psi_l\rangle|\lesssim \|\psi_h\|_{L^4}^2\|\psi_l\|_{H^{\frac12+\kappa}}\|\mZ_\eps\|
		\\\lesssim&\, \delta\|\psi_l\|_{H^1}^2+\delta \cV^{\eps}(\psi_l)+K(\|\mZ_\eps\|),
	\end{align*}
	and
	\begin{align*}
		&|\langle (v^\eps*|\psi_h|^2)\succcurlyeq\cZ,\psi_l\rangle|\lesssim \|\psi_l\|_{L^4}\||\psi_h|^2\|_{\bB^{\frac12+\kappa}_{\frac43}}\|\mZ_\eps\|\lesssim \|\psi_l\|_{L^4}\|\psi_h\|_{\bB^{\frac12+\kappa}_2}\|\psi_h\|_{L^4}\|\mZ_\eps\|
		\\\lesssim&\,\|\psi_l\|_{L^4}\|\psi_h\|_{\bB^{\frac12+\kappa}_2}K(\|\mZ_\eps\|)
		\lesssim\|\psi_l\|^2_{L^4}+\|\psi_h\|^2_{\bB^{\frac12+\kappa}_2}K(\|\mZ_\eps\|).
	\end{align*}
	The first term can be bounded by the Sobolev embedding $H^{\frac34}\subset L^4$ and Young's inequality.
	By \eqref{psih:j1} we have
	\begin{equs}[j1]\|\psi_h(r)\|^2_{\bB^{\frac12+2\kappa}_2}\lesssim K(\|\mZ_\eps\|)\Big((r-s)^{-\frac12}+\int_s^r(r-\tau)^{-\frac34-2\kappa}\|\psi_l\|_{L^2}^2\,\dif \tau\Big),
	\end{equs}
	which implies that after integration w.r.t. $r$, $\int_s^tJ_2\,\dif r$ can be bounded by the RHS of \eqref{eq:f2pl}.
	
	\textbf{III. }For $J_3$ we use $\psi=\psi_l+\psi_h$ to have
	\begin{align*}
		&	\frac12J_3=\langle v^\eps*(\overline{\psi_l}\psi_h ),\Re(\cZ_1\psi)\rangle
		\\=&\,\langle v^\eps*(\overline{\psi_l}\psi_h ),\Re(\cZ_1\psi_l)\rangle+\langle v^\eps*(\overline{\psi_l}\psi_h ),\Re(\cZ_1\psi_h)\rangle.
	\end{align*}
	The first term can be estimated exactly the same way as  $J_1$. For the second term we focus on $\langle v^\eps*(\overline{\psi_l}\psi_h ),\cZ_1\psi_h\rangle$; the other terms can be handled similarly. Using the paraproduct decomposition, we obtain
	\begin{align*}
		\langle v^\eps*(\overline{\psi_l}\psi_h ),\cZ_1\psi_h\rangle=\langle (v^\eps*(\overline{\psi_l}\psi_h) )\prec\overline\cZ_1,\psi_h\rangle+\langle (v^\eps*(\overline{\psi_l}\psi_h ))\succcurlyeq\overline\cZ_1,\psi_h\rangle.
	\end{align*}
	We use Lemma \ref{lem:para} and Lemma \ref{lem:multi}, \eqref{boundpsih} to have
	\begin{align*}
		|\langle (v^\eps*(\overline{\psi_l}\psi_h ))\prec\overline\cZ_1,\psi_h\rangle|\lesssim \|\psi_l\|_{L^4}\|\psi_h \|_{L^4}\|\psi_h\|_{\bB^{\frac12+\kappa}_2}\|\mZ_\eps\|\lesssim\|\psi_l\|_{L^4}\|\psi_h\|_{\bB^{\frac12+\kappa}_2}K(\|\mZ_\eps\|),
	\end{align*}
	which can be estimated similarly as the second term in $J_2$. Similarly we use Lemma \ref{lem:para} and \eqref{boundpsih} to have
	\begin{align*}
		&|	\langle (v^\eps*(\psi_l\psi_h ))\succcurlyeq\cZ_1,\psi_h\rangle|\lesssim \|\psi_h\|_{L^4}\|\psi_l\psi_h\|_{\bB^{\frac12+\kappa}_{\frac43}}\|\mZ_\eps\|
		\\\lesssim&\, K(\|\mZ_\eps\|)\Big(\|\psi_h\|_{\bB^{\frac12+\kappa}_2}\|\psi_l\|_{L^4}+\|\psi_l\|_{\bB^{\frac12+\kappa}_2}\|\psi_h\|_{L^4}\Big),
	\end{align*}
	which can be estimated as $J_2$ by using Young's inequality and \eqref{j1}.
	
	\textbf{IV.} For $J_4$ we focus on $\langle  v^\eps*|\psi_l|^2\overline\cZ_1,\psi_h\rangle$ and the other part follows similarly. We use the paraproduct decomposition to obtain
	\begin{align*}
		\langle  v^\eps*|\psi_l|^2\overline\cZ_1,\psi_h\rangle=\langle  (v^\eps*|\psi_l|^2)\prec\overline \cZ_1,\psi_h\rangle+\langle ( v^\eps*|\psi_l|^2)\succcurlyeq\overline\cZ_1,\psi_h\rangle.
	\end{align*}
	For the first term we use Lemma \ref{lem:para} and \eqref{lowerbound1} to have
	\begin{align*}
		&	|\langle ( v^\eps*|\psi_l|^2)\prec\overline \cZ_1,\psi_h\rangle|\lesssim \|v^\eps*|\psi_l|^2\|_{L^2}\|\psi_h\|_{\bB^{\frac12+\kappa}_2}\|\mZ_\eps\|
		\\\lesssim&\, \mathcal{V}^\eps(\psi_l)^{\frac12}\|\psi_h\|_{\bB^{\frac12+\kappa}_2}\|\mZ_\eps\|
		\lesssim \delta\mathcal{V}^\eps(\psi_l)+\|\psi_h\|^2_{\bB^{\frac12+\kappa}_2}K(\|\mZ_\eps\|),
	\end{align*}
	which can be estimated by using \eqref{j1}. For the second term we use Lemma \ref{lem:para}, Lemma \ref{lem:multi} and the Sobolev embedding $H^{\frac34}\subset L^4$ and the interpolation Lemma \ref{lem:interpolation} to have
	\begin{align*}
		|\langle ( v^\eps*|\psi_l|^2)\succcurlyeq \cZ_1,\psi_h\rangle|\lesssim \|\psi_l\|_{\bB^{\frac12+\kappa}_2}\|\psi_l\|_{L^4}
		\|\psi_h\|_{L^4}\|\mZ_\eps\|\lesssim \delta\|\psi_l\|_{H^1}^2+\|\psi_l\|_{L^2}^2K(\|\mZ_\eps\|),
	\end{align*}
	which by \eqref{lowerbound} implies the desired bound.
	
\end{proof}

\bl It holds that for $\delta>0$ and $0\leq s\leq t\leq T$
\begin{equs}[est:YcZ2c]
	&\int_s^t|	\langle \cG_{\succcurlyeq}+F_{\succcurlyeq},\psi_l\rangle|\,\dif r
	\lesssim \delta\int_s^t(\mathcal{V}^\eps(\psi_l)+\|\psi_l\|_{H^1}^2)\,\dif r+K(\|\mZ_\eps\|)+\|\psi_l(s)\|_{L^2}^2,
\end{equs}
with the proportional constant independent of $\eps$.
\el
\begin{proof}

	By Propositions \ref{prop:Z}, \ref{thcomm1}  and  \ref{prop:Zn1} and \eqref{eq:tauy}  we have
	\begin{align*}
		&|	\langle  F_{\succcurlyeq},\psi_l\rangle|
		\lesssim K(\|\mZ_\eps\|)\|\psi_l\|_{H^{\frac12+3\kappa}}(r^{-\kappa}+1),
	\end{align*}
	which implies the desired bound for it by Lemma \ref{lem:interpolation} and Young's inequality.
	For the term involving the paraproduct $\succ$ from $\cG_\succcurlyeq$ we use the paraproduct estimates Lemma \ref{lem:para} and Proposition  \ref{prop:Z} to have
	\begin{equs}[j2]
		|	\langle \psi\succ\cZ^{\<2vm>},\psi_l\rangle|
		\lesssim&\, K(\|\mZ_\eps\|)(\|\psi_l\|_{\bB_2^{\frac12+\kappa}}+\|\psi_h\|_{\bB_2^{\frac12+\kappa}})\|\psi_l\|_{\bB_2^{\frac12+\kappa}}
		\\\lesssim&\,\delta\|\psi_l\|_{H^1}^2+K(\|\mZ_\eps\|)\|\psi_l\|_{L^2}^2+\|\psi_h\|^2_{\bB_2^{\frac12+\kappa}},
	\end{equs}
	which can be bounded  using \eqref{j1}. For $\int v^\eps(y)\tau_y\psi\succ\cZ^{\<2m>}_y\,\dif y$ and
	$\int v^\eps(y)\tau_y\overline{\psi}\succ\cZ^{\<2>}_y\,\dif y$, we use \eqref{eq:tauy}, Proposition \ref{prop:Zny} and exactly the same argument to conclude the desired estimates.
	
	Moreover, by  Lemma \ref{lem:sto1-n}  we have
	\begin{align*}
		&|\langle\psi\circ\cZ^{\<2vm>}+(\tilde b_1^\eps+\tilde b^\eps_2)(Y+\psi),\psi_l\rangle|
		\\\lesssim & K(\|\mZ_\eps\|)\Big((\|\psi\|_{\bB^{3\kappa}_2}+1)\|\psi_l\|_{H^{2\kappa}}+\|\psi^\sharp\|_{\bB^{1+2\kappa}_2}\|\psi_l\|_{H^{2\kappa}}\Big)
		\\\lesssim&K(\|\mZ_\eps\|)+\|\psi^\sharp\|^{\frac32}_{\bB^{1+2\kappa}_2}+\delta(\|\psi_l\|_{H^1}^{2}+\|\psi_l\|^4_{L^2}),
	\end{align*}
	where we use \eqref{boundpsih} and Young's inequality  in the last step.
	By integrating with respect to time and applying  Proposition \ref{boundsharpp}, the  desired estimate follows for $\psi\circ\cZ^{\<2vm>}$. The desired bounds for the terms involving $\int v^\eps(y)\tau_y\psi\circ\cZ^{\<2m>}_y\,\dif y$ and
	$\int v^\eps(y)\tau_y\overline{\psi}\circ\cZ^{\<2>}_y\,\dif y$ follow the same arguments by Lemma \ref{lem:sto1}.
\end{proof}

\bl It holds that for $\delta>0$ and $0\leq s\leq t\leq T$
\begin{align*}
	\int_s^t|\langle f_1(\psi),\psi_l\rangle|\,\dif r\lesssim\delta\int_s^t(\mathcal{V}^\eps(\psi_l)+\|\psi_l\|_{H^1}^2)\,\dif s+K(\|\mZ_\eps\|),
\end{align*}
with the proportional constant independent of $\eps$.
\el
\begin{proof}
	Using \eqref{bd:f1com} we know that
	\begin{equs}[est:Df1]
		|\langle f_1,\psi_l\rangle|\lesssim&\,(1+r^{-\kappa}) \|\psi\|_{\bB^{\frac12+2\kappa}_2}\|\psi_l\|_{\bB^{\frac12+2\kappa}_2}K(\|\mZ_\eps\|)
		\\\lesssim&\, \|\psi_h\|^2_{\bB^{\frac12+2\kappa}_2}+(1+r^{-2\kappa})\|\psi_l\|^2_{\bB^{\frac12+2\kappa}_2}K(\|\mZ_\eps\|).
	\end{equs}
	We then use \eqref{j1} for $\|\psi_h\|^2_{\bB^{\frac12+2\kappa}_2}$ and the interpolation Lemma \ref{lem:interpolation}, Young's inequality for $\|\psi_l\|^2_{\bB^{\frac12+2\kappa}_2}$ to obtain the desired estimates.

	Combining these estimates the result follows.
\end{proof}

\bl\label{lem:Xil} It holds that for $\delta>0$
\begin{align*}
	&|	\langle \cG_{\prec,\leq R}+F_{\prec,\leq R},\psi_l\rangle|
	\lesssim\delta(\mathcal{V}^\eps(\psi_l)+\|\psi_l\|_{H^1}^2)+K(\|\mZ_\eps\|),
\end{align*}
with the proportional constant independent of $\eps$.
\el
\begin{proof}
	We use the paraproduct estimates Lemma \ref{lem:para}, Lemma \ref{lem:multi}, Propositions \ref{prop:Z}, \ref{prop:Zny}, \eqref{eq:tauy} and the localizer estimate \eqref{eq:loc}  to have
	\begin{align*}
		&	|\langle \cG_{\prec,\leq R}+F_{\prec,\leq R},\psi_l\rangle|
		\\\lesssim&\, (K(\|\mZ_\eps\|)+\|\psi\|_{L^2})\Big(\sup_y\|\Delta_{\leq R}\cZ^{\<2m>}_y\|_{\bC^\kappa}+\sup_y\|\Delta_{\leq R}\cZ^{\<2>}_y\|_{\bC^\kappa}+\|\Delta_{\leq R}\cZ^{\<2vm>}\|_{\bC^\kappa}\Big)\|\psi_l\|_{L^2}
		\\\lesssim&\,K(\|\mZ_\eps\|)(1+\|\psi\|_{L^2})2^{R(r)(1+2\kappa)}\|\psi_l\|_{L^2}.
	\end{align*}
	
	We then use  \eqref{boundr} to have it bounded by
	\begin{align*}
		&(1+\|\psi_l\|^2_{L^2})(\|\psi_l\|_{L^4}+1)^{\frac{1+2\kappa}{\frac{16}{17}-2\kappa}}K(\|\mZ_\eps\|)
		\\\lesssim&\, (1+\|\psi_l\|^2_{L^2})(\|\psi_l\|_{H^1}^{\frac34}\|\psi_l\|_{L^2}^{\frac14}+1)^{\frac{1+2\kappa}{\frac{16}{17}-2\kappa}}K(\|\mZ_\eps\|).
	\end{align*}
	We consequently use Young's inequality and \eqref{lowerbound} to derive the desired result.
\end{proof}

\subsection{Proof of Theorems \ref{uniformmea}--\ref{coming}}\label{proof} In this section we first give the proof of Theorem \ref{coming} based on the uniform estimates derived in Proposition \ref{energy}.

\begin{proof}[Proof of Theorem \ref{coming}] The bound for $\psi_h$ follows from \eqref{boundpsih}. The result for $\psi_l$ follows from the same argument as in the proof of \cite[Theorem 7.1]{MW18}. More precisely, let $F(s)=\|\psi_l(s)\|_{L^2}^2+1$, and by Proposition \ref{energy} we obtain
	$$\int_s^t F(r)^2\,\dif r\leq CK(\|\mZ_\eps\|) F(s).$$
	By the construction of solutions in Theorem \ref{th:global}, $F$ is continuous. Using \cite[Lemma 7.3]{MW18} we obtain  a sequence $0=t_0<t_1<t_2<...<t_N=T$, such that for every $n\in \{0,...,N-1\}$
	$$F(t_n)\lesssim K(\|\mZ_\eps\|) t_{n+1}^{-1}.$$
	The proportional constant is uniform in $\eps$. Hence, for any $t\in[0,T]$ we can find $n\in \{0,...,N-1\}$ such that $t\in(t_n,t_{n+1}]$.  We then apply  Proposition \ref{energy} to obtain
	$$\|\psi_l(t)\|_{L^2}^2\lesssim K(\|\mZ_\eps\|)\Big(\|\psi_l(t_n)\|_{L^2}^2+1\Big)\lesssim K(\|\mZ_\eps\|)\Big( t^{-1}_{n+1}+1\Big)\lesssim K(\|\mZ_\eps\|)(t^{-1}+1).$$
	Thus the result follows.
\end{proof}

It will be convenient to have a stationary coupling of the linear and nonlinear dynamics \eqref{eq:lin} and \eqref{eq:mainnew}, which is stated in the following lemma.

\bl\label{lem:zz1} There exists a  stationary process $(\Psi^\eps, Z)$ such that the components $\Psi^\eps, Z$ are stationary  solutions to \eqref{eq:mainnew} and \eqref{eq:lin} respectively.
\el

The proof follows essentially the same steps as in \cite[Lemma 5.7]{SSZZ20} and \cite[Lemma 4.2]{SZZ21} and we put it in Appendix \ref{app:pro}.

\begin{proof}[Proof of Theorem \ref{uniformmea}]  We take $(\Psi^\eps,Z)$ to be the stationary process given in Lemma \ref{lem:zz1}. As $\Psi^\eps$ is stationary solution, we only need to prove the result for $t=1$. We then set
	\begin{align}\label{def:psi4}
		\psi\eqdef \Psi^\eps -Z+\cZ^{\<3v0m>}-\sI(\cR(Z)).
	\end{align}
	We then know $\psi$ satisfies equation \eqref{eq:psi} with $\psi(0)\in \bC^{-\frac12-\kappa}$. We can also decompose $\psi=\psi_l+\psi_h$ with $\psi_l$, $\psi_h$ satisfying \eqref{eqw} and \eqref{eqv}, respectively, and then apply the uniform in $\eps$ estimate derived in Section \ref{sec:epsih}--Section \ref{sec:en} to $\psi$ and $\psi_l, \psi_h$.
	Using stationary of $(\Psi^\eps,Z)$ and \eqref{boundpsih} we obtain
	\begin{equs}[bd:psi0]
		&\E\|\psi(0)\|_{L^2}^2=\E\|(\Psi^\eps-Z)(0)\|_{L^2}^2=\E\|(\Psi^\eps-Z)(1)\|_{L^2}^2
		\\\lesssim&\,\E\|\psi_l(1)\|_{L^2}^2+\E\|\psi_h(1)\|_{L^2}^2+\E\|\cZ^{\<3v0m>}(1)\|_{L^2}^2+\E\|\sI(\cR(Z))(1)\|_{L^2}^2\lesssim1.
	\end{equs}
	Here we used Theorem \ref{coming} and \eqref{bd:mZ1} in the last step and the proportional constant is independent of $\eps$.
	
	On the other hand, taking expectation on the both sides of \eqref{mest:psil} in  Proposition \ref{energy} with $s=0$ we use \eqref{bd:psi0} to obtain
	\begin{equs}\label{bd:psil-1}
		\E\int_0^1\|\psi_l\|_{H^1}^2\,\dif s +\E\int_0^1\mathcal{V}^\eps(\psi_l)\,\dif s\lesssim \E\|\psi(0)\|_{L^2}^2+1\lesssim 1.
	\end{equs}
	Thus we use Lemma \ref{lem:emb} to obtain
	\begin{align*}
		&\E\|\Psi^\eps(1)\|_{\bC^{-\frac12-\kappa}}^2\lesssim\E\|(\Psi^\eps-Z)(1)\|^2_{\bC^{-\frac12-\kappa}}+\E\|Z(1)\|_{\bC^{-\frac12-\kappa}}^2
		\\\lesssim&\,\E\int_0^1\|(\Psi^\eps-Z)(s)\|^2_{\bC^{-\frac12-\kappa}}\,\dif s+1
		\\\lesssim &\,\E\int_0^1 \|\psi_h\|^2_{\bB^{1-2\kappa}_4}\,\dif s+\E\int_0^1\|\psi_l\|^2_{H^1}\,\dif s+1\lesssim1,\end{align*}
	where we used stationary of $(\Psi^\eps, Z)$ in the second step and the proportional constant is independent of $\eps$.
	Here we used \eqref{bd:psil-1}
	and \eqref{boundpsih1} to obtain
	\begin{align*}\E\int_0^1 \|\psi_h\|^2_{\bB^{1-2\kappa}_4}\,\dif s\lesssim& \,1+\E\Big( K(\|\mZ_\eps\|)\int_0^1\int_0^s(s-r)^{-1+\frac\kappa2}\|\psi_l(r)\|_{L^4}^2\,\dif r \dif s\Big)
		\\\lesssim &\,1+\E\int_0^1\|\psi_l\|_{H^1}^2\,\dif s+\E\int_0^1\cV^\eps(\psi_l)\,\dif s\lesssim1.
	\end{align*}
	Thus the first result follows.
	
	Using Theorem \ref{coming} and \eqref{bd:mZ1} we obtain for any $p\geq2$
	$$\sup_\eps(\E\|\psi^\eps_l(1)\|_{L^2}^p+\E\|\psi^\eps_h(1)\|_{L^2}^p)\lesssim 1.$$
	We then use \eqref{def:psi4} to have
	\begin{align*}
		&\E\|\Psi^\eps(1)\|_{\bB^{-\frac12-\kappa}_2}^p\lesssim\E\|(\Psi^\eps-Z)(1)\|_{\bB^{-\frac12-\kappa}_2}^p+\E\|Z(1)\|_{\bB^{-\frac12-\kappa}_2}^p
		\\\lesssim&\,\E\|\psi^\eps_l(1)\|_{L^2}^p+\E\|\psi^\eps_h(1)\|_{L^2}^p+1
		\lesssim1.\end{align*}
	Thus the second result follows.
\end{proof}

We now give the proof of Theorem \ref{th:main1}.

\begin{proof}[Proof of Theorem \ref{th:main1}]
	Using Theorem \ref{uniformmea}, we know that $\nu^\eps$, for $\eps > 0$, is tight in $\bC^{-\frac12-\kappa}$. Suppose that a subsequence, still denoted by $\nu^\eps$ for simplicity, converges weakly to $\tilde \nu$ in $\bC^{-\frac12-\kappa}$. We begin with the unique solutions $\Psi^\eps$ to equation \eqref{eq:mainnew1} with initial distribution $\nu^\eps$, and the unique solutions $\Phi$ to \eqref{eq:mainnew1} with initial distribution $\tilde \nu$. By Theorem \ref{th:1}, we know that $\Psi^\eps$ converges to $\Phi$ in $C([0,T];\bC^{-\frac12-\kappa})$ in probability. Since $\nu^\eps$ is an invariant measure for \eqref{eq:mainnew} by Theorem \ref{th:global}, we conclude that $\Psi^\eps$ is a stationary solution to equation \eqref{eq:mainnew}. Hence, $\Phi$ is a stationary solution to equation \eqref{eq:mainnew1}, and therefore, $\tilde \nu$ is an invariant measure for equation \eqref{eq:mainnew1}. Given that the invariant measure for equation \eqref{eq:mainnew1} is unique by Theorem \ref{th:phi43} and is given by $\nu$, we deduce that $\tilde \nu = \nu$. Consequently, the entire sequence $\nu^\eps$ converges weakly to $\nu$ in $\bC^{-\frac12-\kappa}$. Furthermore, by Theorem \ref{uniformmea}, we obtain
	$$
	\sup_\eps \int \|\Psi\|^p_{\bB_2^{-\frac12-\kappa}} \nu^\eps(\dif \Psi) \lesssim 1,
	$$
	for any $p \geq 1$, which implies the convergence of $p$-point correlation functions by uniform integrability.
\end{proof}

\section{Stochastic calculations}\label{sec:sto}

In this section, we prove the stochastic estimates required in Section \ref{first}, following the notations from \cite[Section 9]{GP17}. We express the complex-valued white noise through its spatial Fourier transform. More precisely, let  $E=\mathbb{Z}^3$ and let $B(\cdot,k)=\langle \xi,e_k\rangle$ and $\overline{B}(\cdot,k)=\langle \overline{\xi},e_k\rangle$ for $e_k(x)=(2\pi)^{-\frac{3}{2}}e^{\imath x\cdot k},x\in\mathbb{T}^3$, $k\in\mZ^3$. Note that $\overline{B}(\cdot,k)$ is not the conjugate of $B(\cdot,k)$, but rather we have
$\overline{B}(\cdot,k)=\overline{B(\cdot,-k)}$ with $\overline{B(\cdot,-k)}$ being the conjugate of $B(\cdot,-k)$.   We view $B(\cdot,k)$ as a Gaussian noise on $\mathbb{R}\times E$ with covariance given by
\begin{equs}[eq:Ito]
	\E\Big(\int_{\mathbb{R}\times E}f(\eta) B (\dif\eta) \int_{\mathbb{R}\times E}g(\eta') \overline{B} (\dif \eta')\Big) =& \, 2\int_{\mathbb{R}\times E}
	g(\eta_1)f(\eta_{-1})\,\dif \eta_1,\\
	\E\Big(\int_{\mathbb{R}\times E}f(\eta) B (\dif\eta) \int_{\mathbb{R}\times E}g(\eta') {B} (\dif \eta')\Big) =&\,\E\Big(\int_{\mathbb{R}\times E}f(\eta) \overline{B} (\dif\eta) \int_{\mathbb{R}\times E}g(\eta') \overline{B} (\dif \eta')\Big)= 0,
\end{equs}
where $\eta_a=(s_a,k_a)$, $s_{-a}=s_a, k_{-a}=-k_a$ and the measure $\dif\eta_a=\dif s_a\dif k_a$ is the product of the Lebesgue measure $\dif s_a$ on $\mathbb{R}$ and of the counting measure $\dif k_a$ on $E$.

We write
\begin{align*}
	Z(t)=&\int_{\mR\times E}e_kP_{t-s}B(\dif \eta),\qquad \langle Z(t),e_k\rangle=\int_{\mR} p_{t-s}(k)\,\dif B(s,k),
	\\
	\overline Z(t)=&\int_{\mR\times E}e_kP_{t-s}\overline{B}(\dif \eta),\qquad \langle \overline{Z}(t),e_{k}\rangle=\int_{\mR} p_{t-s}(k)\,\dif \overline{B}(s,k),
\end{align*}
with $p_{t}(k)=e^{-\la k\ra^2t}\1_{t\geq0}$. Hence, we use \eqref{eq:Ito} to have for $t\geq\sigma$
\begin{equs}[cor:Z]
	\E \langle Z(t),e_k\rangle\langle \overline{Z}(\sigma),e_{k'}\rangle=&\,2\1_{\{k+k'=0\}}\int_{-\infty}^\sigma p_{t-s}(k)p_{\sigma-s}(k)\,\dif s=\frac{\1_{\{k+k'=0\}}}{\la k\ra^2}e^{-\la k\ra^2(t-\sigma)},\\ \E \langle Z(t),e_k\rangle\langle Z(\sigma),e_{k'}\rangle=&\,0,\qquad \E \langle \overline Z(t),e_k\rangle\langle \overline Z(\sigma),e_{k'}\rangle=0.
\end{equs}
We will meet the random fields written as
\begin{align*}
	\cZ^\tau=\int_{(\mR\times E)^{n+m}}H(t,x,\eta)\prod_{i=1}^nB(\dif \eta_i)\prod_{j=n+1}^{n+m}\overline{B}(\dif \eta_j),
\end{align*}
which are easier to handle when decomposed into different chaos. We now introduce the following notations $k_{[1...n]}=\sum_{i=1}^nk_i$, $\eta_{1...n}=(\eta_1,...,\eta_n)\in(\mathbb{R}\times E)^n$, $\dif\eta_{1...n}=\dif\eta_1\dots\dif\eta_n$.  Denote by
$$\int_{(\mathbb{R}\times E)^{n+m}}f(\eta_{1\dots n+m})B(\dif\eta_{1\dots n})\diamond \overline{B}(\dif\eta_{n+1\dots n+m})$$
a generic element of the $n+m$-th chaos of $B$ on $\mathbb{R}\times E$. Since $B$ is complex valued white noise, we need to consider the conjugate part here. We refer to \cite[Appendix A]{HIN17} for more details.  By \cite[Section 9.2]{GP17} and \cite[Appendix A]{HIN17} we know that
\begin{align}\label{bd:mL2}
	\E\Big(\Big|\int_{(\mathbb{R}\times E)^{n+m}}f(\eta_{1...n+m})B(\dif\eta_{1...n})\diamond \overline{B}(\dif\eta_{n+1\dots n+m})\Big|^2\Big)\lesssim_{n,m} \int_{(\mathbb{R}\times E)^{n+m}}|f(\eta_{1...n+m})|^2\dif\eta_{1...n+m}.\end{align}
Hence for bounding the variance of the chaos it is enough to bound the $L^2$ norm of the unsymmetrized kernels.

We introduce the following notation: for $k_1,k_2\in \mathbb{Z}^3$
$$\psi_\prec(k_1,k_2)\eqdef\sum_{j\geq-1}\sum_{i<j-1}\theta_i(k_1)\theta_j(k_2),\quad\psi_\circ(k_1,k_2)\eqdef\sum_{|i-j|\leq1}\theta_i(k_1)\theta_j(k_2),$$
and $\psi_{\succcurlyeq}=1-\psi_{\prec}$.

In the following $\kappa>0$ is any small number.

\subsection{Proof of Propositions \ref{prop:Zn} and \ref{prop:Zn1}}\label{proof2}
In this subsection, we perform stochastic calculations for the random fields introduced in Propositions \ref{prop:Zn} and \ref{prop:Zn1}.

\begin{proof}[Proof of Proposition \ref{prop:Zn}]
	We recall that for $k_1\in \mZ^3$ $$\langle \sI(\cR Z),e_{k_1}\rangle=\int_0^t\widehat{\cR}(k_1)e^{-(t-s)\la k_1\ra^2}\langle Z(s), e_{k_1}\rangle\,\dif s,$$
	with
	$\widehat{\cR}(k)=\frac1{(2\pi)^{3}} \sum_k\frac1{\la k\ra^2}\Big(\widehat{v^\eps}(k)-\widehat{v^\eps}(k_1-k)\Big)$.

	We first consider $(v^\eps*\overline{\sI(\cR Z)})\circ Z$.   By chaos decomposition we have
	\begin{align*}
		(v^\eps*\overline{\sI(\cR Z)})\circ Z=c_1^\eps+I^2,
	\end{align*}
	with $I^2$ in the second chaos and $c_1^\eps$ in the zeroth chaos.

	\noindent\textbf{Terms in the zeroth chaos}: By direct calculation and \eqref{cor:Z} we find
	\begin{equs}[defc1]c_1^\eps =&\,\frac1{(2\pi)^{3}}\sum_{k_1}\widehat{\cR}(k_1)\widehat{v^\eps}(k_1)\frac{1-e^{-2t\la k_1\ra^2}}{2\la k_1\ra^4}\psi_\circ(k_1,k_1)
		\\=&\,\frac1{(2\pi)^{6}}\sum_{k,k_1}\frac{\widehat{v^\eps}(k_1)(\widehat{v^\eps}(k)-\widehat{v^\eps}(k_1-k))(1-e^{-2t\la k_1\ra^2})}{2\la k_1\ra^4\la k\ra^2}.
	\end{equs}
	Since $\widehat{v^\eps}(k)=\hat{v}(\eps k)$, we use change of variable to have
	\begin{align*}
		c_1^\eps=&\,\eps^6\sum_{k,k_1\in\mathbb \eps \mZ^3}\frac{\hat v(k_1)(\hat v(k)-\hat v(k_1-k))(1-e^{-2t(|k_1|^2\eps^{-2}+1)})}{2\cdot(2\pi)^{6}(|k_1|^2+\eps^2)^2(|k|^2+\eps^2)}=\sum_{i=1}^2L_i,
	\end{align*}
	with
	\begin{align*}
		L_1\eqdef &\,\eps^6\sum_{k,k_1\in\mathbb \eps \mZ^3,|k_1|\leq1}\frac{\hat v(k_1)(\hat v(k)-\hat v(k_1-k))}{2\cdot(2\pi)^{6}(|k_1|^2+\eps^2)^2(|k|^2+\eps^2)}(1-e^{-2t(|k_1|^2\eps^{-2}+1)}),
		\\
		L_2\eqdef&\,\eps^6\sum_{k,k_1\in\mathbb \eps \mZ^3,|k_1|>1 }\frac{\hat v(k_1)(\hat v(k)-\hat v(k_1-k))}{2\cdot (2\pi)^6(|k_1|^2+\eps^2)^2(|k|^2+\eps^2)}(1-e^{-2t(|k_1|^2\eps^{-2}+1)}).
	\end{align*}
	Recall that $C_1$, given in \eqref{eq:counterterms}, can be decomposed into $C_{11} + C_{12}$, where $C_{11}$ and $C_{12}$ correspond to the first and second terms on the RHS of the definition of $C_1$ in  \eqref{eq:counterterms}, respectively.
	It is easy to see that
	$$|L_2- C_{12}|
	\lesssim\eps^\kappa(1+t^{-\frac\kappa2}).$$
	For $L_1$ we use $\nabla\hat v(k)=-\nabla\hat v(-k)$ to have
	\begin{align*}	
		L_1=&\,\eps^6\sum_{k,k_1\in\mathbb \eps \mZ^3,|k_1|\leq1}\frac{\hat v(k_1)(\hat v(k)-\hat v(k_1-k)-k_1\cdot \nabla \hat v(k))}{2\cdot (2\pi)^{6}(|k_1|^2+\eps^2)^2(|k|^2+\eps^2)}(1-e^{-2t(|k_1|^2\eps^{-2}+1)}).
	\end{align*}
	We first consider the term excluding $e^{-2t(|k_1|^2\eps^{-2}+1)}$ and compare it with $C_{11}$. We employ Taylor expansion
	and interpolation for the numerator from $L_1$ and $C_{11}$ to
	have it
	bounded by
	\begin{equs}[convegc1]&\eps^\kappa\int_{|x|\leq1}\frac1{|x|^4|y|^2}|x|^2\int_0^1\int_0^1\frac1{(1+|y-sux|^2)^{1-\kappa}}\,\dif s\dif u\dif x\dif y
		%\\\lesssim&\eps^\kappa\int_0^1\int_0^1\int_{|x|\leq1}\frac1{|x|^2|y|^2}\frac1{(1+|y-sux|^2)^{1-\kappa}}\,\dif x\dif y\dif s\dif u
		\\\lesssim&\,\eps^\kappa\int_0^1\int_0^1\int_{|x|\leq1,|y|\leq1}\frac1{|x|^2|y|^2}\,\dif x\dif y\dif s\dif u\\&+\eps^\kappa\int_0^1\int_0^1\int_{|x|\leq1}\frac1{|x|^2}\Big(\int_{|y|>1}\frac1{|y|^4}\,\dif y\Big)^{\frac12}\Big(\int_{|y|>1}\frac1{(1+|y-sux|^2)^{2-2\kappa}}\,\dif y\Big)^{\frac12}\dif x\dif s\dif u
		\\\lesssim&\, \eps^\kappa
		.
	\end{equs}
	For the term involving $e^{-2t(|k_1|^2\eps^{-2}+1)}$ the approach is similar, since we can bound it with an additional factor $\eps^\kappa t^{-\frac\kappa2}|k_1|^{-\kappa}.$

	Thus we obtain
	\begin{equs}[constant]|c_1^\eps- C_1|\lesssim \eps^{\kappa}(1+t^{-\frac\kappa2}).
	\end{equs}

	\noindent\textbf{Terms in the second chaos}: By direct calculation we have
	\begin{align*}&
		I^2_t=(2\pi)^{-3}\sum_{k_1,k_2}\psi_\circ(k_1,k_2)\widehat{v^\eps}(k_1)\int_0^te^{-(t-s)\la k_1\ra^2}\widehat{\cR}(k_1)\Wick{\langle  \overline Z_s,e_{k_1}\rangle \langle Z_t, e_{k_2}\rangle} \,\dif s e^{\imath k_{[12]}\cdot x},
	\end{align*}
	with $\Wick{\langle  \overline Z_s,e_{k_1}\rangle \langle Z_t, e_{k_2}\rangle}$ denoting  Wick product of $\langle  \overline Z_s,e_{k_1}\rangle \langle Z_t, e_{k_2}\rangle$ given by
	$$\langle  \overline Z_s,e_{k_1}\rangle \langle Z_t, e_{k_2}\rangle-\E[\langle  \overline Z_s,e_{k_1}\rangle \langle Z_t, e_{k_2}\rangle].$$
	Thus we  use \eqref{vg} with $\eta=\kappa$, \eqref{cor:Z}, \eqref{bd:mL2} and Lemma \ref{lem:sum} to obtain
	\begin{align*}&
		\E|\Delta_q I^2|^2\lesssim\sum_{k_1,k_2}\frac{\psi_\circ(k_1,k_2) \eps^{2\kappa}\theta_q(k_{[12]}) }{\la k_1\ra^{4-2\kappa}\la k_2\ra^2}\lesssim \eps^{2\kappa}2^{2q\kappa},
	\end{align*}
	where we used $i\sim j$ in $\psi_\circ$ to have $|k_1|\backsimeq |k_2|$ in the last step.
	Since $	\Delta_q[(v^\eps*\overline{\sI(\cR Z)})\circ Z-c_1^\eps]$ is a random variable with a finite chaos decomposition,  we use Gaussian hypercontractivity to have for $p\geq2$
	\begin{align*}
		\E|	\Delta_q((v^\eps*\overline{\sI(\cR Z)})\circ Z-c_1^\eps)|^{p}\lesssim \eps^{p\kappa}2^{qp\kappa},
	\end{align*}
	which implies
	\begin{align*}
		\sup_q2^{-pq\kappa }\E\|	\Delta_q[v^\eps*\overline{\sI(\cR Z)}\circ Z-c_1^\eps]\|_{L^p}^{p}\lesssim \eps^{p\kappa}.
	\end{align*}
	Using the Besov embedding Lemma \ref{lem:emb}  we choose $p$ large enough to obtain
	\begin{align*}
		\E\|(v^\eps*\overline{\sI(\cR Z)})\circ Z\|_{\bC^{-2\kappa}}^{p}\lesssim \E\|	(v^\eps*\overline{\sI(\cR Z)})\circ Z\|_{\bB^{-\frac{3\kappa}2}_{p,p}}^{p}\lesssim \eps^{p\kappa}.
	\end{align*}
	Furthermore, the second bound in \eqref{con:RZ} follows by considering the time difference  $$|1-e^{-|t_1-t_2||k|^2}|\lesssim (|t_1-t_2| |k|^{2})\wedge 1,$$
	and Kolmogorov's continuity criterion.
	
	We now consider the remaining terms.
	For  $(v^\eps*\overline{Z})\circ \sI(\cR Z)$ the required bounds and convergence follow the same line.
	For  $(v^\eps*\sI(\cR Z))\circ Z$ and $ (v^\eps*Z)\circ \sI(\cR Z)$, the zeroth chaos part vanishes, and the terms in the second chaos are bounded in a similar manner.
	
	In the case of $\Re[\overline{Z} \circ \sI(\cR Z)]$, the terms in the second chaos are similarly bounded, and we focus only on the terms in the zeroth chaos, given by $c_2^\eps$.
	
	\noindent\textbf{Terms in the zeroth chaos}: By direct calculation
	\begin{equs}[defc2]c_2^\eps=&\sum_{k_1}\widehat{\cR}(k_1)\frac{1-e^{-2t\la k_1\ra^2}}{2\cdot(2\pi)^{3} \la k_1\ra^4}\psi_\circ(k_1,k_1)
		\\=&\sum_{k,k_1}\frac{(\widehat{v^\eps}(k)-\widehat{v^\eps}(k_1-k))(1-e^{-2t\la k_1\ra^2})}{2\cdot (2\pi)^6\la k_1\ra^4\la k\ra^2}.
	\end{equs}
	Note that $c_2^\eps$ is the same as $c_1^\eps$, with $\widehat{v^\eps}(k_1)$ replaced by 1. Therefore, we can apply a similar decomposition and follow the same steps as in the calculation for \eqref{constant} to obtain
	$$|c_2^\eps- C_2|\lesssim \eps^{\kappa}(1+t^{-\frac\kappa2}).$$
	
	Using the Besov embedding Lemma \ref{lem:emb}, Gaussian hypercontractivity, the bound in \eqref{con:RZ2} follows.
\end{proof}

\begin{proof}[Proof of Proposition \ref{prop:Zn1}] We classify these calculations into the following two categories:
	\begin{align*}
		\textbf{I. }  \sI(\cR Z)\circ\cZ^{\<2vm>},\quad \int v^\eps(y)\tau_y\sI(\cR Z)\circ\cZ^{\<2m>}_y\,\dif y,\qquad  \int v^\eps(y)\tau_y\sI(\cR \overline Z)\circ\cZ^{\<2>}_y\,\dif y,
	\end{align*}
	and
	\begin{align*}\textbf{II. } \tau_y\overline{\sI(\cR Z)}\circ Z, \quad\tau_y\sI(\cR Z)\circ Z,\quad {\sI(\cR Z)}\circ\tau_y\overline Z,\quad \sI(\cR Z)\circ \tau_yZ.\end{align*}
	
	\textbf{I.}
	We start with $\sI(\cR Z)\circ\cZ^{\<2vm>}$, which by chaos decomposition is given by
	$$\sI(\cR Z)\circ\cZ^{\<2vm>}=I_t^3+I_t^1, $$
	with
	\begin{align*}
		I_t^3=\int H(\eta_{123})B(\dif \eta_{12})\diamond\overline{B}(\dif \eta_3),\qquad I_t^1=\int H(\eta_{12(-1)})B(\dif \eta_{2}),
	\end{align*}
	for
	\begin{align*}
		H\eqdef\psi_\circ(k_1,k_{[23]})\widehat{v^\eps}(k_{[23]}) \int_0^t\widehat{\cR}(k_1)e^{-(t-s)(|k_1|^2+1)}p_{s-s_1}(k_1)p_{t-s_2}(k_2)p_{t-s_3}(k_3)\,\dif s\prod_{i=1}^3e_{k_i}.
	\end{align*}
	
	\noindent\textbf{Terms in the third chaos}: Using \eqref{vg} and \eqref{cor:Z}, \eqref{bd:mL2} we obtain by Lemma \ref{lem:sum}
	\begin{align*}&
		\E|\Delta_q I_t^3|^2\lesssim \eps^{\kappa}\sum_{k_1,k_2,k_3} \psi_\circ(k_1,k_{[23]}) \frac{ \theta_q(k_{[123]}) }{\la k_1\ra^{4-\kappa}\la k_2\ra^2\la k_3\ra^2}\lesssim \eps^{\kappa} 2^{q(1+\kappa)},
	\end{align*}
	where we used $|k_{[23]}|\backsimeq |k_1|$ in the last step.

	\noindent\textbf{Terms in the first chaos}: We have
	\begin{align*}
		I_t^1=&\,\frac1{(2\pi)^{3}}\sum_{k_1,k_2} \psi_\circ(k_1,k_2-k_1)\frac{\widehat{\cR}(k_1)}{2\la k_1\ra^4} \widehat{v^\eps}(k_2-k_1)
		\langle Z_t,e_{k_2}\rangle e_{k_2}(1-e^{-2t\la k_1\ra^2})
		\\=&\,L_1+c_1^\eps Z,
	\end{align*}
	with
	$c_1^\eps$  given by \eqref{defc1} and
	\begin{align*}
		L_1\eqdef&\, (2\pi)^{-3}\sum_{k_1,k_2} \frac{\widehat{\cR}(k_1)}{2\la k_1\ra^4}[\psi_\circ(k_1,k_2-k_1) \widehat{v^\eps}(k_2-k_1)-\psi_\circ(k_1,k_1) \widehat{v^\eps}(k_1)]
		\\&\qquad\qquad\times \langle Z_t,e_{k_2}\rangle e_{k_2}(1-e^{-2t\la k_1\ra^2}).
	\end{align*}
	We then claim that $L_1$ converge to zero in $C_T\bC^{-\frac12-\kappa}$. In fact, we use interpolation to
	have
	\begin{align*}
		&|\psi_\circ(k_1,k_2-k_1) \widehat{v^\eps}(k_2-k_1)-\psi_\circ(k_1,k_1) \widehat{v^\eps}(k_1)|
		\\\lesssim&\,\psi_\circ(k_1,k_2-k_1)(\eps |k_2|)^{\frac{3\kappa}4}|\widehat{v^\eps}(k_2-k_1)-\widehat{v^\eps}(k_1)|^{1-\kappa}+(|k_1|^{-1}|k_2|)^{\frac{3\kappa}4}
		\\\lesssim&\,(|k_1|^{-1}|k_2|)^{\frac{3\kappa}4}+ \psi_\circ(k_1,k_2-k_1)|k_2|^{\frac{3\kappa}4}(|k_2-k_1|^{-\frac{3\kappa}4}+|k_1|^{-\frac{3\kappa}4})
		\\\lesssim&\,(|k_1|^{-1}|k_2|)^{\frac{3\kappa}4},
	\end{align*}
	which combined with \eqref{vg}  implies
	\begin{equs}\label{vg1}\E|\Delta_qL_1|^2\lesssim& \,\eps^{\kappa}\sum_{k_2}\frac{\theta_q(k_2)^2}{\la k_2\ra^2}\bigg(\sum_{k_1}\frac{(|k_1|^{-1}|k_2|)^{\frac{3\kappa}4}}{1+|k_1|^{3-\frac\kappa2}}\bigg)^2
		\\\lesssim& \,\eps^{\kappa}\sum_{k_2}\frac{\theta_q(k_2)^2}{|k_2|^{2-\frac{3\kappa}2}+1}\lesssim \eps^\kappa2^{q(1+\frac{3\kappa}2)}.
	\end{equs}
	Using the Besov embedding Lemma \ref{lem:emb}, Gaussian hypercontractivity, we have
	\begin{align*}
		\sI(\cR Z)\circ\cZ^{\<2vm>}-c_1^\eps Z\to0\text{ in } C_T\bC^{-\frac12-\kappa}.
	\end{align*}
	In the case of $ \int v^\eps(y)\tau_y\sI(\cR Z)\circ\cZ^{\<2m>}_y\,\dif y$, we have the same chaos decomposition, with $\widehat{v^\eps}(k_{[23]})$ in $H$ replaced by $\widehat{v^\eps}(k_{[13]})$, which gives $c_2^\eps Z$ as the renormalization counterterm. The convergence of $\int v^\eps(y)\tau_y\sI(\cR Z)\circ\cZ^{\<2m>}_y\,\dif y - c_2^\eps Z$ in $C_T \bC^{-\frac12-\kappa}$ follows the same reasoning.
	For $ \int v^\eps(y)\tau_y\sI(\cR \overline Z)\circ\cZ^{\<2>}_y\,\dif y$, we have two terms in the first chaos, arising from $k_1 + k_2 = 0$ and $k_1 + k_3 = 0$, which yield $c_1^\eps Z$ and $c_2^\eps  Z$, respectively. Thus, the convergence of $ \int v^\eps(y)\tau_y\sI(\cR \overline Z)\circ\cZ^{\<2>}_y\,\dif y - c_1^\eps Z - c_2^\eps  Z$ follows the same approach.
	
	\textbf{II.} For the second type, we employ a calculation akin to $\Re[\overline Z\circ \sI(\cR Z)]$ from the proof of Proposition \ref{prop:Zn}.  The $\sup_y$ bound follows from  $|e^{\imath k \cdot y_1}-e^{\imath k\cdot y_2}|\lesssim (|k||y_1-y_2|)\wedge 1$ and Kolmogorov's continuity criterion. In fact the calculation for the second order chaos follows the same line. The zeroth chaos components of $\tau_y\sI(\cR Z)\circ Z$ and $\sI(\cR Z)\circ \tau_yZ$ vanish, while  the zeroth chaos part of $\tau_y\overline{\sI(\cR Z)}\circ Z$,  ${\sI(\cR Z)}\circ\tau_y\overline Z$ is given by
	$$c_2^\eps(y)=\sum_{k_1}\widehat{\cR}(k_1)\frac{1-e^{-2t\la k_1\ra^2}}{2\cdot(2\pi)^{3} \la k_1\ra^4}\psi_\circ(k_1,k_1)e^{-\imath k_1\cdot y}.$$
	We bound $e^{-\imath k_1\cdot y}$   by $1$ and use a similar  calculation as in \eqref{convegc1} to have
	$\sup_y |c_2^\eps(y)|\lesssim1$. The result then follows.
\end{proof}

\subsection{Proof of Proposition \ref{thcomm1}}\label{proof1}

We focus on the first term $\int v^\eps(y)\tau_y\cZ_\eps^{\<3v0m>}\circ \cZ^{\<2m>}_y\,\dif y$, as the other term follows the same line.
We also have chaos decomposition
\begin{align*}
	\int v^\eps(y)\tau_y\cZ_\eps^{\<3v0m>}\circ \cZ^{\<2m>}_y\,\dif y= I_t^5+I_t^3+I_t^1,
\end{align*}
with
\begin{align*}
	I_t^5=\int H(\eta_{12345})B(\dif \eta_{135})\diamond\overline{B}(\dif \eta_{24}),
\end{align*}
and $I_t^i, i=1,3, $ given below, where
\begin{align*}
	H(\eta_{1\dots5})\eqdef&\, \psi_\circ(k_{[123]},k_{[45]})\widehat{v^\eps}(k_{[1234]})\int_0^t	p_{t-s}(k_{[123]})\widehat{v^\eps}(k_{[12]})
	\prod_{i=1}^3p_{s-s_i}(k_i)\,\dif s\prod_{j=4}^5p_{t-s_j}(k_j)\prod_{i=1}^5e_{k_i}.
\end{align*}
In the following we will solely demonstrate a uniform in   $\eps$ bound for $\int v^\eps(y)\tau_y\cZ_\eps^{\<3v0m>}\circ \cZ^{\<2m>}_y\,\dif y$ and the required result for the difference of $\int v^\eps(y)\tau_y\cZ_\eps^{\<3v0m>}\circ \cZ^{\<2m>}_y\,\dif y$ and $\cZ^{\<32oc>}$  can be derived analogously by applying
$$|\widehat{v^\eps}(k)-1|\lesssim \eps^\kappa |k|^\kappa.$$

\noindent\textbf{Terms in the fifth chaos}:
Using  Lemma \ref{lem:sum} and \eqref{cor:Z}, \eqref{bd:mL2} we have
$$\aligned \E|\Delta_q I_t^5|^2\lesssim&\sum_{k_1,k_2,k_3,k_4,k_5} \theta(2^{-q}k_{[12345]})\prod_{i=1}^5\frac{1}{\la k_i\ra^2}
\frac{\psi_\circ(k_{[123]},k_{[45]})}{\la k_{[123]}\ra^2(\la k_{[123]}\ra^2+\la k_1\ra^2+\la k_2\ra^2)}\\
\lesssim& \sum_{k_{[123]},k_{[45]}}\theta(2^{-q}k_{[12345]}) \frac{1}{\la k_{[123]}\ra^{3-\kappa}\la k_{[45]}\ra^{2+\kappa}}\lesssim 2^{q},\endaligned$$
where we used $|k_{[45]}|\simeq |k_{[123]}|$ in the second step.

\noindent\textbf{Terms in the third chaos}:
We have the following decomposition:
\begin{align*}
	I_t^{3}=\sum_{j=1}^{3}I_{t}^{3j},
\end{align*}
with
\begin{align*}
	I_t^{31}\eqdef&\int H(t,x,\eta_{123(-3)5})\,\dif \eta_{3}B(\dif\eta_{15})\diamond\overline{B}(\dif \eta_2), \qquad  I_t^{32}\eqdef\int H(t,x,\eta_{1234(-2)})\,\dif \eta_{2}B(\dif\eta_{13})\diamond\overline{B}(\dif \eta_{4}),
	\\I_t^{33}\eqdef&\int H(t,x,\eta_{123(-1)5})\,\dif \eta_{1}B(\dif\eta_{35})\diamond\overline{B}(\dif \eta_2).
\end{align*}
We use Lemma \ref{lem:sum} to have
$$\aligned \E|\Delta_q I_t^{31}|^2
\lesssim
&\sum_{k_1,k_2,k_5} \frac{\theta(2^{-q}k_{[125]})}{\la k_1\ra^2\la k_2\ra^2\la k_5\ra^2}
\Big(\sum_{k_3}\frac{1}{\la k_3\ra^2(\la k_{[123]}\ra^2+\la k_3\ra^2)}\Big)^2
\\
\lesssim& \sum_{k_{[12]},k_5} \frac{\theta(2^{-q}k_{[125]})}{\la k_{[12]}\ra^{3-\frac{\kappa}2}\la k_5\ra^2}\,\dif k_{[12]5}\lesssim 2^{q(1+\frac{\kappa}2)}.\endaligned$$
For $I_t^{32}$, we simply replace $k_5$ with $k_4$ and change the roles of $k_2$ and $k_3$ in the above estimate and obtain the same bounds for $\E|\Delta_q I_t^{32}|^2$. Similarly, for $I_t^{33}$, we exchange the roles of $k_1$ and $k_3$ in the estimate and derive the same bounds for $\E|\Delta_q I_t^{33}|^2 $.

\noindent\textbf{Terms in the first chaos}:
We have the following decomposition:
\begin{align*}
	I_t^{1}=\sum_{j=1}^{2}I_{t}^{1j},
\end{align*}
with
\begin{align*}
	I_t^{11}\eqdef&\int H(t,x,\eta_{123(-1)(-2)})\,\dif \eta_{12}B(\dif\eta_{3}), \qquad  I_t^{12}\eqdef\int H(t,x,\eta_{123(-3)(-2)})\,\dif \eta_{23}B(\dif\eta_{1}).
\end{align*}
The convergence of the terms $I_t^{1j}, j=1,2,$  necessitates renormalization.  More precisely we prove the probabilistic bounds for the following terms:
\begin{align*}
	I_t^{11}-\tilde b_2^\eps Z, \qquad I_t^{12}-\tilde b_3^\eps Z.
\end{align*}
Using $\int |p_{t-s_3}(k_3)-p_{s-s_3}(k_3)|^2\,\dif s\lesssim \frac1{|k_3|^{2-\frac\kappa2}}|t-s|^{\frac\kappa4},$
we derive
$$\aligned \E|\Delta_q(I_t^{11}-\tilde b_2^\eps Z)|^2
\lesssim &\,\sum_{k_3}\frac{\theta(2^{-q}k_{3})}{|k_3|^{2-\frac\kappa2}}\bigg(\sum_{k_1,k_2}\frac{1}{\la k_1\ra^2\la k_2\ra^2(\la k_{[123]}\ra^2+\la k_1\ra^2+\la k_2\ra^2)^{1+\frac\kappa8}}\bigg)^2
\\&+\sum_{k_3}  \frac{\theta(2^{-q}k_{3})}{\la k_3\ra^2}\bigg(\sum_{k_1,k_2}\frac{\widehat{v^\eps}(k_{[12]})h_1(k_1,k_2,k_3)}{\la k_1\ra^2\la k_2\ra^2}\bigg)^2,\endaligned$$
where
\begin{align*}
	h_1(k_1,k_2,k_3)\eqdef\widehat{v^\eps}(k_{[23]})L(k_{[123]})\psi_\circ(k_{[123]},k_{[12]})-\widehat{v^\eps}(k_{2})L(k_{[12]})\psi_\circ(k_{[12]},k_{[12]}),
\end{align*}
and
$$L(k)\eqdef\frac{1-e^{-t(\la k\ra^2+\la k_1\ra^2+\la k_2\ra^2)}}{\la k\ra^2+\la k_1\ra^2+\la k_2\ra^2}.$$
The term $\widehat{v^\eps}(k_{2})L(k_{[12]})\psi_\circ(k_{[12]},k_{[12]})$ comes from  the renormalization counterterm $\tilde b_2^\eps Z$.
We rewrite $h_1(k_1,k_2,k_3)$ as
\begin{align*}
	&\big(\widehat{v^\eps}(k_{[23]})-\widehat{v^\eps}(k_{2})\big)L(k_{[123]})\psi_\circ(k_{[123]},k_{[12]})+\widehat{v^\eps}(k_{2})\big(L(k_{[123]})-L(k_{[12]})\big)\psi_\circ(k_{[123]},k_{[12]})
	\\&+\big(\psi_\circ(k_{[123]},k_{[12]})-\psi_\circ(k_{[12]},k_{[12]})\big)L(k_{[12]})\widehat{v^\eps}(k_{2}),
\end{align*}
which by interpolation is bounded
\begin{align*}&\eps^{\frac{\kappa}2} |k_3|^{\frac{\kappa}4}\Big(\frac1{(\eps|k_{[23]}|)^{\frac\kappa4}}+\frac1{(\eps|k_{2}|)^{\frac\kappa4}}\Big)\frac1{|k_{[123]}|^2+\la k_1\ra^2+\la k_2\ra^2} \\&+|k_3|^{\frac{\kappa}4}\frac1{(|k_1|+|k_2|)^{2+\frac{\kappa}4}+1}+\Big(\frac{|k_3|}{|k_{[12]}|}\Big)^{\frac\kappa4} L(k_{[12]}).
\end{align*}
We then use Lemma \ref{lem:sum} to have that $\E|\Delta_q(I_t^{11}-\tilde b_2^\eps Z)|^2$ is bounded by
$$\sum_{k_3} \frac{\theta(2^{-q}k_{3})}{\la k_3\ra^{2-\frac{\kappa}2}}\lesssim 2^{q(1+\frac{\kappa}2)}.$$
For  $I_t^{12}-\tilde b_3^\eps Z$ similarly we have
$$\aligned\E|\Delta_q(I_t^{12}-\tilde b_3^\eps Z)|^2
\lesssim&\,
\sum_{k_1}\frac{\theta(2^{-q}k_{1})}{|k_1|^{2-\frac\kappa2}}\bigg(\sum_{k_3,k_2}\frac{1}{\la k_3\ra^2\la k_2\ra^2(\la k_{[123]}\ra^2+\la k_3\ra^2+\la k_2\ra^2)^{1+\frac\kappa8}}\bigg)^2
\\&+\sum_{k_1}  \frac{\theta(2^{-q}k_{1})}{\la k_1\ra^2}\bigg(\sum_{k_2,k_3}\frac{h_2(k_1,k_2,k_3)}{\la k_2\ra^2\la k_3\ra^2}\bigg)^2,\endaligned$$
where
\begin{align*}
	h_2(k_1,k_2,k_3)\eqdef&\,\widehat{v^\eps}(k_{[12]})^2L_1(k_{[123]})\psi_\circ(k_{[123]},k_{[23]})-\widehat{v^\eps}(k_{2})^2L_1(k_{[23]})\psi_\circ(k_{[23]},k_{[23]}),
\end{align*}
and
$$L_1(k)\eqdef\frac{1-e^{-t(\la k\ra^2+\la k_2\ra^2+\la k_3\ra^2)}}{\la k\ra^2+\la k_2\ra^2+\la k_3\ra^2}.$$
Here $\widehat{v^\eps}(k_{2})^2L_1(k_{[23]})$ arises from   the renormalization counterterms $\tilde b_3^\eps Z$.
We write $h_2$ as
\begin{align*}
	&(\widehat{v^\eps}(k_{[12]})^2-\widehat{v^\eps}(k_{2})^2)L_1(k_{[123]})\psi_\circ(k_{[123]},k_{[23]})+\widehat{v^\eps}(k_{2})^2(L_1(k_{[123]})-L_1(k_{[23]}))\psi_\circ(k_{[123]},k_{[23]})
	\\&+(\psi_\circ(k_{[123]},k_{[23]})-\psi_\circ(k_{[23]},k_{[23]}))L_1(k_{[23]})\widehat{v^\eps}(k_{2})^2.
\end{align*}
We subsequently apply analogous calculation to those used for  $\E|\Delta_q(I_t^{11}-\tilde b_2^\eps Z)|^2$ to derive the same bound for $\E|\Delta_q(I_t^{12}-\tilde b_3^\eps Z)|^2$.

Combining the above probabilistic bounds and  using Gaussian hypercontractivity and Kolmogorov's continuity criterion and the Besov embedding Lemma \ref{lem:emb}, we obtain the result in Proposition \ref{thcomm1}.

\section{Bounds on classical partition functions via variational approach }\label{sec:part}

In this section we employ the variational approach developed in \cite{BG18}  to derive uniform bounds on the partition functions of $\nu^\eps$ and its finite-dimensional approximations. As a byproduct, we can also derive tightness of $\nu^\eps, \eps>0$ in $\bC^{-\frac12-\kappa}, \kappa>0$ (see Theorem \ref{th:partition-revised} below). The key technique relies on the Bou\'{e}-Dupuis variational formula (Lemma \ref{lem:B-D}), which allows us to reformulate the partition functions in terms of a stochastic control problem. This transformation reduces the problem to deriving analytic estimates similar to the $L^2$-energy estimates in Section \ref{sec:en}. However, unlike the approach in Section \ref{sec:uni}, we cannot directly employ Schauder estimates for the high-frequency components. Instead, we exploit carefully chosen commutators and cancelations to obtain uniform bounds.
Furthermore, due to the weak dissipation arising from the convolution in the nonlinearity, we introduce additional decompositions to derive uniform analytic estimates (see Lemma \ref{lem:E0} below). In this section we assume that $v$  satisfies $\textbf{(Hv)}$.

We begin by introducing a cut-off function:  Let $\chi:\mR^+\to[0,1]$ be a smooth, non-increasing function with $\chi(y)=1$ for $0\leq y\leq 1/2$ and $\chi(y)=0$ for $y\geq1$. We also set for $t\geq0$
$$\chi_t(k)\eqdef \chi\Big(\frac{\la k\ra}{ t+1}\Big),\qquad \sigma_t(k)\eqdef \Big(\frac{\d}{\d t}\chi_t^2(k)\Big)^{\frac12}. $$
We then consider the projection operator $P_N=\chi_N(\nabla)=\mathcal{F}^{-1}(\chi_N\mathcal{F})$ for $N\in\mathbb{N}$ and the finite-dimensional approximation of $\nu^\eps$ given by
\begin{equs}\label{eq:def-mu-eps-N}
\text{}	\d \mu_N^\eps(\Psi)\eqdef \frac1{\sZ_{\eps,N}}\exp\Big(-\cD(P_N\Psi)\Big)\d \mu_0(\Psi),\qquad \sZ_{\eps,N} \eqdef\int \exp\Big(-\cD(P_N\Psi)\Big)\d \mu_0(\Psi),
\end{equs}
Here we recall that
$$ \cD[u] = \frac12\int :\!{|u(x)|^2}\!: v^\eps(x-y) :\! {|u(y)|^2}\!: \d x \d y  - {(a^\eps-6 b^\eps-m+1)\int_{\mathbb T^3} :\!{|u(x)|^2}\!: \d x}.
$$ The main aim of this section is to prove the following result.
\bt\label{th:partition-revised}
For $f:\bC^{-\frac12-\kappa}\to \mR$ with at most linear growth, it holds that
\begin{equs}[mom:muN-eps-revised]
\sup_{\eps\in(0,1),N\geq \eps^{-2-\kappa}}\int\exp(f({P_N}\Psi))\d \mu_N^\eps(\Psi)<\infty.
\end{equs}
Furthermore, for any finite dimensional projection $\widetilde P$ and $c_0, c_1\in\mR$,  it holds that
\begin{equs}[mom:mutildeP-revised]
	\Big|\log\int \exp\Big(-\cD(P_N\Psi) +c_0\la \widetilde P\Psi,\Psi\ra + c_1 \int \Wick{|P_N\Psi|^2} \Big)\,\dif \mu_0(\Psi) - \log \sZ_{\eps,N} \Big|\lesssim C_{\widetilde P}+1,
\end{equs}
with $ C_{\widetilde P}$ being a constant depending on $\widetilde P$.
\et

To prove this result, we modify $\mu_N^\eps$ to the following probability measure $\tilde \mu^\eps_N$:
\begin{align*}
	\d\tilde{ \mu}_N^\eps(\Psi)\eqdef \frac1{\widetilde\sZ_{\eps,N}}\exp\Big(-\cV^{\eps,N}(P_N\Psi)\Big)\d \mu_0(\Psi),\qquad \widetilde\sZ_{\eps,N}\eqdef\int \exp\Big(-\cV^{\eps,N}(P_N\Psi)\Big)\d \mu_0(\Psi),
\end{align*}
where we set the renormalized potential energy given by
\begin{equs}
	\cV^{\eps,N}(\Psi)\eqdef \frac12\int (v^\eps*\Wick{|\Psi|^2})\Wick{|\Psi|^2}\d x-(a^{\eps,N}-6b^{\eps,N}-m+1)\int\Wick{|\Psi|^2}\d x-c^{\eps,N},
\end{equs}
with the renormalization constant $a^{\eps,N}, b^{\eps,N}$ defined in
\begin{equs}[def:re-con-8]
	a^{\eps,N}\eqdef&\sum_{k\in\mZ^3}\frac{\chi_N(k)^2\widehat{v^\eps}(k)}{(2\pi)^3\la k\ra^2},
	\\ 6b^{\eps,N}\eqdef&\sum_{k_1,k_2\in\mZ^3}\frac{\widehat{v^\eps}(k_1)^2+\widehat{v^\eps}(k_2)\widehat{v^\eps}(k_1)}{(2\pi)^6\la k_1\ra^2\la k_2\ra^2\la k_1+k_2\ra^2}\chi_N(k_1)^2\chi_N(k_2)^2\chi_N(k_1+k_2)^2,
\end{equs}
and $c^{\eps,N}$ given in \eqref{def:cepsT} below.

The difference between $\tilde \mu_N^\eps$ and $\mu_N^\eps$ lies in the renormalization constants in $\cD[u]$ and $\cV^{\eps,N}$: the constants $a^{\eps}$ and $b^{\eps}$ are replaced by $a^{\eps,N}$ and $b^{\eps,N}$, which are more suitable renormalization constants for $P_N \Psi$.
In the following, we first establish uniform bounds under the measure $\tilde \mu_N^\eps$. These bounds will then be used to prove Theorem \ref{th:partition-revised}.
It is straightforward to observe that $\widetilde\sZ_{\eps,N}$ depends on the choice of $c^{\eps,N}$. In what follows, we will select an appropriate $c^{\eps,N}$, as defined in \eqref{def:cepsT} below, to ensure that $\widetilde\sZ_{\eps,N} \backsimeq 1$. More precisely, we obtain the following result.

\bt\label{th:partition} It holds that
\begin{equs}[bd:par]
	\sup_{N\in\mathbb{N},\eps\in(0,1)}|\log \widetilde\sZ_{\eps,N}|\leq C, \qquad \widetilde\sZ_{\eps,N}\backsimeq 1,
\end{equs}
with the implicit constant independent of $\eps, N$. Furthermore, for $f:\bC^{-\frac12-\kappa}\to \mR$ with at most linear growth, it holds that
\begin{equs}[mom:muN-eps]
	\sup_{N\in\mathbb{N},\eps\in(0,1)}\int\exp(f({P_N}\Psi))\d \tilde\mu_N^\eps(\Psi)<\infty.
\end{equs}
\et

The key tool for proving Theorem \ref{th:partition} is the  Bou\'e--Dupuis variational formula (see Lemma \ref{lem:B-D} below). To this end, let  $B$ be a complex valued $L^2$-cylindrical Wiener process, i.e. $B=B_1+iB_2$ with $B_i, i=1,2$, being independent $L^2$-cylindrical Wiener process. We set $(\cF_t)_{t\geq0}$ being the normal filtration generated by $B$.
We choose $W$ being the Gaussian process given by
\begin{align*}
	W(t)\eqdef \sum_n e_n W^{(n)}(t),\qquad W^{(n)}(t)\eqdef \int_0^t\frac{\sigma_s(n)}{\sqrt 2\la n\ra}\d B^{(n)}(s),
\end{align*}
with $B^{(n)}=\la B,e_n\ra$. The covariance of $W(N)$ is given by $(I-\Delta)^{-1}P_N^2$.
Compared to \cite{Bri, BG18}, in our case,  $W$ is complex valued and is given by
\begin{align*}
	W(t)=\int_0^t J_s\d B(s), \qquad \la J_tf,e_n\ra\eqdef \frac{\sigma_t(n)}{\sqrt 2\la n \ra}\la f,e_n\ra.
\end{align*}
We also define $$I_t(u)\eqdef \int_0^tJ_su_s\d s.$$
Let $\mathbf{H}_a$  denote the space of $(\cF_t)_{t\geq0}$-progressively measurable processes $u:\Omega\times [0,\infty)\times \mT^3\to\mR$, which belongs to $L^2([0,\infty)\times \mT^3)$.

\bl\label{lem:B-D} For  $1<p,q<\infty$ with $\frac1p+\frac1q=1$, let $F:C([0,\infty);C^\infty(\mT^3))\to \mR$ be  a measurable function. Suppose that
\begin{align*}
	\E[|F(W)|^p]<\infty,\qquad \E[e^{-q F(W)}]<\infty.
\end{align*}
It holds that
\begin{equs}[eq:8-B-D]
	-\log\E\Big[e^{-F(W)}\Big]=\inf_{u\in \mathbf{H}_a}\E\Big[F(W+I(u))+\frac12\int_0^\infty\|u_s\|_{L^2}^2\d s\Big].
\end{equs}

\el
Compared to the classical variational principle in \eqref{eq:zr-rel}, here we replace the relative entropy in \eqref{eq:zr-rel} with $\frac12\int_0^\infty\|u_s\|_{L^2}^2\d s$. It is important to note that the law of $W(\infty)$ is given by $\mu_0$. A key step in proving Theorems \ref{th:partition-revised} and \ref{th:partition} is to apply Lemma \ref{lem:B-D} in order to transform the exponential integral under $\mu_0$
in these theorems into the stochastic control problem appearing on the RHS of \eqref{eq:8-B-D}.

Since we are working with the finite-dimensional approximation $\mu^\eps_N$, we introduce the following cut-off:
\begin{align*}
	W^N(t)\eqdef P_{N}W(t), \qquad J_t^N\eqdef P_NJ_t, \qquad I_t^N(u)\eqdef P_NI_t(u)=\int_0^tJ_s^Nu_s\d s.
\end{align*}
We note that
\begin{equs}[eq:8-Wt]
	W^N(t)=W^N(2N+2),\qquad J_t^N=0,\qquad I_t^N=I_{2N+2}^N,\qquad  \text{for  } t\geq 2N+2,\end{equs}
and
\begin{align*}
	I_\infty^N(u)=I_{2N+2}^N(u)=\int_0^{2N+2}J_s^Nu_s\d s.
\end{align*}
We also introduce the following suitable regularization for $t\geq0$
\begin{align*}
	I_t^{N,\flat}(u)=\widetilde\chi_tI_t^{N}(u)=\widetilde\chi_tI_s^N(u),\qquad s\geq t,
\end{align*}
for $\widetilde\chi_t=\widetilde {\chi}(\frac{\la\nabla\ra}{t+1})=\mathcal{F}^{-1}(\widetilde{\chi}(\frac{\la\cdot\ra}{t+1})\mathcal{F})$ with $\widetilde{\chi}$ being smooth function on $\mR^+$ such that
\begin{align*}
	\widetilde{\chi}(y)=1 \text{ for }0\leq y\leq \frac14, \qquad \widetilde{\chi}(y)=0, \text{ for } y\geq \frac13.
\end{align*}

\subsection{Stochastic objects}\label{sec:sto-W}
To prove Theorem \ref{th:partition} we choose $F$ in Lemma \ref{lem:B-D} as $\cV^{\eps,N}$, and we need to calculate $\E[\cV^{\eps,N}(W^N+I^N(u))]$. This calculation yields several stochastic objects arising from the binomial expansion of $\cV^{\eps,N}(W^N+I^N(u))$. In this section, we introduce the stochastic objects that will be utilized later.

Before proceeding we  introduce the following renormalization constants
\begin{equs}[eq:ren-con-bi]
	\begin{array}{lllllll}
		b_1^{\eps,N}&\eqdef&\sum_{k_1,k_2}\tilde b^{\eps,N}(k_1,k_2)\widehat{v^\eps}(k_1+k_2)^2,& b_2^{\eps,N}&\eqdef&\sum_{k_1,k_2}\tilde{b}^{\eps,N}(k_1,k_2)\widehat{v^\eps}(k_1+k_2)\widehat{v^\eps}(k_1),\\
		b_3^{\eps,N}&\eqdef&\sum_{k_1,k_2}\tilde{b}^{\eps,N}(k_1,k_2)\widehat{v^\eps}(k_1)^2,&
		b_4^{\eps,N}&\eqdef&\sum_{k_1,k_2}\tilde{b}^{\eps,N}(k_1,k_2)\widehat{v^\eps}(k_2)\widehat{v^\eps}(k_1),
	\end{array}
\end{equs}
for
\begin{align*}
	\tilde{b}^{\eps,N}(t,k_1,k_2)\eqdef&\,\frac{\chi_N(k_1)^2\chi_N(k_2)^2\chi_N(k_1+k_2)^2}{(2\pi)^6\la k_1\ra^2\la k_2\ra^2\la k_1+k_2\ra^2}\int_0^t \chi_s(k_1)^2\chi_s(k_2)^2\sigma_s(k_1+k_2)^2\,\dif s,\\\tilde{b}^{\eps,N}(k_1,k_2)\eqdef&\, \tilde{b}^{\eps,N}(\infty,k_1,k_2).
\end{align*}
Similarly we can define $b_i^{\eps,N}(t)$ with $\tilde{b}^{\eps,N}(k_1,k_2)$ in \eqref{eq:ren-con-bi} replaced by $\tilde{b}^{\eps,N}(t,k_1,k_2)$.

We first use symmetry to compare the renormalization constant $b^{\eps, N}$ introduced in \eqref{def:re-con-8} with $b_i^{\eps,N}$. Additionally, we compare $b^{\eps,N}$ with $b^\eps$ in \eqref{eq:counterterms} in the following results.

\bl\label{lem:b} It holds that
\begin{equs}[eq:re-b]
	6b^{\eps,N}=b_1^{\eps,N}+2b^{\eps,N}_2+2b^{\eps,N}_3+b^{\eps,N}_4,\end{equs}
and $6b^{\eps,N}\to 6b^{\eps}$, $N\to\infty$.
\el
\begin{proof}
	Recall $F$ defined in \eqref{def-F}.
	Using symmetry we obtain
	\begin{align*}
		&\text{ RHS of \eqref{eq:re-b}}=\,\sum_{k_1,k_2} F(k_1,k_2)\tilde{b}^{\eps,N}(k_1,k_2)
		\\=&\,\frac13\sum_{k_1,k_2}F(k_1,k_2)\frac{\chi_N(k_1)^2\chi_N(k_2)^2\chi_N(k_1+k_2)^2}{(2\pi)^6\la k_1\ra^2\la k_2\ra^2\la k_1+k_2\ra^2}\int_0^\infty \partial_s( \chi_s(k_1)^2\chi_s(k_2)^2\chi_s(k_1+k_2)^2)\,\dif s
		\\=&\,\frac13\sum_{k_1,k_2}F(k_1,k_2)\frac{\chi_N(k_1)^2\chi_N(k_2)^2\chi_N(k_1+k_2)^2}{(2\pi)^6\la k_1\ra^2\la k_2\ra^2\la k_1+k_2\ra^2}
		=6b^{\eps,N}.
	\end{align*}
	Hence, the first result follows. Letting $N\to\infty$ we  obtain $6b^{\eps,N}\to 6b^\eps$.

\end{proof}

Based on Lemma \ref{lem:b} we can also define $b^{\eps,N}(t), t\geq0$ given by
\begin{equs}[def:bepsNt]
	6b^{\eps,N}(t)\eqdef b_1^{\eps,N}(t)+2b^{\eps,N}_2(t)+2b^{\eps,N}_3(t)+b^{\eps,N}_4(t).
\end{equs}
We have $b^{\eps,N}=b^{\eps,N}(\infty)$. By direct calculation we have for $t\geq0$
\begin{equs}[bd8:b]
	|b^{\eps,N}|\lesssim \log N,\qquad |b^{\eps,N}(t)|\lesssim \log (t+1),
\end{equs}
with the proportional constant independent of $\eps$, $N$ and $t$.

In what follows, we introduce the stochastic objects as in Section \ref{sec:ren}, where $W$ plays the same role as $Z$. We also adopt the graph notations introduced in Section \ref{sec:ren}. Additionally, we write $W_t^N=W^N(t)$ and define
\begin{align*}
	\begin{array}{llllllll}
		\cW^{\<2m>}_{N}(t)&\eqdef& |W^N_t|^2-a_t^N, & \cW^{\<2vm>}_{\eps,N}(t)&\eqdef&
		v_\eps* \cW_N^{\<2m>}(t),
		\\
		\cW_{\eps,N}^{\<3vm>}(t)	&\eqdef& \cW^{\<2vm>}_{\eps,N}(t) W^N_t-\cM^N_t W^N_t,
		%	\\\Wick{v^\eps*|W_t^T|^2 f}\eqdef  v^\eps*	\Wick{|W_t^T|^2}f-\cM_t^Tf,
		\\\cW^{\<2>}_{y,N}(t)&\eqdef& \tau_yW^N_tW^N_t,& \cW^{\<2m>}_{y,N}(t)&\eqdef& \tau_y\overline{W^N_t}W^N_t-\E(\tau_y\overline{W^N_t}W^N_t),
	\end{array}
\end{align*}
%and
%\begin{align*}
%\cW_{\eps,N}^{\<22oc-W>}(t)\eqdef  \cW^{\<2vm>}_{\eps,N}(t)\cW^{\<2m>}_{N}(t)-\cM^N_tW^N_t \overline{W^N_t}-\cM^N_t\overline{W^N_t}W^N_t+2d_t^{\eps,N},
%\end{align*}
where
$
a_t^N\eqdef \frac1{(2\pi)^3}\sum_{k\in\mZ^3}\frac{\chi_t^N(k)^2}{\la k\ra^2}$ for $\chi_t^N=\chi_t\chi_N$, and
\begin{align*}
	\la{\cM_t^Nf},e_k\ra\eqdef \frac1{(2\pi)^3}\sum_{k_1\in\mZ^3}\widehat{v^\eps}(k+k_1)\frac{\chi_t^N(k_1)^2}{\la k_1\ra^2}\la f,e_k\ra.
\end{align*}
As mentioned in Remark \ref{C1C2e}, to establish a connection with the quantum many-body problem, we need to transform the usual Wick product defined by $\cM^N$ to a mass renormalization $a^{\eps,N}$.  To achieve this, as in \eqref{eq:mainn}, we introduce
\begin{align*}
	\cR_t^N f=a^{\eps,N}_tf-\cM_t^Nf,\qquad a^{\eps,N}_t\eqdef \sum_{k\in\mZ^3}\frac{\chi_t^N(k)^2\widehat{v^\eps}(k)}{\la k\ra^2(2\pi)^3}.
\end{align*}
Here $\cR^N$ is the finite dimensional cut-off of the operator $\cR$ introduced in \eqref{def:cR}.
Then it is easy to see
$a^{\eps,N}=a^{\eps,N}_\infty$ with $a^{\eps,N}$ given in \eqref{def:re-con-8}.
Using the same argument as Lemma \ref{comnew} we also have the following result.
\bl\label{comnew1} Let $f\in \bB_p^\alpha,\alpha\in\mathbb{R},p\in[1,\infty]$. Then for $\eta\in (0,1)$
\begin{align*}
	\|\cR_t^N f\|_{\bB_p^{\alpha-1-\eta}}\lesssim \eps^{\eta}\|f\|_{\bB_p^{\alpha}}.
\end{align*}
Here the implicit constant is independent of $\eps,t,N$.
\el
We also define
$\cY_t^N\eqdef \cW_{\eps,N}^{\<3vm>}(t)-\cR_\infty^N W^N_t,$
which is an $(\cF_t)_{t\geq0}$-martingale. We set
$$\mY_t^N\eqdef 2I_t^N(J^N\cY^N).$$
We can view $\mY_t^N$ as a counterpart of $Y_t$  introduced in \eqref{def:Y} for stochastic quantization.
Additionally, we introduce the following stochastic terms using the renormalization constants defined in \eqref{eq:ren-con-bi}:
\begin{equs}[sto-sec:8-W0]	\cW^{\<3v02vm-W1>}_{\eps,N}(\infty)\eqdef&\,{\mY_\infty^N}\circ \cW^{\<2vm>}_{\eps,N}(\infty)-(b_1^{\eps,N}+b_2^{\eps,N})W^N_\infty,\qquad \cW^{\<3v02vm-W>}_{\eps,N}\eqdef\overline{\cW}^{\<3v02vm-W1>}_{\eps,N}.
	\\\cW_{\eps,N}^{\<3v0m1cci1>}(\infty)\eqdef&\int v^\eps(y)\cW^{\<2>}_{y,N}(\infty)\circ \tau_y\overline{\mY_\infty ^N}\,\dif y-(b_3^{\eps,N}+b_4^{\eps,N})W^N_\infty,\\
	%$$\int v^\eps(y):\tau_y \overline{W_\infty^N}\overline{W_\infty^N}:\circ \tau_y{\mY_\infty ^N}\,\dif y-(b_3^{\eps,N}+b_4^{\eps,N})\overline{W_\infty^N},$$
	\cW^{\<3v0m1ci1>}_{\eps,N}(\infty)\eqdef&\int v^\eps(y)\cW^{\<2m>}_{y,N}(\infty)\circ \tau_y{\mY_\infty ^N}\,\dif y-(b_2^{\eps,N}+b_3^{\eps,N})W_\infty^N.
\end{equs}
When deriving suitable moment bounds for the terms in \eqref{sto-sec:8-W0} involving $\cR^N_\infty W^N_\infty$, we encounter constants $C_i^{\eps,N}$, which play the same role as $b_i^{\eps,N}$ and correspond to $c_i^\eps$ in Proposition \ref{prop:Zn1}. As in the proof of Proposition \ref{prop:Zn}, these constants satisfy $C_i^{\eps,N} \simeq 1$, with the proportionality constant independent of both $N$ and $\eps$. Therefore, there is no need to subtract them.

Furthermore, we introduce the following $\eps,N$-dependent stochastic objects:
\begin{equs}[sto:W-n]
	\begin{array}{lllllll}
		\cW_{\eps,N}^{\<3v0m1ci-Wn>}(\infty)&\eqdef&(v^\eps*\mY_\infty^N)\circ W^N_\infty,& \cW_{\eps,N}^{\<3v0m1cci-Wn>}(\infty)&\eqdef&(v^\eps*\overline{\mY_\infty^N})\circ W^N_\infty,\\
		\cW_{\eps,N}^{\<3v0m1ci2-W>}(\infty)&\eqdef&\mY_\infty^N\circ \overline{W^N_\infty},& \cW_{\eps,N}^{\<3v0m1cci2-W>}(\infty)&\eqdef&\overline{\mY_\infty^N}\circ W^N_\infty,
	\end{array}
\end{equs}
and
\begin{equs}[sto-sec:8-W]
	\cW_{\eps,N}^{\<22vm-W>}(t)\eqdef&\,\, 2[(J_t^N)^2 \cW^{\<2vm>}_{\eps,N}(t)]\circ  \cW^{\<2vm>}_{\eps,N}(t)-\dot{b}_1^{\eps,N}(t).
%	\\
%	\cW_{\eps,N}^{\<22ccvm-W>}(t)\eqdef&\,\, 2(J_t^N)^2 \cW^{\<2vm>}_{\eps,N}(t)\circ \int v^\eps(y){\overline{\cW}^{\<2>}_{y,N}(t)}\d y,
%	%$$J_t^N \cW^{\<2vm>}_{\eps,N}(t)\circ J_t^N\int v^\eps(y)\overline{\cW^{\<2>}_{y,N}(t)}\d y,$$
%	%$$J_t^N \cW^{\<2vm>}_{\eps,N}(t)\circ J_t^N\int v^\eps(y)\cW^{\<2m>}_{y,N}(t)\d y-\dot{b}_2^{\eps,N}(t),$$
%	\\\cW_{\eps,N}^{\<22ccvm-W1>}(t)\eqdef&\,\, 2(J_t^N)^2 \cW^{\<2vm>}_{\eps,N}(t)\circ \int v^\eps(y)\overline{\cW}^{\<2m>}_{y,N}(t)\d y-\dot{b}_2^{\eps,N}(t),
%	\\\cW_{\eps,N}^{\<22ccvm-W2>}(t)\eqdef&\,\, 2(J_t^N)^2\int v^\eps(y){\cW^{\<2>}_{y,N}(t)}\d y\circ \int v^\eps(y)\overline{\cW}^{\<2>}_{y,N}(t)\d y-\dot{b}_3^{\eps,N}(t)-\dot{b}_4^{\eps,N}(t),\\
%	\cW_{\eps,N}^{\<22ccvm-W3>}(t)\eqdef&\,\, 2(J_t^N)^2\int v^\eps(y){\cW^{\<2m>}_{y,N}(t)}\d y\circ \int v^\eps(y)\overline{\cW}^{\<2>}_{y,N}(t)\d y,\\
%	\cW_{\eps,N}^{\<22ccvm-W4>}(t)\eqdef&\,\, 2(J_t^N)^2\int v^\eps(y){\cW^{\<2m>}_{y,N}(t)}\d y\circ \int v^\eps(y)\overline{\cW}^{\<2m>}_{y,N}(t)\d y-\dot{b}_3^{\eps,N}(t)
	\end{equs}
Here $\dot{b}_1^{\eps,N}(t)\eqdef\sum_{k_1,k_2}\dot{\tilde{b}}_t^{N}(k_1,k_2)\widehat{v^\eps}(k_1+k_2)^2$ and
 we also introduce
\begin{equs}[def:dotbi]
	\begin{array}{llllll}
			\dot{b}_2^{\eps,N}(t)&\eqdef&\sum_{k_1,k_2}\dot{\tilde{b}}_t^{N}(k_1,k_2)\widehat{v^\eps}(k_1+k_2)\widehat{v^\eps}(k_1),&\dot{b}_3^{\eps,N}(t)&\eqdef&\sum_{k_1,k_2}\dot{\tilde{b}}_t^{N}(k_1,k_2)\widehat{v^\eps}(k_1)^2,\\
		\dot{b}_4^{\eps,N}(t)&\eqdef&\sum_{k_1,k_2}\dot{\tilde{b}}_t^{N}(k_1,k_2)\widehat{v^\eps}(k_2)\widehat{v^\eps}(k_1),&&&
	\end{array}
\end{equs}
with $$\dot{\tilde{b}}_t^{N}(k_1,k_2)\eqdef\frac{\chi_N(k_1)^2\chi_N(k_2)^2\chi_N(k_1+k_2)^2}{(2\pi)^6\la k_1\ra^2\la k_2\ra^2\la k_1+k_2\ra^2} \chi_t(k_1)^2\chi_t(k_2)^2\sigma_t(k_1+k_2)^2.$$
We also set
\begin{equs}[def:dotbepsN]
	6\dot{b}^{\eps,N}\eqdef \dot{b}_1^{\eps,N}+2\dot{b}_2^{\eps,N}+2\dot{b}_3^{\eps,N}+\dot{b}_4^{\eps,N}.
\end{equs}
Additionally, we introduce the following $y$-dependent stochastic objects, which can be regarded as the counterparts  to the stochastic objects \eqref{sto:ny2} in Section \ref{sec:ren}:
\begin{equs}
	\cW_{y,\eps,N}^{\<3v0m1ci-W>}(\infty)&\eqdef&(\tau_y\mY_\infty^N)\circ W^N_\infty,\qquad \cW_{y,\eps,N}^{\<3v0m1cci-W>}(\infty)&\eqdef&(\tau_y\overline{\mY^N_\infty})\circ W^N_\infty,
\end{equs}
and
\begin{equs}[sto-sec:8-Wy]
	\cW_{y,\eps,N}^{\<22vm-Wy>}(t)\eqdef&\,\, 2[(J_t^N)^2 \cW^{\<2vm>}_{\eps,N}(t)]\circ {\overline{\cW}^{\<2>}_{y,N}(t)},
	\\\cW_{y,\eps,N}^{\<22vm-Wy1>}(t)\eqdef&\,\, 2[(J_t^N)^2 \cW^{\<2vm>}_{\eps,N}(t)]\circ \overline{\cW}^{\<2m>}_{y,N}(t)-\dot{b}_2^{\eps,N}(t,y),
	\\\cW_{y,y_1,N}^{\<22ccvm-Wy>}(t)\eqdef&\,\, 2[(J_t^N)^2{\cW^{\<2>}_{y,N}(t)}]\circ \overline{\cW}^{\<2>}_{y_1,N}(t)-\dot{b}_3^{N}(t,y,y_1)-\dot{b}_4^{N}(t,y,y_1),\\
	\cW_{y,y_1, N}^{\<22ccvm-Wy1>}(t)\eqdef&\,\, 2[(J_t^N)^2{\cW^{\<2m>}_{y,N}(t)}]\circ \overline{\cW}^{\<2>}_{y_1,N}(t),\\
	\cW_{y,y_1,N}^{\<22ccvm-Wy2>}(t)\eqdef&\,\, 2[(J_t^N)^2{\cW^{\<2m>}_{y,N}(t)}]\circ \overline{\cW}^{\<2m>}_{y_1,N}(t)-\dot{b}_3^{N}(t,y,y_1).\end{equs}
Here
\begin{align*}
	\dot{b}_2^{\eps,N}(t,y)\eqdef\sum_{k_1,k_2}\dot{\tilde{b}}_t^{N}(k_1,k_2)\widehat{v^\eps}(k_1+k_2)e^{-\imath k_1\cdot y},
\end{align*}
\begin{equs}[def:dotbiN]
	\dot{b}_3^{N}(t,y,y_1)\eqdef&\sum_{k_1,k_2}\dot{\tilde{b}}_t^{N}(k_1,k_2)e^{-\imath k_1\cdot y}e^{-\imath k_1\cdot y_1},
	\qquad\dot{b}_4^{N}(t,y,y_1)\eqdef\sum_{k_1,k_2}\dot{\tilde{b}}_t^{N}(k_1,k_2)e^{-\imath k_1\cdot y}e^{-\imath k_2\cdot y_1}.\end{equs}

\bp\label{prop:W} It holds that for $p\geq1$, $\kappa>0$
\begin{align*}
	\sup_{N}\E\|W^N\|_{C\bC^{-\frac12-\kappa}}^p+\sup_{\eps,N}\E\|\cW^{\<2vm>}_{\eps,N}\|^p_{C\bC^{-1-\frac\kappa2}}+	\sup_{\eps,N}\E\|\mY^N\|_{C\bC^{\frac12-\kappa}}^p\lesssim1,
\end{align*}
and
\begin{align*}
	\sup_{\eps,N}\E\|\cW_{\eps,N}^{\<22vm-W>}\|_{L^1\bC^{-\kappa}}^p+	\sup_{\eps,N}\E \|\cW^{\tau_\eps}_{\eps,N}(\infty)\|_{\bC^{-\frac12-\kappa}}^p+\sup_{\eps,N}\E\|\cW^{\tau_\eps^1}_{\eps,N}(\infty)\|_{\bC^{-\kappa}}^p\lesssim 1,
\end{align*}
for  $\cW^{\tau_\eps}_{\eps,N}\in\{\cW^{\<3v02vm-W>}_{\eps,N},\cW_{\eps,N}^{\<3v0m1cci1>},\cW^{\<3v0m1ci1>}_{\eps,N}\}$ and $\cW^{\tau_\eps^1}_{\eps,N}\in \{	\cW_{\eps,N}^{\<3v0m1ci-Wn>},$
$\cW_{\eps,N}^{\<3v0m1cci-Wn>},
\cW_{\eps,N}^{\<3v0m1ci2-W>},  \cW_{\eps,N}^{\<3v0m1cci2-W>}\}$. Furthermore, for $y$-dependent stochastic objects,  it holds that
\begin{align*}
	\sup_{\eps,N}\E\sup_y\|\cW^{\tau}_{y,N}\|^p_{C\bC^{-1-\frac\kappa2}}+\E \sup_y \|\cW^{\tau_1^\eps}_{y,\eps, N}(\infty)\|_{\bC^{-\kappa}}^p\lesssim1,
\end{align*}
for $\cW^{\tau}_{y,N}\in\{\cW^{\<2>}_{y,N}, \cW^{\<2m>}_{y,N}\}$ and $\cW^{\tau_1^\eps}_{y,\eps,N}\in 	\{\cW_{y,\eps,N}^{\<3v0m1ci-W>}, \cW_{y,\eps, N}^{\<3v0m1cci-W>}\}$,
and
\begin{align*}
	\E \Big(\int_0^\infty\sup_y\|\cW_{y,\eps,N}^\tau\|_{\bC^{-\kappa}}\d t\Big)^p+\E\Big(\int_0^\infty\sup_{y,y_1}\|\cW_{y,y_1,N}^{\tau_1}\|_{\bC^{-\kappa}}\d t\Big)^p\lesssim1,
\end{align*}
for $\cW_{y,\eps,N}^\tau\in\{\cW_{y,\eps,N}^{\<22vm-Wy>}, \cW_{y,\eps,N}^{\<22vm-Wy1>}\}$ and $\cW_{y,y_1,N}^{\tau_1}\in\{\cW_{y,y_1,N}^{\<22ccvm-Wy>}, 	\cW_{y,y_1,N}^{\<22ccvm-Wy1>}, 	\cW_{y,y_1,N}^{\<22ccvm-Wy2>}\}$.
\ep
\begin{proof}
	The proof follows a similar argument by employing chaos expansion and probabilistic calculations, as in Propositions \ref{prop:Z}--\ref{prop:Zny} and Section \ref{sec:sto}, with the modification that $ W^N(t) $ is a martingale.
	It is easy to verify that for any $\delta\in[0,1]$, $k\in\mZ^3$
	\begin{equs}[bd:sigma]
		\sigma_t(k)=\bigg(-2\chi_t(k)\chi_t'(k)\frac{\la k\ra}{(t+1)^2}\bigg)^{1/2}\lesssim \frac{\la k\ra^{\delta/2}}{(t+1)^{(1+\delta)/2}},\end{equs}
	which implies  uniform-in-time estimates. For further details, we refer to the calculations in \cite[Section 6]{BG18} and \cite[Section 2]{Bri}.
\end{proof}

In the following we fix $\kappa>0$ small enough and we use $\|\mW\|$ to denote the smallest number bigger than $1$ and
$$\|\cW^{\<2vm>}_{\eps,N}\|_{C\bC^{-1-\frac\kappa2}},\qquad \sup_y\|\cW^{\tau}_{y,N}\|_{C\bC^{-1-\frac\kappa2}},\qquad \|\mY^N\|_{C\bC^{\frac12-\kappa}},\qquad \|W^N\|_{C\bC^{-\frac12-\kappa}},$$
and
$$\|\cW_{\eps,N}^{\<22vm-W>}\|_{L^1\bC^{-\kappa}},\qquad \|\cW^{\tau_\eps}_{\eps,N}(\infty)\|_{\bC^{-\frac12-\kappa}},\qquad \|\cW^{\tau_\eps^1}_{\eps,N}(\infty)\|_{\bC^{-\kappa}},$$
as well as	 $$\int_0^\infty\sup_y\|\cW_{y,\eps,N}^\tau\|_{\bC^{-\kappa}}\d t,\qquad \int_0^\infty\sup_{y,y_1}\|\cW_{y,y_1,N}^{\tau_1}\|_{\bC^{-\kappa}}\d t,\qquad \sup_y \|\cW^{\tau_1^\eps}_{y,\eps, N}(\infty)\|_{\bC^{-\kappa}},$$ for $\cW^{\tau}_{y,N}$, $\cW^{\tau_{\eps}}_{\eps,N}$,  $\cW^{\tau^1_\eps}_{\eps,N}$ and $\cW_{y,\eps,N}^\tau$, $\cW_{y,y_1,N}^{\tau_1}$, $\cW^{\tau_1^\eps}_{y,\eps, N}$ in Proposition \ref{prop:W}. We also use $K(\|\mW\|)$ to denote the polynomials of $\|\mW\|$, which may change from line to line.

\subsection{Decomposition}\label{sec:dec8}

In this section we decompose the RHS of \eqref{eq:8-B-D}, with $F$ given by $\cV^{\eps,N}-f$ for $f$ in Theorem \ref{th:partition}, in terms of the stochastic objects introduced in Section \ref{sec:sto-W} via the binomial formula. We set
\begin{align*}
	\sV^{\eps,N}(u)\eqdef	\cV^{\eps,N}(W_\infty^N+I_\infty^N(u))+\frac12\int_0^\infty\|u\|_{L^2}^2\,\dif t-f(W_\infty^N+I_\infty^N(u)).
\end{align*}
We also choose
\begin{equs}[def:cepsT]
	c^{\eps,N}=&\,\,\frac12\E \int\cW^{\<2vm>}_{\eps,N}(\infty)\cW^{\<2m>}_{N}(\infty)\d x-2\int_0^\infty\E\|J_t^N\cY_t^N\|_{L^2}^2\,\dif t+	\E\int \cW^{\<2vm>}_{\eps,N}(\infty)|\mY_\infty^N|^2\,\dif x
	\\&+2\E\int v^\eps*\Re(W_\infty^N\overline{\mY_\infty^N})\Re(W_\infty^N\overline{\mY_\infty^N})\,\dif x-(6b^{\eps,N}+m-1)\cdot2\E\Re\int W_\infty^N \overline{\mY_\infty^N}\d x
	\\&-2\E\Re \int v^\eps*|\mY_\infty^N|^2 W_\infty^N\overline{\mY_\infty^N}\,\dif x-a^{\eps,N}\E\|\mY_\infty^N\|_{L^2}^2.
\end{equs}
We then have the following decomposition proposition.

\bp\label{prop:dec} Define $l^N_t(u)\in \mathbf{H}_a$ given by
\begin{equs}[def:ltT1]
	l_t^N(u)\eqdef&\,\,w_t+\sJ_t\eqdef u_t+2J_t^N\cY_t^N+\sJ_t,
	\\
	\sJ_t\eqdef&\,\,2J_t^N\big(	 \cW^{\<2vm>}_{\eps,N}(t)\succ{I_t^{N,\flat}(u)}\big)
	+2\int v^\eps(y)J_t^N\big(\cW^{\<2>}_{y,N}(t)\succ\tau_y\overline{I^{N,\flat}_t(u)}\big)\d y
	\\&+2\int v^\eps(y)J_t^N\big(\cW^{\<2m>}_{y,N}(t)\succ\tau_y{I^{N,\flat}_t(u)}\big)\d y,
\end{equs}
for $t\geq0$. Then for $c^{\eps,N}$ in \eqref{def:cepsT}, it holds that
\begin{equs}[dec:sV]
	\E [\sV^{\eps,N}(u)]=-\E[f(W_\infty^N+I_\infty^N(u))]+\sum_{i=0}^6\E[\cE_i]+\frac12\E\cV^\eps(I_\infty^N(u))+\frac12\int_0^\infty\E\|l^N_t(u)\|_{L^2}^2\,\dif t,
\end{equs}
with $\cV^\eps$ introduced in \eqref{def:veps}   and
\begin{align*}
	\cE_0\eqdef &\,\,2\Re \int v^\eps*|I_\infty^N(w)-\mY_\infty^N|^2 W_\infty^N\overline{I_\infty^N(w)}\,\dif x
	-2\Re \int v^\eps*|I_\infty^N(w)|^2 W_\infty^N\overline{\mY_\infty^N}\,\dif x\\&+4\int v^\eps*\Re(I_\infty^N(w)\overline{\mY_\infty^N}) \Re(W_\infty^N\overline{\mY_\infty^N})\,\dif x,
	\\
	\cE_1\eqdef&-2\Re	\int (\cW^{\<2vm>}_{\eps,N}(\infty)\prec\overline{\mY_\infty^N})I_\infty^N(w)\,\dif x -2\Re	\int \cW^{\<3v02vm-W>}_{\eps,N}(\infty)I_\infty^N(w)\,\dif x
	\\&-2\Re
	\int v^\eps(y)(\cW^{\<2>}_{y,N}(\infty)\prec\tau_y\overline{\mY_\infty^N})\overline{I_\infty^N(w)}\,\dif x\d y
	-2\Re\int \cW_{\eps,N}^{\<3v0m1cci1>}(\infty)\overline{I_\infty^N(w)}\,\dif x\d y
	\\&-2\Re
	\int v^\eps(y)(\cW^{\<2m>}_{y,N}(\infty)\prec\tau_y{\mY_\infty^N})\overline{I_\infty^N(w)}\,\dif x\d y
	-2\Re\int \cW_{\eps,N}^{\<3v0m1ci1>}(\infty)\overline{I_\infty^N(w)}\,\dif x\d y,
	\\
	\cE_2\eqdef&\,\,	\Re D(\overline{I_\infty^N(w)},\cW^{\<2vm>}_{\eps,N} ,I_\infty^N(w))
	+\Re\int v^\eps(y)	D(\overline{I_\infty^N(w)},\cW^{\<2>}_{y,N}(\infty),\tau_y\overline{I_\infty^N(w)})\,\dif y
	\\&+\Re\int v^\eps(y)	D(\overline{I_\infty^N(w)},\cW^{\<2m>}_{y,N}(\infty),\tau_y{I_\infty^N(w)})\,\dif y
	\\&+\Re\int (\cW^{\<2vm>}_{\eps,N}\prec\overline{I_\infty^N(w)})I_\infty^N(w)\,\dif x+\Re\int v^\eps(y)(\cW^{\<2>}_{y,N}(\infty)\prec\tau_y\overline{I_\infty^N(w)})\overline{I_\infty^N(w)}\,\dif x \d y
	\\&+\Re\int v^\eps(y)(\cW^{\<2m>}_{y,N}(\infty)\prec\tau_y{I_\infty^N(w)})\overline{I_\infty^N(w)}\,\dif x \d y,
	\\
	\cE_3\eqdef&-\la \cR^N_\infty I_\infty^N(w),I_\infty^N(w)\ra+2\Re \la \cR^N_\infty I_\infty^N(w),\mY^N_\infty\ra
	+	(m-1)(	2\Re\la W^{N}_\infty,I_\infty^N(w)\ra+\|I_\infty^N(u)\|^2_{L^2}),\\
	\cE_4\eqdef&\,\,2\Re	\int \big(\cW^{\<2vm>}_{\eps,N}(2N+2)\succ(\overline{I_{2N+2}^N(u)}-\overline{I_{2N+2}^{N,\flat}(u)})\big)I_{2N+2}^N(w)\,\dif x
	\\&+2\Re\int v^\eps(y)\big(\cW^{\<2>}_{y,N}({2N+2})\succ\tau_y(\overline{I^N_{2N+2}(u)}-\overline{I^{N,\flat}_{2N+2}(u)})\big)\overline{I_{2N+2}^N(w)}\,\dif x\d y
	\\&+2\Re\int v^\eps(y)\big(\cW^{\<2m>}_{y,N}({2N+2})\succ\tau_y({I^N_{2N+2}(u)}-{I^{N,\flat}_{2N+2}(u)})\big)\overline{I_{2N+2}^N(w)}\,\dif x\d y
	\\&+2\Re\int_0^{2N+2}\int \big(\cW^{\<2vm>}_{\eps,N}(t)\succ\overline{\p_t{I}^{N,\flat}_t(u)}\big){I_t^N(w)}\,\dif x\d y\d t
	\\&	+2\Re\int_0^{2N+2}\int v^\eps(y)\big(\cW^{\<2>}_{y,N}(t)\succ\tau_y\overline{\p_t{I}^{N,\flat}_t(u)}\big)\overline{I_t^N(w)}\,\dif x\d y\d t
	\\&	+2\Re\int_0^{2N+2}\int v^\eps(y)\big(\cW^{\<2m>}_{y,N}(t)\succ\tau_y{\p_t{I}^{N,\flat}_t(u)}\big)\overline{I_t^N(w)}\,\dif x\d y\d t,
	\\
	\cE_5\eqdef& \,\,12b^{\eps,N}\Re\la I^{N,\flat}_{2N+2}(u),I^N_{2N+2}(u)-I^{N,\flat}_{2N+2}(u)\ra+6b^{\eps,N}\|I^N_{2N+2}(u)-I^{N,\flat}_{2N+2}(u)\|_{L^2}^2\\&+
	12\Re\int_0^{2N+2}b^{\eps,N}(t)\la I^{N,\flat}_t(u),\p_t{I}^{N,\flat}_t(u)\ra\d t,
	\\\cE_6\eqdef& -\frac12\int_0^{{2N+2}}\la \sJ_t,\sJ_t\ra\,\dif t+	\int_0^{{2N+2}}6\dot{b}^{\eps,N}\|I^{N,\flat}_t(u)\|_{L^2}^2\d t,
	%\int_0^{2N+2}\int
	%J_t^N(\cW^{\<2vm>}_{\eps,N})\diamond	J_t^N(\cW^{\<2vm>}_{\eps,N}) |I_t^{N,\flat}(u)|^2\d x\d t
	%+\int_0^{2N+2}\sR(\cW^{\<2vm>}_{\eps,N},\cW^{\<2vm>}_{\eps,N},I_t^{N,\flat}(u),I_t^{N,\flat}(u))\d t,
\end{align*}
where $b^{\eps,N}, b^{\eps,N}(t)$ and $\dot{b}^{\eps,N}$ are introduced in \eqref{def:re-con-8}, \eqref{def:bepsNt} and \eqref{def:dotbepsN}, respectively.
%	for $\cW^\tau\in \{J_t^N(\cW^{\<2vm>}_{\eps,N})\diamond	J_t^N(\cW^{\<2vm>}_{\eps,N})\},$
\ep
\begin{proof} We begin by applying the binomial formula to obtain the following decomposition:
	\begin{equs}[dec:8-cV]
		\sV^{\eps,N}(u)
		=&\,\,2\Re\la \cY_\infty^N,I_\infty^N(u)\ra
		%-2\Re\la \cR^N_\infty W_\infty^N,I_\infty^N(u)\ra
		+\frac12\cV^\eps(I_\infty^N(u))+\frac12\int_0^\infty\|u\|_{L^2}^2\,\dif t-f(W_\infty^N+I_\infty^N(u))
		\\&+\frac12\int \cW^{\<2vm>}_{\eps,N}(\infty)\cW^{\<2m>}_{N}(\infty)\,\dif x-(a^{\eps,N}-6b^{\eps,N}-m+1)\int\cW^{\<2m>}_N(\infty)\d x-c^{\eps,N}
		\\&+2\Re \int v^\eps*|I_\infty^N(u)|^2 (W_\infty^N\overline{I_\infty^N(u)})\,\dif x\
		\\&+\la \cW^{\<2vm>}_{\eps,N}(\infty),|I_\infty^N(u)|^2\ra+(6b^{\eps,N}+m-1)(	2\Re\la W^{N}_\infty,I_\infty^N(u)\ra+\|I_\infty^N(u)\|^2_{L^2})
		\\&+2\int v^\eps*\Re(W_\infty^N\overline{I_\infty^N(u)})\Re(W_\infty^N\overline{I_\infty^N(u)})\,\dif x-\la\cM^N_\infty {I_\infty^N(u)},{I_\infty^N(u)}\ra-\la \cR^N_\infty I_\infty^N(u),I_\infty^N(u)\ra.
	\end{equs}
	Here we absorb $2a^{\eps,N}\Re\la  W_\infty^N, I^N_\infty(u)\ra$ into $2\Re\la \cY_\infty^N,I_\infty^N(u)\ra$.
	In the last line we write
	$$ a^{\eps,N}\la {I_\infty^N(u)},{I_\infty^N(u)}\ra=\la\cM^N_\infty {I_\infty^N(u)},{I_\infty^N(u)}\ra+\la \cR^N_\infty I_\infty^N(u),I_\infty^N(u)\ra,$$
	where
	$\la\cM^N_\infty {I_\infty^N(u)},{I_\infty^N(u)}\ra$ corresponds to the Wick renormalization in $v^\eps*\Re(W_\infty^N\overline{I_\infty^N(u)})\Re(W_\infty^N\overline{I_\infty^N(u)})$. Additionally, we incorporate the expectation of the first term in the second line into $c^{\eps, N}$, while the expectation of the second term in the second line is zero.
	%The basic idea is to use the dissipatioin from $\frac12\cV^\eps(I_\infty^T(u))$ and choose suitable $u$ to get a uniform control of the above terms. The estimates is like the energy estimates in Section \ref{sec:uni}.

	\textbf{Step 1.}
	We  further decompose  the terms in the first line  of the RHS of \eqref{dec:8-cV}. By applying It\^o's formula and \eqref{eq:8-Wt} to
	$\Re\la \cY_\infty^N,I_\infty^N(u)\ra$ we obtain
	\begin{align*}
		\Re\la \cY_\infty^N,I_\infty^N(u)\ra=\Re\int_0^{2N+2}\la \cY_t^N, J_t^Nu_t\ra\,\dif t+\Re\int_0^{2N+2}\la  I_t^N(u),\dif \cY_t^N\ra,
	\end{align*}
	with the second term being a martingale. Recall
	$
	u_t= w_t-2J_t^N\cY_t^N,
	$
	and substitute it into $2\Re\la \cY_\infty^N,I_\infty^N(u)\ra+\frac12\int_0^\infty\|u\|_{L^2}^2\,\dif t$, which can then be expressed as
	\begin{equs}[dec:8-1]
		%	&2\Re\la Y_\infty^N,I_\infty^N(u)\ra+\int_0^\infty\|u\|_{L^2}^2\,\dif t
		&-2\int_0^{\infty}\|J_t^N\cY_t^N\|_{L^2}^2\,\dif t +\frac12\int_0^\infty\|w\|_{L^2}^2\,\dif t+\text{martingale}.
	\end{equs}
	We  incorporate $-2\int_0^\infty\E\|J_t^N\cY_t^N\|_{L^2}^2\,\dif t$ into $c^{\eps,N}$. We will use $\frac12\int_0^\infty\|w\|_{L^2}^2\,\dif t$ in Step 3.
	
	\textbf{Step 2.} Next, we analyze  the contribution from the last two lines of \eqref{dec:8-cV}. We will prove that they are equal to
	\begin{align*}
		\sum_{i=1}^3\cE_i+\cE+\cQ^{\eps,N}+\text{martingales},
	\end{align*}
	with
	\begin{align*}
		\cE\eqdef&\,\,2\Re	\int (\cW^{\<2vm>}_{\eps,N}(\infty)\succ\overline{I_\infty^N(u)})I_\infty^N(w)\,\dif x+2\Re\int v^\eps(y)(\cW^{\<2>}_{y,N}(\infty)\succ\tau_y\overline{I^N_\infty(u)})\overline{I_\infty^N(w)}\,\dif x\d y
		\\&+2\Re\int v^\eps(y)(\cW^{\<2m>}_{y,N}(\infty)\succ\tau_y{I^N_\infty(u)})\overline{I_\infty^N(w)}\,\dif x\d y+6b^{\eps,N}	\|I_\infty^N(u)\|^2_{L^2},
	\end{align*}
	and $\cQ^{\eps,N}$ being random fields independent of $u$. We put $\E \cQ^{\eps,N}$ into $c^{\eps,N}$.
	
	We first use $I_\infty^N(u)=I_\infty^N(w)-\mY_\infty^N$ to write
	\begin{equs}[dec:bN]
		6b^{\eps,N}\cdot	2\Re\la W^{N}_\infty,I_\infty^N(u)\ra=6b^{\eps,N}\cdot	2\Re\la W^{N}_\infty,I_\infty^N(w)\ra-6b^{\eps,N}\cdot2\Re\la W^{N}_\infty,\mY_\infty^N\ra.
	\end{equs}
	The first term on the RHS of \eqref{dec:bN} serves as a renormalization counter-term below. We also incorporate  the last term in \eqref{dec:bN} into $\cQ^{\eps,N}$.
	
	We then consider $\la \cW^{\<2vm>}_{\eps,N}(\infty),|I_\infty^N(u)|^2\ra$ in the fourth line of \eqref{dec:8-cV} and use	$I_\infty^N(u)=I_\infty^N(w)-\mY_\infty^N$ to replace $I_\infty^N(u)$
	\begin{align*}
		&\la \cW^{\<2vm>}_{\eps,N}(\infty),|I_\infty^N(u)|^2\ra
		=
		\la\cW^{\<2vm>}_{\eps,N}(\infty),|\mY_\infty^N|^2\ra-2\Re	\la\cW^{\<2vm>}_{\eps,N}(\infty),\overline{\mY_\infty^N}I_\infty^N(w)\ra+
		\la \cW^{\<2vm>}_{\eps,N}(\infty),|I_\infty^N(w)|^2\ra.
	\end{align*}
	We also include  the first term on the RHS as part of $\cQ^{\eps,N}$.
	Using the paraproduct decomposition for the last two terms, we obtain
	\begin{equs}[dec:8-W1]
		&-2\Re	\la\cW^{\<2vm>}_{\eps,N}(\infty),\overline{\mY_\infty^N}I_\infty^N(w)\ra+
		\la \cW^{\<2vm>}_{\eps,N}(\infty),|I_\infty^N(w)|^2\ra
		\\=&\,\,2\Re	\int (\cW^{\<2vm>}_{\eps,N}(\infty)\succ\overline{I_\infty^N(u)})I_\infty^N(w)\,\dif x	-2\Re	\int (\cW^{\<2vm>}_{\eps,N}(\infty)\circ\overline{\mY_\infty^N})I_\infty^N(w)\,\dif x
		\\&-2\Re	\int (\cW^{\<2vm>}_{\eps,N}(\infty)\prec\overline{\mY_\infty^N})I_\infty^N(w)\,\dif x+\Re\int (\cW^{\<2vm>}_{\eps,N}(\infty)\prec\overline{I_\infty^N(w)})I_\infty^N(w)\,\dif x
		\\&+\Re\int (\cW^{\<2vm>}_{\eps,N}(\infty)\circ\overline{I_\infty^N(w)})I_\infty^N(w)\,\dif x-\Re	\int (\cW^{\<2vm>}_{\eps,N}(\infty)\succ\overline{I_\infty^N(w)})I_\infty^N(w)\,\dif x.
	\end{equs}
	We put the first term on the RHS of \eqref{dec:8-W1} into $\cE$. By Lemma \ref{lem:com3}, the last line can be addressed using commutator estimates and can be expressed as
	\begin{align*}
		\Re D(\overline{I_\infty^N(w)},\cW^{\<2vm>}_{\eps,N}(\infty) ,I_\infty^N(w)),
	\end{align*}
	which can be included into $\cE_2$.
	We include the first term in the third line of \eqref{dec:8-W1} into $\cE_1$ and the second term into $\cE_2$. For the second term in the second line of  \eqref{dec:8-W1}, we use the renormalization from $(b^{\eps,N}_1+b^{\eps,N}_2)\cdot2\Re\int W_\infty^N \overline{I^N_\infty(w)}$, which comes from the first term on the RHS of \eqref{dec:bN} using \eqref{eq:re-b},  to express it as
	$-2\Re	\int \cW^{\<3v02vm-W>}_{\eps,N}I_\infty^N(w)\,\dif x$, which can be included into $\cE_1$. Here $\cW^{\<3v02vm-W>}_{\eps,N}$ is the stochastic objects defined in \eqref{sto-sec:8-W0}.

	We then consider $2\int v^\eps*\Re(W_\infty^N\overline{I_\infty^N(u)})\Re(W_\infty^N\overline{I_\infty^N(u)})\,\dif x$ from the last line of  \eqref{dec:8-cV} and write it as
	\begin{align*}
		&2\int v^\eps*\Re(W_\infty^N(\overline{I_\infty^N(w)}-\overline{\mY_\infty^N}))\Re(W_\infty^N(\overline{I_\infty^N(w)}-\overline{\mY_\infty^N}))\,\dif x
		\\=&\,\,2\int v^\eps*\Re(W_\infty^N\overline{\mY_\infty^N})\Re(W_\infty^N\overline{\mY_\infty^N})\,\dif x+2\int v^\eps*\Re(W_\infty^N\overline{I_\infty^N(w)})\Re(W_\infty^N\overline{I_\infty^N(w)})\,\dif x
		\\&-4\int v^\eps*\Re(W_\infty^N\overline{\mY_\infty^N})\Re(W_\infty^N\overline{I_\infty^N(w)})\,\dif x.
	\end{align*}
	We also include the first term on the RHS in $\cQ^{\eps,N}$ and write the last two terms as $2\Re\cL_1+2\Re \cL_2$ with
	\begin{align*}
		\cL_1\eqdef&\,\,\frac12\int v^\eps*(W_\infty^N\overline{I_\infty^N(w)})W_\infty^N\overline{I_\infty^N(w)}\,\dif x-\int v^\eps*(W_\infty^N\overline{\mY_\infty^N})W_\infty^N\overline{I_\infty^N(w)}\,\dif x,\\
		\cL_2\eqdef&\,\,\frac12\int v^\eps*(\overline{W_\infty^N}{I_\infty^N(w)}){W_\infty^N}\overline{I_\infty^N(w)}\,\dif x
		-\int v^\eps*(\overline{W_\infty^N}\mY_\infty^N){W_\infty^N}\overline{I_\infty^N(w)}\,\dif x.
	\end{align*}
	We use the paraproduct decomposition to decompose $\cL_1$ as
	\begin{equs}[dec:8-W2]
		&\int v^\eps(y)(\cW^{\<2>}_{y,N}(\infty)\succ\tau_y\overline{I^N_\infty(u)})\overline{I_\infty^N(w)}\,\dif x\d y-
		\int v^\eps(y)(\cW^{\<2>}_{y,N}(\infty)\circ\tau_y\overline{\mY_\infty^N})\overline{I_\infty^N(w)}\,\dif x\d y
		\\
		&+\frac12\int v^\eps(y)(\cW^{\<2>}_{y,N}(\infty)\prec\tau_y\overline{I_\infty^N(w)})\overline{I_\infty^N(w)}\,\dif x \d y-
		\int v^\eps(y)(\cW^{\<2>}_{y,N}(\infty)\prec\tau_y\overline{\mY_\infty^N})\overline{I_\infty^N(w)}\,\dif x\d y
		\\&
		-\frac12\int v^\eps(y)(\cW^{\<2>}_{y,N}(\infty)\succ\tau_y\overline{I_\infty^N(w)})\overline{I_\infty^N(w)}\,\dif x \d y
		+\frac12\int v^\eps(y)(\cW^{\<2>}_{y,N}(\infty)\circ\tau_y\overline{I_\infty^N(w)})\overline{I_\infty^N(w)}\,\dif x \d y.
	\end{equs}
	We put the first term into $\cE$. For the last line of \eqref{dec:8-W2} we use $v^\eps(y)=v^\eps(-y)$ and change of variable to write the first term   as
	\begin{align*}
		-	\frac12\int v^\eps(y)(\cW^{\<2>}_{y,N}(\infty)\succ\overline{I_\infty^N(w)})\tau_y\overline{I_\infty^N(w)}\,\dif x \d y,
	\end{align*}
	which cancels with the last term. Consequently, Lemma \ref{lem:com3} implies that the last line of \eqref{dec:8-W2} yields
	\begin{align*}
		\frac12\int v^\eps(y)	D(\overline{I_\infty^N(w)},\cW^{\<2>}_{y,N}(\infty),\tau_y\overline{I_\infty^N(w)})\,\dif y,
	\end{align*}
	which can be included in $\cE_2$. We also place the first term in the second line of \eqref{dec:8-W2} into $\cE_2$, and the second term in the second line of \eqref{dec:8-W2} into $\cE_1$. For the second term in \eqref{dec:8-W2}, we apply the renormalization from $(b^{\eps,N}_3 + b^{\eps,N}_4)\int W^N_\infty \overline{I^N_\infty(w)}$, which also stems from the first term on the RHS of \eqref{dec:bN}, using \eqref{eq:re-b}. This allows us to express it as $ \int \cW_{\eps,N}^{\<3v0m1cci1>}(\infty)\overline{I_\infty^N(w)}\,\dif x $, which can then be placed into $\cE_1$.
	We also have similar decomposition as in \eqref{dec:8-W2} for $\overline\cL_1$.

	For $2\Re\cL_2$, we apply the renormalization from $-\la\cM^N_\infty {I_\infty^N(u)}, {I_\infty^N(u)}\ra$, which appears in the last line of \eqref{dec:8-cV}, and express $2\Re\cL_2 - \la\cM^N_\infty {I_\infty^N(u)}, {I_\infty^N(u)}\ra$ as
	\begin{align*}
		&	\Re\int v^\eps(y)\cW^{\<2m>}_{y,N}(\infty)\tau_y{I_\infty^N(w)}\overline{I_\infty^N(w)}\,\dif x\d y
		-2\Re\int v^\eps(y)\cW^{\<2m>}_{y,N}(\infty)\tau_y\mY_\infty^N\overline{I_\infty^N(w)}\,\dif x\d y
		\\&- \la\cM^N_\infty \mY^N_\infty,\mY^N_\infty\ra.
	\end{align*}
	The last term can be included in $\cQ^{\eps,N}$. For the first two terms, we have a similar decomposition as in \eqref{dec:8-W2}, with the last line of \eqref{dec:8-W2} modified to
	\begin{align*}
		&-\frac12\int v^\eps(y)(\cW^{\<2m>}_{y,N}(\infty)\succ\tau_y{I_\infty^N(w)})\overline{I_\infty^N(w)}\,\dif x \d y
		+\frac12\int v^\eps(y)(\cW^{\<2m>}_{y,N}(\infty)\circ\tau_y{I_\infty^N(w)})\overline{I_\infty^N(w)}\,\dif x \d y
		\\&-\frac12\int v^\eps(y)(\overline{\cW}^{\<2m>}_{y,N}(\infty)\succ\tau_y\overline{I_\infty^N(w)}){I_\infty^N(w)}\,\dif x \d y
		+\frac12\int v^\eps(y)(\overline{\cW}^{\<2m>}_{y,N}(\infty)\circ\tau_y\overline{I_\infty^N(w)}){I_\infty^N(w)}\,\dif x \d y.
	\end{align*}
	We also use $v^\eps(y) = v^\eps(-y)$ and a change of variables, leading to cancellations between the first and last terms, as well as between the second and third terms. This results in
	\begin{align*}
		\Re\int v^\eps(y)	D(\overline{I_\infty^N(w)},\cW^{\<2m>}_{y,N}(\infty),\tau_y{I_\infty^N(w)})\,\dif y.
	\end{align*}
	This can be included in $\cE_2$. Similarly, the other corresponding terms as in  \eqref{dec:8-W2} are placed in $\cE$, $\cE_1$, and $\cE_2$.

	We consider $-\la \cR^N_\infty I_\infty^N(u),I_\infty^N(u)\ra$ from the last line of \eqref{dec:8-cV} and write it as
	\begin{align*}
		-\la \cR^N_\infty I_\infty^N(w),I_\infty^N(w)\ra+2\Re \la \cR^N_\infty I_\infty^N(w),\mY^N_\infty\ra-\la \cR^N_\infty \mY^N_\infty,\mY^N_\infty\ra.
	\end{align*}
	We place the first two terms into $\cE_3$ and the last term into $\cQ^{\eps,N}$.
	We also consider
	$(m-1)(	2\Re\la W^{N}_\infty,I_\infty^N(u)\ra+\|I_\infty^N(u)\|^2_{L^2})$ from the fourth line of \eqref{dec:8-cV} and write it as
		\begin{align*}
			(m-1)(	-2\Re\la W^{N}_\infty,\mY^N_\infty\ra+2\Re\la W^{N}_\infty,I_\infty^N(w)\ra+\|I_\infty^N(u)\|^2_{L^2}).\end{align*}
			We then put the first term into $\cQ^{\eps,N}$ and the last two terms into $\cE_3$.

	\textbf{Step 3.}
	We further decompose the term in $\cE$, and in this step, we conclude
	\begin{align*}
		\cE=\sum_{i=4}^6\cE_i+\text{martingale}.
	\end{align*}
	We only decompose the second term in $\cE$, as the other terms can be handled similarly. Using \eqref{eq:8-Wt}, we express the second term in $\cE$ as
	\begin{align*}
		&2\Re	\int v^\eps(y)\big(\cW^{\<2>}_{y,N}({2N+2})\succ\tau_y\overline{I^{N,\flat}_{2N+2}(u)}\big)\overline{I_{2N+2}^N(w)}\,\dif x\d y
		\\&+2\Re\int v^\eps(y)\big(\cW^{\<2>}_{y,N}({2N+2})\succ\tau_y(\overline{I^N_{2N+2}(u)}-\overline{I^{N,\flat}_{2N+2}(u)})\big)\overline{I_{2N+2}^N(w)}\,\dif x\d y.
	\end{align*}
	We place the second term into $\cE_4$. For the first term, we apply It\^o's formula to express it as
	\begin{align*}
		&	2\Re\int_0^{2N+2}\int v^\eps(y)(\cW^{\<2>}_{y,N}(t)\succ\tau_y\overline{I^{N,\flat}_t(u)})\overline{\p_t{I}_t^N(w)}\,\dif x\d y\d t
		\\&+2\Re\int_0^{2N+2}\int v^\eps(y)(\cW^{\<2>}_{y,N}(t)\succ\tau_y\overline{\p_t{I}^{N,\flat}_t(u)})\overline{I_t^N(w)}\,\dif x\d y\d t+\text{martingale}.
	\end{align*}
	The second term can also be placed in $\cE_4$. Similarly, we have contributions from the other terms in $\cE$, which, when combined with the above first term, can be written as
	\begin{align*}
		\sH\eqdef&\,\,
		2\Re\int_0^{2N+2}	\int (\cW^{\<2vm>}_{\eps,N}(t)\succ\overline{I_t^{N,\flat}(u)})\p_t{I}_t^N(w)\,\dif x\d t
		\\&+	2\Re \int_0^{2N+2}\int v^\eps(y)(\cW^{\<2>}_{y,N}(t)\succ\tau_y\overline{I^{N,\flat}_t(u)})\overline{\p_t{I}_t^N(w)}\,\dif x\d y\d t
		%\int_0^T\int v^\eps(y)(\tau_y\overline{W_t^T}\overline{W_t^T}\succ\tau_y{I^{T,\flat}_t(u)}){\dot{I}_t^T(w)}\,\dif x\d y\d t
		\\&+2\Re\int_0^{2N+2}\int v^\eps(y)(\cW^{\<2m>}_{y,N}(t)\succ\tau_y{I^{N,\flat}_t(u)})\overline{\p_t{I}_t^N(w)}\,\dif x\d y\d t.
		%+\int_0^T\int v^\eps(y)(\overline{W_t^T}\tau_y{W_t^T}\succ\tau_y\overline{I^{T,\flat}_t(u)}){\dot{I}_t^T(w)}\,\dif x\d y\d t
	\end{align*}
	We consider the sum of $\sH$ and $\frac12\int_0^\infty\|w_t\|_{L^2}^2\d t$ from Step 1, and use $\p_t{I}_t^N=J_t^N$, $w_t=l_t^N-\sJ_t$,  \eqref{eq:8-Wt}, \eqref{def:ltT1} and the self-adjointness of $J_t^N$ to obtain
	\begin{align*}
		&\sH+\frac12\int_0^\infty\|w_t\|_{L^2}^2\d t
		=-\frac12\int_0^{2N+2}\la \sJ_t,\sJ_t\ra\,\dif t+\frac12\int_0^\infty\|l_t^N(u)\|_{L^2}^2\d t.
	\end{align*}
	We place  $-\frac12\int_0^{2N+2}\la \sJ_t,\sJ_t\ra\,\dif t$ into $\cE_6$. For the estimates  below for $-\frac12\int_0^{\infty}\la \sJ_t,\sJ_t\ra\,\dif t$, we need renormalization counter-terms from $6b^{\eps,N}\|I^N_\infty(u)\|_{L^2}^2=6b^{\eps,N}\|I^N_{2N+2}(u)\|_{L^2}^2$, which is decomposed into
	\begin{align*}
		&6b^{\eps,N}\|I^{N,\flat}_{2N+2}(u)\|_{L^2}^2+12b^{\eps,N}\Re\la I^{N,\flat}_{2N+2}(u),I^N_{2N+2}(u)-I^{N,\flat}_{2N+2}(u)\ra\\&+6b^{\eps,N}\|I^N_{2N+2}(u)-I^{N,\flat}_{2N+2}(u)\|_{L^2}^2.
	\end{align*}
	We then express the first term as
	\begin{equs}[eq:dotb]
		\int_0^{2N+2}6\dot{b}^{\eps,N}\|I^{N,\flat}_t(u)\|_{L^2}^2\d t+12\Re\int_0^{2N+2}b^{\eps,N}(t)\la I^{N,\flat}_t(u),\p_t{I}^{N,\flat}_t(u)\ra\d t.
	\end{equs}
	The first term of \eqref{eq:dotb} is used for renormalization of $-\frac12\int_0^{2N+2}\la \sJ_t,\sJ_t\ra\,\dif t$, which is placed into $\cE_6$,  while the remaining terms can be collected into $\cE_5$.

	\textbf{Step 4.} We finally consider the first term in the third line of \eqref{dec:8-cV}: we replace $I_\infty^N(u)$ by $I_\infty^N(w)-\mY_\infty^N$ and have
	\begin{equs}[dec-8:third]
		&	2\Re \int v^\eps*|I_\infty^N(u)|^2 (W_\infty^N\overline{I_\infty^N(u)})\,\dif x=2\Re \int v^\eps*|I_\infty^N(w)-\mY_\infty^N|^2 W_\infty^N(\overline{I_\infty^N(w)}-\overline{\mY_\infty^N})\,\dif x
		\\=&-2\Re \int v^\eps*|\mY_\infty^N|^2 W_\infty^N\overline{\mY_\infty^N}\,\dif x+\cE_0.
	\end{equs}
	We place the expectation of the first term in $c^{\eps,N}$.
	%For the rest part we need to decompose $I_\infty^T(w)$ and to have further control.
\end{proof}

\subsection{Proof of Theorems \ref{th:partition-revised} and \ref{th:partition}}
In this section, we present the proof of Theorems \ref{th:partition-revised} and \ref{th:partition} using analytic estimates. From the decomposition of the terms in \eqref{dec:sV}, we obtain the following coercive terms:
\begin{align*}
	\frac12\cV^\eps(I_\infty^N(u))+\frac12\int_0^\infty\|l_t^N(u)\|_{L^2}^2\d t,
\end{align*}
which are used to control the remaining terms in \eqref{dec:sV}.

We recall several useful estimates from \cite[Lemma 2]{BG18}: for $\alpha\in\mR$,  $T\in(0,\infty]$
\begin{equs}[est:I]
	\sup_{t\in[0,T]}\|I^N_t(v)\|_{H^{\alpha+1}}^2\lesssim \int_0^T\|v\|_{H^\alpha}^2\d t.
\end{equs}
It is straightforward to observe that for $T\in(0,\infty]$, $\alpha\in \mR, p,q\in[1,\infty]$
\begin{equs}[est:flat]
	\sup_{t\in[0,T]}\|I_t^{N,\flat}(u)\|_{\bB^\alpha_{p,q}}\lesssim 	\|I_T^{N}(u)\|_{\bB^\alpha_{p,q}},
\end{equs}
with the proportional constants independent of $\eps, N$. Moreover, using \eqref{bd:sigma} we have that for $\alpha\in\mR$, $\delta\in [0,1]$
\begin{equs}[est:jt]
	\|J_t^N u\|_{H^\alpha}\leq \|J_t u\|_{H^\alpha}\lesssim \frac{1}{(t+1)^{(1+\delta)/2}}\|u\|_{H^{\alpha-1+\delta/2}}.
\end{equs}

In the following we write $l_t^N=l_t^N(u)$ for simplicity. Additionally,  we  fix $\kappa>0$ as a small parameter in Proposition \ref{prop:W}.

\bl\label{lem:Iw} It holds that
\begin{align*}
	\E\sup_t	\|I_t^N(w)\|_{H^{1-\kappa}}^2
	\lesssim \E\int_0^\infty\|l_t^N\|_{L^2}^2\d t+\E\cV^\eps(I_\infty^N(u))+1,
\end{align*}
with the proportional constant independent of $\eps, N$.
\el
\begin{proof}
	We apply \eqref{est:I}, \eqref{est:jt}, and the paraproduct estimates from Lemma \ref{lem:para} to obtain
	\begin{equs}[est:Iw-n]
		\E\sup_t	\|I_t^N(w)\|_{H^{1-\kappa}}^2	\lesssim&\,\, \E\int_0^\infty\|l_t^N\|_{L^2}^2\,\d t+\E\int_0^\infty\|\sJ_t\|_{H^{-\kappa}}^2\,\d t
		\\\lesssim &\,\,\E\int_0^\infty\|l_t^N\|_{L^2}^2\,\d t+\E\int_0^\infty( t+1)^{-1-\frac\kappa2} \|\cW(t)\|_{\bC^{-1-\frac\kappa2}}^2\|I_t^{N,\flat}(u)\|_{L^2}^2\,\d t,
	\end{equs}
	where
	\begin{equs}[def:cW]
		\|\cW(t)\|_{\bC^{-1-\frac\kappa2}}\eqdef\|\cW^{\<2vm>}_{\eps,N}(t)\|_{\bC^{-1-\frac\kappa2}}+\sup_y\|\cW^{\<2>}_{y,N}(t)\|_{\bC^{-1-\frac\kappa2}}+\sup_y\|\cW^{\<2m>}_{y,N}(t)\|_{\bC^{-1-\frac\kappa2}} .\end{equs}
	We then use \eqref{est:flat}, \eqref{lowerbound} and Proposition \ref{prop:W} to have
	\begin{align*}
		\E\sup_t	\|I_t^N(w)\|_{H^{1-\kappa}}^2	\lesssim &\,\E\int_0^\infty\|l_t^N\|_{L^2}^2\,\d t+\E\|I_\infty^N(u)\|_{L^2}^4+1
		\\\lesssim&\, \E\int_0^\infty\|l_t^N\|_{L^2}^2\,\d t+\E\cV^\eps(I_\infty^N(u))+1.
	\end{align*}
\end{proof}

Next, we consider $\cE_1+\cE_2$. In the following we also use $\delta\in (0,1)$ to denote a small constant.

\bl\label{lem:E1E2} It holds that for any $\delta>0$
\begin{align*}
	\E |\cE_1|+\E|\cE_2|\lesssim1+\delta \E\int_0^\infty\|l_t^N\|_{L^2}^2\,\d t+\delta\E\cV^\eps(I_\infty^N(u)),
\end{align*}
with the proportional constant independent of $\eps, N$.
\el
\begin{proof} We start with $\cE_1$, focusing only on the terms involving $\cW^{\<2vm>}_{\eps,N}$, as the other terms can be handled similarly. Using Proposition \ref{prop:W}, Lemma \ref{lem:multi} and the paraproduct estimates Lemma \ref{lem:para}, we obtain
	\begin{align*}
		&\E\Big|\Re	\int (\cW^{\<2vm>}_{\eps,N}(\infty)\prec\overline{\mY_\infty^N}+\cW^{\<3v02vm-W>}_{\eps,N}(\infty))I_\infty^N(w)\,\dif x\Big|
		\\
		\lesssim&
		\,\E\Big(\|\cW^{\<2vm>}_{\eps,N}(\infty)\|_{\bC^{-1-\kappa}}\|\mY_\infty^N\|_{\bC^{\frac12-\kappa}}+\|\cW^{\<3v02vm-W>}_{\eps,N}(\infty)\|_{\bC^{-\frac12-\kappa}}\Big) \|I_\infty^N(w)\|_{H^{\frac12+3\kappa}}
		\\\lesssim&\,1+\delta \E\|I_\infty^N(w)\|_{H^{1-\kappa}}^2,
	\end{align*}
	which implies the desired bound by using Lemma \ref{lem:Iw}.
	
	Next, we consider the terms in $\cE_2$. For the terms that depend quadratically on $I_\infty^N(w)$, we focus only on those involving $\cW^{\<2vm>}_{\eps,N}(\infty)$, as the estimates for the remaining terms can be derived in a similar manner. We apply the paraproduct estimates from Lemma \ref{lem:para}, the interpolation results from Lemma \ref{lem:interpolation}, \eqref{lowerbound}, and Proposition \ref{prop:W} to obtain
	\begin{equs}[est:8-WI2]
		&\E\Big|\Re\int (\cW^{\<2vm>}_{\eps,N}(\infty)\prec\overline{I_\infty^N(w)})I_\infty^N(w)\,\dif x\Big|	\lesssim
		\E\|\cW^{\<2vm>}_{\eps,N}(\infty)\|_{\bC^{-1-\kappa}}\|I_\infty^N(w)\|_{H^{\frac12+\kappa}}^2
		\\
		\lesssim&\,\E\|\cW^{\<2vm>}_{\eps,N}(\infty)\|_{\bC^{-1-\kappa}} \|I_\infty^N(w)\|_{H^{1-\kappa}}^{2\gamma}\|I_\infty^N(w)\|_{L^2}^{2(1-\gamma)}
		\\\lesssim&\,1+\delta \E\|I_\infty^N(w)\|_{H^{1-\kappa}}^2+\delta \E \|I^N_\infty(u)\|_{L^2}^4\lesssim 1+\delta \E\|I_\infty^N(w)\|_{H^{1-\kappa}}^2+\delta\E\cV^\eps(I_\infty^N(u)).
	\end{equs}
	Here $\gamma=\frac{1/2+\kappa}{1-\kappa}$ and we used Young's inequality and $I^N_\infty(w)=I^N_\infty(u)-\mY^N_\infty$ in the third step.
	For the terms involving the opeartor $D$,  we  use Lemma \ref{lem:com3} to have
	\begin{align*}
		&\E|	D(\overline{I_\infty^N(w)},\cW^{\<2vm>}_{\eps,N}(\infty),I_\infty^N(w))|+\E\Big|\int v^\eps(y)	D(\overline{I_\infty^N(w)},\cW^{\<2>}_{y,N}(\infty),\tau_y\overline{I_\infty^N(w)})\,\dif y\Big|
		\\\lesssim &\,\E\Big(\sup_y\|\cW^{\<2>}_{y,N}(\infty)\|_{\bC^{-1-\kappa}}+\|\cW^{\<2vm>}_{\eps,N}(\infty)\|_{\bC^{-1-\kappa}}\Big)\|I_\infty^N(w)\|_{H^{\frac12+\kappa}}^2,
	\end{align*}
	which can be bounded as in \eqref{est:8-WI2}. The other term in $\cE_2$ involving the operator $D$  can be estimated exactly the same way.

\end{proof}

In the following we consider the contribution from $\cE_0$. Before proceeding we first prove the following lemma for $\cV^\eps$ given in \eqref{def:veps}.

\bl\label{lem:vlvu} It holds that
\begin{align*}
	\cV^\eps(I^N_\infty(l^N))+\cV^\eps(I^N_\infty(w))\lesssim \cV^\eps(I_\infty^N(u))(1+K(\|\mW\|))+K(\|\mW\|),
\end{align*}
with the proportional constant independent of $\eps, N$.
\el
\begin{proof} We use the definition of $l^N$ from \eqref{def:ltT1} to have
	\begin{align*}
		\cV^\eps(I^N_\infty(l^N))=&\,\cV^\eps(I^N_\infty(u)+\mY^N_\infty+I_\infty^N(\sJ))
		\\\lesssim&\,\cV^\eps(I^N_\infty(u))+\int v^\eps*|I^N_\infty(u)|^2(|\mY^N_\infty|^2+|I_\infty^N(\sJ)|^2)\d x
		\\&+\int v^\eps*(|\mY^N_\infty|^2+|I_\infty^N(\sJ)|^2)(|\mY^N_\infty|^2+|I_\infty^N(\sJ)|^2)\d x.
	\end{align*}
	We then use H\"older's inequality and \eqref{lowerbound1}
	to have
	\begin{align*}
		\cV^\eps(I^N_\infty(l^N))\lesssim&\,\cV^\eps(I^N_\infty(u))+\|v^\eps*|I^N_\infty(u)|^2\|_{L^2}(\|I_\infty^N(\sJ)\|_{L^4}^2+\|\mY_\infty^N\|_{\bC^{\frac12-\kappa}}^2)
		\\&+\|\mY_\infty^N\|_{\bC^{\frac12-\kappa}}^4+\|I_\infty^N(\sJ)\|_{L^4}^4
		\\\lesssim&\,\cV^\eps(I^N_\infty(u))+\|I_\infty^N(\sJ)\|_{L^4}^4+K(\|\mW\|).
	\end{align*}
	It remains to estimate $\|I_\infty^N(\sJ)\|_{L^4}^4$. We use Sobolev embedding $H^{\frac34}\subset L^4$, along with \eqref{est:I}--\eqref{est:jt}, \eqref{lowerbound}, the same calculation as in \eqref{est:Iw-n}, and the paraproduct estimates Lemma \ref{lem:para}, to obtain
	\begin{equs}[bd8:IsJ]
		&\|I_\infty^N(\sJ)\|_{L^4}^4\lesssim \|I^N_\infty(\sJ)\|_{H^{1-\kappa}}^4
		\\\lesssim&\Big(\int_0^\infty (t+1)^{-1-\frac\kappa2}\|\cW(t)\|_{\bC^{-1-\frac\kappa2}}^2\d t\Big)^2\|I_\infty^N(u)\|_{L^2}^4
		\lesssim\cV^\eps(I_\infty^N(u))K(\|\mW\|),
	\end{equs}
	where $\|\cW(t)\|_{\bC^{-1-\frac\kappa2}}$ is defined in \eqref{def:cW}.
	Consequently, we obtain the desired bound for $\cV^\eps(I^N_\infty(l^N))$.
	Similarly  we achieve the same bounds for $\cV^\eps(I^N_\infty(w))$.
\end{proof}

\bl\label{lem:E0} It holds that for any $\delta>0$
\begin{align*}
	\E |\cE_0|\lesssim1+\delta \E\int_0^\infty\|l_t^N\|_{L^2}^2\,\d t+\delta\E\cV^\eps(I_\infty^N(u)),
\end{align*}
with the proportional constant independent of $\eps, N$.
\el
\begin{proof}
	Since  the second and the third terms in $\cE_0$ depend linearly and quadratically on $I_\infty^N(w)$, which can be estimated similarly as in the proof of Lemma \ref{lem:E1E2} and using the stochastic objects from \eqref{sto:W-n}.  We omit the details of them. Here we  note that for the third term in $\cE_0$, the stochastic objects  can be handled similarly to \eqref{bd:sto3} via the paraproduct estimates.  	We now focus on the first term in $\cE_0$:
	\begin{equs}[est-8:third]
		&	2\Re \int v^\eps*|I_\infty^N(w)-\mY_\infty^N|^2 W_\infty^N\overline{I_\infty^N(w)}\,\dif x
		\\&=2 \int v^\eps*\bigg(|I_\infty^N(w)|^2-2\Re (I_\infty^N(w)\overline{\mY_\infty^N})+|\mY_\infty^N|^2\bigg) \Re(W_\infty^N\overline{I_\infty^N(w)})\,\dif x.
	\end{equs}
	The most difficult term is the following two terms:
	\begin{equs}[dec:cub-1]
		\int v^\eps*|I_\infty^N(w)|^2 W_\infty^N\overline{I_\infty^N(w)}\,\dif x,\quad \int v^\eps*|I_\infty^N(w)|^2 \overline{W_\infty^N}{I_\infty^N(w)}\,\dif x.
	\end{equs} Our proof primarily focuses on estimate the first  term and the second term follows similarly. The remaining terms depend linearly and quadratically on $I^N_\infty(w)$ and can be estimated in a manner similar to the proof of Lemma \ref{lem:E1E2}. We will omit the details for brevity. We also mention that the stochastic objects in the third term of \eqref{est-8:third} can be handled similarly as in \eqref{bd:sto1} via the paraproduct estimates.
	
	In contrast to the energy estimates in Lemma \ref{cubic}, we do not have the $H^1$-norm estimate of $I_\infty^N(w)$, and further decomposition of $I_\infty^N(w)$ is required. We recall
	\begin{align*}
		I_\infty^N(w)=I_\infty^N(l^N)-I_\infty^N(\sJ),
	\end{align*}
	which substituted into the first term in \eqref{dec:cub-1} implies the following decomposition:
	\begin{equs}[dec:8-cubic]
		&\int v^\eps*|I_\infty^N(w)|^2 W_\infty^N\overline{I_\infty^N(w)}\,\dif x=\sum_{i=1}^5R_i.
	\end{equs}
	with
	\begin{equs}
		R_1\eqdef&\,	\la v^\eps*|I_\infty^N(l^N)|^2{I_\infty^N(l^N)}, W_\infty^N\ra,\qquad R_2\eqdef-\la v^\eps*|I_\infty^N(l^N)|^2{I_\infty^N(\sJ)}, W_\infty^N\ra,
		\\
		R_3\eqdef&\,\la v^\eps*|I_\infty^N(\sJ)|^2{I_\infty^N(w)}, W_\infty^N\ra,\qquad
		R_4\eqdef-2	\la v^\eps*\Re(I_\infty^N(l^N) \overline{I_\infty^N(\sJ)}){I_\infty^N(l^N)}, W_\infty^N\ra,\\
		R_5\eqdef&\, 2\la v^\eps*\Re(I_\infty^N(l^N) \overline{I_\infty^N(\sJ)}){I_\infty^N(\sJ)}, W_\infty^N\ra.
	\end{equs}
	
	We consider each term on the RHS of \eqref{dec:8-cubic}:
	
	\textbf{I.} For $R_1$, we use  \eqref{est8:cubic} to have
	\begin{align*}
		|R_1|\lesssim& \Big[\cV^\eps(I_\infty^N(l))^{\frac34}+\cV^\eps(I_\infty^N(l))^{\frac34(\frac12-\kappa)+\frac12(\frac12+\kappa)}\|I_\infty^N(l)\|_{H^1}^{\frac12+\kappa}\Big]\|W_\infty^N\|_{\bC^{-\frac12-\kappa}}
		\\\lesssim&\,\delta\cV^\eps(I_\infty^N(u))+K(\|\mW\|)+\delta \int_0^\infty\|l_t^N\|_{L^2}^2\d t,
	\end{align*}
	where we used \eqref{est:I}, Lemma \ref{lem:vlvu} and Young's inequality in the last step.
	
	\textbf{II.} For the second term $R_2$  we use the localizer defined in \eqref{def:loc} to  decompose $W_\infty^N=\Delta_{>L} W_\infty^N+\Delta_{\leq L}W_\infty^N$. We apply Lemma \ref{lem:para}, interpolation Lemma \ref{lem:interpolation}, Besov embedding Lemma \ref{lem:emb} to have
	\begin{equs}
		\|v^\eps*(f^2)g\|_{\bB^{1-3\kappa}_{1,1}}\lesssim \|f^2\|_{\bB^{1-\kappa}_{\frac43,2}}\|g\|_{H^{1-2\kappa}}\lesssim \|f\|_{H^1}\|f\|_{H^{\frac34}}\|g\|_{H^{1-2\kappa}}\lesssim \|f\|_{H^1}^{\frac74}\|f\|_{L^2}^{\frac14}\|g\|_{H^{1-2\kappa}},
	\end{equs}
	which combined with Lemma \ref{lem:multi} implies that
	\begin{equs}[est:8-I2]
		&|	\la v^\eps*|I_\infty^N(l^N)|^2{I_\infty^N(\sJ)}, \Delta_{>L}W_\infty^N\ra|
		\\\lesssim&\,
		\|I^N_\infty(l^N)\|_{H^1}^{\frac74}\|I^N_\infty(l^N)\|_{L^2}^{\frac14}
		\|I^N_\infty(\sJ)\|_{H^{1-2\kappa}} \|\Delta_{>L}W_\infty^N\|_{\bC^{-1+3\kappa}}
		%\lesssim&\|I^N_\infty(l^N)\|_{H^1}\|I^N_\infty(l^N)\|_{H^{\frac34}}2^{(-\frac12+4\kappa)L}\|I^N_\infty(u)\|_{L^2}\|W_\infty^N\|_{\bC^{-\frac12-\kappa}}\int_0^\infty (t+1)^{-1-\kappa}\|\mW\|_{\bC^{-1-\kappa}}\d t
		\\\lesssim&\,\|I^N_\infty(l^N)\|_{H^1}^{\frac74}\|I^N_\infty(l^N)\|_{L^2}^{\frac14}2^{(-\frac12+4\kappa)L}\|I^N_\infty(u)\|_{L^2}K(\|\mW\|)
		\\\lesssim&\,\|I^N_\infty(l^N)\|_{H^1}^{\frac74}\cV^\eps(I^N_\infty(u))^{\frac5{16}}2^{(-\frac12+4\kappa)L}K(\|\mW\|).
	\end{equs}
	Here in the second step we used \eqref{bd8:IsJ} and \eqref{eq:loc} and in the last step we used \eqref{lowerbound} and Lemma \ref{lem:vlvu}.
	For the part evolving $\Delta_{\leq L}W^N_\infty$ we also use \eqref{eq:loc}, \eqref{lowerbound1}, \eqref{bd8:IsJ} and Lemma  \ref{lem:vlvu} to have
	\begin{equs}[est:8-I2-1]
		&|	\la v^\eps*|I_\infty^N(l^N)|^2{I_\infty^N(\sJ)}, \Delta_{\leq L}W_\infty^N\ra|\lesssim\|v^\eps*|I_\infty^N(l^N)|^2\|_{L^2}\|I^N_\infty(\sJ)\|_{L^2} \|\Delta_{\leq L}W_\infty^N\|_{L^\infty}
		\\\lesssim&\,\|v^\eps*|I_\infty^N(l^N)|^2\|_{L^2}2^{(\frac12+2\kappa)L}\|I^N_\infty(u)\|_{L^2}K(\|\mW\|)
		\lesssim\cV^\eps(I^N_\infty(u))^{\frac34}2^{(\frac12+2\kappa)L}K(\|\mW\|).
	\end{equs}
	Combining \eqref{est:8-I2}, \eqref{est:8-I2-1}, we choose
	\begin{align*}
		2^{(-\frac12+4\kappa)L}=\|I^N_\infty(l^N)\|_{H^1}^{-\frac78}\cV^\eps(I^N_\infty(u))^{\frac7{32}},
	\end{align*}
	to have
	\begin{equs}[est:8-8.9]
		|	R_2|\lesssim&\,\Big(\|I^N_\infty(l^N)\|_{H^1}^{\frac78}\cV^\eps(I^N_\infty(u))^{\frac{17}{32}}+\cV^\eps(I^N_\infty(u))^{\frac34-\frac{7}{32}a}\|I_\infty^N(l^N)\|_{H^1}^{\frac78a}\Big)K(\|\mW\|)
		\\\lesssim&\,\delta\|I_\infty^N(l^N)\|_{H^1}^2+\delta \cV^\eps(I^N_\infty(u))+K(\|\mW\|)
		\\\lesssim&\,\delta\int_0^\infty\|l^N\|_{L^2}^2\,\dif t+\delta \cV^\eps(I^N_\infty(u))+K(\|\mW\|),
	\end{equs}
	for $a=\frac{\frac12+2\kappa}{\frac12-4\kappa}$. Here in the last step we used \eqref{est:I}.
	
	Similarly for $R_4$ we have
	\begin{align*}
		R_4=&-2\la v^\eps*\Re(I_\infty^N(l^N) \overline{I_\infty^N(\sJ)}){I_\infty^N(l^N)}, \Delta_{>L}W_\infty^N\ra-2\la v^\eps*\Re(I_\infty^N(l^N) \overline{I_\infty^N(\sJ)}){I_\infty^N(l^N)}, \Delta_{\leq L}W_\infty^N\ra
		\\:=&\,R_{4,>}+R_{4,\leq}.
	\end{align*}
	For $R_{4,>}$ we use Lemma \ref{lem:para} to have
	\begin{equs}[bd:8-fgf]
		&\|v^\eps*(fg)f\|_{\bB^{1-3\kappa}_{1,1}}\lesssim \|fg\|_{\bB^{1-2\kappa}_{\frac43,2}}\|f\|_{L^4}+\|fg\|_{L^2}\|f\|_{H^1}
		\\
		\lesssim&\, \|f\|_{H^1}\|f\|_{H^{\frac34}}\|g\|_{H^{1-2\kappa}}\lesssim \|f\|_{H^1}^{\frac74}\|f\|_{L^2}^{\frac14}\|g\|_{H^{1-2\kappa}},
	\end{equs}
	which implies exactly the same bounds as \eqref{est:8-I2} for $R_{4,>}$. For $R_{4,\leq}$ we use  H\"older's inequality to have
	\begin{align*}
		\|v^\eps*(fg)f\|_{L^1}\leq \|v^\eps*(fg)\|_{L^2}\|f\|_{L^2}\leq \|v^\eps*|f|^2\|_{L^2}^{\frac12}\|v^\eps*|g|^2\|_{L^2}^{\frac12}\|f\|_{L^2}\lesssim \|v^\eps*|f|^2\|_{L^2}^{\frac12}\|g\|_{L^4}\|f\|_{L^2},
	\end{align*}
	which combined with Lemma \ref{lem:multi} and \eqref{eq:loc} implies that
	\begin{align*}
		|	R_{4,\leq}|
		\lesssim&\,\|v^\eps*\Re(I_\infty^N(l^N) \overline{I_\infty^N(\sJ)}){I_\infty^N(l^N)}\|_{L^1} \|\Delta_{\leq L}W_\infty^N\|_{L^\infty}
		%\\\lesssim&\|v^\eps*|I_\infty^N(l^N)|^2\|_{L^2}^{\frac12}\|v^\eps*|I_\infty^N(\sJ)|^2\|_{L^2}^{\frac12}2^{(\frac12+2\kappa)L}\|I^N_\infty(l^N)\|_{L^2}\|W_\infty^N\|_{\bC^{-\frac12-\kappa}}
		\\\lesssim&\,\|v^\eps*|I_\infty^N(l^N)|^2\|_{L^2}^{\frac12}\|I_\infty^N(\sJ)\|_{L^4}2^{(\frac12+2\kappa)L}\|I^N_\infty(l^N)\|_{L^2}\|\mW\|.
	\end{align*}
	We then use Lemma \ref{lem:vlvu}, \eqref{lowerbound1} and \eqref{bd8:IsJ} to have it bounded by
	\begin{align*}
		%|	\la v^\eps*\Re(I_\infty^N(l^N)\overline{ I_\infty^N(\sJ)}){I_\infty^N(l^N)}, \Delta_{\leq L}W_\infty^N\ra|	\lesssim
		\cV^\eps(I^N_\infty(u))^{\frac34}2^{(\frac12+2\kappa)L}K(\|\mW\|).
	\end{align*}
	Thus we obtain the same estimates for $|R_4|$ as in \eqref{est:8-8.9}.
	
	\textbf{III.}
	It remains to consider
	$R_3$ and
	$R_5$.
	By Lemma \ref{lem:para} we obtain
	\begin{align*}
		\|v^\eps*(fg)h\|_{\bB^{\frac12+\kappa}_{1,1}}\lesssim\|fg\|_{L^2}\|h\|_{H^{\frac34}}+\|fg\|_{\bB^{\frac12+\kappa}_{\frac43,1}}\|h\|_{L^4}\lesssim \|f\|_{H^{\frac34}}\|g\|_{H^{\frac34}}\|h\|_{H^{\frac34}},
	\end{align*}
	which combined with \eqref{bd8:IsJ} implies that
	\begin{align*}
		&|R_3|+|R_5|\lesssim\|I_\infty^N(\sJ)\|_{H^{1-2\kappa}}^2\big(\|I_\infty^N(w)\|_{H^{\frac34}}+\|I_\infty^N(l^N)\|_{H^{\frac34}}\big)\|W_\infty^N\|_{\bC^{-\frac12-\kappa}}
		\\\lesssim&\,\|I^N_\infty(u)\|_{L^2}^2\big(\|I_\infty^N(w)\|_{H^{\frac34}}+\|I^N_\infty(u)\|_{L^2}\big)K(\|\mW\|)
		\\\lesssim&\,\|I^N_\infty(u)\|_{L^2}^2\big(\|I_\infty^N(w)\|_{H^{1-\kappa}}^{\theta}\|I_\infty^N(w)\|_{L^2}^{1-\theta}+\|I^N_\infty(u)\|_{L^2}\big)K(\|\mW\|)
		\\\lesssim&\, \delta\cV^\eps(I_\infty^N(u))+\delta \|I^N_\infty(w)\|_{H^{1-\kappa}}^2+K(\|\mW\|).
	\end{align*}
	Here $\theta=\frac3{4(1-\kappa)}$ and we used Lemma \ref{lem:vlvu}, \eqref{lowerbound} and Young's inequality in the last step.
	Using Lemma \ref{lem:Iw} and Proposition \ref{prop:W} the result follows.
\end{proof}

	\bl It holds that
	\begin{align*}
		\E|\cE_3|\lesssim 1+\delta \E\int_0^\infty\|l_t^N\|_{L^2}^2\,\d t+\delta\E\cV^\eps(I_\infty^N(u)),
	\end{align*}
	with the proportional constant independent of $\eps, N$.
	\el
	\begin{proof} We use Lemma \ref{comnew1} and Lemma \ref{lem:multi} to have
		\begin{align*}
			|\la \cR^N_\infty I_\infty^N(w),I_\infty^N(w)\ra|+|\la \cR^N_\infty I_\infty^N(w),\mY^N_\infty\ra|
			\lesssim&\, \|I^N_\infty(w)\|_{H^{\frac12+\kappa}}^2+\|I^N_\infty(w)\|_{H^{\frac12+2\kappa}}\|\mY_\infty^N\|_{\bC^{\frac12-\kappa}}
			\\\lesssim&\, \delta \|I^N_\infty(w)\|_{H^{1-\kappa}}^2+\delta \cV^\eps(I^N_\infty(u))+K(\|\mW\|) ,
		\end{align*}
		where we used interpolation Lemma \ref{lem:interpolation}, Young's inequality, \eqref{lowerbound} and Lemma \ref{lem:vlvu} in the last step.
		We also have
		\begin{align*}
				|2\Re\la W^{N}_\infty,I_\infty^N(w)\ra|+\|I_\infty^N(u)\|^2_{L^2}\lesssim K(\|\mW\|)+\delta \|I^N_\infty(w)\|_{H^{1-\kappa}}^2+\delta \cV^\eps(I^N_\infty(u)).
		\end{align*}
		The result then follows by Lemma \ref{lem:Iw} and Proposition \ref{prop:W}.
	\end{proof}
	
	\bl It holds that
	\begin{align*}
		\E|\cE_4|\lesssim 1+\delta \E\int_0^\infty\|l_t^N\|_{L^2}^2\,\d t+\delta\E\cV^\eps(I_\infty^N(u)),
	\end{align*}
	with the proportional constant independent of $\eps, N$.
	\el
	\begin{proof}
		We concentrate on the terms evolving $\cW^{\<2>}_{y,N}(\infty)$, while the remaining terms can be estimated in the same manner. There are two types of terms in $\cE_4$. We denote the first type term as
		\begin{align*}
			\cE_{41}\eqdef\Re\int v^\eps(y)\big(\cW^{\<2>}_{y,N}({2N+2})\succ\tau_y(\overline{I^N_{2N+2}(u)}-\overline{I^{N,\flat}_{2N+2}(u)})\big)\overline{I_{2N+2}^N(w)}\,\dif x\d y,
		\end{align*}
		and  set the second type as
		\begin{align*}
			\cE_{42}\eqdef \Re\int_0^{2N+2}\int v^\eps(y)(\cW^{\<2>}_{y,N}(t)\succ\tau_y\overline{\p_t{I}^{N,\flat}_t(u)})\overline{I_t^N(w)}\,\dif x\d y\d t.
		\end{align*}
		For $\cE_{41}$, we use the fact that the spectral support of $\cW^{\<2>}_{y,N}$ is contained within a ball of  radius $2N$, while the spectral support of $I^N_{2N+2}(u)-I^{N,\flat}_{2N+2}(u)$  lies within an annulus with an inner radius of $\frac{N}4$, to obtain for $\gamma\geq0$
		\begin{align*}
			\|\cW^{\<2>}_{y,N}(2N+2)\|_{\bC^{-1+\frac{3\kappa}2}}\lesssim&\, \|\cW^{\<2>}_{y,N}(2N+2)\|_{\bC^{-1-\frac\kappa2}}N^{2\kappa},
		\end{align*}
		\begin{equs}[eq:difIN]
			N^{\gamma}\|I^N_{2N+2}(u)-I^{N,\flat}_{2N+2}(u)\|_{L^2}\lesssim&\, \|I^N_{2N+2}(u)-I^{N,\flat}_{2N+2}(u)\|_{H^{\gamma}},
		\end{equs}
		which combined with
		the paraproduct estimates Lemma \ref{lem:para} implies that
		\begin{align*}
			%	&\Big|\int v^\eps(y)\big(\cW^{\<2>}_{y,N}({2N+2})\succ\tau_y(\overline{I^N_{2N+2}(u)}-\overline{I^{N,\flat}_{2N+2}(u)})\big)\overline{I_{2N+2}^N(w)}\,\dif x\d y\Big|
			|\cE_{41}|\lesssim&\int v^\eps(y)\|\cW^{\<2>}_{y,N}({2N+2})\|_{\bC^{-1+\frac{3\kappa}2}}\d y\|I^N_{2N+2}(u)-I^{N,\flat}_{2N+2}(u)\|_{L^2}\|I_{2N+2}^N(w)\|_{H^{1-\kappa}}
			%	\\\lesssim&N^{2\kappa}\int v^\eps(y)\|\cW^{\<2>}_{y,N}({2N+2})\|_{\bC^{-1-\kappa/2}}\d y\|I^N_{2N+2}(u)-I^{N,\flat}_{2N+2}(u)\|_{L^2}\|I_{2N+2}^T(w)\|_{H^{1-\kappa}}
			\\\lesssim&\,\|\mW\|\|I^N_{2N+2}(u)-I^{N,\flat}_{2N+2}(u)\|_{H^{2\kappa}}\|I_{2N+2}^N(w)\|_{H^{1-\kappa}}.
		\end{align*}
		We apply interpolation Lemma \ref{lem:interpolation} and \eqref{est:flat} to have
		\begin{align*}
			\|I^N_{2N+2}(u)-I^{N,\flat}_{2N+2}(u)\|_{H^{2\kappa}}\lesssim&\, \|I^N_{2N+2}(u)\|_{H^{\frac12-\kappa}}^{\frac{2\kappa}{\frac12-\kappa}}\|I^N_{2N+2}(u)\|_{L^2}^{\frac{\frac12-3\kappa}{\frac12-\kappa}}
			\\\lesssim&\,\Big(\|\mY^N_{2N+2}\|_{H^{\frac12-\kappa}}^{\frac{2\kappa}{\frac12-\kappa}}+\|I^N_{2N+2}(w)\|_{H^{1-\kappa}}^{\frac{2\kappa}{\frac12-\kappa}}\Big)\|I^N_{2N+2}(u)\|_{L^2}^{\frac{\frac12-3\kappa}{\frac12-\kappa}}.
		\end{align*}
		Consequently, applying Young's inequality, we obtain
		\begin{align*}
			|\cE_{41}|\lesssim& \,K(\|\mW\|)+\delta\|I^N_{2N+2}(w)\|_{H^{1-\kappa}}^2+\delta \cV^\eps(I^N_{2N+2}(u)).
		\end{align*}
		Using Lemma \ref{lem:Iw} the desired bound   for the first type term $\cE_{41}$ follows.
		
		For the second type  terms in $\cE_4$, since $\p_t{I}^{N,\flat}_t(u)=-\widetilde \chi'(\frac{\nabla}{t+1})\frac{\nabla}{(t+1)^2}I^N_\infty(u)$ is spectrally supported in an annulus with an inner radius $\frac{t+1}4$ and an outer radius $\frac{t+1}3$, we obtain
		\begin{equs}[bd:8-dotIt]
			\|\p_t{I}^{N,\flat}_t(u)\|_{\bB^{s'}_{p,q}}\lesssim \|I^{N}_\infty(u)\|_{\bB^{s}_{p,q}}\frac1{ (t+1)^{1+s-s'}},\qquad s,s'\in\mR, p\in (1,\infty),q\in[1,\infty],
		\end{equs}
		which combined with the paraproduct estimates Lemma \ref{lem:para} and Lemma \ref{lem:vlvu} implies that
		\begin{align*}
			|\cE_{42}|\lesssim&\int_0^{2N+2} \sup_y\|\cW^{\<2>}_{y,N}(t)\|_{\bC^{-1+3\kappa/2}}\|\p_t{I}^{N,\flat}_t(u)\|_{L^2}\|I_t^N(w)\|_{H^{1-\kappa}}\d t
			\\\lesssim&\int_0^{2N+2}\frac1{ ( t+1)^{1+\kappa}} \sup_y\|\cW^{\<2>}_{y,N}(t)\|_{\bC^{-1-\kappa/2}}\|I_t^N(w)\|_{H^{1-\kappa}}\d t\|I^{N}_\infty(u)\|_{H^{3\kappa}}
			\\\lesssim&\, K(\|\mW\|)+\delta \sup_t\|I_t^N(w)\|_{H^{1-\kappa}}^2+\delta \cV^\eps(I_\infty^N(u)).
		\end{align*}
		In the second step, we used the spectral property of $\cW^{\<2>}_{y,N}$ to deduce $$\|\cW^{\<2>}_{y,N}(t)\|_{\bC^{-1+\frac{3\kappa}2}}\lesssim (t+1)^{2\kappa}\|\cW^{\<2>}_{y,N}(t)\|_{\bC^{-1-\frac\kappa2}}.$$ Additionally, we applied Young's inequality, interpolation Lemma \ref{lem:interpolation} and $I_\infty^N(u)=I_\infty^N(w)-\mY_\infty^N$ in the last step. Using Lemma \ref{lem:Iw} and Proposition \ref{prop:W} the desired bound   for the second type terms in $\cE_4$ follows.
	\end{proof}

	\bl\label{lem:E5} It holds that
	\begin{align*}
		\E|\cE_5|\lesssim 1+\delta \E\int_0^\infty\|l_t^N\|_{L^2}^2\,\d t+\delta\E\cV^\eps(I_\infty^N(u)),
	\end{align*}
	with the proportional constant independent of $\eps, N$.
	\el
	\begin{proof}
		We use \eqref{eq:difIN}
		and \eqref{bd8:b}, \eqref{eq:8-Wt} to  conclude that the terms in the first line of $\cE_5$ can be controlled by
		\begin{align*}
			%&\Big|b^{\eps,N}\Re\la I^{N,\flat}_{2N+2}(u),I^N_{2N+2}(u)-I^{N,\flat}_{2N+2}(u)\ra\Big|+|b^{\eps,N}|\|I^N_{2N+2}(u)-I^{N,\flat}_{2N+2}(u)\|_{L^2}^2
			&|\log N|\Big(\|I^{N,\flat}_{2N+2}(u)\|_{L^2}\|I^N_{2N+2}(u)-I^{N,\flat}_{2N+2}(u)\|_{L^2}+\|I^N_{2N+2}(u)-I^{N,\flat}_{2N+2}(u)\|_{L^2}^2\Big)
			\\\lesssim&\,|\log N| N^{-\frac14}\|I^N_{2N+2}(u)\|_{L^2}\|I^N_{2N+2}(u)-I^{N,\flat}_{2N+2}(u)\|_{H^{\frac14}}\lesssim\|I^N_{\infty}(u)\|_{L^2}\|I^N_{\infty}(u)\|_{H^{\frac14}}
			\\\lesssim&\, K(\|\mW\|)+\delta \|I_{\infty}^N(w)\|_{H^{1-\kappa}}^2+\delta \cV^\eps(I_{\infty}^N(u)),
		\end{align*}
		where we used Proposition \ref{prop:W}, Young's inequality and $I_{\infty}^N(u)=I_{\infty}^N(w)-\mY_{\infty}^N$ in the last step.
		
		For the last term in $\cE_5$  we use \eqref{bd:8-dotIt} and \eqref{bd8:b} to have
		\begin{align*}
			&\E\Big|\Re\int_0^{2N+2}b^{\eps,N}(t)\la I^{N,\flat}_t(u),\p_t{I}^{N,\flat}_t(u)\ra\d t\Big|
			\lesssim \E\int_0^{2N+2}|b^{\eps,N}(t)|\|I^{N,\flat}_t(u)\|_{L^2}\|\p_t{I}^{N,\flat}_t(u)\|_{L^2}\d t
			\\\lesssim&\,\E\int_0^{2N+2}|\log (t+1)|\|I^{N}_\infty(u)\|_{L^2}\|I^N_\infty(u)\|_{H^{\frac14}}(t+1)^{-1-\frac14}\d t\lesssim 1+\delta \E\|I_\infty^N(w)\|_{H^{1-\kappa}}^2+\delta \E\cV^\eps(I_\infty^N(u)),
		\end{align*}
		where we also applied  Proposition \ref{prop:W}, Young's inequality and $I_\infty^N(u)=I_\infty^N(w)-\mY_\infty^N$ in the last step. Using Lemma \ref{lem:Iw} the desired bound  follows.
	\end{proof}

	\bl\label{lem:E6} It holds that
	\begin{align*}
		\E |\cE_6|\lesssim 1+\delta \E\int_0^\infty\|l_t^N\|_{L^2}^2\d t+\delta\E\cV^\eps(I_\infty^N(u)).
	\end{align*}
	\el
	\begin{proof}
		We start with the term in $\frac12\int_0^{2N+2}\la \sJ_t,\sJ_t \ra\d t$ involving two $\cW^{\<2vm>}_{\eps,N}$ and use the commutators from Lemma \ref{lem:com4} and $J_t^N$ is a symmetric operator to express it as follows:
		\begin{equs}[eq:JtT]
			&2\int_0^{2N+2}\Big\la J_t^N\Big(	 \cW^{\<2vm>}_{\eps,N}\succ{I_t^{N,\flat}(u)}\Big),J_t^N\Big(	 \cW^{\<2vm>}_{\eps,N}\succ{I_t^{N,\flat}(u)}\Big)\Big\ra\,\d t
			\\=&\,
			2\int_0^{2N+2}\int
			[(J_t^N)^2(\cW^{\<2vm>}_{\eps,N})\circ	\cW^{\<2vm>}_{\eps,N}] |I_t^{N,\flat}(u)|^2\,\d x\d t
			+2\int_0^{2N+2}\sR_t(\cW^{\<2vm>}_{\eps,N},\cW^{\<2vm>}_{\eps,N},I_t^{N,\flat}(u),I_t^{N,\flat}(u))\,\d t.
		\end{equs}
		Using \eqref{sto-sec:8-W}, we combine the first term with $\int_0^{2N+2}\dot{b}_1^{\eps,N}|I_t^{N,\flat}(u)|^2\,\d x\d t$ and express it as
		$$\int_0^{2N+2}\int
		\cW_{\eps,N}^{\<22vm-W>}(t) |I_t^{N,\flat}(u)|^2\,\d x\d t,$$  which, by \eqref{est:flat}, can be controlled by
		\begin{equs}[est:E6-W1]
			%	&\Big|\int_0^{2N+2}\int
			%	\cW_{\eps,N}^{\<22vm-W>}(t) |I_t^{N,\flat}(u)|^2\d x\d t\Big|
			%	\lesssim
			\int_0^{2N+2}\|\cW_{\eps,N}^{\<22vm-W>}(t) \|_{\bC^{-\kappa}}\d t\|I_\infty^N(u)\|_{L^2}\|I_\infty^N(u)\|_{H^{2\kappa}}
			\lesssim K(\|\mW\|)
			+\delta\cV^\eps(I_\infty^N(u))+\delta\|I_\infty^N(w)\|_{H^{1-\kappa}}^2.
		\end{equs}
		By applying Proposition \ref{prop:W} and Lemma \ref{lem:Iw} we derive the desired bound for it.
		For the second term on the RHS of \eqref{eq:JtT}, we  use  Lemma \ref{lem:com4} and \eqref{est:flat} to have it controlled by
		\begin{equs}[est:E6-W2]
			%	&	\Big|\int_0^{2N+2}\sR(\cW^{\<2vm>}_{\eps,N},\cW^{\<2vm>}_{\eps,N},I_t^{N,\flat}(u),I_t^{N,\flat}(u))\d t\Big|
			&\int_0^{2N+2}( t+1)^{-1-\kappa}\|\cW^{\<2vm>}_{\eps,N}\|_{\bC^{-1-\kappa}}^2\|I^N_\infty(u)\|^2_{H^{\frac12-\kappa}}\d t
			\\\lesssim&\,K(\|\mW\|)+K(\|\mW\|)\|I^N_\infty(w)\|_{H^{1-\kappa}}^{2\gamma}\|I^N_\infty(w)\|_{L^2}^{2(1-\gamma)}	\lesssim  K(\|\mW\|)+\delta \|I_\infty^N(w)\|_{H^{1-\kappa}}^2+\delta \cV^\eps(I_\infty^N(u)),
		\end{equs}
		for $\gamma=\frac{\frac12-\kappa}{1-\kappa}$.
		Here we decomposed $I^N_\infty(u)=I^N_\infty(w)-\mY^N_\infty$, and in the final step, we applied Lemma \ref{lem:vlvu} and Young's inequality. For the remaining terms, we focus on the following one:
		\begin{align*}
			\cE_{60}\eqdef&\,2\int_0^{2N+2}\Big\langle\int v^\eps(y)J_t^N\big(\cW^{\<2>}_{y,N}(t)\succ\tau_y\overline{I^{N,\flat}_t(u)}\big)\d y,\int v^\eps(y)J_t^N\big({\cW}^{\<2>}_{y,N}(t)\succ\tau_y\overline{I^{N,\flat}_t(u)}\big)\d y\Big\rangle\d t
			\\&-\int_0^{2N+2}(\dot{b}_3^{\eps,N}+\dot{b}_4^{\eps,N})\|I_t^{N,\flat}(u)\|_{L^2}^2\,\dif t,
		\end{align*}
		and other terms follow in a similar manner. Using \eqref{sto-sec:8-Wy} we obtain
		\begin{align*}
			\cE_{60}=\sum_{i=1}^4\cE_{6i},
		\end{align*}
		with
		\begin{align*}
			\cE_{61}\eqdef&\,2\int_0^{2N+2}\int v^\eps(y)v^\eps(y_1)\sR_t(\cW^{\<2>}_{y,N}(t), \overline{\cW}^{\<2>}_{y_1,N}(t),\tau_y\overline{I^{N,\flat}_t(u)}, \tau_{y_1}{I^{N,\flat}_t(u)})\,\d y\d y_1\d t,\\
			\cE_{62}\eqdef&
			\int_0^{2N+2}\int v^\eps(y)v^\eps(y_1)\cW_{y,y_1,N}^{\<22ccvm-Wy>}(t)\tau_y\overline{I^{N,\flat}_t(u)}\tau_{y_1}{I^{N,\flat}_t(u)} \,\d x\d y\d y_1\d t,
			%\\&+\int_0^{2N+2}\int v^\eps(y)v^\eps(y_1)\cW_{y,y_1,\eps,N}^{\<22ccvm-Wy>}(t)\overline{I^{N,\flat}_t(u)}(\tau_{y_1}{I^{N,\flat}_t(u)}- {I^{N,\flat}_t(u)})\d x\d y\d y_1\d t
			\\\cE_{63}\eqdef&\int_0^{2N+2}\int v^\eps(y)v^\eps(y_1)(\dot{b}_3^{N}+\dot{b}_4^{N})(t,y,y_1)\tau_y\overline{I^{N,\flat}_t(u)} \tau_{y_1}{I^{N,\flat}_t(u)}\,\d x\d y\d y_1\d t
			\\&-\int_0^{2N+2}(\dot{b}_3^{\eps,N}+\dot{b}_4^{\eps,N})\|I_t^{N,\flat}(u)\|_{L^2}^2\,\dif t,
%			\\
%			\cE_{64}\eqdef&\int_0^{2N+2}\int \cW_{\eps,N}^{\<22ccvm-W2>}I^{N,\flat}_t(u)\overline{I^{N,\flat}_t(u)}\d x.
		\end{align*}
		for $\dot{b}_i^{N}(t,y,y_1)$ and $\dot{b}_i^{\eps,N}$ in \eqref{def:dotbi}, \eqref{def:dotbiN}.
		It is straightforward to see that	$\cE_{61}$  can be estimated similarly as in \eqref{est:E6-W2}. For $\cE_{63}$ we can rewrite it as follows:
		\begin{align*}\int_0^{2N+2} \sum_{k,k_1,k_2}& \dot{\tilde{b}}^{N}_t(k_1,k_2)\bigg(\widehat{v^\eps}(k_1+k)\widehat{v^\eps}(k_1-k)+\widehat{v^\eps}(k_1-k)\widehat{v^\eps}(k_2+k)\\&-\widehat{v^\eps}(k_1)^2-\widehat{v^\eps}(k_1)\widehat{v^\eps}(k_2)\bigg)
			|\langle I^{N,\flat}_t(u),e_k\rangle|^2\,\dif t,
		\end{align*}
		which by \eqref{bd:sigma} is bounded by
		\begin{align*}&\int_0^{2N+2} \sum_{k_1,k_2,k}\frac{(t+1)^{-1-\kappa}}{\la k_1\ra^2 \la k_2 \ra^2 \la k_1+k_2\ra^{2-\kappa}}|k|^{2\kappa}\bigg(\frac1{|k_1+k|^{2\kappa}}+\frac1{|k_1|^{2\kappa}}\bigg)|\langle I^{N,\flat}_t(u),e_k\rangle|^2\,\dif t
			\\\lesssim&\int_0^{2N+2}(t+1)^{-1-\kappa} \| I^{N,\flat}_t(u)\|_{H^\kappa}^2\,\dif t
			\lesssim\|I^{N}_{\infty}{(u)\|_{H^\kappa}^2}.
		\end{align*}
		Here we used similar argument as the proof of Lemma \ref{comnew}.
		Thus $\cE_{63}$ can be bounded by a similar argument as \eqref{est:E6-W1}.
		It remains to consider $\cE_{62}$, and we apply \eqref{est:flat}, \eqref{eq:tauy} along with a similar calculation to \eqref{est:E6-W1} to obtain
		\begin{align*}
			\cE_{62}\lesssim&\int_0^{2N+2}\sup_{y,y_1}\|\cW_{y,y_1,N}^{\<22ccvm-Wy>}\|_{\bC^{-\kappa}}\d t\|I_\infty^N(u)\|_{L^2}\|I_\infty^N(u)\|_{H^{2\kappa}}
			\\\lesssim&\,  K(\|\mW\|)+\delta \|I_\infty^N(w)\|_{H^{1-\kappa}}^2+\delta \cV^\eps(I_\infty^N(u)),
		\end{align*}
		which by Proposition \ref{prop:W} and Lemma \ref{lem:Iw} implies the desired bound.
	\end{proof}
	
	\begin{proof}[Proof of Theorem \ref{th:partition}]
		Using Lemma \ref{lem:B-D} we obtain
		\begin{align*}
			-\log \widetilde\sZ_{\eps,N}=\inf_{u\in\mathbf{H}_a} \E[\sV^{\eps,N}(u)],
		\end{align*}
		with $f=0$ in $\sV^{\eps,N}(u)$.
		Applying Proposition \ref{prop:dec} we have
		\begin{align*}
			\E\sV^{\eps,N}(u)=&\,\,\frac12\E\int_0^\infty\|l_t^N\|_{L^2}^2\,\dif t+\frac12\E\cV^\eps(I^N_\infty(u))+\sum_{i=0}^6\E[\cE_i].
		\end{align*}
		By Lemma \ref{lem:E1E2}--Lemma \ref{lem:E6} we have
		\begin{align*}
			\Big|\sum_{i=0}^6\E[\cE_i]\Big|\leq K+\frac\delta2\Big(\E\int_0^\infty\|l_t^N\|_{L^2}^2\,\dif t+\E\cV^\eps(I^N_\infty(u))\Big),
		\end{align*}
		for $0<\delta<1$ and some constant $K$, which is independent of $\eps,N$.
		We then obtain
		\begin{equs}[par:lower]
			-K\leq -K+\frac12(1-\delta)\Big(\E\int_0^\infty\|l_t^N\|_{L^2}^2\,\dif t+\E\cV^\eps(I^N_\infty(u))\Big)\leq \E\sV^{\eps,N}(u),
		\end{equs}
		and
		\begin{equs}[bd:wf-8]
			\E\sV^{\eps,N}(u)\leq K+\frac12(1+\delta)\Big(\E\int_0^\infty\|l_t^N\|_{L^2}^2\,\dif t+\E\cV^\eps(I^N_\infty(u))\Big).
		\end{equs}
		for $l_t^N=l_t^N(u)$ given by \eqref{def:ltT1}. Using \eqref{par:lower} we get the following lower bound
		\begin{equs}[eq:par:lower]
			-K	\leq-\log\widetilde\sZ_{\eps,N}.
		\end{equs}
		Hence, it remains to choose $u$ such that the RHS of \eqref{bd:wf-8}  is uniformly bounded in $\eps, N$. To this end, we choose $u$ to satisfy
		\begin{equs}[choice:u]
			u_t=&-2J^N_t\cY^N_t
			-2J_t^N\big(	 \Delta_{>L}\cW^{\<2vm>}_{\eps,N}(t)\succ{I_t^{N,\flat}(u)}\big)
			-2\int v^\eps(y)J_t^N\big(\Delta_{>L}\cW^{\<2>}_{y,N}(t)\succ\tau_y\overline{I^{N,\flat}_t(u)}\big)\d y
			\\&-2\int v^\eps(y)J_t^N\big(\Delta_{>L}\cW^{\<2m>}_{y,N}(t)\succ\tau_y{I^{N,\flat}_t(u)}\big)\d y,
		\end{equs}
		for the localizer $\Delta_{>L}$ introduced in \eqref{def:loc} and some $L>0$ given below. The existence and uniqueness of $u$ follow by usual fixed point arguments. Additionally, we use
		%the Besov embedding Lemma \ref{lem:emb},
		 the paraproduct estimates from Lemma \ref{lem:para}, and \eqref{est:flat}, \eqref{est:vx} to obtain
		\begin{align*}
			\|I_t^N(u)\|_{\bC^\kappa}\lesssim&\,\|\mY_t^N\|_{\bC^{\kappa}}+ \int_0^t\Big[\Big\|(J_s^N)^2\Big(	 (\Delta_{>L}\cW^{\<2vm>}_{\eps,N}(t)\succ{I_s^{N,\flat}(u)})\Big)\Big\|_{\bC^{\kappa}}
			\\&+\Big\|(J_s^N)^2\Big(\int v^\eps(y)(\Delta_{>L}\cW^{\<2>}_{y,N}(s)\succ\tau_y\overline{I^{N,\flat}_s(u)})\d y\Big)\Big\|_{\bC^{\kappa}}
			\\&+\Big\|(J_s^N)^2\Big(\int v^\eps(y)(\Delta_{>L}\cW^{\<2m>}_{y,N}(s)\succ\tau_y{I^{N,\flat}_s(u)})\d y\Big)\Big\|_{\bC^{\kappa}}\Big]\d s
			\\\leq&\, \|\mY_t^N\|_{\bC^{\kappa}}+C\int_0^t\frac1{(s+1)^{1+\frac18}}\Big[\|\Delta_{>L}\cW^{\<2vm>}_{\eps,N}(s)\|_{\bC^{-1-3\kappa}}+\sup_y\|\Delta_{>L}\cW^{\<2>}_{y,N}(s)\|_{\bC^{-1-3\kappa}}
			\\&\qquad+\sup_y\|\Delta_{>L}\cW^{\<2m>}_{y,N}(s)\|_{\bC^{-1-3\kappa}}\Big]\d s\|I_t^N(u)\|_{\bC^\kappa}.
		\end{align*}
		We apply the localizer estimates \eqref{eq:loc} to have
		\begin{align*}
			&\|\Delta_{>L}\cW^{\<2vm>}_{\eps,N}(t)\|_{\bC^{-1-3\kappa}}+\sup_y\|\Delta_{>L}\cW^{\<2>}_{y,N}(t)\|_{\bC^{-1-3\kappa}}+\sup_y\|\Delta_{>L}\cW^{\<2m>}_{y,N}(t)\|_{\bC^{-1-3\kappa}}
			\\\lesssim& \,2^{-2\kappa L}\Big(\|\cW^{\<2vm>}_{\eps,N}(t)\|_{\bC^{-1-\kappa}}+\sup_y\|\cW^{\<2>}_{y,N}(t)\|_{\bC^{-1-\kappa}}+\sup_y\|\cW^{\<2m>}_{y,N}(t)\|_{\bC^{-1-\kappa}}\Big).
		\end{align*}
		We then choose $L=L(s)$ given by
		\begin{align*}
			2^{2\kappa L}\backsimeq 8C(\|\cW^{\<2vm>}_{\eps,N}(s)\|_{\bC^{-1-\kappa}}+\sup_y\|\cW^{\<2>}_{y,N}(s)\|_{\bC^{-1-\kappa}}+\sup_y\|\cW^{\<2m>}_{y,N}(s)\|_{\bC^{-1-\kappa}}).
		\end{align*}
		such that
		\begin{align*}
			C\Big[\|\Delta_{>L}\cW^{\<2vm>}_{\eps,N}(s)\|_{\bC^{-1-3\kappa}}+\sup_y\|\Delta_{>L}\cW^{\<2>}_{y,N}(s)\|_{\bC^{-1-3\kappa}}+\sup_y\|\Delta_{>L}\cW^{\<2m>}_{y,N}(s)\|_{\bC^{-1-3\kappa}}\Big]\leq1/16.
		\end{align*}
		Hence, we obtain
		\begin{align*} \|I_t^N(u)\|_{\bC^\kappa}\leq&\, C\|\mY_t^N\|_{\bC^{\kappa}}+\frac1{16}\int_0^t\frac1{( s+1)^{1+\frac18}}	\|I_t^N(u)\|_{\bC^\kappa}\d s
			\\\leq&\,C\|\mY_t^N\|_{\bC^{\kappa}}+\frac12\|I_t^N(u)\|_{\bC^\kappa},
		\end{align*}
		which implies that
		\begin{align*}
			\|I_t^N(u)\|_{\bC^\kappa}\lesssim K(\|\mW\|).
		\end{align*}
		Consequently, we apply Proposition \ref{prop:W} to obtain that for $p\geq1$
		\begin{equs}[eq:INin-4]
			\sup_{N,\eps}\E \|I_\infty^N(u)\|_{\bC^\kappa}^p\lesssim \E K(\|\mW\|)\lesssim 1,
		\end{equs}
		which yields that
		\begin{align*}
			\sup_{N,\eps}\E\cV^\eps(I^N_\infty(u))\lesssim \sup_{N,\eps}\E\|I_\infty^N(u)\|_{\bC^\kappa}^4\lesssim \E K(\|\mW\|)\lesssim 1.
		\end{align*}
		Combining \eqref{def:ltT1} and \eqref{choice:u} we obtain
		\begin{align*}
			l_t^N=&\,2	J_t^N\Big(	 (\Delta_{\leq L}\cW^{\<2vm>}_{\eps,N}(t)\succ{I_t^{N,\flat}(u)})\Big)
			+2J_t^N\Big(\int v^\eps(y)(\Delta_{\leq L}\cW^{\<2>}_{y,N}(t)\succ\tau_y\overline{I^{N,\flat}_t(u)})\d y\Big)
			\\&+2J_t^N\Big(\int v^\eps(y)(\Delta_{\leq L}\cW^{\<2m>}_{y,N}(t)\succ\tau_y{I^{N,\flat}_t(u)})\d y\Big).
		\end{align*}
		Using the localizer estimates \eqref{eq:loc} we obtain
		\begin{align*}
			\|\Delta_{\leq L}\cW^{\<2vm>}_{\eps,N}(t)\|_{\bC^{-1+\kappa}}+\sup_y\|\Delta_{\leq L}\cW^{\<2>}_{y,N}(t)\|_{\bC^{-1+\kappa}}+\sup_y\|\Delta_{\leq L}\cW^{\<2m>}_{y,N}(t)\|_{\bC^{-1+\kappa}}\lesssim K(\|\mW\|).
		\end{align*}
		By applying the paraproduct estimates Lemma \ref{lem:para} and \eqref{est:jt} we obtain
		\begin{align*}
			\E\int_0^\infty\|l^N_t\|_{L^2}^2\d t\lesssim& \,\E\int_0^\infty (t+1)^{-1-\kappa}	\Big(\|\Delta_{\leq L}\cW^{\<2vm>}_{\eps,N}(t)\|_{\bC^{-1+\kappa}}+\sup_y\|\Delta_{\leq L}\cW^{\<2>}_{y,N}(t)\|_{\bC^{-1+\kappa}}
			\\&+\sup_y\|\Delta_{\leq L}\cW^{\<2m>}_{y,N}(t)\|_{\bC^{-1+\kappa}}\Big)^2\d t\|I^N_\infty(u)\|_{L^2}^2,
		\end{align*}
		which combined with \eqref{eq:INin-4} implies that
		\begin{align*}
			\sup_{N,\eps}\E\int_0^\infty\|l^N_t\|_{L^2}^2\d t\lesssim1.
		\end{align*}
		Hence, we obtain a uniform upper bound of $-\log \widetilde\sZ_{\eps,N}$, which combined with \eqref{eq:par:lower} implies \eqref{bd:par}.
		For \eqref{mom:muN-eps} we use Lemma \ref{lem:B-D} to have
		\begin{equs}[feq]
			-\log\int\exp(f(\Psi))\d \tilde\mu_N^\eps(\Psi)=\log \widetilde\sZ_{\eps,N}+\inf_{u\in\mathbf{H}_a} \E[\sV^{\eps,N}(u)].
		\end{equs}
		The first term on the RHS has been controlled. For the second term, we also use Besov embedding Lemma \ref{lem:emb} to have
		\begin{align*}
			&|f(W_\infty^N+I^N_\infty(u))|\lesssim 1+\|W_\infty^N\|_{\bC^{-\frac12-\kappa}}+\|I^N_\infty(u)\|_{\bC^{-\frac12-\kappa}}
			\\\lesssim&\,1+ \|W_\infty^N\|_{\bC^{-\frac12-\kappa}}+\|\mY^N_\infty\|_{\bC^{\frac12-\kappa}}+\|I^N_\infty(w)\|_{\bC^{-\frac12-\kappa}}
			\lesssim\|\mW\|+\|I^N_\infty(w)\|_{H^{1-\kappa}}.
		\end{align*}
		Using Lemma \ref{lem:Iw} and exactly the same arguments as above, \eqref{mom:muN-eps} follows.
	\end{proof}

\begin{proof}[Proof of Theorem \ref{th:partition-revised}]
We begin by proving \eqref{mom:muN-eps-revised}. Note that the difference in renormalization
constants requires us to add the following extra terms into  $f(W_\infty^N+I_\infty^N(u))$  in the proof of Theorem \ref{th:partition}:
	\begin{align*}
		[(a^\eps-a^{\eps,N})-(6b^\eps-6b^{\eps,N})]\int \Wick{|W_\infty^N+I_\infty^N(u)|^2}\d x,
	\end{align*}
	where $a^\eps, b^\eps$ are given in \eqref{eq:counterterms}.
We then apply the paraproduct estimates from Lemma~\ref{lem:para} together with Proposition~\ref{prop:W} to obtain for $\delta>0$
	\begin{equs}[west]
		&\Big|\E\int \Wick{|W_\infty^N+I_\infty^N(u)|^2}\d x\Big|=\Big|2\E\Re \la W_\infty^N,I_\infty^N(u)\ra+\E\|I_\infty^N(u)\|^2_{L^2}\Big|
		\\\lesssim &\,|\E\Re\la W_\infty^N,{\mY^N_\infty}\ra|+|\E\Re \la W_\infty^N,{I^N_\infty(w)}\ra|+\E\|I_\infty^N(u)\|^2_{L^2}
		\\\lesssim&\,1+\delta \E\|I^N_\infty(w)\|_{H^{1-\kappa}}^2+\delta \E\cV^\eps(I^N_\infty(u)).
	\end{equs}
We claim that we can prove that for $N\geq\eps^{-2-\kappa}$
\begin{equs}[dif:abc]
	|a^\eps-a^{\eps,N}|+|b^\eps-b^{\eps,N}|\lesssim1.\end{equs}
Thus we obtain for $N\geq\eps^{-2-\kappa}$ and $\delta>0$
\begin{equs}[eqab] &\E\Big|[(a^\eps-a^{\eps,N})-(6b^\eps-6b^{\eps,N})]\int \Wick{|W_\infty^N+I_\infty^N(u)|^2}\d x\Big|
\\\lesssim&
\,1+\delta \E\|I^N_\infty(w)\|_{H^{1-\kappa}}^2+\delta \E\cV^\eps(I^N_\infty(u)),
\end{equs}
which  from the proof of Theorem \ref{th:partition} implies that for $N\geq\eps^{-2-\kappa}$
\begin{equs}[partitionre]|\log \sZ_{\eps,N}+c^{\eps,N}|\lesssim 1.\end{equs}
Similar as \eqref{feq} we obtain
\begin{equs}
			&-\log\int\exp(f(\Psi))\d \mu_N^\eps(\Psi)
\\=&\log \sZ_{\eps,N}+c^{\eps,N}+\inf_{u\in\mathbf{H}_a} \E[\sV^{\eps,N}(u)-[(a^\eps-a^{\eps,N})-(6b^\eps-6b^{\eps,N})]\int \Wick{|W_\infty^N+I_\infty^N(u)|^2}\d x],
		\end{equs}
the absolute value of which is bounded  by similar argument as in the proof of Theorem \ref{th:partition} and using \eqref{eqab} and \eqref{partitionre}.

It remains to prove \eqref{dif:abc}. By direct calculation and using $(\mathbf{Hv})$, we have
	\begin{equs}[dif:aN]
		|a^\eps-a^{\eps,N}|=\Big|\sum_{k\in\mZ^3}\frac{(\chi_N(k)^2-1)\widehat{v^\eps}(k)}{(2\pi)^3\la k\ra^2}\Big|\lesssim \sum_{|k|\gesim N}\frac{\widehat{v^\eps}(k)}{\la k\ra^2}\lesssim \eps^{-2-\kappa}N^{-1}.
	\end{equs}
	Similarly we obtain
	\begin{equs}[dif:bN]
		|b^\eps-b^{\eps,N}|\lesssim \eps^{-\frac12-\kappa}N^{-\frac12}.
	\end{equs}
Combining \eqref{dif:aN}--\eqref{dif:bN}, we obtain that	\eqref{dif:abc} and \eqref{mom:muN-eps-revised} hold due to $N\geq \eps^{-2-\kappa}$.

To prove \eqref{mom:mutildeP-revised}, we only need to modify
		$f(W_\infty^N+I^N_\infty(u))$ to
		$$c_0\la \widetilde P (W_\infty^N+I^N_\infty(u)), W_\infty^N+I^N_\infty(u)\ra+ c_1\int \Wick{|W_\infty^N+I_\infty^N(u)|^2}\d x.$$ We then use
		\begin{align*}
		 \Big|\E \la \widetilde P (W_\infty^N+I^N_\infty(u)), W_\infty^N+I^N_\infty(u)\ra\Big|
		 %\lesssim \E \la \widetilde P W_\infty^N, W_\infty^N\ra+\E \la \widetilde P I^N_\infty(u), I^N_\infty(u)\ra\Big|
		 \lesssim C_{\widetilde P}+\delta \cV^\eps(I_\infty^N(u)),
		\end{align*}
and similar argument as \eqref{west}
		to derive \eqref{mom:mutildeP-revised} similarly.

\end{proof}

%%%%%%%%%%%%%%%%%%%%%%%%%%%%%%%%%%%%%%%%%%%%%%%%%%%
%%%%%%%%%%%%%%%%%%%%%%%%%%%%%%%%%%%%%%%%%%%%%%%%%%%
%%%%%%%%%%%%%%%%%%%%%%%%%%%%%%%%%%%%%%%%%%%%%%%%%%%
%%%%%%%%%%%%%%%%%%%%%%%%%%%%%%%%%%%%%%%%%%%%%%%%%%%

\section{A priori estimates for quantum Gibbs state}\label{sec:gibbs}

Now we begin our analysis of the quantum Gibbs state. In this section, we establish several a priori estimates for the quantum Gibbs state, including 

\begin{itemize}

\item First-order correlation estimates, derived from the variational principle and energy functional coercivity; and
\item Second-order correlation estimates, which exploit the specific structure of the Gibbs state.

\end{itemize}
These estimates form the foundation for our asymptotic approximation of the quantum free energy in the following section. While our overall strategy aligns with the framework of \cite[Sections 5--8]{LewNamRou-21}, we significantly simplify the analysis to cleanly track the $\eps$-dependence of the singular potential, enabled by a new abstract correlation inequality from \cite{DeuNamNap-25}. As a consequence, we obtain the quantitative results needed for the treatment of the singular potential.

\subsection{Second quantization formalism.} For every $f\in L^2(\bT^3)$, we  define the annihilation operator $a(f)$ and the creation operator $a^*(f)$ on Fock space $\gF=\gF(L^2(\bT^3))$ by
\begin{equs}[eq:annihilationOperator]
    (a(f) \psi)(x_1,...,x_{n-1}) &= \sqrt{n} \int_{\bT^3} \overline{f(x)}  \psi(x_1,...,x_{n-1},x) \d x,\\
        (a^*(f) \psi)(x_1,...,x_{n+1}) &= \frac1{\sqrt{n+1}} \sum_{j=1}^{n+1} f(x_j) \psi(x_1,...,x_{j-1}, x_{j+1}, ..., x_{n+1}),
\end{equs}
and extend to $\gF$ by linearity. These operators are related to the operators $a_x,a^*_x$ in \eqref{eq:many body hamil-x-intro} by
$$
a(f)= \int_{\bT^3} \overline{f(x)} a_x \d x , \quad a^{*}(f) = \int_{\bT^3} {f(x)} a_x^* \d x,\quad \forall f\in L^2(\bT^3).
$$
In particular, for every $k\in \bZ^3$, we denote $a_k=a(e_k)$, $a_k^*=a^*(e_k)$ with function $e_k(x) = (2\pi)^{-3/2} e^{\imath k \cdot x}$. They satisfy
\begin{equs}[eq:CCR]
\text{}[a_k,a_j]= 0= [a^*_k,a^*_j], \quad [a_k,a^*_j] =  \delta_{k,j},\quad \forall k,j\in \bZ^3.
\end{equs}
which are similar to \eqref{eq:CCR-x}.

If $A$ is a self-adjoint operator on $\gH$, then its second quantization on Fock space can be written as
$$
\d\Gamma (A) = 0\oplus \bigoplus_{n=1}^\infty(\sum_{\ell=1}^n A_\ell)=\sum_{j,k\in\mZ^3}\langle e_j, A e_k\rangle a_j^* a_k = \int_{\bT^3} a^*_x A a_x \d x.
$$
For example, the number operator is $\cN=\dG(1)$. More generally, if $A_n$ is a self-adjoint operator on $\gH^n$, we can write its second quantization $\bA_n$ defined in \eqref{def:mAn} as
\begin{align*}
\bA_n =\frac{1}{n!}\sum_{k_1,\cdots, k_n, j_1,\cdots, j_n \in \bZ^3} \langle e_{k_1} \otimes_s \cdots e_{k_n}, A_n e_{j_1} \otimes_s \cdots e_{j_n} \rangle a_{j_1}^* \cdots a_{j_n}^* a_{k_n}\dots a_{k_1}.
\end{align*}

We will always denote by $h=-\Delta+1$ the one-body operator on $L^2(\bT^3)$. Thus the Gaussian quantum state in \eqref{eq:GFF-quantum} is $\Gamma_0=\cZ_0^{-1}e^{-\lambda \dG(h)}$. Moreover, with our choice of the chemical potential $\vartheta$ in \eqref{def:gamma}, the interacting Hamiltonian $\mH_\lambda$ in \eqref{eq:bH-lambda} can be written as $\lambda \dG(h)+\bW- E_\lambda$ with
\begin{equs}[eq:bW]
	\bW  &= \frac{\lambda^2}{2(2\pi)^3}  (\cN-N_0)^2 + \sum_{k\ne 0} \frac{\lambda^2}{2} \widehat{v^\eps}(k) |\d\Gamma(e_k)|^2  - \lambda \vartheta^\eps (\cN-N_0),\\
\vartheta^\eps&= a^\eps - 6b^\eps - m +1 -  \mathfrak{e}_{\lambda},\quad m=m_0 - 2C_1 - 2C_2,\nn\\
	\mathfrak{e}_{\lambda}&= \lambda \rho_0 - \frac{\zeta\left(\frac{3}2\right)}{(4\pi)^{3/2}} \lambda^{-1/2} {-} C_0  -\frac\lambda2 v^\eps(0) = O(\lambda^{1/2}),
\end{equs}
and
\begin{equs}[def:elamdba]
N_0 = (2\pi)^3 \rho_0 ,\quad E_{\lambda} =     \frac{\lambda^2}{2} (2\pi)^{3} \rho_{0}^2 +  (2\pi)^3 \lambda \rho_0 {\vartheta^\eps},
\end{equs}
where $\rho_0$, $C_0$ and $a^\eps, b^\eps, C_1, C_2$ are introduced in \eqref{eq:rho0}, \eqref{def:C0} and \eqref{eq:counterterms}, respectively.
Moreover, we used the notation $|A|^2=A^*A$ with $A=\dG(e_k)$ the second quantization of the multiplication operator $e_k(x)=(2\pi)^{-3/2} e^{\imath k\cdot x}$. Since the constant $E_\lambda$ plays no role in the Gibbs state $\Gamma_\lambda$ (although it changes the partition function), from now on we will take
\begin{equs}[eq:Gibbs-state-def-new]
\Gamma_{\lambda} = \cZ_{\lambda}^{-1} e^{-\bH_\lambda},\quad \cZ_{\lambda} = \Tr e^{-\bH_\lambda},\quad \bH_\lambda= \lambda \dG(h)+\bW
\end{equs}
as definition, where the renormalized interaction $\bW$ is given in \eqref{eq:bW}. We can also write $\mH_\lambda$ as
\begin{align*}
	\bH_\lambda &= \lambda \int_{\bT^3}  a_x^* (-\Delta_x +m) a_x  \d x+ \frac{\lambda^2}{2} \iint_{\bT^3\times \bT^3}  v^\eps(x-y) (a_x^* a_x - \rho_{0})  (a_y^* a_y -\rho_{0}) \d x \d y \nn \\
	&\quad -  (a^\eps - 6b^\eps {-}	\mathfrak{e}_{\lambda}) \lambda \int (a_x^* a_x -\rho_0)  \d x ,
\end{align*}
which is a quantum analogue of the energy functional in \eqref{e:Phi_eps-measure1}.

\subsection{Variational principle and first-order a-priori estimates.} Now we collect a-priori estimates for the Gibbs state $\Gamma_\lambda$. Our starting point is the Gibbs variational principle, which asserts that the interacting Gibbs state $\Gamma_\lambda=\cZ_\lambda^{-1}e^{-\bH_\lambda}$ in \eqref{eq:Gibbs-state-def-new} is the {\em unique minimizer} for the variational problem
\begin{equs} [eq:rel-energy]
	-\log \frac{\cZ_\lambda}{\cZ_0} = \min_{\substack{\Gamma\ge 0\\ \Tr \Gamma=1}} \Big\{ \cH(\Gamma,\Gamma_{0}) +  \Tr\left[\bW \Gamma\right] \Big\}.
\end{equs}
with the relative entropy $
\cH(\Gamma,\Gamma_{0}) = \Tr [\Gamma (\log \Gamma-\log \Gamma_0)] \ge 0
$. This leads to the following a-priori estimate.

\begin{lemma}[First-order a-priori estimates]\label{lem:partition} The relative partition function satisfies
	\begin{equs}[eq:part bound]
		\left| \log  \frac{\cZ_\lambda}{\cZ_{0} } \right|  \lesim \eps^{-2}.
	\end{equs}
	Consequently, we have the following a-priori estimates on the Gibbs state $\Gamma_\lambda:$
	\begin{equs}[eq:a-priori-estimate-1]
		\cH(\Gamma_\lambda,\Gamma_{0})+ \lambda^2 \Tr [(\cN-N_0)^2  \Gamma_\lambda] \lesim  \eps^{-2},
	\end{equs}
	Moreover, we have the Hilbert--Schmidt estimate on the one-body density matrix:
	\begin{equs}[eq:HS-input]
		\lambda \Big\|\sqrt{h}\big(\Gamma_\lambda^{(1)}-\Gamma^{(1)}_0\big)\sqrt{h}\Big\|_{\rm HS} \lesim \eps^{-2} .
	\end{equs}
\end{lemma}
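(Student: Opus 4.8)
\textbf{Proof strategy for Lemma \ref{lem:partition}.}
The plan is to extract all three estimates from the Gibbs variational principle \eqref{eq:rel-energy}, treating the two bounds on $\log(\cZ_\lambda/\cZ_0)$ separately. For the \emph{lower} bound on $\log(\cZ_\lambda/\cZ_0)$ (equivalently the upper bound on $-\log(\cZ_\lambda/\cZ_0)$), I would test the variational problem with the trial state $\Gamma=\Gamma_0$ itself, which kills the relative entropy and leaves $-\log(\cZ_\lambda/\cZ_0) \le \Tr[\bW\Gamma_0]$. Since $\Gamma_0$ is quasi-free with $\Tr[(\cN-N_0)\Gamma_0]=0$, the expectation of the linear term $-\lambda\vartheta^\eps(\cN-N_0)$ vanishes, and it remains to bound $\Tr[(\cN-N_0)^2\Gamma_0]$ and $\Tr[|\dG(e_k)|^2\Gamma_0]$ against the quasi-free covariance of $\Gamma_0$; these are computed via Wick's theorem in terms of $\Gamma_0^{(1)} = \lambda(e^{\lambda h}-1)^{-1}$, whose trace is $(2\pi)^3\rho_0 = O(\lambda^{-3/2})$, so $\lambda^2 \Tr[(\cN-N_0)^2\Gamma_0] = O(1)$ and the weighted sum $\sum_k \widehat{v^\eps}(k)\lambda^2\Tr[|\dG(e_k)|^2\Gamma_0] = O(\eps^{-2})$ because there are $O(\eps^{-3})$ relevant modes each contributing $O(\lambda^2 \cdot \lambda^{-2})\cdot\widehat{v^\eps}(k)$ and the $k$-sum of $\widehat{v^\eps}(k)=\hat v(\eps k)$ over the relevant range is $O(\eps^{-3})\cdot$(decaying weight) which under $(\mathbf{Hv})$ is $O(\eps^{-2})$; the counterterm contributions $a^\eps\sim\eps^{-1}$, $6b^\eps\sim|\log\eps|$ are dominated. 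For the \emph{upper} bound on $\log(\cZ_\lambda/\cZ_0)$ I would use positivity of the interaction: since $\widehat{v^\eps}(k)\ge0$ and $\widehat{v^\eps}(0)=1$, the quadratic part $\frac{\lambda^2}{2(2\pi)^3}(\cN-N_0)^2 + \sum_{k\ne0}\frac{\lambda^2}{2}\widehat{v^\eps}(k)|\dG(e_k)|^2 \ge 0$, so $\bW \ge -\lambda\vartheta^\eps(\cN-N_0) \ge -\frac{\lambda^2}{2(2\pi)^3}(\cN-N_0)^2 - \frac{(2\pi)^3(\vartheta^\eps)^2}{2}$, and with $\vartheta^\eps = O(\eps^{-1})$ this gives $\bW \ge -C\eps^{-2}$ as an operator, hence $\cZ_\lambda = \Tr e^{-\lambda\dG(h)-\bW} \le e^{C\eps^{-2}}\cZ_0$. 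Combining the two gives \eqref{eq:part bound}.

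Given \eqref{eq:part bound}, estimate \eqref{eq:a-priori-estimate-1} is immediate: apply \eqref{eq:rel-energy} with $\Gamma=\Gamma_\lambda$ to get $\cH(\Gamma_\lambda,\Gamma_0) + \Tr[\bW\Gamma_\lambda] = -\log(\cZ_\lambda/\cZ_0) \lesssim \eps^{-2}$. The relative entropy is nonnegative, so $\Tr[\bW\Gamma_\lambda]\lesssim\eps^{-2}$; then using the operator lower bound $\bW \ge \frac{\lambda^2}{4(2\pi)^3}(\cN-N_0)^2 - C\eps^{-2}$ (splitting off half of the $(\cN-N_0)^2$ term to absorb the linear term exactly as above, keeping the other half and the nonnegative $\sum_{k\ne0}$ piece), one concludes $\lambda^2\Tr[(\cN-N_0)^2\Gamma_\lambda]\lesssim\eps^{-2}$, and nonnegativity of $\bW+C\eps^{-2}$ together with the entropy bound gives $\cH(\Gamma_\lambda,\Gamma_0)\lesssim\eps^{-2}$.

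For the Hilbert--Schmidt bound \eqref{eq:HS-input}, the idea is to exploit convexity of the relative entropy. One standard route: the map $\Gamma\mapsto \cH(\Gamma,\Gamma_0)$ is (jointly) convex, and for quasi-free-type comparison one has a lower bound of the form $\cH(\Gamma_\lambda,\Gamma_0) \ge c\,\Tr\big[ \Xi(\lambda\sqrt h\,\Gamma_\lambda^{(1)}\sqrt h,\ \lambda\sqrt h\,\Gamma_0^{(1)}\sqrt h)\big]$ for a suitable relative-entropy-type functional $\Xi$ on one-body operators, or more directly one uses the known inequality (as in \cite{LewNamRou-21,FKSS22}) that controls $\|\sqrt h(\Gamma_\lambda^{(1)}-\Gamma_0^{(1)})\sqrt h\|_{\HS}^2$ by a constant times $\cH(\Gamma_\lambda,\Gamma_0)$ divided by $\lambda$, after noting that in the relevant momentum regime $\lambda h \lesssim 1$ the second derivative of the entropy density is bounded below. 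Concretely, writing $\gamma=\Gamma_\lambda^{(1)}$, $\gamma_0=\Gamma_0^{(1)} = (e^{\lambda h}-1)^{-1}$, a mode-by-mode computation (both states being block-diagonal in the Fourier basis is not quite true for $\Gamma_\lambda$, but one passes to the one-body level where the comparison is with the quasi-free state having the same one-body density) yields $\cH(\Gamma_\lambda,\Gamma_0) \ge \sum_k h(\gamma_{kk},(\gamma_0)_{kk})$ with $h$ the classical relative entropy of the associated geometric distributions, and $h(s,t)\gtrsim \min(1,t^{-1})\,|s-t|^2$; multiplying by $\lambda^2 h_k^2$ and using $\lambda h_k \gtrsim 1$ for large $k$ (so the weight is tame there) while $\lambda h_k\lesssim 1$ controls the low modes, one gets $\lambda^2\|\sqrt h(\gamma-\gamma_0)\sqrt h\|_{\HS}^2 \lesssim \cH(\Gamma_\lambda,\Gamma_0)\lesssim \eps^{-2}$, whence \eqref{eq:HS-input} after taking square roots.

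\textbf{Main obstacle.} The routine parts are the trial-state computation and the operator lower bound on $\bW$; the delicate point is \eqref{eq:HS-input}, specifically obtaining the quadratic lower bound on the relative entropy with the correct $\lambda$-power and the correct $\sqrt h$ weights uniformly in the momentum cutoff. The subtlety is that $\Gamma_\lambda$ is not quasi-free, so one must either invoke a reduction to the quasi-free state with matching one-body density matrix (using that the relative entropy decreases under this reduction, a fact that itself needs the sub-additivity / monotonicity of entropy of \cite{LewNamRou-21}) or work directly with the pinching to the diagonal blocks; and one must check that the weight $\lambda^2 h_k^2$ is indeed controlled by the local strict convexity constant $\min(1,(\gamma_0)_{kk}^{-1})$ of the entropy, i.e. that $\lambda^2 h_k^2 \cdot (e^{\lambda h_k}-1) \lesssim 1$ uniformly in $k$, which holds since $x^2(e^x-1)\cdot$(appropriate normalization) stays bounded after the $\lambda h_k$ substitution only on a bounded range — so one needs a separate, easy argument for the high modes where $\gamma$ and $\gamma_0$ are both exponentially small. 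I expect this weight-tracking to be where the real work lies; everything else follows the template already established in \cite{LewNamRou-21,FKSS22}.
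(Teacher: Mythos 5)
Your proof for \eqref{eq:part bound} and \eqref{eq:a-priori-estimate-1} is essentially the paper's argument: test the Gibbs variational principle with $\Gamma_0$ to get $-\log(\cZ_\lambda/\cZ_0)\le\Tr[\bW\Gamma_0]$, bound $\lambda^2\Tr[(\cN-N_0)^2\Gamma_0]\lesssim\Tr[h^{-2}]$, control the exchange energy by $\sum_{k\ne0}\widehat{v^\eps}(k)\Tr[e_k^*h^{-1}e_kh^{-1}]\lesssim\sum_{k\ne0}\widehat{v^\eps}(k)(1+|k|)^{-1}\lesssim\eps^{-2}$ via the operator bound $\lambda\Gamma_0^{(1)}\le h^{-1}$ and Lemma \ref{lem:sum}, and use the Cauchy--Schwarz absorption $\frac{\lambda^2}{2(2\pi)^3}(\cN-N_0)^2-\lambda\vartheta^\eps(\cN-N_0)\ge\frac{\lambda^2}{4(2\pi)^3}(\cN-N_0)^2-C(\vartheta^\eps)^2$ for the lower bound. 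Your heuristic $\Tr[e_k^*h^{-1}e_kh^{-1}]=O(\lambda^2\cdot\lambda^{-2})$ is imprecise; the right statement is that this trace is $\sim(1+|k|)^{-1}$ by the convolution lemma, and it is this $|k|^{-1}$ decay that, combined with the $\hat v(\eps k)$ cutoff, produces the $\eps^{-2}$.

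For \eqref{eq:HS-input}, the paper does not reconstruct anything: it simply applies the inequality from \cite[Theorem 6.1]{LewNamRou-21},
\begin{equs}
\lambda^2\tr\Big[\big|\sqrt{h}\big(\Gamma^{(1)}-\Gamma_0^{(1)}\big)\sqrt{h}\big|^2\Big]\le 4\,\cH(\Gamma,\Gamma_0)\big(\sqrt{2}+\sqrt{\cH(\Gamma,\Gamma_0)}\big)^2,
\end{equs}
obtained by replacing $h\mapsto\lambda h$ in the original statement. Your sketch of a self-contained derivation is where the genuine gap lies. First, the claimed linear bound $\lambda^2\|\sqrt{h}(\gamma-\gamma_0)\sqrt{h}\|_{\HS}^2\lesssim\cH(\Gamma_\lambda,\Gamma_0)$ is not what the theorem gives: for $\cH\gg 1$ the factor $(\sqrt2+\sqrt{\cH})^2\sim\cH$ is essential, and here $\cH\lesssim\eps^{-2}\gg1$, so the correct conclusion is $\lambda^2\|\cdot\|_{\HS}^2\lesssim\eps^{-4}$ and hence $\lambda\|\cdot\|_{\HS}\lesssim\eps^{-2}$. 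Your "after taking square roots" step would yield the false strengthening $\lambda\|\cdot\|_{\HS}\lesssim\eps^{-1}$. Second, and more fundamentally, pinching the relative entropy to the Fourier-diagonal blocks controls only the diagonal entries $\gamma_{kk}$, whereas $\|\sqrt{h}(\gamma-\gamma_0)\sqrt{h}\|_{\HS}^2=\sum_{k,j}h_kh_j|\gamma_{kj}-\delta_{kj}(\gamma_0)_{kk}|^2$ involves all off-diagonal entries as well. Controlling these requires the full machinery of \cite[Theorem 6.1]{LewNamRou-21} (comparison with a quasi-free state of matching one-body density and a careful Klein-type operator inequality), not a termwise strict-convexity estimate. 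You should simply cite that result rather than attempt to reprove it here.
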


\begin{proof} From \eqref{eq:rel-energy}, by using $\Gamma_0$ as a trial state, we have the upper bound
	\begin{equs}[eq:free-energy-simple-1a]
		-\log \frac{\cZ_\lambda}{\cZ_0} \le \Tr [\bW \Gamma_0] = \frac{\lambda^2}{2{(2\pi)^3}}  \Tr[ (\cN-N_0)^2 \Gamma_0] + \frac{\lambda^2}{2} \sum_{k\ne 0}  \widehat{v^\eps}(k) \Tr \Big[ |\d\Gamma(e_k)| ^2 \Gamma_0 \Big] .
	\end{equs}
	Here the expectation of $\vartheta^\eps (\cN-N_0)$ in $\bW$ against $\Gamma_0$ is $0$. From  \cite[Eq. (5.43)]{LewNamRou-21} we have
	\begin{equs}[eq:free-energy-simple-1b]
		\lambda^2 \Tr[ (\cN-N_0)^2 \Gamma_0] \lesim \Tr [h^{-2}] \lesim 1.
	\end{equs}
	The second term on the right-hand side of \eqref{eq:free-energy-simple-1a} is called the exchange energy, which can be computed explicitly. Note that if $k\ne 0$, then $\Tr [\dG(e_k)\Gamma_0]=0$ since $\Gamma_0$ preserves the total momentum while $\dG(e_k)=\frac1{(2\pi)^{3/2}}\sum_{p\in \bZ^3} a^*_{p+k}a_p$ does not. Therefore, from the variance computation in  \cite[Lemma 5.11]{LewNamRou-21}, we have
	$$
	\frac{\lambda^2}{2} \sum_{k\ne 0}  \widehat{v^\eps}(k) \Tr \Big[ |\d\Gamma(e_k)| ^2 \Gamma_0 \Big] = \frac{\lambda^2}{2} \sum_{k\ne 0} \widehat{v^\eps}(k)\Big( \Tr [e_k^* \Gamma_0^{(1)}e_k \Gamma_0^{1}]+\frac1{(2\pi)^3}\tr(\Gamma_0^{(1)})\Big).
	$$
	By the operator inequality
	$$\lambda \Gamma_0^{(1)}=\frac{\lambda}{e^{\lambda h}-1} \le h^{-1}$$
	and the cyclicity of the trace we can bound
	\begin{align*}
		\lambda^2 \Tr [e_k^* \Gamma_0^{(1)}e_k \Gamma_0^{(1)}] &=  \lambda^2 \Tr \Big[\sqrt{ \Gamma_0^{(1)}} e_k^* \Gamma_0^{(1)}e_k \sqrt{\Gamma_0^{(1)}}\Big]
		\le \lambda \Tr \Big[\sqrt{ \Gamma_0^{(1)}} e_k^* h^{-1}e_k \sqrt{\Gamma_0^{(1)}}\Big] \\
		&=\lambda \Tr \Big[h^{-1/2} e_k \Gamma_0^{(1)}  e_k^* h^{-1/2}  \Big] \le \Tr \Big[h^{-1/2} e_k h^{-1} e_k^* h^{-1/2}  \Big]\\
		&= \Tr \Big[ e_k^* h^{-1} e_k h^{-1}  \Big].
	\end{align*}
	Therefore,
	\begin{equs}[eq:free-energy-simple-1c]
		\frac{\lambda^2}{2}  \sum_{k\ne 0}  \widehat{v^\eps}(k) \Tr \Big[  |\d\Gamma(e_k)| ^2 \Gamma_0\Big] &\le \frac{1}{2}  \sum_{k\ne 0}  \widehat{v^\eps}(k)  \Tr \Big[ e_k^* h^{-1} e_k h^{-1}  \Big] +\frac{\lambda^2}{2} \sum_{k\ne 0} \widehat{v^\eps}(k)\frac1{(2\pi)^3}\tr(\Gamma_0^{(1)}).\nn\\
		&\lesssim  \sum_{k\ne 0}  \widehat{v^\eps}(k) (\widehat{G^2}(k)+\lambda^{1/2-}) \lesssim \sum_{k\ne 0}  \widehat{v^\eps}(k)\Big(\frac1{1+|k|}+\lambda^{1/2-}\Big)\lesim \eps^{-2}.
	\end{equs}
	Here $G$ is the Green function and in the last second estimate we used Lemma \ref{lem:sum}.

 From \eqref{eq:free-energy-simple-1a}, \eqref{eq:free-energy-simple-1b} and \eqref{eq:free-energy-simple-1c}, we obtain the upper bound
	\begin{equs}[eq:free-energy-simple-1d]
		-\log \frac{\cZ_\lambda}{\cZ_0} \le \Tr [\bW \Gamma_0] \lesim \eps^{-2}.
	\end{equs}
	
	On the other hand, the Cauchy--Schwarz inequality implies for some $C>0$
	\begin{equs}[eq:free-energy-simple-21]
		\frac{\lambda^2}{2(2\pi)^3}  (\cN-N_0)^2  -  \lambda \vartheta^\eps (\cN-N_0) \ge  \frac{\lambda^2}{4(2\pi)^3}  (\cN-N_0)^2   -  C (\vartheta^\eps)^2.
	\end{equs}
	Therefore,
	\begin{equs}[eq:free-energy-simple-2]
		&-\log \frac{\cZ_\lambda}{\cZ_0}= \cH(\Gamma_\lambda,\Gamma_0) +{\rm Tr}[\bW \Gamma_\lambda]
		%\Tr
		%[\bW \Gamma_\lambda]
		\nn\\
		&\ge \cH(\Gamma_\lambda,\Gamma_0) + \sum_{k\ne 0} \frac{\lambda^2}{2}  \widehat{v^\eps}(k)
		{\rm Tr}[|\d\Gamma(e_k)|^2 \Gamma_\lambda]  + \frac{\lambda^2}{4(2\pi)^3} {\rm Tr}[ (\cN-N_0)^2 \Gamma_\lambda] - C (\vartheta^\eps)^2.
	\end{equs}
	
	Since $\cH(\Gamma_\lambda,\Gamma_0) \ge 0$ and $\widehat{ v^\eps} \ge 0$, the right-hand side of \eqref{eq:free-energy-simple-2} is bounded from below by $ - C (\vartheta^\eps)^2= O(\eps^{-{2}})$. Thus we get the lower bound
	\begin{equs}[eq:free-energy-simple-1e]
		-\log \frac{\cZ_\lambda}{\cZ_0} \gesim - \eps^{-2},
	\end{equs}
	thus concluding the proof of \eqref{eq:part bound}.
	
	From \eqref{eq:free-energy-simple-2} and \eqref{eq:free-energy-simple-1d},  we also obtain \eqref{eq:a-priori-estimate-1}.  From the relative entropy estimate $\cH(\Gamma_\lambda,\Gamma_0) \lesim \eps^{-2}$ and the  inequality from \cite[Theorem 6.1]{LewNamRou-21}:
	\begin{equs}[eq:estim_HS_relative_entropy]
		\lambda^2 \tr\Big[ \left|\sqrt{h}\big(\Gamma^{(1)}-\Gamma^{(1)}_0\big)\sqrt{h}\right|^2\Big]\leq 4\,\cH(\Gamma,\Gamma_0)\left(\sqrt2 +\sqrt{\cH(\Gamma,\Gamma_0)}\right)^2,
	\end{equs}
	we obtain \eqref{eq:HS-input}. Note that there is no $\lambda$ in the statement of \cite[Theorem 6.1]{LewNamRou-21}, but applying this abstract with $h\mapsto \lambda h$, which is due to our choice $\Gamma_0=\cZ_0^{-1} e^{-\lambda \dG(h)}$, then we have the factor $\lambda^2$ in \eqref{eq:estim_HS_relative_entropy}.
\end{proof}

%%%

\subsection{Second order correlation estimates for high momenta}

In principle, the estimates in \eqref{eq:a-priori-estimate-1} hold for any ``good approximate minimizer" of the free energy. But to control the contribution from high momenta, we will need to use specific properties of the Gibbs state $\Gamma_\lambda$, which do not hold for any approximate minimizer of the free energy. The main result of this subsection is the following key correlation estimates, which allow us to reduce the quantum problem to a finite dimensional setting.

\begin{theorem}[Second order correlation estimates] \label{thm:correlation-Gamma} Let $N\in \mN$ and $\lambda^{-1/4} \ge N \ge \eps^{-8}$. Assume  $ \1(h\le (N+1)^2/C) \le  P_N \le  \1(h\le (N+1)^2)$ and $[P_N,h]=0$.  For all $k\in \bZ^3$, we denote $e_k=e_k^+ +e_k^-$ with $e_k^-= P_N e_k P_N$, where $e_k$ is the multiplication operator with $(2\pi)^{-3/2}e^{\imath k\cdot x}$. Then
	\begin{equs}
		[eq:variance bound]
		\lambda^2  \left\langle \left| \dG (e_k ^+) - \langle \dG (e_k ^+) \rangle_0  \right| ^2 \right\rangle_\lambda
		\lesim  \eps^{-2}N^{-1/4} + \eps^{-1}  (\lambda^{3/2}   |k|^2 + \lambda^{3/2}   |k| {N}) + \eps^{-5}\lambda,
	\end{equs} 
	and
	\begin{equs}
		[eq:variance bound-2]
		\lambda^2  \left\langle \left| \dG (e_k ^-) - \langle \dG (e_k ^-) \rangle_0  \right| ^2 \right\rangle_\lambda
		\lesim \eps^{-4} + \eps^{-5} \lambda^{3/2}   {N^2} + \eps^{-9}\lambda.
	\end{equs}
	Here $\langle\cdot \rangle_0$ and $\langle\cdot \rangle_\lambda$ denote the expectation against $\Gamma_\lambda$ and $\Gamma_0$ in the Fock space $\gF$, respectively.   
\end{theorem}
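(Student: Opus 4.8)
\textbf{Proof strategy for Theorem \ref{thm:correlation-Gamma}.}
The plan is to follow the Hellmann--Feynman/correlation inequality approach of \cite{LewNamRou-21}, as sharpened by the new correlation inequality of \cite{DeuNamNap-25}. The starting point is that the variance of a self-adjoint second-quantized observable $A=\dG(e_k^{\pm})$ in the Gibbs state $\Gamma_\lambda$ is controlled by the Duhamel two-point function and hence, via the correlation inequality, by the response of first-order moments to a linear perturbation of the Hamiltonian. Concretely, I would introduce the perturbed Hamiltonian $\bH_\lambda^{(\mu)} = \bH_\lambda + \mu\, \dG(f)$ for a bounded one-body operator $f$ (e.g. $f = \Re e_k^{\pm}$ or $\Im e_k^{\pm}$), let $\Gamma_\lambda^{(\mu)}$ be the corresponding Gibbs state, and use the abstract bound of \cite{DeuNamNap-25} to write
\begin{equs}
\lambda^2 \big\langle |\dG(e_k^{\pm}) - \langle \dG(e_k^{\pm})\rangle_0|^2 \big\rangle_\lambda \lesssim \lambda^2 \sup_{\mu}\Big| \partial_\mu \langle \dG(f)\rangle_{\mu} \Big| + (\text{error terms}),
\end{equs}
where the derivative in $\mu$ is taken over a small interval and the error terms involve commutators $[\bH_\lambda, \dG(e_k^{\pm})]$ and the a-priori bounds from Lemma \ref{lem:partition}. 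Splitting $e_k = e_k^+ + e_k^-$ is essential because the high-momentum component $e_k^+$ (supported on $h > \Lambda$) enjoys strong decay in $\Lambda$, whereas the low-momentum part $e_k^-$ is finite-dimensional and can be controlled more crudely but uniformly.

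The key steps, in order, are as follows. First, I would establish the commutator estimates: compute $[\bH_\lambda, \dG(e_k)]$ explicitly using the CCR \eqref{eq:CCR}, separating the contributions of the kinetic term $\lambda\dG(h)$, the diagonal term $\lambda^2(\cN-N_0)^2$, the exchange-type terms $\lambda^2\widehat{v^\eps}(k')|\dG(e_{k'})|^2$, and the linear counterterm $\lambda\vartheta^\eps(\cN-N_0)$. The kinetic commutator produces a factor $\lambda |k|^2$ (from $[h, e_k]$) together with a first-order operator, and on the range of $P$ this is further bounded by $\lambda|k|\Lambda^{1/2}$; the interaction commutators produce $\lambda^2$ times number-type operators, controlled by \eqref{eq:a-priori-estimate-1}. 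This is where the explicit powers $\lambda^{3/2}|k|^2$, $\lambda^{3/2}|k|\Lambda^{1/2}$, $\eps^{-5}\lambda$ in \eqref{eq:variance bound} originate (the $\lambda^{3/2}$ comes from combining one factor of $\lambda$ from the commutator with the $\lambda^{1/2}$-type bounds on $\cN-N_0$ and the $\eps^{-2}$-type entropy bounds, after optimizing). Second, for the high-momentum piece $e_k^+$, I would exploit that $P^\perp h P^\perp \ge \Lambda$, so that $\|h^{-1/2} e_k^+\|$ and related operator norms pick up negative powers of $\Lambda$; feeding the Hilbert--Schmidt bound \eqref{eq:HS-input} and the relative entropy bound into the Hellmann--Feynman derivative yields the $\eps^{-2}\Lambda^{-1/4+}$ term. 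Third, for the low-momentum piece $e_k^-$, since $P$ projects onto a space of dimension $\lesssim \Lambda^{3/2}$, the relevant traces gain factors of $\dim(\mathrm{ran} P)$; tracking these through the same scheme gives $\eps^{-4} + \eps^{-5}\lambda^{3/2}\Lambda + \eps^{-9}\lambda$ in \eqref{eq:variance bound-2}, with the constraint $\Lambda \ge \eps^{-9}$ ensuring the finite-dimensional reduction is not wasteful and $\Lambda \le \lambda^{-1/4}$ ensuring the $\lambda$-dependent error terms remain subleading.

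The main obstacle will be the bookkeeping of the Hellmann--Feynman argument: controlling $\partial_\mu \langle \dG(f)\rangle_\mu$ uniformly in $\mu$ over the perturbation interval requires re-deriving the a-priori estimates of Lemma \ref{lem:partition} for the perturbed states $\Gamma_\lambda^{(\mu)}$, i.e. checking that adding $\mu\dG(f)$ with $\mu$ in a suitable ($\eps$- and $\Lambda$-dependent) range does not destroy the lower bound on the relative partition function nor the positivity structure that makes the Cauchy--Schwarz step \eqref{eq:free-energy-simple-21} work. Concretely, one must absorb $\mu\dG(f)$ into the quadratic form $\lambda\dG(h) + \tfrac{\lambda^2}{4(2\pi)^3}(\cN-N_0)^2$ and verify the perturbed free energy stays bounded by $C\eps^{-2}$ (plus the $\mu$-dependent corrections), so that the uniform moment bounds on $\cN-N_0$ and on $\dG(e_{k'})$ persist. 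A secondary difficulty is keeping the three competing small/large parameters $\lambda, \eps, \Lambda$ correctly balanced so that every error term is genuinely a lower-order contribution within the stated ranges $\lambda^{-1/4}\ge\Lambda\ge\eps^{-9}$; this is routine but must be done carefully to land exactly on the exponents claimed. Once these uniform perturbed estimates are in hand, the two bounds \eqref{eq:variance bound} and \eqref{eq:variance bound-2} follow by collecting the commutator contributions and the Hellmann--Feynman term and optimizing.
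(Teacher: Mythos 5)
Your overall plan matches the paper's: invoke the abstract correlation inequality from \cite{DeuNamNap-25} (reproduced in the paper as Theorem~\ref{thm:correlation-intro}), establish uniform a-priori estimates for the perturbed Gibbs states by rerunning Lemma~\ref{lem:partition}, then estimate the double commutator $[\mB,[\bH_\lambda,\mB]]$ to land on the stated exponents; the paper does exactly this in Steps 1--3 of its proof, with $\mB=\tfrac14\lambda(\dG(f_k^+)-\langle\dG(f_k^+)\rangle_0)$ for $f_k\in\{\cos(k\cdot x),\sin(k\cdot x)\}$ in the high-momentum case, and $\widetilde B=\eps^2\lambda(\dG(f_k^-)-\langle\dG(f_k^-)\rangle_0)$ in the low-momentum case.

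However, there is a conceptual misstatement in how you write the correlation inequality. You propose bounding the variance by $\sup_\mu|\partial_\mu\langle\dG(f)\rangle_\mu|$, a Hellmann--Feynman derivative. But $\partial_\mu\langle\dG(f)\rangle_\mu$ \emph{is} (up to a Duhamel integral) a variance-type quantity in the perturbed state, so controlling it is essentially as hard as controlling $\langle|\dG(f)-\langle\dG(f)\rangle|^2\rangle$ itself --- the argument risks circularity, and this is the bottleneck of the older \cite[Theorem~7.1]{LewNamRou-21} approach. The improvement in \cite{DeuNamNap-25}, stated in the paper as
\begin{equs}
\Tr[B^2 G_0]\le a e^a + \tfrac14\Tr\big[[B,[A,B]]G_0\big],\qquad a=\sup_{t\in[-1,1]}|\Tr(B G_t)|,
\end{equs}
replaces the derivative by a \emph{supremum of first moments} plus a double commutator. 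The first moments $\Tr(B G_t)$ are controllable via the uniform Hilbert--Schmidt estimate \eqref{eq:HS-input-per} on $\Gamma_{\lambda,t}^{(1)}-\Gamma_{0,t}^{(1)}$ (one-body information), which you do have after rerunning the free energy bounds, and the double commutator is purely algebraic. If you write the correlation inequality correctly in this form, the rest of your outline goes through as stated. One secondary point: for the low-momentum bound \eqref{eq:variance bound-2} the relevant gain is the operator-norm bound $\|Ph\|\le\Lambda$ in the double commutator $[f_k^-,[f_k^-,h]]$ (plus an explicit $\eps^2$-rescaling of $\widetilde B$ to make the first-moment bound $a=O(1)$), not the dimension $\dim(\mathrm{ran}\,P)\lesssim\Lambda^{3/2}$ appearing as extra trace factors; the dimension enters only indirectly through $\|Ph^{-1}\|_{\rm HS}$ in the first-moment step.
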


{{Note that here we do not assume that $P_N$ is a projection. This flexibility allows us to apply the bounds on the classical field theory established in Section \ref{sec:part}.}}

This result is an extension of \cite[Theorem 8.1]{LewNamRou-21} which is concerned with the case $\eps \ssim 1, P_N=\1(h\le (N+1)^2)$ and which  is the most challenging part in \cite{LewNamRou-21}. The bound \eqref{eq:variance bound-2} was not  included in \cite{LewNamRou-21} and we add it here to improve the condition on $v$ later. In the proof below, we will use the following recent result from \cite[Theorem 2]{DeuNamNap-25}, which is an improvement of \cite[Theorem 7.1]{LewNamRou-21}.

\begin{theorem}[Second order correlation inequality] \label{thm:correlation-intro} Let $A$ be a self-adjoint operator on a separable Hilbert space such that $\Tr [e^{-sA} ] <+\infty$ holds for all $s> 0$. Let $B$ be a symmetric operator such that $B$ is $A$-relatively bounded with a relative bound strictly smaller than $1$. We also assume that the perturbed Gibbs states
	\begin{equs}[eq:Gibbs-t-intro]
		G_t = \frac{\exp(-A+tB)}{\Tr [ \exp(-A+tB) ]} , \quad t\in [-1,1]
	\end{equs}
	satisfies
	\begin{equs}[eq:CRI-condition-intro]
		\sup_{t\in [-1,1]}|\Tr (B G_t )| \le a.
	\end{equs}
	Then we have
	\begin{equs}[eq:StahlA2-intro]
		\Tr [ B^2 G_0 ] \leq a e^{a} + \frac{1}{4} \Tr[ [B, [A, B]] G_0].
	\end{equs}
\end{theorem}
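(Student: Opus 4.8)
\textbf{Proof strategy for Theorem \ref{thm:correlation-intro} (the abstract second order correlation inequality).}

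The plan is to exploit the three-term relation connecting $\Tr[B^2 G_0]$ to a Duhamel-type two-point function and to the double commutator $[B,[A,B]]$, and then to control the Duhamel term by the first-order information \eqref{eq:CRI-condition-intro}. First I would introduce the function $f(t) = \log \Tr[\exp(-A+tB)]$, which is well defined, real-analytic and convex on $[-1,1]$ under the stated hypotheses (this requires $\Tr[e^{-sA}]<\infty$ for $s>0$ together with the relative boundedness of $B$, which makes $-A+tB$ self-adjoint and bounded below with trace-class exponential). A direct computation gives $f'(t) = \Tr[B G_t]$ and $f''(t) = \int_0^1 \Tr\big[ (B - \Tr[BG_t]) G_t^{s} (B-\Tr[BG_t]) G_t^{1-s}\big]\, \d s \ge 0$, the Duhamel (Kubo--Mori) two-point function of the centered observable. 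Hypothesis \eqref{eq:CRI-condition-intro} says $|f'(t)| \le a$ for all $t\in[-1,1]$, hence $f(1) + f(-1) - 2 f(0) = \int_0^1 (1-|s|)\big(f''(s) + f''(-s)\big)\,\d s$ — more simply, convexity and the mean value theorem give $f(1) - f(-1) = 2 f'(\xi)$ for some $\xi$, and combined with the bound on $f'$ one extracts $f(1)+f(-1)-2f(0) \le 2a e^{a}$ or a similar explicit bound; I would track the constants to match the $a e^a$ in \eqref{eq:StahlA2-intro}.

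The second ingredient is the comparison between the Duhamel two-point function $f''(0)$ and the ``static'' variance $\Tr[B^2 G_0] - (\Tr[B G_0])^2$. Here the key is Stahl's theorem (the former Bessis--Moussa--Villani conjecture, cited as \cite{Sta13} in the introduction): for self-adjoint $A$ and $B$, the function $s \mapsto \Tr[e^{-A+sB}]$... more precisely, the relevant consequence is that the map $p \mapsto \Tr[ B\, G_0^{p}\, B\, G_0^{1-p}]$ admits a representation as a Laplace transform of a positive measure in the variable $p$, which forces $\int_0^1 \Tr[B G_0^s B G_0^{1-s}]\,\d s \ge $ something controllable by, and $\le$, the endpoint value $\Tr[B^2 G_0]$ in the appropriate direction. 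The precise statement I would invoke is the operator/trace inequality comparing Kubo--Mori and ordinary variances; combined with the commutator identity
\begin{equs}
\Tr[B^2 G_0] - \int_0^1 \Tr[B G_0^s B G_0^{1-s}]\,\d s = \frac14 \Tr\big[[B,[A,B]] G_0\big] + (\text{lower order / sign-definite remainder}),
\end{equs}
this yields \eqref{eq:StahlA2-intro}. I would obtain the commutator identity by expanding $\frac{\d^2}{\d s^2}\big|_{s=0}$ of $\Tr[e^{-A+sB}B e^{-(A-sB)} \cdots]$ type expressions, or equivalently from the second-order Duhamel expansion of $e^{-A+sB}$, integrating by parts in the Duhamel variable to trade a factor of $A$ for the double commutator.

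Putting the two steps together: from $f(1)+f(-1)-2f(0) \le a e^a$ (up to constants) and Jensen/convexity one gets $f''(0) \le a e^a$; then the Stahl-based comparison upgrades this to $\Tr[B^2 G_0] - (\Tr[BG_0])^2 \le a e^a + \frac14 \Tr[[B,[A,B]]G_0]$, and since $|\Tr[BG_0]|\le a$ is already absorbed (or one simply keeps $\Tr[B^2 G_0]$ on the left and notes $(\Tr[BG_0])^2\ge 0$ can be dropped or, if needed, bounded by $a^2 \le a e^a$), we arrive at \eqref{eq:StahlA2-intro}. \emph{The main obstacle} I anticipate is making the Stahl-theorem input rigorous at the level of unbounded operators: Stahl's theorem is stated for matrices, so I would need an approximation argument — truncating $A$ to $A_N = A\,\1(A\le N)$ (or using the spectral projections onto $\1(A\le N)$ together with the relative boundedness of $B$ to keep $-A_N+tB$ uniformly bounded below), proving the inequality in finite dimensions, and passing to the limit using monotone/dominated convergence for traces together with the a-priori bound $\Tr[e^{-sA}]<\infty$. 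The relative-bound-$<1$ hypothesis is exactly what guarantees $G_t$ and all the traces appearing ($\Tr[B^2 G_0]$, $\Tr[[B,[A,B]]G_0]$) are finite and that the limits are legitimate; keeping careful track that the double commutator $[B,[A,B]]$ is form-bounded relative to $A$ will be the technical heart of the limiting argument.
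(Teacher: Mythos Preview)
The paper does not prove this theorem; it is quoted from \cite{DeuNamNap-25} and only applied. So there is no ``paper's own proof'' to compare against, but your outline has a genuine gap worth flagging.

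You have correctly identified the two ingredients --- a comparison between $\Tr[B^2G_0]$ and the Duhamel two-point function $(B,B)_D=\int_0^1\Tr[BG_0^sBG_0^{1-s}]\,\d s$, and a bound on $(B,B)_D$ from the first-moment hypothesis --- but you have them swapped. The commutator step is \emph{elementary}: diagonalising $A$ one sees $h(s)=\Tr[BG_0^sBG_0^{1-s}]=\sum_{i,j}|B_{ij}|^2e^{-sE_j-(1-s)E_i}$ is convex in $s$, and a direct computation gives $h'(0)=-\tfrac12\Tr[[B,[A,B]]G_0]$; then the tangent inequality $h(s)\ge h(0)+sh'(0)$ integrated over $[0,1]$ yields $\Tr[B^2G_0]\le (B,B)_D+\tfrac14\Tr[[B,[A,B]]G_0]$ with no remainder. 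Stahl is not needed here. Where Stahl's theorem actually enters is in bounding $(B,B)_D=Z''(0)/Z(0)$ with $Z(t)=\Tr[e^{-A+tB}]$. Your attempt ``from $f(1)+f(-1)-2f(0)\le ae^a$ and convexity one gets $f''(0)\le ae^a$'' does not work: convexity of $f=\log Z$ only bounds $\int_{-1}^1(1-|t|)f''(t)\,\d t\le 2a$, not the pointwise value $f''(0)$. What Stahl gives is the representation $Z(t)=\int e^{t\lambda}\,\d\mu(\lambda)$ for a \emph{positive} measure $\mu$; then the scalar inequality $\lambda^2\le\lambda\sinh\lambda$ yields $Z''(0)=\int\lambda^2\,\d\mu\le\tfrac12(Z'(1)-Z'(-1))\le\tfrac{a}{2}(Z(1)+Z(-1))$, and since $|f'|\le a$ forces $Z(\pm1)/Z(0)\le e^a$, one obtains $(B,B)_D\le ae^a$. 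The finite-dimensional approximation you mention at the end is indeed needed to invoke Stahl, and that part of your plan is sound.
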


\begin{proof}[Proof of Theorem \ref{thm:correlation-Gamma}] Let us denote $P=P_N$ for short. Since $e_k=(2\pi)^{-3/2} e^{\imath k\cdot x}= (2\pi)^{-3/2} (\cos (k\cdot x) + \imath \sin (k\cdot x))$, it suffices to prove \eqref{eq:variance bound} with  $e_k^+$ replaced by $f_k^+= f_k  - P f_k P$ where $f_k(x) \in \{\cos (k\cdot x),\sin (k\cdot x)\}$ is real-valued. To apply Theorem \ref{thm:correlation-intro}, we consider the perturbed Gibbs states
	\begin{equs}[eq:Gibbs-t]
		\Gamma_{\lambda,t} = \cZ_{\lambda,t}^{-1} e^{-\bH_\lambda+t\mB} ,\quad \Gamma_{0,t} = \cZ_{0,t}^{-1} e^{-\lambda \dG(h)+t\mB} ,\quad  t\in [-1,1]
	\end{equs}
with
$$\cZ_{\lambda,t}=\tr[ e^{-\bH_\lambda+t\mB}] ,\quad \cZ_{0,t}=\tr[ e^{-\lambda \dG(h)+t\mB} ],$$
	$$\mB=\frac{1}{4}\lambda (  \d\Gamma(f_k^+) - \langle \dG (f_k ^+)\rangle_0 ). %=\frac{1}{4}\begin{cases}\lambda (\dG(Q) - \langle \dG(Q)\rangle_0)\quad &\text{if}\quad k=0,\\\lambda \dG(f_k^+)\quad & \text{if}\quad k\ne 0.\end{cases}
	$$
	Note that $k\ne 0$, $\langle \dG (f_k^+)\rangle_0=\langle \dG (f_k)\rangle_0 = \langle \dG (Pf_kP)\rangle_0=0$ due to the fact that $\Gamma_0$ preserves the total momentum (we have the same identity for $e_k$ and the function $f_k$ is a linear combination of $e_k$ and $e_{-k}$).

Note that the constant $\langle \dG (f_k ^+)\rangle_0$ in $\mB$ does not change the Gibbs states $\Gamma_{\lambda,t}$ and $\Gamma_{0,t}$, but it affects the partition functions $\cZ_{\lambda,t}$ and $\cZ_{0,t}$.  Hence, equivalently we can write
$$
		\Gamma_{\lambda,t} = e^{-\frac{1}{4}\lambda t \langle \dG (f_k ^+)\rangle_0}\cZ_{\lambda,t}^{-1} e^{-\lambda \dG(h_t) -\bW} ,\quad \Gamma_{0,t} = e^{-\frac{1}{4}\lambda t \langle \dG (f_k ^+)\rangle_0} \cZ_{0,t}^{-1} e^{-\lambda \dG(h_t)} ,\quad  t\in [-1,1],
$$
with
$$
h_t = h - \frac{t}{4} f_k^+  = -\Delta + 1 - \frac{t}{4} f_k^+, \quad t\in [-1,1].
$$

For all $t\in [-1,1]$, since $\|f_k^+\|\le 2$, we have $\|h_t-h\|\le 1/2$, and consequently $h \gesim h_t \gesim h$,  where and in the following $\|\cdot\|$ means the operator norm. In the following, we will use the notation $Q=\1-P$. Thus
\begin{equs}[eq:Q-properties]
 \text{} [Q,h]=0, \quad 0\le Q \le \1(h\ge (N+1)^2/C), \quad \| Q h^{-1}\|_{\rm HS}^2 \lesim  \sum_{ |k|^2 \ge (N+1)^2/C} \frac{1}{(| k|^2+1)^2} \lesim  N^{-1}.
\end{equs}

	{\bf Step 1.} Let us mimic the proof of Lemma \ref{lem:partition} to show that
	\begin{equs}[eq:HS-input-per]
		\lambda \Big\|\sqrt{h}\big(\Gamma_{\lambda,t}^{(1)}-\Gamma^{(1)}_{0,t}\big)\sqrt{h}\Big\|_{\rm HS} \lesim \eps^{-2}.
	\end{equs}
	Note that $\Gamma_{\lambda,t}$ is the unique minimizer for the following variational problem, which is similar to \eqref{eq:rel-energy},
\begin{equs}[eq:rel-energy-t]
	-\log \frac{\cZ_{\lambda,t}}{\cZ_{0,t}} = \min_{\substack{\Gamma\ge 0\\ \Tr \Gamma=1}} \Big\{ \cH(\Gamma,\Gamma_{0,t}) +  \Tr\left[\bW \Gamma\right] \Big\}.
\end{equs}
Using $\Gamma_{0,t}$ as a trial state for the variational principle of $\Gamma_{\lambda,t}$ similar to \eqref{eq:rel-energy}, we have  the upper bound
	\begin{equs}[eq:free-energy-simple-1a-per]
		-\log \frac{\cZ_{\lambda,t}}{\cZ_{0,t}} &\le  \frac{\lambda^2}{2(2\pi)^3}  {\rm Tr}[ (\cN-N_0)^2 \Gamma_{0,t}] + \frac{\lambda^2}{2} \sum_{k\ne 0}  \widehat{v^\eps}(k) {\rm Tr} \Big[ |\d\Gamma(e_k)| ^2 \Gamma_{0,t} \Big] \nn\\
		&\quad -\lambda \vartheta^\eps {\rm Tr}[ (\cN-N_0) \Gamma_{0,t}].
	\end{equs}
	The term $\vartheta^\eps \Tr[ (\cN-N_0) \Gamma_{0,t}]$ can be estimated by \cite[Lemma 5.4]{DeuNamNap-25}  as
	\begin{equs}[eq:N0t-N0]
		\lambda \left| {\rm Tr}[ \cN \Gamma_{0,t}] - {\rm Tr}[ \cN \Gamma_{0}] \right| &=\left|  {\rm Tr} \left[ \frac{\lambda}{e^{\lambda h_t}-1} - \frac{\lambda}{e^{\lambda h}-1} \right] \right| \nn\\
		&\lesim \| h_t- h \| ( {\rm Tr} [h^{-2}] + \Tr [h_t^{-2}] ) \lesim 1.
	\end{equs}
	(In \cite[Lemma 5.4]{DeuNamNap-25}, the trace is taken over the subspace $\{e_0\}^\bot$ of $L^2(\bT^3)$ since their lemma only assumes that $h,h_t\gesim -\Delta$. However, in our case here, we have $h,h_t\ge -\Delta + 1/2$, and hence the trace can be taken over the whole $L^2(\bT^3)$, by the same proof. ) Thus $-\lambda \vartheta^\eps \Tr[ (\cN-N_0) \Gamma_{0,t}]\lesim |\vartheta^\eps| \lesim \eps^{-1}$. We also have
	$$\lambda^2 \Tr[ (\cN- \Tr[ \cN \Gamma_{0,t}] )^2 \Gamma_{0,t}] \lesim  \Tr [ h_t^{-2}]  \lesim  1$$  similarly to \eqref{eq:free-energy-simple-1b}. Combining this with \eqref{eq:N0t-N0} and using the Cauchy--Schwarz inequality we can bound
	$$
	\lambda^2  \Tr[ (\cN- N_0 )^2 \Gamma_{0,t}] \le 2 \lambda^2  \Tr[ (\cN- \Tr[ \cN \Gamma_{0,t}]   )^2 \Gamma_{0,t}]   + 2\lambda^2 \left( \Tr[ \cN \Gamma_{0,t}] - \Tr[ \cN \Gamma_{0}] \right)^2 \lesim 1.
	$$
	Moreover, by using the operator inequality
	$$\lambda\Gamma_{0,t}^{(1)} =\frac{\lambda}{e^{\lambda h_t}-1} \le h_t^{-1} \le 2h^{-1},$$
	we can proceed as in \eqref{eq:free-energy-simple-1c} and obtain
	\begin{equs}[eq:free-energy-simple-1c-per]
		\frac{\lambda^2}{2}  \sum_{k\ne 0}  \widehat{v^\eps}(k) \Tr \Big[  |\d\Gamma(e_k)| ^2 \Gamma_{0,t}\Big] \lesim  \sum_{k\ne 0}  \widehat{v^\eps}(k) \Big( \Tr \Big[ e_k^* h^{-1} e_k h^{-1}  \Big]+\lambda^2\tr(\Gamma_0^{(1)})\Big) \lesim \eps^{-2}.
	\end{equs}
	Thus \eqref{eq:free-energy-simple-1a-per} gives the upper bound
	\begin{equs}[eq:free-energy-simple-1a-per-conclusion]
		-\log \frac{\cZ_{\lambda,t}}{\cZ_{0,t}} \lesim  \eps^{-2}.
	\end{equs}
	The lower bound $-\log \frac{\cZ_{\lambda,t}}{\cZ_{0,t}} \gesim - \eps^{-2}$ can be obtained exactly as in the proof of Lemma \ref{lem:partition}.  In this way, we also obtain the relative entropy estimate $\cH(\Gamma_{\lambda,t},\Gamma_{0,t})\lesim \eps^{-2}$, which implies
	\begin{equs}[eq:HS-input-per-0]
		\lambda \Big\|\sqrt{h_t}\big(\Gamma_{\lambda,t}^{(1)}-\Gamma^{(1)}_{0,t}\big)\sqrt{h_t}\Big\|_{\rm HS} \lesim \eps^{-2}
	\end{equs}
	by  \cite[Theorem 6.1]{LewNamRou-21}. Since $h_t \ge \frac{1}{2} h$, the operator $h^{1/2} h_t^{-1/2}$ is bounded, and hence we can replace $\sqrt{h_t}$ in \eqref{eq:HS-input-per-0} by $\sqrt{h}$ and obtain \eqref{eq:HS-input-per}.

	{\bf Step 2.} Using \eqref{eq:HS-input-per} and  \cite[Lemma 6.3]{LewNamRou-21}, we have
	\begin{equs}[eq:first-moment-input-new]
		|\Tr [\mB \Gamma_{\lambda,t}]| &= \frac14\lambda |\Tr [f_k^+ \Gamma_{\lambda,t}^{(1)}] - \Tr [f_k^+ \Gamma_{0}^{(1)}] | \le{\frac14} \lambda |\Tr [f_k^+ ( \Gamma_{\lambda,t}^{(1)} - \Gamma_{0,t}^{(1)} )  ] | + {\frac14}\lambda | \Tr [f_k^+ ( \Gamma_{0,t}^{(1)} - \Gamma_{0}^{(1)}] | \nn\\
		&\lesim \lambda \Big\|\sqrt{h}\big(\Gamma_{\lambda,t}^{(1)}-\Gamma^{(1)}_{0,t}\big)\sqrt{h}\Big\|_{\rm HS} \Big\| h^{-1/2} f_k^+ h^{-1/2} \Big\|_{\rm HS} + \Tr [f_k^+ h^{-1}f_k^+ h^{-1}]\nn\\
		&\lesim \eps^{-2} \| Q h^{-1}\|_{\rm HS}^{1/2} \lesim  \eps^{-2} N^{-1/4},
			\end{equs}
	for all $t\in [-1,1]$. Here we decomposed $f_k^+ = f_k - P f_k P = P f_k Q + Q f_k P + Q f_k Q$ and used \eqref{eq:Q-properties}.

	{\bf Step 3:} From the first moment estimate in Step 2, Theorem \ref{thm:correlation-intro} gives us
	\begin{equs}[eq:second-moment-apply-1]
		\Tr [\mB^2 \Gamma_\lambda] \lesim \eps^{-2} N^{-1/4} + \Tr \Big[ [\mB,[\bH_\lambda,\mB]] \Gamma_\lambda\Big].
	\end{equs}
In our choice of $N$ and $\eps$, we have $0<a = C \eps^{-2} N^{-1/4}\lesim 1$, and hence $a e^{a} \ssim a$.

	Next, let us estimate the double commutator
	$$[\mB,[\mB,\bH_\lambda]]=\frac{\lambda^3}{16} [[\dGamma(f_k^+), [\dGamma(f_k^+), \dGamma(h)]] +  \frac{\lambda^2}{16}  [[\dGamma(f_k^+), [\dGamma(f_k^+), \bW]].$$
	For the kinetic term, note that $[\d\Gamma(X),\d\Gamma(Y)]=\d\Gamma([X,Y])$ and
	\begin{equs}[eq:fk-fk-h]
		\pm [f_k^+, [f_k^+, h]] &=  \pm [ f_k - Pf_k P, [f_k,h] - P[f_k,h] P]\nn\\
		&\le \|[f_k, [f_k, h]] \| + 2 \| [f_k,h] Pf_k P\| +  2 \| f_k P[f_k,h] P\| + 2 \| Pf_k P^2 [f_k,h] P\| \nn\\
		& \lesim   |k|^2 + |k|N.
	\end{equs}
	Here we used $\|f_k\|\lesim 1$, $\|[f_k,[f_k,h]]\|\lesssim\| \nabla f_k\|_{L^\infty}^2\lesim |k|^2$, and $[f_k,h]=\Delta f_k + 2 \nabla f_k \cdot \nabla$ which gives
	$$\|[f_k,h]P\| \le \|\Delta f_k\|_{L^\infty} + 2 \|\nabla f_k \|_{L^\infty} \| \nabla P \| \lesim |k|^2 + |k|N.$$

	Therefore,
	\begin{equs}[eq:double-com-kinetic]
		\pm \lambda^3 \Tr \Big[ [[\dGamma(f_k^+), [\dGamma(f_k^+), \dGamma(h)]] \Gamma_\lambda\Big]  &\lesim \lambda^3 ( |k|^2 + |k|N) \Tr [\cN \Gamma_\lambda] \nn\\
		&\lesssim \eps^{-1} \lambda^{3/2}  ( |k|^2 + |k|N).
	\end{equs}
	Here we used $\Tr [\cN \Gamma_\lambda] \lesim \eps^{-1}  \lambda^{-3/2}$, which comes from the fact that
	$$\Tr [(\cN-N_0) \Gamma_\lambda] \le  \Tr [(\cN-N_0)^2 \Gamma_\lambda]^{1/2} \lesim \eps^{-1}\lambda^{-1} $$
	due to \eqref{eq:a-priori-estimate-1} and $N_0 \le \lambda^{-3/2}$.

	For the interaction term, using \eqref{eq:bW} we have
	\begin{align*}
		\pm \lambda^2  [[\dGamma(f_k^+), [\dGamma(f_k^+), \bW]] &= \pm  \frac{\lambda^4 }{2} \sum_{\ell\ne 0}  \widehat{v^\eps} (\ell) [[\dGamma(f_k^+), [\dGamma(f_k^+), |\dGamma(e_\ell)|^2]]   \nn\\
		&\lesim \lambda^4  \sum_{\ell\ne 0}  |\widehat{v^\eps} (\ell)| \cN^2 \lesim \lambda^4 \eps^{-3} \cN^2.
	\end{align*}
	Here for the first inequality we used again $[\d\Gamma(X),\d\Gamma(Y)]=\d\Gamma([X,Y])$ and the simple bound $\pm \dG(\mathcal{O}) \le \|O\| \cN$.

	Combining with the bound $\Tr [(\cN-N_0)^2 \Gamma_\lambda] \lesim \eps^{-2}\lambda^{-2}$ from  \eqref{eq:a-priori-estimate-1} we get
	\begin{equs}[eq:double-com-interaction]
		\pm \lambda^2 \Tr \Big[  \big[\dGamma(f_k^+), [\dGamma(f_k^+), \bW]\big] \Gamma_\lambda\Big] \lesssim \lambda \eps^{-5}.
	\end{equs}
	
	In summary, from \eqref{eq:second-moment-apply-1}, \eqref{eq:double-com-kinetic} and \eqref{eq:double-com-interaction} we have
	\begin{equs}[eq:second-moment-apply-2]
		\Tr [\mB^2 \Gamma_\lambda] \lesim  \eps^{-2}N^{-1/4} + \eps^{-1} \lambda^{3/2}    ( |k|^2 +|k|N) + \eps^{-5}\lambda,
	\end{equs}
	which implies the desired inequality \eqref{eq:variance bound}.
	
	For the rough bound \eqref{eq:variance bound-2}, we will use Theorem \ref{thm:correlation-intro} with
	$$A=\bH_\lambda,\quad B=\widetilde \mB=\eps^2 \lambda (  \d\Gamma(f_k^-)- \langle \dG (f_k ^-)\rangle_0 ). %=\eps^2\begin{cases} \lambda (\dG(P) - \langle \dG(P)\rangle_0)\quad &\text{if}\quad k=0,\\\lambda  \dG(f_k^-)\quad & \text{if}\quad k\ne 0.\end{cases}
	$$
	Here $f_k^-=Pf_kP$.  Note that we have an extra factor $\eps^2$ in the definition of $\widetilde \mB$ to make sure that it is of order $1$. Then proceeding exactly as in \eqref{eq:HS-input-per} we also have
	\begin{equs}[eq:HS-input-per-t]
		\lambda \Big\|\sqrt{h}\big(\widetilde \Gamma_{\lambda,t}^{(1)}-\widetilde \Gamma^{(1)}_{0,t}\big)\sqrt{h}\Big\|_{\rm HS} \lesim \eps^{-2}
	\end{equs}
	with
	\begin{equs}[eq:Gibbs-t-t]
		\widetilde \Gamma_{0,t} = \frac{\exp(-\lambda \dG(h-{\eps^2 t}f_k^-) )}{\Tr [ \exp(-\lambda \dG(h- {\eps^2 t}f_k^-) ) ]}  ,\quad \widetilde \Gamma_{\lambda,t} = \frac{\exp(-\mH_\lambda+t\widetilde \mB)}{\Tr [ \exp(-\mH_\lambda+t\widetilde \mB) ]} , \quad t\in [-1,1].
	\end{equs}
	Now instead of \eqref{eq:first-moment-input-new} and \eqref{eq:double-com-kinetic}, we have
	\begin{equs}[eq:first-moment-input-new-t]
		|\Tr [\widetilde \mB \Gamma_{\lambda,t}]| &\lesim \eps^{2}\lambda \Big\|\sqrt{h}\big(\widetilde \Gamma_{\lambda,t}^{(1)}-\widetilde \Gamma^{(1)}_{0,t}\big)\sqrt{h}\Big\|_{\rm HS} \Big\| h^{-1/2} f_k^- h^{-1/2} \Big\|_{\rm HS} +\eps^{2}  \Tr [f_k^- h^{-1}f_k^- h^{-1}]\nn\\
		&\lesim  \| P h^{-1}\|_{\rm HS}   \lesim  1,
	\end{equs}
	and
	\begin{equs}[eq:double-com-kinetic-t]
		\pm \lambda^3 \Tr \Big[ [[\dGamma(f_k^-), [\dGamma(f_k^-), \dGamma(h)]] \Gamma_\lambda\Big]  &\lesim \lambda^3 N^2 \Tr [\cN \Gamma_\lambda] \lesssim \eps^{-1} N^2 \lambda^{3/2},
	\end{equs}
	since we can replace \eqref{eq:fk-fk-h} by
	\begin{equs}[eq:fk-fk-h-t]
		\pm [f_k^-, [f_k^-, h]] {\lesssim} \|f_k\|^2_{L^\infty} \|Ph\| \le { (1+N)^2}.
	\end{equs}
	The bound \eqref{eq:double-com-interaction} still holds true with $f_k^+$ replaced by $f_k^-$. Therefore, we have
	the following replacement for \eqref{eq:second-moment-apply-2}:
\begin{equs}[eq:second-moment-apply-2-t]
		\Tr [\widetilde \mB^2 \Gamma_\lambda] \lesim  1 + {\eps^{-1}} \lambda^{3/2}  { N^2} + {\eps^{-5}}\lambda,
	\end{equs}
	which implies \eqref{eq:variance bound-2}.
\end{proof}

\section{Quantum free energy}\label{sec:CV-energy}

Thanks to the estimates in the previous section, we can now prove give a rigorous comparison between the quantum free energy and its classical analogue.   
The main result of this section is the following.

\begin{theorem}[From quantum to classical free energy]\label{thm:local-W-P} Let $\eps \ge \lambda^{\eta}$,  and $\lambda^{-1/16} \ge N \ge \eps^{- 1/ (16\eta)}$  for a sufficiently small parameter $\eta>0$. Let $\Gamma_\lambda=\cZ_\lambda^{-1}e^{-\bH_\lambda}$ be defined  in \eqref{eq:Gibbs-state-def-new} and let $\Gamma_{0}=\cZ_0^{-1}e^{-\lambda \dG(h)}$ be the non-interacting Gibbs state.  Then for any self-adjoint operator $P_N$ satisfying $\1(h\le (N+1)^2/C) \le P_N \le \1(h\le (N+1)^2)$ and $[P_N,h]=0$, we have
	\begin{equs}[eq:partition-ul-1]
		-\log \frac{\cZ_\lambda}{\cZ_0} = -\log \left(  \int  e^{-\cD(P_N u)} \; \d\mu_{0}(u) \right) + O(\eps^{-8}{N^{-1/8}}) + O(\lambda^{\delta})
	\end{equs}
	for a constant $\delta>0$ depending only on $\delta_0>0$ in \eqref{eq:def-v}.
\end{theorem}

Our proof strategy closely follows that of \cite[Sections 9 and 10]{LewNamRou-21}: we restrict the variational problem \eqref{eq:rel-energy} to low momenta and then apply a semiclassical approximation. The key difference arises when handling the singular potential $v^\eps$: while a direct adaptation of  the semiclassical analysis in \cite{LewNamRou-21} would require the strong restriction $\eps \ge |\log \lambda|^{-\eta}$ for $0<\eta<1/2$ (see Remark \ref{rmk:eps-log-eta}), we will improve semiclassical estimates yielding error bounds valid for $\eps$ depending only polynomially on $\lambda$. In particular, we will add suitable mass terms to the quantum problem before comparing with the classical variational principle, which helps to simplify the analysis in the classical model,  and also derive a new pointwise inequality between the de Finetti measure $\mu^\lambda_{P,0}$ of the non-interacting Gibbs measure $\Gamma_0$ and the cylindrical projection $\mu_{0,P}$ of the Gaussian free field $\mu_0$, which allows a clean comparison to the classical partition function at the end.

First, we recall the localization method on Fock space and introduce the quantum de Finetti measure on the localized space in Section \ref{sec:localization}, which provides a natural link to classical field theory. Then, we discuss the free energy lower bound  in Section \ref{sec:low ener} and the free energy upper bound  in Section \ref{sec:upp ener}.

\subsection{Fock-space localization and de Finetti measure}\label{sec:localization}
Let us recall the standard {\em localization method} in Fock space.  Let $P$ be an orthogonal projection on $\gH$ and let $Q=\1-P$. We have the unitary equivalence
\begin{equs}[eq:factorization_Fock_space]
	\gF =\gF (\gH)=\gF (P\gH\oplus Q\gH )  \approx \gF(P\gH)\otimes\gF(Q\gH),
\end{equs}
namely there is a unitary
\begin{equs}[eq:Fock factor]
	\cU : \gF ( P\gH \oplus Q\gH) \mapsto \gF(P\gH) \otimes \gF (Q\gH)
\end{equs}
satisfying
\begin{equs}[eq:loc creation]
	\cU\cU^* = \1,\quad  \cU a ^*( f ) \cU ^*  = a ^*(Pf) \otimes \1 + \1 \otimes a^*(Q f)
\end{equs}
and a similar formula for annihilation operators. Consequently, for any state $\Gamma$ on $\gF$ and any orthogonal projector $P$, we define its \emph{localization} $\Gamma_{P}$ as a state on $\gF$ obtained by taking the partial trace over $\gF(Q\gH)$:
$$\Gamma_{P}\eqdef\tr_{\gF(Q\gH)}\left[ \cU \Gamma \cU^* \right].$$
The density matrices of $\Gamma_P$ can be shown to be equal to
\begin{equs}[eq:GammaV-k]
	(\Gamma_P)^{(k)}=P^{\otimes k}\Gamma^{(k)}P^{\otimes k},\qquad \forall k\geq1.
\end{equs}

Now we consider the finite dimensional space $P\gH$ and the associated Fock space $\gF (P\gH)$. Note that we have the resolution of identity
\begin{equs}[eq:resolution_coherent2]
	\1_{\gF(P\gH)} = \pi^{-\Tr(P)} \int_{P\gH} |W(u)\rangle \langle W(u)| \d u,
\end{equs}
where $\d u$ is the usual Lebesgue measure on $P\gH \approx \C^{\Tr(P)}$ and
\begin{equs}[eq:coherent state]
	W(u)\eqdef \exp(a^{*}(u)-a(u))  |0\rangle = e^{-\norm{u}^2/2} \exp\left(a^{*}(u)\right)  |0\rangle = e^{-\norm{u}^2/2} \bigoplus_{n=0}^\infty \frac{u^{\otimes n}} {\sqrt{n!}}
\end{equs}
is the coherent state on $\gF (P\gH)$, with $|0\rangle$ the vacuum in $\gF (P\gH)$ and $\|u\|$ the $L^2$-norm of $u\in P\gH$.

We have the following quantitative version of  the \emph{quantum de Finetti theorem}  \cite[Lemma 6.2 and Remark 6.4]{LewNamRou-15}.

\begin{theorem}[Quantitative quantum de Finetti]\label{thm:quant deF}
	For any state $\Gamma$ on $\gF$, using the coherent states in \eqref{eq:coherent state} we define the \emph{lower symbol} of $\Gamma$ on $P\gH$ at scale $\lambda$ by
	\begin{equs}[eq:Husimi]
		\d\mu_{P,\Gamma}^{\lambda}(u)\eqdef(\lambda\pi)^{-\Tr(P)}\Big\langle W(u/\sqrt{\lambda}),\Gamma_P W(u/\sqrt{\lambda})\Big\rangle_{\gF(P\gH)} \d u.
	\end{equs}
	Then for all $k\in\N$, we have
	\begin{equs}[eq:Chiribella]
		\int_{P\gH}|u^{\otimes k}\ra\la u^{\otimes k}|\;\d\mu^\lambda_{P,\Gamma}(u) = k!\lambda^k\Gamma^{(k)}_P + k! \lambda ^k \sum_{\ell = 0} ^{k-1} {k \choose \ell} \Gamma^{(\ell)} \otimes_s \one_{\otimes_s ^{k-\ell} P\gH}.
	\end{equs}
	Thus, with $d=\Tr [P]$,
	\begin{equs}[eq:quantitative]
		\Tr \left| k!\lambda^k\Gamma^{(k)}_P-\int_{P\gH}|u^{\otimes k}\ra\la u^{\otimes k}|\;\d\mu^\lambda_{P,\Gamma}(u) \right| \leq \lambda^k \sum_{\ell=0}^{k-1}{k\choose \ell}^2  \frac{(k-\ell +d-1)!}{(d-1)!}\tr \left[ \cN^{\ell}\Gamma_P\right].
	\end{equs}
\end{theorem}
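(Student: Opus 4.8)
The plan is to establish Theorem~\ref{thm:quant deF} in three stages: first the exact identity \eqref{eq:Chiribella}, then the consequence \eqref{eq:quantitative} by bounding the norm of the ``diagonal'' remainder term, with \eqref{eq:Husimi} serving merely as the definition of the measure being used. The starting point is the observation that $\mu_{P,\Gamma}^\lambda$ is, up to the scaling $u\mapsto u/\sqrt\lambda$, the Husimi function of the localized state $\Gamma_P$ on the finite-dimensional Fock space $\gF(P\gH)$. So the problem reduces to a computation purely on $\gF(P\gH)$, where everything is finite-dimensional and the coherent states \eqref{eq:coherent state} together with the resolution of identity \eqref{eq:resolution_coherent2} are available.

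First I would prove \eqref{eq:Chiribella}. Using \eqref{eq:coherent state} one has the well-known normal-ordering identity $|W(v)^{\otimes k}\rangle$-type formula: for coherent states, $\langle W(v), a^*_{j_1}\cdots a^*_{j_k} a_{i_k}\cdots a_{i_1} W(v)\rangle = \overline{v_{j_1}}\cdots \overline{v_{j_k}} v_{i_1}\cdots v_{i_k}$, so that $\int |v^{\otimes k}\rangle\langle v^{\otimes k}|\, (\pi)^{-\Tr P}\langle W(v),\Gamma_P W(v)\rangle\,\d v$ is obtained by integrating $|v^{\otimes k}\rangle\langle v^{\otimes k}|$ against the Husimi measure. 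The standard way to evaluate this is to recognize that integrating the rank-one operators $|v^{\otimes k}\rangle\langle v^{\otimes k}|$ against the Husimi function of a state $G$ on $\gF(P\gH)$ reproduces the anti-Wick (lower-symbol) quantization of $|v^{\otimes k}\rangle\langle v^{\otimes k}|$ evaluated in $G$, and that anti-Wick ordering of the $k$-th tensor power differs from the Wick-ordered object $k!\,G^{(k)}$ precisely by the lower-order terms $k!\sum_{\ell=0}^{k-1}\binom{k}{\ell} G^{(\ell)}\otimes_s \one^{\otimes_s(k-\ell)}_{P\gH}$; this is exactly the content of \cite[Lemma 6.2]{LewNamRou-15}, and after the rescaling $v=u/\sqrt\lambda$ (which turns $G^{(k)}$ into $\lambda^k\Gamma_P^{(k)}$, since $\mu^\lambda$ carries the scale $\lambda$) and using $\Gamma_P^{(\ell)}=P^{\otimes\ell}\Gamma^{(\ell)}P^{\otimes\ell}$ from \eqref{eq:GammaV-k}, one gets \eqref{eq:Chiribella}. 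I would simply cite \cite{LewNamRou-15} for this combinatorial identity rather than rederive it.

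Next, \eqref{eq:quantitative}. Rearranging \eqref{eq:Chiribella} gives
\[
k!\lambda^k\Gamma^{(k)}_P-\int_{P\gH}|u^{\otimes k}\rangle\langle u^{\otimes k}|\,\d\mu^\lambda_{P,\Gamma}(u) = -\,k!\lambda^k\sum_{\ell=0}^{k-1}\binom{k}{\ell}\Gamma^{(\ell)}\otimes_s \one^{\otimes_s(k-\ell)}_{P\gH},
\]
so it remains to bound the trace norm of each term $\lambda^k\,k!\binom{k}{\ell}\,\Gamma^{(\ell)}\otimes_s\one^{\otimes_s(k-\ell)}_{P\gH}$. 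The symmetrized operator $\Gamma^{(\ell)}\otimes_s\one^{\otimes_s(k-\ell)}_{P\gH}$ acts on $\gH^k$; its trace norm is controlled by $\binom{k}{\ell}\|\Gamma^{(\ell)}\|_{\mathrm{Tr}(\gF(P\gH))}\cdot \dim$-type factors coming from tracing the identity on $(k-\ell)$ copies of the $d$-dimensional space $P\gH$, which contributes $\Tr[\one^{\otimes(k-\ell)}_{P\gH}] = d^{\,k-\ell}$ combinatorially reorganized as $\frac{(k-\ell+d-1)!}{(d-1)!}$ after accounting for the symmetrization (this is the dimension of the symmetric power, $\dim \mathrm{Sym}^{k-\ell}(P\gH)$). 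Since $\ell!\,\Tr[\Gamma^{(\ell)}_P] = \Tr[\cN(\cN-1)\cdots(\cN-\ell+1)\Gamma_P]\le \Tr[\cN^\ell\Gamma_P]$, and keeping track of the $\lambda$ powers ($\lambda^k = \lambda^\ell\cdot\lambda^{k-\ell}$ with $\lambda^\ell\Gamma^{(\ell)}$ being the naturally bounded object), one collects all combinatorial factors into $\binom{k}{\ell}^2\frac{(k-\ell+d-1)!}{(d-1)!}$ and obtains \eqref{eq:quantitative} by the triangle inequality over $\ell$.

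The main obstacle is purely bookkeeping: getting the combinatorial constant in \eqref{eq:quantitative} exactly right — in particular matching the $\binom{k}{\ell}^2$ (one factor from \eqref{eq:Chiribella}, one from the symmetrization of $\Gamma^{(\ell)}\otimes_s\one$) and the symmetric-power dimension $\frac{(k-\ell+d-1)!}{(d-1)!}$, while correctly distributing the $\lambda^k$ so that the bound is stated in terms of $\Tr[\cN^\ell\Gamma_P]$ rather than $\lambda^\ell\Tr[\cN^\ell\Gamma_P]$. Since the statement is \cite[Lemma 6.2 and Remark 6.4]{LewNamRou-15} verbatim (up to the scaling convention), the cleanest route is to invoke that reference directly and only verify that our normalization of $\mu^\lambda_{P,\Gamma}$ in \eqref{eq:Husimi} matches theirs after $u\mapsto u/\sqrt\lambda$; no genuinely new estimate is needed here.
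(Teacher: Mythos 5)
Your approach matches the paper's exactly: the paper does not reprove this theorem but simply cites \cite[Lemma 6.2 and Remark 6.4]{LewNamRou-15}, which is precisely what you conclude is the cleanest route. One small correction to the sketch you offer along the way: $\frac{(k-\ell+d-1)!}{(d-1)!}$ is not $\dim\mathrm{Sym}^{k-\ell}(P\gH)$ but rather $(k-\ell)!$ times that dimension, $\frac{(k-\ell+d-1)!}{(d-1)!}=(k-\ell)!\binom{k-\ell+d-1}{k-\ell}$, but since you ultimately delegate the combinatorics to the cited reference this does not affect the argument.
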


The following Berezin-Lieb type inequality links the relative entropy of two quantum states to the classical entropy of their de Finetti measures \cite[Theorem 7.1]{LewNamRou-15}.

\begin{theorem}[Relative entropy: quantum to classical] \label{thm:rel-entropy}
	Let $\Gamma$ and $\Gamma'$ be two states on $\gF$. Let $\mu_{P,\Gamma}^\lambda$ and $\mu_{P,\Gamma'}^\lambda$ be the lower symbols defined in \eqref{eq:Husimi}.  Then we have
	\begin{equs}[eq:Berezin-Lieb]
		\cH(\Gamma,\Gamma')\geq \cH(\Gamma_P,\Gamma'_P)\geq \cHcl(\mu_{P,\Gamma}^\lambda,\mu_{P,\Gamma'}^\lambda).
	\end{equs}
\end{theorem}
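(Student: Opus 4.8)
\textbf{Proof proposal for Theorem \ref{thm:rel-entropy}.}

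The plan is to prove the two inequalities in \eqref{eq:Berezin-Lieb} separately: first the monotonicity of relative entropy under Fock-space localization, $\cH(\Gamma,\Gamma')\ge \cH(\Gamma_P,\Gamma'_P)$, and then the Berezin--Lieb type bound $\cH(\Gamma_P,\Gamma'_P)\ge \cHcl(\mu^\lambda_{P,\Gamma},\mu^\lambda_{P,\Gamma'})$ passing from the finite-dimensional Fock space $\gF(P\gH)$ to the classical field picture via coherent states.

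For the first inequality, I would use the unitary equivalence \eqref{eq:factorization_Fock_space}--\eqref{eq:loc creation}: under $\cU$, the states $\Gamma,\Gamma'$ become states $\cU\Gamma\cU^*$, $\cU\Gamma'\cU^*$ on $\gF(P\gH)\otimes\gF(Q\gH)$, and the quantum relative entropy is unitarily invariant, so $\cH(\Gamma,\Gamma')=\cH(\cU\Gamma\cU^*,\cU\Gamma'\cU^*)$. Since $\Gamma_P=\tr_{\gF(Q\gH)}[\cU\Gamma\cU^*]$ and $\Gamma'_P=\tr_{\gF(Q\gH)}[\cU\Gamma'\cU^*]$ are obtained by the same partial trace, monotonicity of the quantum relative entropy under the completely positive trace-preserving map $\tr_{\gF(Q\gH)}$ (Uhlmann's monotonicity / the data-processing inequality) gives $\cH(\cU\Gamma\cU^*,\cU\Gamma'\cU^*)\ge \cH(\Gamma_P,\Gamma'_P)$. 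This step is standard and presents no real difficulty.

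For the second inequality, I would work entirely on the finite-dimensional space $\gF(P\gH)$, where both $\Gamma_P$ and $\Gamma'_P$ are genuine density matrices. The lower symbol $\mu^\lambda_{P,\Gamma}$ from \eqref{eq:Husimi} is exactly the Husimi function of $\Gamma_P$ at semiclassical scale $\lambda$, and the resolution of identity \eqref{eq:resolution_coherent2} shows it is a probability measure. The key is the Berezin--Lieb inequality for relative entropy: if $\Gamma_P=\int|W(u/\sqrt\lambda)\rangle\langle W(u/\sqrt\lambda)| \, \d\mu^\lambda_{P,\Gamma}(u)\cdot(\lambda\pi)^{-d}$ decomposed over coherent states via the resolution of identity, and one writes $\cH(\Gamma_P,\Gamma'_P)=\tr[\Gamma_P\log\Gamma_P]-\tr[\Gamma_P\log\Gamma'_P]$, then: for the first term one uses the operator Jensen inequality together with the concavity of $x\mapsto -x\log x$ (Berezin--Lieb in the form $\tr[\Gamma_P\log\Gamma_P]\ge \int \log(\text{upper symbol})\, \d\mu$, or more directly the comparison of the von Neumann entropy with the Wehrl-type entropy of the Husimi measure), while for the cross term $-\tr[\Gamma_P\log\Gamma'_P]$ one again uses Jensen's inequality pointwise in the coherent-state integral against the spectral decomposition of $\log\Gamma'_P$. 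Combining yields $\cH(\Gamma_P,\Gamma'_P)\ge \int \frac{\d\mu^\lambda_{P,\Gamma}}{\d\mu^\lambda_{P,\Gamma'}}\log\big(\frac{\d\mu^\lambda_{P,\Gamma}}{\d\mu^\lambda_{P,\Gamma'}}\big)\d\mu^\lambda_{P,\Gamma'}=\cHcl(\mu^\lambda_{P,\Gamma},\mu^\lambda_{P,\Gamma'})$. Since this is precisely \cite[Theorem 7.1]{LewNamRou-15}, I would invoke that reference for the detailed coherent-state computation rather than reprove it.

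The main obstacle is the second inequality: making the Jensen/Berezin--Lieb argument for the \emph{relative} entropy (rather than the von Neumann entropy) fully rigorous requires care with the cross term $-\tr[\Gamma_P\log\Gamma'_P]$, since $\log\Gamma'_P$ need not be bounded below if $\Gamma'_P$ has small eigenvalues; one handles this by a standard approximation (replacing $\Gamma'_P$ by $(1-\delta)\Gamma'_P+\delta \,\1/\dim$, applying the inequality, and letting $\delta\to 0$ using lower semicontinuity of both the quantum and classical relative entropies), and by noting that if $\cHcl(\mu^\lambda_{P,\Gamma},\mu^\lambda_{P,\Gamma'})=+\infty$ there is nothing to prove. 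With that caveat the argument goes through verbatim as in \cite{LewNamRou-15}, and no new ideas beyond what is recalled in the excerpt are needed.
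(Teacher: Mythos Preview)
Your proposal is correct and matches the paper's treatment: the paper does not give its own proof of this theorem but simply cites it as \cite[Theorem 7.1]{LewNamRou-15}, which is exactly the reference you invoke after sketching the standard two-step argument (monotonicity under partial trace for the first inequality, Berezin--Lieb for the second). Your sketch of the underlying argument is accurate and goes slightly beyond what the paper records, but there is no divergence in approach.
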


\subsection{Free energy lower bound}\label{sec:low ener}

Now we are ready to relate the quantum free energy from the variational principle \eqref{eq:rel-energy} to the classical analogue in low momenta. Let $P_N$ as in Theorem \ref{thm:local-W-P}. In the following we will apply the quantum de Finetti Theorem \ref{thm:quant deF} with the projection $P = \1(h\le (N+1)^2)$. Note that $P = P_N P=P_N$.

By decomposing $e_k=e_k^+ +e_k^-$ with $e_k^- = P_N e_k P_N$, then by Theorem \ref{thm:correlation-Gamma} and the Cauchy--Schwarz inequality, we can bound
\begin{equs}[eq:fe-lb-2a]
	&\Big| \lambda^2 \left\langle \big|\dG(e_k) - \langle \dG(e_k) \rangle_0\big|^2 \right\rangle_{\lambda} -\lambda^2 \left\langle \big|\dG(e_k^-) - \langle \dG(e_k^-) \rangle_0\big|^2 \right\rangle_{\lambda} \Big| \nn \\
	&\lesim \lambda^2 \left\langle \big|\dG(e_k^+) - \langle \dG(e_k^+) \rangle_0\big|^2 \right\rangle_{\lambda} +\lambda^2 \left\langle \big|\dG(e_k^+) - \langle \dG(e_k^+) \rangle_0\big|^2 \right\rangle_{\lambda}^{\frac12}\left\langle \big|\dG(e_k^-) - \langle \dG(e_k^-) \rangle_0\big|^2 \right\rangle_{\lambda}^{\frac12} \nn\\
	&\lesim \eps^{-5}    (   N^{-1/4} + \lambda^{\frac32}   |k|^2 + \lambda^{\frac32}   |k|N  + \lambda) + \eps^{-5}    (   N^{-1/4} + \lambda^{\frac32}   |k|^2 + \lambda^{\frac32}   |k|N  + \lambda)^{\frac12},
\end{equs}
and
\begin{equs}[eq:fe-lb-2b]
	\lambda^2 \left\langle \big|\dG(e_k^-) - \langle \dG(e_k^-) \rangle_0\big|^2 \right\rangle_{\lambda} \lesim \eps^{-4} + \eps^{-{5}} \lambda^{3/2}   N^2 + {\eps^{-9}}\lambda \lesim \eps^{-5}.
\end{equs}
%We will use \eqref{eq:fe-lb-2a}  for $|k|\le %\lambda^{-1/2}$ and \eqref{eq:fe-lb-2b} for $|k| > \lambda^{-1/2}$.
Summing over $k\in \bZ^3$ we have
\begin{equs}[eq:fe-lb-2]
	&\frac{\lambda^2}{2} \sum_{k\in \bZ^3}  \widehat{ v^\eps} (k) \left\langle \big|\dG(e_k) - \langle \dG(e_k) \rangle_0\big|^2 \right\rangle_{\lambda} - \frac{\lambda^2}{2} \sum_{k\in \bZ^3}  \widehat{ v^\eps} (k) \left\langle \big|\dG(e_k^-) - \langle \dG(e_k^-) \rangle_0\big|^2 \right\rangle_{\lambda}  \nn \\
	&\ge - \eps^{-5}  \sum_{k\in \bZ^3}  \widehat{v^\eps} (k) \min \Big\{ 1, (   N^{-1/4} + \lambda^{\frac32}   |k|^2 + \lambda^{\frac32}   |k|N  + \lambda)^{1/2} \Big\} \gesim   -C( \eps^{-8}{N^{-1/8} }+ \lambda^{\delta})
\end{equs}
with some small constant $\delta>0$ depending only on $\delta_0>0$ in \eqref{eq:def-v}.

Following the calculation from \cite[(9.7)-(9.11)]{LewNamRou-21}, we observe that
\begin{equs} [eq:ek-Gamma0]
	\lambda  \langle \dG( e_k^-)\rangle_0  &=\lambda  \Tr \left( e_k^- \Gamma_0^{(1)} \right) = \lambda \Tr   \left( e_k^- \frac{1}{e^{\lambda h}-1} \right)  = \Tr  \left(  e_k^- h^{-1} \right) +\lambda \Tr   \left( e_k^- \left(\frac{1}{e^{\lambda h}-1} -\frac{1}{\lambda h}\right) \right) \nn\\
	&= \int\left\langle u, e_k^-  u \right\rangle \d\mu_0(u) + O(\lambda N^{3}).
\end{equs}
Here we used $\| (e^{\lambda h}-1)^{-1} -(\lambda h)^{-1}\| \lesim 1$ and $K=\Tr P  \lesim N^{3}$. The bound \eqref{eq:ek-Gamma0} also shows that $\lambda  \Tr \left( e_k^- \Gamma_0^{(1)} \right) = O(N)$, which then implies that  $\lambda  \Tr \left( e_k^- \Gamma_\lambda^{(1)} \right) \lesim \eps^{-2}N^2$ 
by using \eqref{eq:HS-input}  as in \eqref{eq:first-moment-input-new}.
Combining with $(e_k^-)^*=P_N (e_k)^* P_N = P_N e_{-k} P_N = e^-_{-k}$ and $P_N P=P_N$, we find that
\begin{equs}[eq:part-local-W-1aaa]
	\text{} & \frac{\lambda^2}{2}\left\langle \left| \dG (e_k^-) - \left\langle \dG (e_k^-) \right\rangle_{0} \right| ^2 \right\rangle_\lambda
	\geq  \lambda^2 \Tr \left( e_{-k}^- \otimes {e_k^-}\Gamma_\lambda^{(2)} \right)
	- \lambda  \Re\Big[ \Tr \left( e_{-k}^- \Gamma_\lambda^{(1)} \right)  \int\left\langle u, e_k^-  u \right\rangle \d\mu_0(u) \Big] \nn\\
	& \quad + \frac{1}{2} \left|\int \left\langle u, e_k^-  u \right\rangle \d\mu_0(u)\right|^2 -C  \lambda {\eps^{-2}} N^5 \nn\\
	&=  \lambda^2 \Tr \left( P_N e_{-k}{P_N} \otimes P_N {e_k}{P_N} (\Gamma_\lambda)_P^{(2)} \right)
	- \lambda  \Re\Big[ \Tr \left( e_{-k}^-  (\Gamma_\lambda)_P^{(1)} \right)  \int\left\langle P_N u, e_k  P_N u \right\rangle \d\mu_0(u)\Big]  \nn\\
	& \quad +  \frac{1}{2} \left|\int  \left\langle P_N u, e_k P_N u \right\rangle \d\mu_0(u)\right|^2 -C  \lambda {\eps^{-2}} N^5.
\end{equs}
Inserting \eqref{eq:part-local-W-1aaa} in \eqref{eq:fe-lb-2} and using $\sum_k \widehat v^\eps(k) \lesim  \eps^{-3}$, we find that
\begin{align*}
	&\frac{\lambda^2}{2} \sum_{k\in \bZ^3}  \widehat{ v^\eps} (k) \left\langle \big|\dG(e_k)- \langle \dG(e_k) \rangle_0\big|^2 \right\rangle_{\lambda} \nn\\
	&\ge \lambda^2 \sum_{k\in \bZ^3} \widehat{v^\eps} (k)  \Tr \left( P_N e_{-k} P_N \otimes P_N {e_k}P_N(\Gamma_\lambda)_P^{(2)} \right)
	\\&-   \lambda  \sum_{k\in \bZ^3} \widehat{v^\eps} (k) \Re \Big[\Tr \left( e_{-k}^-  (\Gamma_\lambda)_P^{(1)} \right)  \int\left\langle P_N u, e_k  P_N u \right\rangle \d\mu_0(u)\Big] \\
	&\quad + \frac{1}{2}  \sum_{k\in \bZ^3}  \widehat{v^\eps} (k)  \left|\int  \left\langle P_N u, e_k  P_N u \right\rangle \d\mu_0(u)\right|^2 - C\Big(\lambda^\delta+\eps^{-8}N^{-1/8}\Big).	\end{align*}
Next, we represent $(\Gamma_\lambda)_P^{(1)}$ and $(\Gamma_\lambda)_P^{(2)}$ by de Finetti measure $\mu^\lambda_{P,\lambda}=\mu^\lambda_{P,\Gamma_\lambda}$. Here recall that $\Tr[(\Gamma_\lambda)_P]=1$ and
\begin{equs}[eq:mass-PGlambda]
	\text{} \Tr[(\Gamma_\lambda)_P^{(1)}]&= \Tr[P \Gamma_\lambda^{(1)} ] =   \Tr[P  \Gamma_0^{(1)}] + \Tr[P ( \Gamma_\lambda^{(1)} - \Gamma_0^{(1)}) P] \nn\\
	& \le \lambda^{-1} \Tr ( P h^{-1}) + \| P h^{-1/2}\|_{\rm HS}^2 \| \sqrt{h} ( \Gamma_\lambda^{(1)} - \Gamma_0^{(1)}) \sqrt{h}\|_{\rm HS} \lesssim \lambda^{-1}\eps^{-2}N.
\end{equs}
Here we used \eqref{eq:HS-input}  and $\| P h^{-1/2}\|_{\rm HS}^2= \Tr ( P h^{-1}) \lesssim N$ since $P=\1(h\le (N+1)^2)$.
Therefore, using  \eqref{eq:quantitative} with $k=1$ and $k=2$, we use $\Tr[P]\lesim N^3$ to have
\begin{equs}[eq:deF-app-k1]
	\Tr \left| \lambda (\Gamma_\lambda)_P^{(1)}-\int_{P\gH}|u\ra\la u|\;\d\mu^\lambda_{P,\lambda}(u) \right| &\lesim \lambda\tr[P] =\lambda N^3,
	\\
	\Tr \left| \lambda^2(\Gamma_\lambda)_P^{(2)}-\frac{1}{2}\int_{P\gH}|u^{\otimes 2}\ra\la u^{\otimes 2}|\;\d\mu^\lambda_{P,\lambda}(u) \right| &\lesim \lambda^2 \Big(\tr[P]^2 +\tr[P]\Tr [\cN (\Gamma_\lambda)_P]\Big)
	\\&\lesim \lambda^2(N^6+N^3\lambda^{-1}\eps^{-2}N)\lesim \lambda N^4\eps^{-2}. %\lesim \lambda^{1/2}. \nn
\end{equs}
Using also $\sum_k \widehat{v^\eps}(k) e_{-k}(x) e_k(y)= v^\eps(x-y)$, $\sum_k \widehat{v^\eps}(k) \lesim  \eps^{-3}$, $\Tr[P]\lesim N^3$ and $\Tr [P\otimes P]\lesim N^6$, we get
\begin{align*}
	&\frac{\lambda^2}{2} \sum_{k\in \bZ^3}  \widehat{ v^\eps} (k) \left\langle \big|\dG(e_k)- \langle \dG(e_k) \rangle_0\big|^2 \right\rangle_{\lambda} \\
	&\ge \frac{1}{2} \int_{P\gH} \sum_{k\in \bZ^3}  \widehat{ v^\eps} (k)  \langle P_N u, e_{-k} P_N u\rangle \langle P_N u, e_k P_N u\rangle \d\mu^\lambda_{P,\lambda}(u) \nn\\
	&\quad 		- \sum_{k\in \bZ^3}  \widehat{ v^\eps} (k)   \Re \Big[\int_{P\gH} \langle P_N u,  e_{-k} P_N u\rangle   \d\mu^\lambda_{P,\lambda}(u)    \int\left\langle P_N u, e_k  P_N u \right\rangle \d\mu_0(u)\Big]  \nn\\
	& \quad +  \frac{1}{2} \sum_{k\in \bZ^3}  \widehat{ v^\eps} (k)    \left|\int_{P\gH} \left\langle P_N u, e_k  P_N u \right\rangle \d\mu_0(u)\right|^2 - {C\lambda^\delta-C \eps^{-8} N^{-1/8}}  \nn\\
	&\ge \int_{P \gH} \left( \frac12\int :\!{|P_N u(x)|^2}\!: v^\eps(x-y) :\! {|P_N u(y)|^2}\!: \d x \d y \right) \d \mu^\lambda_{P,\lambda} - C\lambda^{\delta}{-C \eps^{-8} N^{-1/8}}.
\end{align*}
This gives us the first term in $\cD[P_N u]$ in \eqref{eq:cDu}. Concerning the contribution of $\lambda \vartheta^\eps(\cN-N_0)$,  we have
\begin{equs}
	\lambda | \Tr [(1-P_N^2) ( \Gamma_\lambda^{(1)} - \Gamma_0^{(1)})] | \le \lambda  \| \sqrt{h} ( \Gamma_\lambda^{(1)} - \Gamma_0^{(1)}) \sqrt{h}\|_{\rm HS} \|(1-P_N^2) h^{-1}\|_{\rm HS} \lesim \eps^{-2} N^{-1/2}
\end{equs}
by \eqref{eq:HS-input}.
Using \eqref{eq:a-priori-estimate-1}, $\lambda \mathfrak{e}_{\lambda} \Tr[(\cN-N_0)\Gamma_\lambda]= O(\eps^{-1}\lambda^{1/2})$ where $\mathfrak{e}_{\lambda}$ is defined in \eqref{eq:bW}. Hence,
\begin{equs}[eq:fe-lb-1]
	\vartheta^\eps \lambda  \left\langle ( \cN - N_0 )  \right\rangle_\lambda &\le (a^\eps - 6b^\eps +1 -m)  \lambda \Tr [{P_N^2} (\Gamma_\lambda)_P^{(1)} -{P_N^2} (\Gamma_0)_P^{(1)}] +  C\eps^{-3}N^{-1/2} \nn\\
	&\le  (a^\eps-6 b^\eps-m+1) \int \left( \int : |P_N u(x)|^2 : \d x \right) \d \mu^\lambda_{P,\lambda} (u) + C\lambda^{\delta} +  C\eps^{-3}N^{-\frac12} ,
\end{equs}
which gives the second term in $\cD[P_N u]$ in \eqref{eq:cDu}. 
Here we used again \eqref{eq:ek-Gamma0} and \eqref{eq:deF-app-k1}.

Thus in summary, for the interaction term we have
\begin{equs}[eq:local-W-P]
	\Tr [\bW\Gamma_\lambda] \ge \int_{P\gH} \cD [P_N u] \d\mu^\lambda_{P,\lambda}(u) -C\lambda^{\delta} - C\eps^{-8}N^{-1/8} , 
\end{equs}
with $\cD[u]$ in \eqref{eq:cDu}.

\begin{remark}\label{rmk:eps-log-eta} % 
	Following the approach in \cite{LewNamRou-21}, from \eqref{eq:local-W-P}, the Berezin-Lieb inequality \eqref{eq:Berezin-Lieb}, and the classical variational principle \eqref{eq:zr-rel} we have that
	\begin{equs}[eq:fe-lb-trunc-1]
		\text{}&-\log \frac{\cZ_\lambda}{\cZ_0} = \cH(\Gamma_\lambda,\Gamma_0) +  \Tr [\bW\Gamma_\lambda]\\
		&\ge \cHcl(\mu^\lambda_{P,\lambda}, \mu^\lambda_{P,0})+\int_{P\gH}  \cD [P_N u] \d\mu^\lambda_{P,\lambda}(u) - C (\eps^{-8}N^{-1/8}+\lambda^{\delta})
		\nn\\
		&\ge -\log \left( \int_{P\gH} e^{-\cD[P_N u]} \; \d\mu^\lambda_{P,0}(u)\right) - C(\eps^{-8}N^{-1/8} + \lambda^{\delta})
	\end{equs}
	where $\mu^\lambda_{P,0}$ is the de Finetti measure of the non-interacting Gibbs measure $\Gamma_0$. The measure $\mu^\lambda_{P,0}$ is not the same with the the cylindrical projection $\mu_{0,P}$ of the Gaussian free field $\mu_0$, but if $\eps \ge |\log \lambda|^{-\eta}$ for a constant  $0<\eta<1/2$, then they can be compared using the simple bound for some $C>0$
	\begin{equs}[eq:e-cDu]
		0 \le  e^{-\cD[P_N u]}  \le e^{C \eps^{-2}}
	\end{equs}
	and the $L^1$-estimate $\norm{\mu_{P,0}^\lambda-\mu_{0,P}}_{L^1(P\gH)}\leq 2\tr[h^{-2}]\lambda  N^6$ from \cite[Lemma 9.3]{LewNamRou-21}. Thus in this case   \eqref{eq:fe-lb-trunc-1} gives
	\begin{equs}[eq:partition-lwb-1-easy]
		-\log \frac{\cZ_\lambda}{\cZ_0} &\ge -\log \left(  \int_{P\gH} e^{-\cD[P_N u]} \; \d\mu_{0,P}(u) + e^{C \eps^{-2}} \lambda  N^6 \right)- C
			{(\eps^{-8}N^{-1/8} + \lambda^{\delta})} \nn\\
		&\ge -\log \left(  \int_{P\gH} e^{-\cD[P_N u]} \; \d\mu_{0,P}(u) \right)  -  C{(\eps^{-8}N^{-1/8} + \lambda^{\delta})e^{C\eps^{-2}}}.
	\end{equs}
	In the last estimate we also used the lower bound
	\begin{equs}[eq:z-rough-lower]
		\int_{P\gH} e^{-\cD[P_Nu]} \; \d\mu_{0,P}(u) \gesim e^{-C\eps^{-2}},
	\end{equs}
	which follows from
	\begin{align*}\int \cD(u)\d\mu_{0,P}(u)=\frac12\int v^\eps(x-y) G_P(x-y)^2 \d x \d y\simeq \eps^{-2}.
	\end{align*}
	and Jensen's inequality. Here $G_P$ is the Green function projected on $P L^2(\mathbb{T}^3)$.
	However, the above approach is insufficient if we only assume that $\eps$ is bounded from below polynomially on $\lambda$. To deal with the more general case, we will use a new argument below.
\end{remark}

To avoid the problem in Remark \ref{rmk:eps-log-eta}, we will compare $\mu^\lambda_{P,0}$ and $\mu_{0,P}$ differently. Our new observation is  the following pointwise upper bound
\begin{equs}[eq:mu0P-pointwise]
	\mu^\lambda_{P,0} (u) \le e^{C\lambda N^4 \|u\|^2}\mu_{0,P} (u).
\end{equs}
To see this, let us compute $\mu^\lambda_{P,0} (u)$ directly using the definition of $\mu^\lambda_{P,0}$ in \eqref{eq:Husimi}, the factorized structure
$$
\Gamma_{0,P}=  \bigotimes_{j=1}^K \left( e^{-\lambda \lambda_j a^*(e_j) a(e_j)} (1-e^{-\lambda \lambda_j}) \right),
$$
with $K=\Tr P \sim N^3$ and  $h=\sum_j \lambda_j |e_j\rangle \langle e_j|$, and the explicit form of the coherent state in \eqref{eq:coherent state}. We have
\begin{align*}
	{\rm d}\mu^\lambda_{P,0} (u) &= (\lambda \pi)^{-K} \langle W(u/\sqrt{\lambda}), \Gamma_{0,P} W(u/\sqrt{\lambda})\rangle \d u\\
	&= \bigotimes_{j=1}^K  \left[ (\lambda \pi)^{-1} \langle W(\alpha_j e_j /\sqrt{\lambda}), \Gamma_{0,P} W(\alpha_j e_j /\sqrt{\lambda})\rangle \right] \d \alpha_j \\
	&= \bigotimes_{j=1}^K  \left[ (\lambda \pi)^{-1} (1-e^{-\lambda \lambda_j}){e^{ -\frac{|\alpha_j|^2}{\lambda}}} \sum_{n=0}^\infty  \left\langle \frac{(\alpha_j e_j/\sqrt{\lambda})^{\otimes n} }{\sqrt{n!}}, e^{-\lambda \lambda_j a^*(e_j) a(e_j)}   \frac{(\alpha_j e_j/\sqrt{\lambda})^{\otimes n} }{\sqrt{n!}} \right\rangle \right] \d \alpha_j \\
	&= \bigotimes_{j=1}^K  \left[ (\lambda \pi)^{-1} (1-e^{-\lambda \lambda_j}) e^{ -\frac{|\alpha_j|^2}{\lambda}}\sum_{n=0}^\infty \frac{|\alpha_j|^{2n}}{\lambda^n (n!)} e^{-\lambda \lambda_j n }\right] \d \alpha_j \\
	%&= \bigotimes_{j=1}^K  \left[ (\lambda \pi)^{-1} (1-e^{-\lambda \lambda_j}) e^{ -\frac{|\alpha_j|^2}{\lambda}} \sum_{n=0}^\infty \frac{|\alpha_j|^{2n}}{\lambda^n (n!)} e^{-\lambda \lambda_j n } \right] d \alpha_j \\
	&= \bigotimes_{j=1}^K  \left[ (\lambda \pi)^{-1} (1-e^{-\lambda \lambda_j}) \exp \left( -\frac{|\alpha_j|^2}{\lambda} (1 - e^{-\lambda \lambda_j } ) \right)  \right] \d \alpha_j,
\end{align*}
with the complex variable $\alpha_j = \langle e_j, u\rangle$. Then using for some $C>0$
$$
\lambda \lambda_j \ge 1-e^{-\lambda \lambda_j} \ge \lambda \lambda_j - C (\lambda \lambda_j)^2,
$$
with $\lambda_j \le (N+1)^2  \ll \lambda^{-1}$, we obtain \eqref{eq:mu0P-pointwise}:
\begin{align*}
	{\rm d}\mu^\lambda_{P,0} (u) &\le  \bigotimes_{j=1}^K  \left[ \frac{\lambda_j}{\pi}  \exp \left( -\frac{|\alpha_j|^2}{\lambda} ( \lambda \lambda_j - C (\lambda \lambda_j)^2  ) \right)  \right] \d \alpha_j \\
	&= \bigotimes_{j=1}^K  \left[ \frac{\lambda_j}{\pi} e^{-\lambda_j |\alpha_j|^2} e^{C \lambda \lambda_j^2 |\alpha_j|^2}  \right] \d \alpha_j \leq e^{C \lambda N^4 \|u\|^2} \d\mu_{0,P}(u).
\end{align*}

From  the Berezin-Lieb inequality \eqref{eq:Berezin-Lieb} and the pointwise estimate \eqref{eq:mu0P-pointwise}, we can bound
\begin{equs}[eq:Brezin-Lieb-app-new]
	\text{} \cH(\Gamma_\lambda,\Gamma_0) & \ge \cH_{\rm cl}(\mu_{P,\lambda}^\lambda, \mu_{P,0}^\lambda) \ge \cH_{\rm cl}(\mu_{P,\lambda}^\lambda, \mu_{0,P}) - C \lambda \Lambda^2 \int \|u\|^2 \d\mu_{P,\lambda}^\lambda (u) \nn \\
	&\ge \cH_{\rm cl}(\mu_{P,\lambda}^\lambda, \mu_{0,P}) - C \lambda N^5 \eps^{-2}.
\end{equs}
Here in the last estimate we used
\begin{equs}[eq:local-W-P-mod]
	\text{}	 \int_{P\gH} \|u\|^2  \d\mu^\lambda_{P,\lambda}(u) \le C \lambda N^3 +    \lambda \Tr [ (\Gamma_\lambda)_P^{(1)}]   \lesssim  \eps^{-2} N,
\end{equs}
which follows from \eqref{eq:deF-app-k1} and \eqref{eq:mass-PGlambda}.

In summary, combining \eqref{eq:Brezin-Lieb-app-new}, \eqref{eq:local-W-P} we conclude that
\begin{equs}[eq:partition-lwb-1]
	\text{}&-\log \frac{\cZ_\lambda}{\cZ_0} = \cH(\Gamma_\lambda,\Gamma_0) +  \Tr [\bW\Gamma_\lambda]\\
	&\ge \cHcl(\mu^\lambda_{P,\lambda}, \mu_{0,P})+\int_{P\gH} \cD [P_N u] \d\mu^\lambda_{P,\lambda}(u) - C (\eps^{-8}N^{-1/8}+\lambda^{\delta})
	\nn\\
	&=  \cHcl(\mu^\lambda_{P,\lambda}, \tilde{\mu}^{\eps}_N) -\log \left( \int_{P\gH} e^{-\cD[P_N u] }  \; \d\mu_{0,P}(u)\right) - C(\eps^{-8}N^{-1/8} + \lambda^{\delta})
\end{equs}
with $\tilde{\mu}^{\eps}_N=\mu_N^\eps\circ P^{-1}$ for the measure $\mu^\eps_N$ defined in \eqref{eq:def-mu-eps-N}. By dropping $\cHcl(\mu^\lambda_{P,\lambda}, \tilde\mu^{\eps}_N) \ge 0$ and using $\int_{P\gH} e^{-\cD[P_N u] }  \; \d\mu_{0,P}(u)=\int e^{-\cD[P_N u] }  \; \d\mu_{0}(u)$, we obtain the desired free energy lower bound.

\subsection{Free energy upper bound}\label{sec:upp ener}

We use again the variational principle \eqref{eq:rel-energy}, and denote
$P = \1(h\le \Lambda),\quad Q=\1-P.$  Following the strategy in  \cite[Proposition 10.1]{LewNamRou-21}, we use the unitary $\cU$ in \eqref{eq:Fock factor} to define the trial state
\begin{equs}[eq:trial state]
\Gammat= \cU^* \Big(  \Gamma_{\lambda,P} \otimes (\Gamma_0)_Q \Big) \cU,
\end{equs}
where $(\Gamma_0)_Q$ is the $Q$-localization of the Gaussian state $\Gamma_0$, and $\Gamma_{\lambda,P}$ is the interacting Gibbs state in $\gF(P\gH)$:
\begin{equs}[eq:Gibbs localized]
\Gamma_{\lambda,P}= \frac{e^{-\lambda  \dGamma(Ph)-  \bW_P}}{\Tr_{\gF(P\gH)} e^{-\lambda  \dGamma(Ph)- \bW_P}}
\end{equs}
with $\bW_P$ the localized interaction
\begin{equs}[eq:WP]
\bW_P = \frac {\lambda^2} 2 \sum_k \widehat{v^\eps} (k) \left|\dG (e_k^-) - \left\langle \dG (e_k^-) \right\rangle_0 \right| ^2  - \lambda \vartheta^\eps (\dG({{P_N^2}}) -\langle \dG({{P_N^2}})\rangle_0) - \lambda^{4/3} \dG(P), 
\end{equs}
where $e_k^-=P_N e_k P_N$. 
Note that the expectation $\langle\dG (P_N e_k P_N ) \rangle_0$ in $\Gamma_0$ is the same as that in $(\Gamma_0)_P$. In general, $\Gamma_{\lambda,P}$ is different from the state $(\Gamma_\lambda)_P$ obtained by $P$-localizing the full interacting Gibbs state $\Gamma_\lambda$. Moreover, unlike \cite{LewNamRou-21}, here we also include the mass term $\lambda^{4/3} \dG(P)$, which helps to have a better control of error terms arising from the semiclassical approximation.

Under the definition, we can compute explicitly (c.f. \cite[Eq. (10.6)-(10.7)]{LewNamRou-21})
\begin{equs}[eq:trial one body]
\Gammat ^{(1)} = P \Gamma_{\lambda,P} ^{(1)} P + Q \Gamma_0 ^{(1)} Q,
\end{equs}
and
\begin{equs}[eq:trial two body]
\Gammat ^{(2)} = P^{\otimes 2} \Gamma_{\lambda,P} ^{(2)} P^{\otimes 2} + Q^{\otimes 2} \Gamma_0 ^{(2)} Q ^{\otimes 2} + \left( \Gamma_{\lambda,P} ^{(1)} \otimes Q \Gamma_0 ^{(1)} Q + Q \Gamma_0 ^{(1)} Q \otimes \Gamma_{\lambda,P} ^{(1)} \right).
\end{equs}
On the other hand, since the trial state in \eqref{eq:trial state} is factorized (up to the unitary $\cU$) and  the Gaussian state $\Gamma_0$ also satisfies $\Gamma_0=\cU^* \Big(  (\Gamma_{0})_P \otimes (\Gamma_0)_Q \Big) \cU$, we have
\begin{align*}
\cH(\Gammat, \Gamma_0) &=\cH(\Gamma_{\lambda,P} \otimes (\Gamma_0)_Q, (\Gamma_0)_P \otimes (\Gamma_0)_Q)= \cH(\Gamma_{\lambda,P}, (\Gamma_0)_P). 
\end{align*}
Hence, by the variational principle~\eqref{eq:rel-energy}
\begin{align*}
-\log \frac{\cZl}{\cZ_0} \le \cH(\Gammat, \Gamma_0) +  \Tr [\bW \Gammat] = \cH(\Gamma_{\lambda,P}, (\Gamma_0)_P) +  \Tr [\bW \Gammat].
\end{align*}

Note that the localized Gibbs state $\Gamma_{\lambda,P}$ satisfies the same bounds of the full Gibbs state $\Gamma_\lambda$ as in Section \ref{sec:gibbs}. In particular, the analysis remains unchanged when we replace the original one-body Hilbert space $\gH$ by $P\gH$, and also replace the kinetic operator $h$ by $h+\lambda^{1/3}P$ (the latter corresponding to the additional mass term $\lambda^{4/3} \d\Gamma(P)$ in $\bW_P$. {{Therefore, if we decomposed $e_k=e_k^+ +e_k^-$ with $e_k^- = P_N e_k P_N$, then we have \begin{equs}
[eq:variance bound-upper]
\lambda^2  \left\langle \left| \dG (e_k ^+) - \langle \dG (e_k ^+) \rangle_0  \right| ^2 \right\rangle_{\Gammat}
\lesim  \eps^{-2}N^{-1/4} + \eps^{-1}  (\lambda^{3/2}   |k|^2 + \lambda^{3/2}   |k| {N}) + \eps^{-5}\lambda,
\end{equs} 
and
\begin{equs}
[eq:variance bound-2-upper]
\lambda^2  \left\langle \left| \dG (e_k ^-) - \langle \dG (e_k ^-) \rangle_0  \right| ^2 \right\rangle_{\Gammat} 
\lesim \eps^{-4} + \eps^{-5} \lambda^{3/2}   {N^2} + \eps^{-9}\lambda  \lesim \eps^{-5},
\end{equs}
which are similar to \eqref{eq:variance bound} and \eqref{eq:variance bound-2}. From \eqref{eq:variance bound-upper} and \eqref{eq:variance bound-2}, we can argue as in \eqref{eq:fe-lb-2a} to deduce that 	
\begin{equs}[eq:fe-lb-2a-upper]
&\Big| \lambda^2 \left\langle \big|\dG(e_k) - \langle \dG(e_k) \rangle_0\big|^2 \right\rangle_{\Gammat} -\lambda^2 \left\langle \big|\dG(e_k^-) - \langle \dG(e_k^-) \rangle_0\big|^2 \right\rangle_{\Gammat} \Big| \nn \\
&\lesim \eps^{-5}    (   N^{-1/4} + \lambda^{\frac32}   |k|^2 + \lambda^{\frac32}   |k|N  + \lambda) + \eps^{-5}    (   N^{-1/4} + \lambda^{\frac32}   |k|^2 + \lambda^{\frac32}   |k|N  + \lambda)^{\frac12}
\end{equs}
for all $k\in \mathbb{Z}^3$. For the free energy upper bound, we also need the following variant of  \eqref{eq:variance bound-2-upper}: 
\begin{equs}[eq:fe-lb-2b-upper-new]
\lambda^2 \left\langle \big|\dG(e_k) - \langle \dG(e_k) \rangle_0\big|^2 \right\rangle_{\Gammat} \lesim \eps^{-4} + \eps^{-{5}} \lambda^{3/2}   N^2 + {\eps^{-9}}\lambda \lesim \eps^{-5}
\end{equs}
where the right-hand side is uniform in $|k|$. From \eqref{eq:variance bound-upper} and \eqref{eq:variance bound-2-upper}, it suffices to prove \eqref{eq:fe-lb-2b-upper-new} when $k\ne 0$. In this case, $\langle \dG (e_k) \rangle_0=0$ and by \eqref{eq:trial one body} and  \eqref{eq:trial two body} we have 
\begin{align*}
\lambda^2 \left\langle  |\dG (e_k)|^2   \right\rangle_{\Gammat} &= \lambda^2 \Tr ({e_{-k}} \otimes e_k \Gammat^{(2)} ) + \lambda^2  \Tr (|e_k|^2 \Gammat^{(1)}) \\
&= \lambda^2 \Tr (P^{\otimes 2} ({e_{-k}} \otimes e_k) P^{\otimes 2}  \Gamma_{\lambda,P}^{(2)})+ \lambda^2  \Tr (|e_k|^2 \Gammat^{(1)}) \\
&=\lambda^2 \left\langle  |\dG (P e_k P)|^2   \right\rangle_{ \Gamma_{\lambda,P}} +  \lambda^2  \Tr (|e_k|^2 \Gammat^{(1)})  -  \lambda^2  \Tr (|e_k|^2\Gamma_{\lambda,P} ^{(1)}) \lesssim \eps^{-5}.
\end{align*}
Here we used $\tr(e_kQ\Gamma_0^{(1)}Q)=0$ in the second equality and in the last inequality we used 
$$\lambda^2 \left\langle  |\dG (P e_k P)|^2   \right\rangle_{ \Gamma_{\lambda,P}} \lesssim \eps^{-5},$$ which can be proved similarly to  \eqref{eq:variance bound-2-upper}. In summary,   \eqref{eq:fe-lb-2b-upper-new} holds uniformly in $k$. 
Then combining \eqref{eq:fe-lb-2a-upper} and \eqref{eq:fe-lb-2b-upper-new}, we can proceed similarly to \eqref{eq:fe-lb-2} and find that  
\begin{equs}[eq:fe-lb-2-upper]
&	\frac{\lambda^2}{2} \sum_{k\in \bZ^3}  \widehat{ v^\eps} (k) \left\langle \big|\dG(e_k) - \langle \dG(e_k) \rangle_0\big|^2 \right\rangle_{\Gammat} \nn\\
&\le  \frac{\lambda^2}{2} \sum_{k\in \bZ^3}  \widehat{ v^\eps} (k) \left\langle \big|\dG(e_k^-) - \langle \dG(e_k^-) \rangle_0\big|^2 \right\rangle_{\Gamma_{\lambda,P}} + C( \eps^{-8}{N^{-1/8} }+ \lambda^{\delta}).
\end{equs}

Moreover, we have the following analogue of \eqref{eq:HS-input}: 
\begin{equs}[eq:HS-input-upper]
\lambda \Big\|P\sqrt{h}\big(\Gamma_{\lambda,P}^{(1)}-\Gamma^{(1)}_{0,P}\big)\sqrt{h}P\Big\|_{\rm HS} \lesim \eps^{-2} 
\end{equs}
where $\Gamma_{0,P}$ is the non-interacting Gibbs state on $\gF(P\gH)$, with $\Gamma^{(1)}_{0,P}=P \Gamma^{(1)}_{0}P$. Consequently,
\begin{align*}
& \lambda  \Big|  \Tr ( (P-P_N^2) (\Gamma_{\lambda,P}^{(1)}-\Gamma^{(1)}_{0,P}) ) \Big|  =  \lambda   \Big| \Tr ( (P-P_N^2) h^{-1} h^{1/2} (\Gamma_{\lambda,P}^{(1)}-\Gamma^{(1)}_{0,P}) h^{1/2}   ) \Big| \\
&\le \lambda   \Big\| (P-P_N^2) h^{-1} \Big\|_{\rm HS} \Big\|P\sqrt{h}\big(\Gamma_{\lambda,P}^{(1)}-\Gamma^{(1)}_{0,P}\big)\sqrt{h}P\Big\|_{\rm HS}  \lesssim  N^{-1}\eps^{-2}.  
\end{align*}
Combining with  \eqref{eq:trial one body}, we obtain 
\begin{align*}%[eq:fe-lb-1-upper]
&\lambda \vartheta^\eps   \left\langle ( \cN - \langle \cN\rangle_0 )  \right\rangle_{\Gammat} = \lambda \vartheta^\eps \Tr(\Gamma_{\Gammat}^{(1)} - \Gamma^{(1)}_{0} ) \\
&=  \lambda \vartheta^\eps \Tr ( \Gamma_{\lambda,P}^{(1)}-\Gamma^{(1)}_{0,P}))  = \lambda \vartheta^\eps \Tr (P_N^2 (\Gamma_{\lambda,P}^{(1)}-\Gamma^{(1)}_{0,P})) + O(N^{-1}\eps^{-3})
\end{align*}
which is an analogue of \eqref{eq:fe-lb-1}. Putting the latter bounds together with \eqref{eq:fe-lb-2-upper} we conclude that 
\begin{equs}[eq:local-W-P-2-PP]
\Tr [\bW \Gammat]  \le  \Tr [\bW_P \Gamma_{\lambda,P}]+ C (\eps^{-8}N^{-1/8} + \lambda^\delta).
\end{equs}
}}
Here the last term on the right-hand side of \eqref{eq:local-W-P-2-PP} comes from the mass term $\lambda^{4/3} \Tr[\d\Gamma(P)\Gamma_{\lambda,P}] \lesssim \lambda^{1/3} \eps^{-2} N^{-1}$ which follows from an argument similar to \eqref{eq:mass-PGlambda}. 	Consequently,
\begin{equs}[eq:rel-fe-up-P]
-\log \frac{\cZl}{\cZ_0} &\le  \cH(\Gamma_{\lambda,P}, (\Gamma_0)_P) + \Tr [\bW_P \Gamma_{\lambda,P}] + C (\eps^{-8}N^{-1/8}+\lambda^\delta)  \nn\\
&=- \log \frac{\Tr e^{- \lambda \dGamma(Ph)- \bW_P} } { \Tr e^{-\lambda \dGamma(Ph)} } + C (\eps^{-8}N^{-1/8}+\lambda^\delta).
\end{equs}
Here in the last equality we used the variational principle due to the choice of $\Gamma_{\lambda,P}$. 

To compute the relevant partition functions on the right hand side of \eqref{eq:rel-fe-up-P}, we use  the coherent-state resolution of the identity~\eqref{eq:resolution_coherent2}:
$$
\1_{\gF(P\gH)}= (\lambda\pi)^{-K} \int_{P\gH} \left|W(u/\sqrt{\lambda})\right\rangle \left\langle W(u/\sqrt{\lambda})\right| \d u,
$$
Here we rescaled \eqref{eq:resolution_coherent2} with $u\mapsto u \lambda^{-1/2}$, and denote $K=\dim P\gH=\Tr P \lesim N^3$. By  the Peierls-Bogoliubov inequality $\la x,e^Ax\ra\geq e^{\la x,Ax\ra}$ we obtain
\begin{equs}[eq:partition_int-PPPP]
\Tr e^{-\lambda \dGamma(Ph)- \bW_P} &= \frac1{(\lambda\pi)^{K}} \int_{P\gH} \Tr \left[ e^{-\lambda\dGamma(Ph) - \bW_P} \left|W\left(u/\sqrt{\lambda}\right)\right\rangle \left\langle W\left(u/\sqrt{\lambda}\right)\right|   \right] \d u \nn\\
&= \frac1{(\lambda\pi)^{K}}  \int_{P\gH}  \left\langle W\left(u/\sqrt{\lambda}\right),  e^{-\lambda\dGamma(Ph) - \bW_P}  W\left(u/\sqrt{\lambda}\right) \right\rangle \d u \nn\\
&\geq \frac1{(\lambda\pi)^{K}} \int_{P\gH}    \exp \left[ - { \left\langle  W\left(u/\sqrt{\lambda}\right), \lambda \left(\dGamma(Ph)+ \bW_P\right) W\left(u/\sqrt{\lambda}\right) \right\rangle}  \right] \d u.
\end{equs}
Then, for $u\in P\gH$, we have the identities
$$
\lambda\left\langle W\left(u/\sqrt{\lambda}\right), \dGamma(Ph) W\left(u/\sqrt{\lambda}\right) \right\rangle = \langle u,h u\rangle,\quad \lambda^{4/3} \left\langle W\left(u/\sqrt{\lambda}\right), \dGamma(P) W\left(u/\sqrt{\lambda}\right) \right\rangle = \lambda^{1/3} \|u\|^2
$$
and
\begin{equs}[eq:part-WP]
&\lambda^2\left\langle W\left(u/{\lambda}^{1/2}\right),  \left| \dG (e_k^-) - \left\langle \dG (e_k^-) \right\rangle_{0} \right| ^2 W\left(u/{\lambda}^{1/2}\right) \right\rangle \nn\\
&= \Big|\langle u, e_k^- u\rangle\Big|^2  -2 \Re\Big[ \lambda \langle u, e_k^- u\rangle {\rm Tr} \left[ \overline{ e_k^{-} } \Gamma_0^{(1)} \right]\Big]  + \lambda^2  \left|{\rm Tr} \left[ e_k^-\Gamma_0^{(1)} \right] \right|^2 + \lambda \| P(e_ku)\|^2  \nn\\
&\le  \Big| \langle u, e_k^- u\rangle - \int\langle u, e_k^- u\rangle\d \mu_0(u) \Big|^2 +C \|u\|^2 \lambda N^3+C\lambda^2N^6.
\end{equs}
The latter bound is an analogue of \eqref{eq:part-local-W-1aaa} and is based on \eqref{eq:ek-Gamma0}.  We  also have
\begin{equs}[eq:part-WP-taueps]
&\lambda \vartheta^\eps \left\langle W\left(u/\lambda^{1/2}\right),  ( \dG (P_N^2) - \left\langle \dG (P^2_N) \right\rangle_{0} )  W\left(u/\lambda^{1/2}\right) \right\rangle \nn\\
&= (a^\eps-6b^\eps-m+1) ( \langle u, P_N^2 u\rangle - \left\langle\langle u, P_N^2 u\rangle\right\rangle_{\mu_0} ) +O (\lambda^{1/2}) \|u\|^2+O(\lambda N^3 \eps^{-1}),
\end{equs}
where the error term comes from the constant $\mathfrak{e}_\lambda=O(\lambda^{1/2})$ in $\vartheta^\eps$.

Summing over $k$ and using $\sum_k |\widehat{ v^\eps}(k)|\lesssim \eps^{-3}$, we find that
\begin{equs}
\left\langle W\left(u/\sqrt{\lambda}\right),  \bW_P W\left(u/\sqrt{\lambda}\right) \right\rangle &\le  \cD[P_N u] -\|u\|^2 (\lambda^{1/3} -  C\lambda N^3\eps^{-3} - C \lambda^{1/2})+C\lambda N^3\eps^{-1} \\&\le  \cD[P_N u]+C\lambda N^{3} \eps^{-1}.
\end{equs}

Inserting the latter bound in~\eqref{eq:partition_int-PPPP} we arrive at
\begin{align} \label{eq:partition_int-PPPPP}
\Tr e^{-\lambda\dGamma(Ph) - \bW_P} \ge (\lambda\pi)^{-K}  \int_{P\gH} \exp \left[- \langle u, hu\rangle -  \cD [P_N u]-C\lambda N^3\eps^{-1} \right]  \d u.
\end{align}
Combining with~ the explicit computation for $\Tr e^{-\lambda\dGamma(Ph)}$ in \cite[(10.12)]{LewNamRou-21}, we find
\begin{equs}[eq:rel-partition-PPPP]
\frac{\Tr e^{-\lambda\dGamma(Ph) - \bW_P} } { \Tr e^{-\lambda\dGamma(Ph)} } &\ge \left[  \prod_{j=1}^K \frac{1}{\lambda\lambda_j} (1-e^{-\lambda\lambda_j}) \right]  \int_{P\gH} \exp \left[- \cD [P_N u] -C\lambda N^3 \eps^{-1} \right]  \d\mu_{0,P}(u),
\end{equs}
where $\d\mu_{0,P}$ is the cylindrical projection of $\d\mu_0$ on $P\gH$ and $0<\lambda_1\le \lambda_2\le...$ are eigenvalues of $h=-\Delta+1$. Thus we deduce from \eqref{eq:rel-fe-up-P} and \eqref{eq:rel-partition-PPPP} that
\begin{equs}[eq:rel-fe-up-P-conclusion]
-\log \frac{\cZl}{\cZ_0} \le - \log \int_{P\gH} e^{- \cD [P_N u]}  \d\mu_{0,P}(u) - \log \left( \prod_{j=1}^K \frac{1}{\lambda\lambda_j} (1-e^{-\lambda\lambda_j}) \right) + C (\eps^{-8}N^{-\frac18}+\lambda^\delta).
\end{equs}

Finally, as already observed in \cite{LewNamRou-21}, we have by Bernoulli's inequality
\begin{equs}[eq:Bernoulli]
\prod_{j=1}^K \left[  \frac{1}{\lambda\lambda_j} (1-e^{-\lambda\lambda_j}) \right] \ge  \prod_{j=1}^K \left[ 1- \frac{\lambda \lambda_j}{2}\right] \ge 1- \sum_{j=1}^K  \frac{\lambda \lambda_j}{2} \ge 1 - C\lambda N^5,
\end{equs}
since $K=\Tr P \le N^3$ and $\lambda_j \simeq j^{2/3}$. Therefore, from \eqref{eq:rel-fe-up-P-conclusion} we conclude the upper bound
\begin{equs}[eq:rel-fe-up-P-upper-final]
-\log \frac{\cZl}{\cZ_0} \le - \log \left(  \int_{\gH} e^{- \cD [P_N u] }   \d\mu_{0}(u) \right)  +  C (\eps^{-8}N^{-1/8}+\lambda^\delta).
\end{equs}
The proof of Theorem \ref{thm:local-W-P} is complete.

%%%%%%%%%%%%%%%%%%%%%%%%%%%%%%%%%%%%%%%%%%%%%%%%%%%%%%%%%%%%%%%%%
%%%%%%%%%%%%%%%%%%%%%%%%%%%%%%%%%%%%%%%%%%%%%%%%%%%%%%%%%%%%%%%%%
%%%%%%%%%%%%%%%%%%%%%%%%%%%%%%%%%%%%%%%%%%%%%%%%%%%%%%%%%%%%%%%%%
%%%%%%%%%%%%%%%%%%%%%%%%%%%%%%%%%%%%%%%%%%%%%%%%%%%%%%%%%%%%%%%%%
\section{Higher-order correlation estimates for quantum Gibbs state} \label{sec:gibbs-higher}
%%%%%%%%%%%%%%%%%%%%%%%%%%%%%%%%%%%%%%%%%%%%%%%%%%%%%%%%%%%%%%%%%
%%%%%%%%%%%%%%%%%%%%%%%%%%%%%%%%%%%%%%%%%%%%%%%%%%%%%%%%%%%%%%%%%

The following bounds will serve as the important input the obtain the convergence of the correlation functions in the next subsection.

\begin{theorem}[Higher-order correlation estimates]\label{thm:higher-moment} Let $\eps \ge \lambda^{\eta}$ for a sufficiently small parameter $\eta>0$. Let $\Gamma_\lambda=\cZ_\lambda^{-1}e^{-\bH_\lambda}$ be defined  in \eqref{eq:Gibbs-state-def-new}. Then for every fixed finite-dimensional projection $\widetilde P$ with eigenfunctions in $\{e_k\}$, we have
	\begin{equs}[eq:moment-m]
		\lambda^k \Tr [\dG(\widetilde P)^k \Gamma_\lambda] \le C_{\widetilde P,k} \eps^{-6}.
	\end{equs}
	for all $k\le 7$. Moreover, we have
	\begin{equs}[eq:moment-m-weak]
		\lambda^k \Tr [\dG(\widetilde P)^k \Gamma_\lambda] \le C_{\widetilde P,k} e^{C\eps^{-2}}
	\end{equs}
	for all $k\ge 1$.
\end{theorem}

%Under the condition $\eps \ge |\lambda|^{\eta}$ for some small $\eta > 0$, we expect that the moment bound \eqref{eq:moment-m} holds for all $k \ge 1$. However, we are only able to prove the bound up to $k \le 7$,

We will use the following recent result from \cite[Theorem 3]{DeuNamNap-25}.

\begin{theorem}[Higher-order correlation inequality] \label{thm:correlation-intro-k} Let $A,B$ and $G_t$ as in Theorem \ref{thm:correlation-intro}.
	We assume in addition that $B\ge 0$ and there exists a self-adjoint operator $X\ge 1$ such that $X$ is $A$-relatively bounded and
	\begin{equs}[eq:corr-thm-ass2]
		\text{} [A,X]=[B,X]=0, \quad \pm B\le X,\quad \pm [B,[B,A]] \le b X^{\alpha}
	\end{equs}
	hold with some constants $b > 0$ and $\alpha \in \mathbb{R}$. Then for all even $k\in \mathbb{N}$, we have
	\begin{equs}[eq:corr-thm-conclusion]
		\Tr [B^k e^{-A}] \lesssim_{k}   \sup_{t\in [-1,1]}  \Tr[ ( 1+b^2 X^{2\alpha+k-3}) e^{-A+tB}].
	\end{equs}
\end{theorem}
The bound \eqref{eq:moment-m-weak} is weaker than \eqref{eq:moment-m} for $k\le 7$, but it is helpful for $k>7$ provided that $ \eps \ge |\log \lambda|^{-\eta}$ for a constant  $0<\eta<1/2$. This bound follows from another argument in \cite{LewNamRou-21}, which is based on the Feynman--Kac formula for $\Gamma_\lambda$ and the pointwise bound $\mathbb{W}\ge - C \eps^{-2}$.

\begin{proof}[Proof of Theorem \ref{thm:higher-moment}] We will apply Theorem \ref{thm:correlation-intro-k} to $A=\bH_\lambda$ and $B=\frac{1}{2}\lambda \dG(\widetilde P)$, namely we consider
	$$
	G_t = \frac{1}{Z_t} \exp(-\bH_\lambda+tB), \quad Z_t =\Tr [ \exp(-\bH_\lambda+tB) ] , \quad t\in [-3/2,3/2].
	$$
	
	First, let us verify the condition $\sup_{t\in [-1,1]}|\Tr (B G_t )| \lesssim 1$ from \eqref{eq:CRI-condition-intro}. We will extend the analysis in Section \ref{sec:CV-energy} to include the perturbation $B$ and combine with the uniform bound \eqref{mom:mutildeP-revised} from Theorem \ref{th:partition-revised}. In the following we choose $N$ large enough such that $\tilde P P_N=\tilde P$. 
	
	{{
			Note that for every $t \in [-3/2,3/2]$, the Gibbs state $G_t$ is similar to $\Gamma_\lambda$, except that the kinetic operator $h$ in $\Gamma_\lambda$ is replaced by $h_t = h + \frac{t}{2} \widetilde{P}$ which satisfies $2h\ge h_t \ge h/4$.  The non-interacting Gibbs state $\exp(- \lambda \dG(h_t)/\Tr [ \exp(- \lambda \dG(h_t)]$ can be treated similarly to $\Gamma_0$; in particular all partition functions $\Tr [ \exp(- \lambda \dG(h_t) )$ are comparable for $t\in [-3/2,3/2]$ since 
			\begin{equs}[eq:partition-ul-1-with-tP-non-interacting]
				\left| \partial_t \log \Tr [ \exp(- \lambda \dG(h_t) )  ] \right| &= \frac{\lambda}{2}\frac{ \Tr [ \dG (\widetilde P) \exp(- \lambda \dG(h_t) )]}{\Tr [ \exp(- \lambda \dG(h_t) )  ]}  =\frac{\lambda}{2} \Tr \Big[ \widetilde P \frac{1}{e^{\lambda h_t}-1} \Big] \\
				&\le \Tr[\widetilde P h_t^{-1} ]\lesssim \Tr [\widetilde P h^{-1}] \lesssim C_{\tilde P}. 
			\end{equs} 
			
			Therefore, all of the a priori estimates in Section \ref{sec:gibbs} apply equally well to $G_t$. Moreover, if we denote by $\mu_t$ the Gaussian measure associated with the one-body operator $h_t$, then we have $\dif \mu_t\propto e^{\frac{t}2\|\tilde Pu\|^2}\dif \mu_0$ (we used $|\log\int e^{\frac{t}2\|\tilde Pu\|^2} \dif \mu_0(u) |\lesim C_{\tilde P}$, which can be seen from, e.g., the calculation in \cite[Lemma 5.1]{LewNamRou-21}). Then by a straightforward adaptation of the analysis in Section \ref{sec:CV-energy}, we obtain the following analogue of \eqref{eq:partition-ul-1}:
			\begin{equs}[eq:partition-ul-1-with-tP]
				-\log \frac{Z_t}{ \Tr [ \exp(- \lambda \dG(h_t) ) ] } = -\log \left(  \int  e^{-\cD(P_N u) +  \frac{t}{2} \| \widetilde P  u\|^2} \; \d\mu_{0}(u) \right) + O(\eps^{-8}N^{-1/8}) + O(\lambda^{\delta})
			\end{equs}

			Combing the formula \eqref{eq:partition-ul-1-with-tP} on the partition function $Z_t$ of the perturbed Gibbs state $G_t$ and  \eqref{eq:partition-ul-1} on the partition function $Z_0$ of the non-perturbed Gibbs state $G_0$, we get 
			\begin{align*}
				-\log \frac{Z_t}{Z_0} &=  -\log \left(  \int  e^{-\cD(P_N u) +  \frac{t}{2} \| \widetilde P  u\|^2} \; \d\mu_{0}(u) \right)  + \log \left(  \int  e^{-\cD(P_N u)} \; \d\mu_{0}(u) \right) \\
				&\quad -\log\frac{ \Tr [ \exp(- \lambda \dG(h_t) )  ]}{ \Tr [ \exp(- \lambda \dG(h) )  ]} +  O(\eps^{-8}N^{-1/8}) + O(\lambda^{\delta}). 
			\end{align*}
			The non-interacting partition functions has been estimated via \eqref{eq:partition-ul-1-with-tP-non-interacting}. Moreover, we can choose a smooth cut-off $\1(h\le (N+1)^2/C) \le P_N \le \1(h\le (N+1)^2)$ as in Theorem \ref{th:partition-revised}, which ensures the uniform bound
			\begin{equs}[eq:partition-ul-1-with-tP-aa]
				\left| -\log \left(  \int  e^{-\cD(P_N u) +  \frac{t}{2} \| \widetilde P  u\|^2} \; \d\mu_{0}(u) \right)  + \log \left(  \int  e^{-\cD(P_N u) } \; \d\mu_{0}(u) \right) \right| \le C_{\widetilde P}.
			\end{equs}
			Therefore, we conclude that $|\log (Z_t/Z_0)| \le C_{\widetilde P}$, which is equivalent to 
			\begin{equs}[eq:partition-ul-1-with-tP-bbb]
				1 \lesssim_{\widetilde P} \frac{Z_t}{Z_0}  \lesssim_{\widetilde P} 1, \quad \forall t\in [-3/2,3/2].
			\end{equs}
			
			It is well-known that the mapping $t\mapsto Z_t$ is convex since 
			$$
			\partial^2_t Z_t = \int_0^1 \d s  \Tr [ B \exp(s(-\bH_\lambda+tB)) B \exp((1-s)(-\bH_\lambda+tB)) ] \ge 0.
			$$
			Therefore, $t\mapsto \partial_t Z_t $ is monotone increasing, and \eqref{eq:partition-ul-1-with-tP-bbb} implies that 
			$$
			\Tr [B G_t] =\p_t\log Z_t  \le  \frac2{Z_t}\int_{t}^{t+1/2} \d s (\partial_s Z_s) =  2 \Big(\frac{Z_{t+1/2}}{Z_t} -1\Big) \lesssim_{\widetilde P} 1, \quad \forall t\in [-1,1].
			$$
			By the same argument we also obtain $-\Tr [B G_t]  \lesssim_{\widetilde P} 1$ for all $t\in [-1,1]$. Thus the condition \eqref{eq:CRI-condition-intro} from Theorem \ref{thm:correlation-intro} holds true.
	}}

	Let us apply Theorem \ref{thm:correlation-intro-k} with $X=\lambda \cN+1$. Then clearly $[A,X]=[B,X]=0$. Moreover, by using $[\dG(\widetilde P),\dG(Y)]=\dG([\widetilde P,Y])$, we find that
	$$
	\pm [B,[B, \lambda \dG(h)]] = \pm \frac{\lambda^3}{4}  \dG([ \widetilde P, [\widetilde P,h] ])\le C_{\widetilde P} \lambda^3 \cN,
	$$
	which is similar to \eqref{eq:double-com-kinetic}. Here the condition that $\widetilde P$ has eigenfunctions in $D(h)$ ensures that $\widetilde Ph$ and $h\widetilde P$ are bounded operators. Also, following the analysis in \eqref{eq:double-com-interaction} and using the simple bound $\sum \hat{v}^\eps(k)\lesssim \eps^{-3}$, we obtain
	$$
	\pm [B,[B, \bW]] =  \frac{\lambda^2}{4} [\dG(\widetilde P),[\dG(\widetilde P), \mathbb{W}]]  \le C_{\widetilde P}  \lambda^4  \eps^{-3} \cN^2.
	$$
	Thus we have the last bound in \eqref{eq:corr-thm-ass2}:
	$$
	\pm [B,[B,A]] \le C_{\widetilde P} \lambda^2 \eps^{-3}  (\lambda \cN +1)^2 \le b X^\alpha
	$$
	with $\alpha=2$ and $b= C_{\widetilde P}  \lambda^2 \eps^{-3}$. Thus we can use \eqref{eq:corr-thm-conclusion} and \eqref{eq:partition-ul-1-with-tP-bbb} to deduce that
	\begin{equs}[eq:corr-thm-conclusion-app-Bk]
		\text{} \Tr [B^k  \Gamma_\lambda ] \lesssim_{k}    C_{\widetilde P,k} (1+   \lambda^{4}\eps^{-6} \sup_{t\in [-1,1]}  \Tr[ (\lambda \cN)^{k+1} G_t] ).
	\end{equs}

	For every $t\in [-1,1]$ and $k\ge 1$, we can show that
	\begin{equs}[eq:corr-thm-conclusion-app-Bk-cNk] \Tr[ \cN^{k} G_t] \lesssim_k \lambda^{-3k/2}.
	\end{equs}
	For $k=1$ and $k=2$, the bound \eqref{eq:corr-thm-conclusion-app-Bk-cNk} can be obtained from the variational principle exactly as in \eqref{eq:a-priori-estimate-1}. In this way, we see that the main contribution of   $\Tr[ \cN^{2} G_t]$ comes from the non-interacting value $\Tr[ \cN^{2} e^{  \lambda \dG(h_t)}]/ \Tr [e^{  \lambda \dG(h_t)}] \simeq \lambda^{-3}$). {{For higher $k$, the bound \eqref{eq:corr-thm-conclusion-app-Bk-cNk} can be obtained by adding $\tau \lambda^{3/2}(\cN-N_0)$ to the Hamiltonian $\bH_\lambda + t B$ in $G_t$, which can be treated as a perturbation of $\mW$. This perturbation is very simple since $\cN$ commutes with all relevant operators, and hence the condition \eqref{eq:corr-thm-ass2} hold trivially with $b=0$. We then just need to adjust the mass term and use Theorem \ref{th:partition-revised}. This allows to deduce  \eqref{eq:corr-thm-conclusion-app-Bk-cNk} for all $k\ge 1$ from the result for $k=1$ and a bound similar to \eqref{eq:partition-ul-1-with-tP-bbb}.}}
	
	By inserting \eqref{eq:corr-thm-conclusion-app-Bk-cNk} into \eqref{eq:corr-thm-conclusion-app-Bk}, we conclude that
	$$
	\text{} \Tr [B^k  \Gamma_\lambda ] \lesssim_{k}    C_{\widetilde P,k} (1+   \lambda^{4}\eps^{-6}  \lambda^{-(k+1)/2}) = C_{\widetilde P,k} (1+   \eps^{-6}  \lambda^{(7-k)/2})
	$$
	for all $k\le 7$. In particular,   \eqref{eq:moment-m} holds true.

	The bound \eqref{eq:moment-m-weak} follows from the Hilbert--Schmidt estimate
	$$ \left\|\lambda^{k}\Gamma_\lambda^{(k)}\right\|_{\rm HS}\lesim_k e^{C\eps^{-2}}. $$
	for all $k\ge 1$. The proof of the latter bound is the same with \cite[Lemma 11.3]{LewNamRou-21}. From the heat kernel positivity $e^{-th}(x,y)\ge 0$ and the pointwise estimate $\bW\ge -C\eps^{-2}$, a standard argument using the Trotter product formula
	and the relative bound \eqref{eq:part bound} on partition functions, we obtain the kernel estimate
	$$
	0\le \Gamma_\lambda^{(k)}(X_k;Y_k)\lesim_k e^{C\eps^{-2}} \Gamma_{0}^{(k)}(X_k;Y_k)
	$$
	for all $X_k,Y_k \in (\bT^3)^k$. Thus the desired bound follows from  $\| \lambda^k\Gamma_{0}^{(k)}\|_{\rm HS}\lesim \|h^{-1}\|^k_{\rm HS} \lesim_k 1$.
	The proof of Theorem \ref{thm:higher-moment} is complete.
\end{proof}
%%%%%%%%%%%%%%%%%%%%%%%%%%%%%%%%%%%%%%%%%%%%%%%%%%%%%%%%%%%%%%%%%
%%%%%%%%%%%%%%%%%%%%%%%%%%%%%%%%%%%%%%%%%%%%%%%%%%%%%%%%%%%%%%%%%
\section{Finite-dimensional approximation for classical measures}\label{sec:approximation-classical-measure}
%%%%%%%%%%%%%%%%%%%%%%%%%%%%%%%%%%%%%%%%%%%%%%%%%%%%%%%%%%%%%%%%%
%%%%%%%%%%%%%%%%%%%%%%%%%%%%%%%%%%%%%%%%%%%%%%%%%%%%%%%%%%%%%%%%%

In this section we explain how to relate de Finetti measure of quantum Gibbs state $\Gamma_\lambda$ to the classical Hartree measure $\nu^\eps$. First, in Section \ref{sec:L1-deF-classical} we will discuss the connection between the  de Finetti measure $\mu^\lambda_{P,\lambda}=\mu^\lambda_{P,\Gamma_\lambda}$ and the finite-dimensional approximation $\mu^\eps_N$ of $\nu^\eps$. Then in Section \ref{sec:improved-classical-Hartree} we prove a quantitative convergence from $\mu^\eps_N$ to $\nu^\eps$, which provides additional information over the estimates  in Section \ref{sec:part}.

\subsection{Approximation for de Finetti measure}\label{sec:L1-deF-classical}

We will use the notation in Section \ref{sec:gibbs}. Let  $P=\1(h\le (N+1)^2)$. Let  $\mu^\lambda_{P,\lambda}=\mu^\lambda_{P,\Gamma_\lambda}$ be the de Finetti measure of $\Gamma_\lambda$ as in Theorem \ref{thm:quant deF}, and let $\mu_N^\eps$ be as in Theorem \ref{th:partition-revised}. The two measures can be compared quantitatively as follows.

\begin{theorem}[From de Finetti measure to finite-dimensional Hartree measure]\label{thm:higher-moment-measure} Let $\eps \ge \lambda^{\eta}$ for a sufficiently small parameter $\eta>0$. Let $\Gamma_\lambda=\cZ_\lambda^{-1}e^{-\bH_\lambda}$ be defined  in \eqref{eq:Gibbs-state-def-new}. Then we have
\begin{equs}[eq:Chiribella-app-1]
 \|\mu^\lambda_{P,\lambda}- {\tilde{\mu}^{\eps}_N }\|_{L^1}^2  \le C(\eps^{-8}N^{-1/8} + \lambda^{\delta}).
\end{equs}
Moreover, for every fixed $\varphi\in D(h)$ with $\hat \varphi$ finitely supported, we have the moment bound
\begin{equs}[eq:Chiribella-app-4]
	\int_{P\gH} |\langle \varphi, u\rangle|^{2k}\;  |\d\mu^\lambda_{P,\lambda} - \d{\tilde{\mu}_N^\eps}| (u)  \le C_\varphi (\eps^{-80}N^{-1/8} + {\eps^{-72}}\lambda^{\delta})^{1/14}
\end{equs}
for all $k\le 6$.  If we assume further that $ \eps \ge |\log \lambda|^{-\eta}$ for a constant  $0<\eta<1/2$, then
\begin{equs}[eq:Chiribella-app-4-deF]
	\int_{P\gH} |\langle \varphi, u\rangle|^{2k}\; | \d\mu^\lambda_{P,\lambda} - \d{\tilde{\mu}_N^\eps}| (u) \le C_{\varphi,k} (\eps^{-8} N^{-1/8 } + \lambda^\delta)^{{1/8}}
\end{equs}
for all $k\ge 1$. Here $\tilde{\mu}_N^\eps=\mu_N^\eps\circ P^{-1}$ and  $\delta>0$ is a fixed constant independent of $\lambda,\eps,N$.
\end{theorem}

\begin{proof}

Our starting point is the following identity from \eqref{eq:Chiribella}:
	\begin{equs}[eq:Chiribella-app-0]
		\int_{P\gH}|u^{\otimes k}\ra\la u^{\otimes k}|\;\d\mu^\lambda_{P,\lambda}(u) = k!\lambda^k(\Gamma_\lambda)_P^{(k)} + k! \lambda ^k \sum_{\ell = 0} ^{k-1} {k \choose \ell} (\Gamma_\lambda)_P^{(\ell)}  \otimes_s \one_{\otimes_s ^{k-\ell} P\gH}
	\end{equs}
for all $k\in\N$, where $(\Gamma_\lambda)_P^{(k)} =P^{\otimes k} \Gamma_\lambda^{(k)} P^{\otimes k}$. Thus to the leading order, the correlation functions of $\Gamma_\lambda$, under a suitable projection, is well approximated by the measure $\mu^\lambda_{P,\lambda}$. Our strategy is to relate the measure $\mu^\lambda_{P,\lambda}$ with $\mu_N^\eps$ in $L^1$-norm, and then obtain the desired conclusion by interpolation. 

The smallness of $\mu^\lambda_{P,\lambda}- {\tilde{\mu}^{\eps}_N}$ in $L^1$-norm can be seen from the analysis in Section \ref{sec:CV-energy}. In fact, from the free energy lower bound \eqref{eq:partition-lwb-1} and the corresponding upper bound \eqref{eq:rel-fe-up-P-upper-final}, as well as Pinsker's inequality (see~ e.g. \cite{CarLie-14}), we obtain \eqref{eq:Chiribella-app-1} as
\begin{equs}[eq:Chiribella-app-1-proof]
\frac{1}{2} \|\mu^\lambda_{P,\lambda}- {\tilde{\mu}^{\eps}_N} \|_{L^1}^2 \le \cHcl(\mu^\lambda_{P,\lambda}, {\tilde{\mu}^{\eps}_N})  \le C(\eps^{-8}N^{-1/8} + \lambda^{\delta}).
\end{equs}

In order to put \eqref{eq:Chiribella-app-1} in good use, we will need moment estimates for $\mu^\lambda_{P,\lambda}$ and $\mu_N^\eps$. For every fixed $\varphi \in C^\infty(\mathbb{T}^3)$ with $\hat \varphi$ finitely supported,  using \eqref{mom:muN-eps-revised} from Theorem \ref{th:partition-revised} we have the uniform bound for $N$ large enough
\begin{equs}[eq:Chiribella-app-2]
\text{ }		{\int_{P\gH} |\langle \varphi, u\rangle|^{2k}\;\d\tilde{\mu}_N^\eps =}\int_{P\gH} |\langle \varphi, u\rangle|^{2k}\;\d\mu_N^\eps \le C_{\varphi,k}
\end{equs}
for all $k\ge 1$. Moreover, we deduce from \eqref{eq:Chiribella-app-0} and the moment estimate \eqref{eq:moment-m} that
	\begin{equs}[eq:Chiribella-app-3]
\text{ }		\int_{P\gH} |\langle \varphi, u\rangle|^{2k}\;\d\mu^\lambda_{P,\lambda}(u) \le C_{\varphi} \sum_{\ell=0}^k \lambda^\ell  \langle \varphi^{\otimes \ell}, (\Gamma_\lambda)_P^{(\ell)} \varphi^{\otimes \ell}\rangle \le C_{\varphi} \eps^{-6}
\end{equs}
for all $k\le 7$. Using \eqref{eq:Chiribella-app-1}, \eqref{eq:Chiribella-app-2},  \eqref{eq:Chiribella-app-3} and H\"older's inequality, we find  that for every $k\le 6$,
\begin{equs}[eq:Chiribella-app-4-proof]
 	\int_{P\gH} |\langle \varphi, u\rangle|^{2k}\; | \d\mu^\lambda_{P,\lambda} - \d{\tilde{\mu}_N^\eps}|(u)  \le C_\varphi (\eps^{-6})^{6/7} (\eps^{-8}N^{-1/8} + \lambda^{\delta})^{1/14}.
\end{equs}
This gives the bound in \eqref{eq:Chiribella-app-4}.

If we assume further that $ \eps \ge |\log \lambda|^{-\eta}$ for a constant  $0<\eta<1/2$, then by using  \eqref{eq:moment-m-weak} instead of \eqref{eq:moment-m}, we can replace \eqref{eq:Chiribella-app-3} by
\begin{equs}[eq:Chiribella-app-3b]
\text{ }		\int_{P\gH} |\langle \varphi, u\rangle|^{2k}\;\d\mu^\lambda_{P,\lambda}(u) \le C_{\varphi} \sum_{\ell=0}^k \lambda^\ell  \langle \varphi^{\otimes \ell}, (\Gamma_\lambda)_P^{(\ell)} \varphi^{\otimes \ell}\rangle \le C_{\varphi,k} e^{C \eps^{-2} }
\end{equs}
for all $k\ge 1$. Since ${e^{C\eps^{-2}} \ll \lambda^{-\nu}}$ for every $\nu>0$, we can choose $\Lambda={\lambda^{-\delta/C}}$ for a large but fixed number $C>0$ and  conclude \eqref{eq:Chiribella-app-4-deF} from \eqref{eq:Chiribella-app-1-proof}  and H\"older inequality.
\end{proof}

%

%%%%%%%%%%%%%%%%%%%%%%%%%%%%%%%%%%%%%%%%%%%%%%%%%%%%%%%%%%%%%%%%%
%%%%%%%%%%%%%%%%%%%%%%%%%%%%%%%%%%%%%%%%%%%%%%%%%%%%%%%%%%%%%%%%%
\subsection{Approximation for Hartree measure} \label{sec:improved-classical-Hartree}
%%%%%%%%%%%%%%%%%%%%%%%%%%%%%%%%%%%%%%%%%%%%%%%%%%%%%%%%%%%%%%%%%
%%%%%%%%%%%%%%%%%%%%%%%%%%%%%%%%%%%%%%%%%%%%%%%%%%%%%%%%%%%%%%%%%

In this section, we use analysis of the quantum part to refine the estimates for the classical partition functions given in Section \ref{sec:part}.

\begin{theorem}\label{th:partition-revised-improved}  Let $N \ge \eps^{- 1/ (16\eta)}$ for a small parameter $\eta>0$.  Let $P_N$ be a smooth cut-off as in Theorem \ref{th:partition-revised} and let $\mu^\eps_N$ be defined in \eqref{eq:def-mu-eps-N}.
Then we have
\begin{equs}[sec:9-1]
\text{}	\Big|\log\int e^{-\cD(P_N \Psi)} \d \mu_0(\Psi) - \log \int e^{-\cD(\Psi)} \d \mu_0(\Psi)   \Big| \lesssim \eps^{-8} N^{-1/8 }.
\end{equs}
Moreover, for every fixed $\varphi\in L^2(\bT^3)$ with $\hat \varphi$ is finitely supported, we have
\begin{equs}[sec:9-2]
\text{} \left|  \int |\langle \varphi,u\rangle|^{2k} \d \mu^\eps_N (u) - \int |\langle \varphi,u\rangle|^{2k}   \d \nu^\eps(u)  \right|  \le   C_{\varphi,k} (\eps^{-8} N^{-1/8 })^{{1/2}}.\end{equs}
\end{theorem}

Note that these results can be proved as in \cite[Appendix C]{Bri} using the variational framework in Section \ref{sec:part}, but this approach requires several additional calculations. Here we will derive them as immediate consequences from the quantum results, by using the flexibility of choosing the semiclassical parameter $\lambda$.

\begin{proof} Our starting point is the convergence of the quantum free energy \eqref{eq:partition-ul-1} from Theorem \ref{thm:local-W-P}:
$$
	\left| -\log \frac{\cZ_\lambda}{\cZ_0} + \log \left(  \int  e^{-\cD(P_N u)} \; \d\mu_{0}(u) \right) \right| \le C\eps^{-8}N^{-1/8}
$$
where we have chosen  $\lambda\to 0$ sufficiently fast such that $\lambda^\delta \le \eps^{-8}N^{-1/8}$. Similarly, for any given $M\ge N$, we also obtain the same bound with $P_N$ replaced by $P_M$ with $M \ge N$, provided that that we choose $\lambda^\delta \le \eps^{-8}M^{-1/8}$. Thus by the triangle inequality, we find that for all $M \ge N \to \infty$,
\begin{equs}[bd:par:infinity]
	\Big|\log\int e^{-\cD(P_N \Psi)} \d \mu_0(\Psi) - \log \int e^{-\cD(P_M\Psi)} \d \mu_0(\Psi)   \Big| \lesssim C \eps^{-8} N^{-1/8}.
\end{equs}
Letting $M\to\infty$ we derive \eqref{sec:9-1}.

Concerning \eqref{sec:9-2}, let us take $M\ge N \ge \eps^{- 1/ \eta}$, and choose $\lambda\to 0$ sufficiently fast such that $\lambda^\delta \le \eps^{-8}N^{-1/8}$ and $ \eps \ge |\log \lambda|^{-\eta}$ for a constant  $0<\eta<1/2$. {Using \eqref{eq:Chiribella-app-0} we have
\begin{equs}[eq:Chiribella-app-6]
	\left| 	\int_{P\gH} |\langle \varphi, u\rangle|^{2k} \d\mu^\lambda_{P,\lambda} (u) - k! \lambda^k  \langle \varphi^{\otimes k}, (\Gamma_\lambda)_P^{(k)} \varphi^{\otimes k}\rangle   \right| 
	\le C_{\varphi}\lambda \sum_{\ell=0}^{k-1} \lambda^\ell  \langle \varphi^{\otimes \ell}, (\Gamma_\lambda)_P^{(\ell)} \varphi^{\otimes \ell}\rangle \le C_{\varphi} \lambda e^{\eps^{-2}}.
\end{equs}
Using the fact that $\hat \varphi$ is compact supported, we obtain
\begin{align*}
		\left| 	\int_{P\gH} |\langle \varphi, u\rangle|^{2k}\; \d\mu^\lambda_{P_N,\lambda} (u) - \int_{P\gH} |\langle \varphi, u\rangle|^{2k}\; \d\mu^\lambda_{P_M,\lambda} (u)  \right|\leq C_\varphi \lambda e^{\eps^{-2}}.
\end{align*}
} Then using \eqref{eq:Chiribella-app-4-deF} and the triangle inequality, for every $k\ge 1$ we have
\begin{equs}[eq:Chiribella-app-4-deF-triangle]
	&\bigg| 	\int_{P\gH} |\langle \varphi, u\rangle|^{2k} (
	\dif\mu_M^\eps - \dif \mu_N^\eps)(u) \bigg|
	\\\leq& \bigg|\int_{P\gH}|\la \varphi,u\ra|^{2k}(  \d\mu^\lambda_{P_M,\lambda} - \d\mu_M^\eps)(u)\bigg|+
	\left| 	\int_{P\gH} |\langle \varphi, u\rangle|^{2k}( \d\mu^\lambda_{P_N,\lambda} - \d\mu_N^\eps)(u) \right|
	\\&+\bigg| 	\int_{P\gH} |\langle \varphi, u\rangle|^{2k} \d\mu^\lambda_{P_N,\lambda} (u) - \int_{P\gH} |\langle \varphi, u\rangle|^{2k} \d\mu^\lambda_{P_M,\lambda} (u)  \bigg|
	\\\le& C_{\varphi,k} (\eps^{-8} N^{-1/8 }+\lambda^\delta)^{1/2}.
\end{equs}

Thus $\int_{P\gH} \langle \varphi, u\rangle|^{2k}\; \d\mu_N^\eps(u)$ is a Cauchy sequence. When $M\to \infty$, using the same arguments as in \cite[Lemma 5.3]{LewNamRou-21}, it must converges to $\int  \langle \varphi, u\rangle|^{2k}\; \d\nu^\eps(u)$. From \eqref{eq:Chiribella-app-4-deF-triangle}, by taking $\lambda\to 0$ very fast and $M\to \infty$, we obtain the desired bound in \eqref{sec:9-2}.
 \end{proof}

%%%%%%%%%%%%%%%%%%%%%%%%%%%%%%%%%%%%%%%%%%%%%%%%%%%%%%%%%%%%%%%%%
%%%%%%%%%%%%%%%%%%%%%%%%%%%%%%%%%%%%%%%%%%%%%%%%%%%%%%%%%%%%%%%%%
\section{Convergence of quantum correlation functions} \label{sec:CV-Gamma}
%%%%%%%%%%%%%%%%%%%%%%%%%%%%%%%%%%%%%%%%%%%%%%%%%%%%%%%%%%%%%%%%%
%%%%%%%%%%%%%%%%%%%%%%%%%%%%%%%%%%%%%%%%%%%%%%%%%%%%%%%%%%%%%%%%%

Now we are ready to conclude Theorem \ref{thm:main1}. The convergence of the partition function in \eqref{eq:main-CV-Hartree-1} is a consequence of the estimate \eqref{eq:partition-ul-1} in  Theorem \ref{thm:local-W-P} and the bound \eqref{sec:9-1} in Theorem \ref{th:partition-revised-improved}. Thus it remains to  derive the convergence of the quantum correlation functions. We will approximate the quantum correlation functions using the de Finetti measure and then connect this measure to the classical field theory.

Let us start by establishing a representation for the de Finetti measure which is of independent interest. 

\begin{lemma} \label{lem:deF-measure-f}Let $\Gamma$ be a state on $\gF$ which commutes with the number operator $\cN$. Let $P$ be a finite dimensional projection on $\gH$ and let $\d\mu_{P,\Gamma}^{\lambda}$ be the de Finetti measure at scale $\lambda$ defined in \eqref{eq:Husimi}. For a normalized vector $\varphi \in P\gH$, we denote by $\Gamma_\varphi=\Gamma_{|\varphi\rangle \langle \varphi|}$ the localized state on $\gF (\mathbb{C} \varphi)$. Then for every measurable function $f: [0,\infty)\to [0,\infty)$ we have
	\begin{equs}[eq:f-deF-measure]
	\int_{P\gH} f( |\langle \varphi, u  \rangle|^2) \d\mu_{P,\Gamma}^{\lambda} (u) =\sum_{n=0}^\infty   \int_0^\infty f(\lambda x) e^{-x}\frac{x^n}{n!}\d x \langle \varphi^{\otimes n}, \Gamma_\varphi \varphi^{\otimes n}\rangle.
\end{equs}
\end{lemma}

\begin{proof} By decomposing $u=\langle \varphi, u \rangle \varphi + u_\varphi^\bot$ and using the isometry $\gF(P\gH) \approx \gF(\mathbb{C} \varphi) \otimes\gF( \varphi^\bot_{P\gH})$ (which is similar to \eqref{eq:factorization_Fock_space}), we can interpret the coherent state  $W(u/\sqrt{\lambda}) \in \gF(P\gH)$ as the tensor product of the state $W( \langle \varphi,u\rangle \varphi/\sqrt{\lambda})\in \gF(\mathbb{C} \varphi)$ and the state $ W(u_\varphi^\bot/\sqrt{\lambda}) \in \gF(\varphi^\bot_{P\gH})$. Here $\varphi^\bot_{P\gH}$ is the subspace of $P\gH$ which is orthogonal to $\varphi$. Combining with the resolution of identity \eqref{eq:resolution_coherent2} adapted to $\gF(\varphi^\bot_{P\gH})$, namely
$$
	\1_{\gF(\varphi^\bot_{P\gH})} = \frac{1}{(\pi \lambda)^{\Tr(P)-1} }  \int_{\varphi^\bot_{P\gH}} |W(u_\varphi^\bot/\sqrt{\lambda}) \rangle \langle W(u_\varphi^\bot/\sqrt{\lambda}) | \d u_\varphi^\bot ,
$$
and the definition $\Gamma_\varphi= \Tr_{\gF(\varphi^\bot)} \Gamma_P$, we find that
	\begin{align*}
	\int_{P\gH} f( |\langle \varphi, u  \rangle|^2) \d\mu_{P,\Gamma}^{\lambda} (u) &  = \frac{1}{(\pi \lambda)^{\Tr P}} \int_{P\gH} f( |\langle \varphi, u  \rangle|^2) \langle W(u/\sqrt \lambda), \Gamma_P W(u/\sqrt \lambda) \rangle  \d u \\
	&= \frac{1}{\pi \lambda} \int_{\mathbb{C}\varphi} f( |\langle \varphi, u  \rangle|^2) \langle W( \langle \varphi,u\rangle \varphi/\sqrt{\lambda}),  \Gamma_\varphi W( \langle \varphi,u\rangle \varphi/\sqrt{\lambda}) \rangle  \d ( \langle \varphi, u\rangle \varphi).
\end{align*}
Finally, using \eqref{eq:coherent state} in the form
$$
W( \langle \varphi,u\rangle \varphi/\sqrt{\lambda})= e^{-|\langle \varphi, u\rangle|^2/(2\lambda)}  \bigoplus_{n\ge 0} \frac{ \langle \varphi,u\rangle^n}{\lambda^{n/2}\sqrt{n!}} \varphi^{\otimes n}
$$
and the fact that $\langle \varphi^{\otimes n}, \Gamma_\varphi  \varphi^{\otimes m}\rangle =0$ if $n\ne m$ (since $\Gamma$ commutes with $\cN$), we conclude that
	\begin{align*}
\int_{P\gH} f( |\langle \varphi, u  \rangle|^2) \d\mu_{P,\Gamma}^{\lambda} (u) &=  \frac{1}{\pi \lambda} \int_{\mathbb{C}\varphi} f( |\langle \varphi, u  \rangle|^2) e^{-|\langle \varphi, u\rangle|^2/\lambda}  \sum_{n\ge 0}  \frac{ |\langle \varphi,u\rangle|^{2n}}{\lambda^{n}n!} \langle \varphi^{\otimes n}, \Gamma_\varphi \varphi^{\otimes n} \rangle  \d ( \langle \varphi, u\rangle \varphi) \\
&=  \sum_{n\ge 0} \frac{1}{\pi \lambda}  \int_{\mathbb C} f(|z|^2) e^{-|z|^2/\lambda}   \frac{(|z|^2/\lambda)^n}{n!} \langle \varphi^{\otimes n}, \Gamma_\varphi \varphi^{\otimes n} \rangle \d z \\
& =  \sum_{n\ge 0} \int_0^\infty f(\lambda x) e^{-x}   \frac{x^n}{n!} \d x \langle \varphi^{\otimes n}, \Gamma_\varphi \varphi^{\otimes n} \rangle.
\end{align*}
In the last step we used polar coordinate. The proof of Lemma \ref{lem:deF-measure-f} is complete.
\end{proof}

Now we are ready to conclude the convergence of correlation functions.

\begin{proof}[Proof of \eqref{eq:main-CV-Hartree-2}] Let us show that if $|f(x)-f(y)|\le C |x-y| (1+ x^5+y^5)$, then
	\begin{equs}[eq:f-deF-measure-correlation]
	&\Big|\int f(|\la \varphi, u\ra|^{2})\dif \mu_{P,\lambda}^\lambda-\tr[f(\lambda a^*(\varphi)a(\varphi))\Gamma_{\lambda}]\Big| \le C_\varphi \lambda^{1/2}\eps^{-6}.
	\end{equs}
Under the same notation in Lemma \ref{lem:deF-measure-f}, by using the isometry $\gF \approx \gF(\mathbb{C} \varphi) \otimes\gF( \varphi^\bot_{\cH})$, with $\varphi^\bot_{\cH}$ the subspace of $\cH$ which is orthogonal to $\varphi$, and the operator representation
$$f(\lambda a^*(\varphi)a(\varphi))=\sum_{n\geq0} f(\lambda n)| \varphi^{\otimes n}\ra \la\varphi^{\otimes n}| $$
on $\mathfrak{F}(\mathbb{C}\varphi)$, we can write
$$\tr_\gF[f(\lambda a^*(\varphi)a(\varphi))\Gamma_\lambda] = \tr_{\gF(\mathbb C \varphi)}[f(\lambda a^*(\varphi)a(\varphi))(\Gamma_\lambda)_\varphi]=\sum_{ n\geq 0}f(\lambda n)\la \varphi^{\otimes n},(\Gamma_\lambda)_\varphi\varphi^{\otimes n} \ra.$$
 Combining with Lemma \ref{lem:deF-measure-f}, we obtain
 $$
 \int f(|\la \varphi, u\ra|^{2})\dif \mu_{P,\lambda}^\lambda-\tr[f(\lambda a^*(\varphi)a(\varphi))\Gamma_{\lambda}] = \sum_{n\ge 0} \Big(\frac1{n!}\int_0^\infty e^{-x}x^n f(\lambda x)\dif x-f(\lambda n) \Big) \la \varphi^{\otimes n},(\Gamma_\lambda)_\varphi\varphi^{\otimes n} \ra.
 $$

Moreover, from the condition $|f(x)-f(y)|\le C |x-y| (1+ x^5+y^5)$, by using the triangle inequality and a direct computation via Gamma function, we can estimate for every $n\ge 0$:
	\begin{align*}
		&\Big|\frac1{n!}\int_0^\infty e^{-x}x^n f(\lambda x)\dif x-f(\lambda n)\Big|= \left| \frac1{n!}\int_0^\infty e^{-x}x^n ( f(\lambda x)-f(\lambda n)) \dif x\right|\\
		&\lesssim \frac1{n!}\int_0^\infty e^{-x}x^n  \lambda |x-n| (1+\lambda^5 x^5+\lambda^5 n^5) \dif x \\
		&\lesssim \lambda  \left(  \frac1{n!}\int_0^\infty e^{-x}x^n (x-n)^2 \right)^{1/2}  \left(  \frac1{n!}\int_0^\infty e^{-x}x^n (1+\lambda^{10}x^{10}+\lambda^{10}n^{10})  \right)^{1/2}  \\
		&\lesssim \lambda \sqrt{n+2} \sqrt{1+\lambda^{10}(n+1)^{10}} \lesssim \sqrt{\lambda} (1+ (\lambda n)^{5+1/2}).
	\end{align*}
Summing over $n \ge 0$ we conclude that
\begin{equs}[eq:f-CV-Har-app]
	\Big|\int f(|\la \varphi, u\ra|^{2})\dif \mu_{P,\lambda}^\lambda-\tr[f(\lambda a^*(\varphi)a(\varphi))\Gamma_{\lambda}]\Big|
	&\lesssim \sum_{n\ge 0}   \sqrt{\lambda} (1+ (\lambda n)^{5+1/2})  \la \varphi^{\otimes n},(\Gamma_\lambda)_\varphi\varphi^{\otimes n} \ra \\
	&=  \sqrt{\lambda} \Big(1+\tr[ (\lambda a^*(\varphi)a(\varphi))^{5+1/2}\Gamma_{\lambda}] \Big).
\end{equs}
The bound \eqref{eq:f-deF-measure-correlation} follows from \eqref{eq:f-CV-Har-app} and the a-priori estimate \eqref{eq:moment-m} from Theorem \ref{thm:higher-moment}.

Now let us assume that $|f(x)-f(y)|\le C |x-y| (1+ x^5+y^5)$, which in particular implies that $|f(x)|\le C (1+x^6)$. Then by the moment bound \eqref{eq:Chiribella-app-4} from Theorem \ref{thm:higher-moment-measure} we find that
\begin{equs}[eq:Chiribella-app-4-app]
	\int_{P\gH} f( |\langle \varphi, u\rangle|^{2}) \;  |\d\mu^\lambda_{P,\lambda} - \d\mu_N^\eps| (u)  \le C_\varphi (\eps^{-80}N^{-1/8} + {\eps^{-72}}\lambda^{\delta})^{1/14} \to 0
\end{equs}
with the choice $\eps^{-{640}}\ll N\ll \lambda^{-1/8}$. Combining \eqref{eq:Chiribella-app-4-app}, \eqref{sec:9-2} and \eqref{eq:f-deF-measure-correlation} and using the triangle inequality we obtain
$$
	\left| \tr[f(\lambda a^*(\varphi)a(\varphi))\Gamma_{\lambda}] - \int_{P\gH} f( |\langle \varphi, u\rangle|^{2}) \d\nu^\eps (u) \right|   \le C_\varphi \lambda^{1/2}\eps^{-6} + C_\varphi (\eps^{-80}N^{-1/8} +\eps^{-72} \lambda^{\delta})^{1/14} \to 0.
$$
This is  the desired estimate  \eqref{eq:main-CV-Hartree-2}.
\end{proof}

\begin{proof}[Proof of \eqref{eq:main-CV-Hartree-3}]
From \eqref{eq:main-CV-Hartree-2}, by choosing $f(x)=x^k$ we obtain \eqref{eq:main-CV-Hartree-3} for $k\le 6$. This holds under the assumption that $\lambda^\eta \le \eps$ for a sufficiently small parameter $\eta>0$.

If we assume further that $|\log \lambda|^{-\eta}\le  \eps \to0 $ for a constant  $0<\eta<1/2$, then using \eqref{eq:f-CV-Har-app} with $f(x)=x^k$ and the moment bound  \eqref{eq:moment-m-weak} we have
$$
	\Big|\int f(|\la \varphi, u\ra|^{2})\dif \mu_{P,\lambda}^\lambda-\tr[f(\lambda a^*(\varphi)a(\varphi))\Gamma_{\lambda}]\Big|
	\lesssim_{k,\varphi} \sqrt{\lambda} e^{C\eps^{-2}} \le \lambda^\delta
$$
Combining the latter bound and \eqref{eq:Chiribella-app-4-deF}, \eqref{sec:9-2} with the choice $\eps^{-64}\ll N \ll \lambda^{-1/8}$, we conclude by the triangle inequality that \eqref{eq:main-CV-Hartree-3} holds for for all $k \ge 1$, namely
$$
	\left| \tr[ (\lambda a^*(\varphi)a(\varphi))^k \Gamma_{\lambda}] - \int_{P\gH} |\langle \varphi, u\rangle|^{2k} \d\nu^\eps (u) \right|   \lesssim_{k,\varphi} \lambda^\delta + (\eps^{-8} N^{-1/8 } + \lambda^\delta)^{1/4} \to 0
$$

The proof of Theorem \ref{thm:main1} is complete. 
\end{proof}

	%%%%%%%%%%%%%%%%%%%%%%%%%%%%%%%%%%%%%%%%%%%%%%%%%%%
%%%%%%%%%%%%%%%%%%%%%%%%%%%%%%%%%%%%%%%%%%%%%%%%%%%
%%%%%%%%%%%%%%%%%%%%%%%%%%%%%%%%%%%%%%%%%%%%%%%%%%%
%%%%%%%%%%%%%%%%%%%%%%%%%%%%%%%%%%%%%%%%%%%%%%%%%%%

\section{Two dimensional case}\label{sec:2D}
In this section we consider the two dimensional case. We have

\begin{theorem}\label{thm:main10-2D} Consider the Gibbs state $\Gamma_\lambda=\cZ_\lambda^{-1}e^{-\bH_\lambda}$ on $\gF (L^2(\bT^2))$, defined similar in \eqref{eq:Gibbs-state-def}  with $v^\eps:\bT^2\to \R$ be defined as in \eqref{eq:v-eps-per-def} with
	$0\le \hat v(k)\lesssim (1+|k|^{2+\delta_0})^{-1}$ for all $k\in \R^2$, $\delta_0>0$ and $\hat v(0)=1$, $|D^m \hat v(k)|\lesssim \frac1{1+|k|^{m}}$ for $m\in \{1,2\}$, and with  the chemical potential
	\begin{equs}[def:gamma-2D]
		\vartheta =  \sum_{k\in\mZ^2}\frac{\lambda}{e^{\lambda(|k|^2+1)}-1} + a^\eps , \quad a^\eps=\int_{\mT^2}v^\eps (y)G(y)\d y.
	\end{equs}
Let $\nu$ be the $\Phi^4_2$ measure associated with $m_0$, defined similarly as in \eqref{e:Phi-measure1-main}. We consider the limit $\lambda,\eps \to 0$ with $ \lambda^{\eta}\le  \eps$ for a sufficiently small constant  $\eta>0$. Then for all $n\ge 1$, we have the convergence of the density matrices
	\begin{equs}[eq:CV-pdm-f-xinf-2D]
 n! \lambda^n \Big\langle \varphi^{\otimes n},\Gamma_\lambda^{(n)}  \varphi^{\otimes n}\Big\rangle  \to  \int  |\langle \varphi, \Phi\rangle|^{2n} \,\d \nu(\Phi),
	\end{equs}
	for all  $\varphi\in L^2(\bT^3)$ with $\hat \varphi$ finitely supported.
\end{theorem}

Theorem \ref{thm:main10-2D} can be interpreted as an extension of the main result in \cite{FKSS23}, which required the stronger condition  $\exp(-|\log \lambda|^{\frac12-c})\le  \eps \to0 $ for $c>0$. This can be proved using our approach to the $\Phi^4_3$ measure, with the analysis simplified in several places. Notably, unlike the $d=3$ case, the assumption $v \geq 0$ is unnecessary here (see the proof of Theorem \ref{thm:main1d=2} below). We now outline the key steps of the proof. Parallel results to Theorem \ref{thm:main10-2D} were also obtained by Jouglas and Rougerie \cite{JouRou-25}, using inputs from \cite{FKSS23} rather than the stochastic PDE method. 

First, we can derive the same results as in Theorem \ref{th:conm} through more elementary arguments. Specifically, the stochastic objects in \eqref{e:wick-tilde} stay in $C_T\bC^{-\kappa}$ for $\kappa>0$ eliminating the need for paracontrolled calculus techniques required in the dynamical $\Phi^4_3$ model.  Furthermore, we can also drop the condition that $v\geq0$, which is necessary for the three dimensional case.  Consider
\begin{equs}[e:Phi_eps-measure1d2]
	\dif\nu^\eps(\Psi)\eqdef  \frac{1}{\sZ_\eps}\exp\bigg(&-\int_{\mathbb T^2} (|\nabla \Psi|^2+m |\Psi|^2) \,\dif x
	-\frac12\int \Wick{|\Psi(x)|^2}v^\eps(x-y)\Wick{|\Psi(y)|^2}\,\dif x\dif y\no
	\\&+{a^\eps\int_{\mathbb T^2} \Wick{|\Psi(x)|^2}\,\dif x}\bigg)\mathcal D \Psi,
\end{equs}
with $a^\eps=\int_{\mT^2}v^\eps (y)G(y)\d y$.
More precisely we have the following results.

\begin{theorem}\label{thm:main1d=2}
	Let $v^\eps$ be as in Theorem \ref{thm:main10-2D}. As $\eps \to 0$, the probability measure $\nu^\eps$ in \eqref{e:Phi_eps-measure1d2} converges weakly to $\nu$ in $\bC^{-\kappa}$, $\kappa>0$, where $\nu$ represents the $\Phi^4_2$ field \begin{equs}[e:Phi-measure1d2]
		\dif\nu(\Phi)\eqdef \frac{1}{\sZ}\exp\bigg(-\int_{\mathbb T^2} \Big(|\nabla \Phi|^2+m |\Phi|^2 \Big)\,\dif x
		+ \frac12\int \Wick{|\Phi(x)|^4}\,\dif x\bigg)\mathcal D \Phi.
	\end{equs} Moreover, any correlation function $\gamma_n^\eps$ associated with $\nu^\eps$ converges to the $n$-point correlation function $\gamma_n$ of the $\Phi^4_2$ field in $\cS'(\mT^{4n})$.
\end{theorem}
\begin{proof} The proof proceeds in precisely the same manner as in the three-dimensional case. Recall that we only use $v\geq0$ in Lemma \ref{cubic}. Here, we present an alternative proof that does not rely on $v\geq0$, achieved by assuming a higher level of regularity specifically $\cZ\in C_T\bC^{-\kappa}$, $\kappa>0$.
	We begin with the following decomposition:
	\begin{align}\label{eq:gcubicd2}
		\langle v^\eps*|f|^2f,\cZ\rangle =\langle v^\eps*|f|^2f,\Delta_{>L}\cZ\rangle +\langle v^\eps*|f|^2f,\Delta_{\leq L}\cZ\rangle .
	\end{align}
	For the second term we  use \eqref{lowerbound} and \eqref{lowerbound1} and \eqref{eq:loc} to have
	\begin{align*}
		|\langle v^\eps*|f|^2f,\Delta_{\leq L}\cZ\rangle|\leq\|(v^\eps *|f|^2)f\|_{L^1}\|\Delta_{\leq L}\cZ\|_{L^{\infty}}\leq  \cV^\eps(f)^{\frac34}2^{2\kappa L}\|\cZ\|_{\bC^{-\kappa}}.
	\end{align*}
	For the first term on the RHS of \eqref{eq:gcubicd2}, we use Besov embedding Lemma \ref{lem:emb},  Lemma \ref{lem:multi} and Lemma \ref{lem:para}, \eqref{eq:loc} and interpolation Lemma \ref{lem:interpolation} to have
	\begin{align*}
		|\langle v^\eps*|f|^2f,\Delta_{> L}\cZ\rangle|
		\lesssim&\,\|v^\eps*|f|^2f\|_{\bB^{2\kappa}_{1,1}}\|\Delta_{> L}\cZ\|_{\bC^{-2\kappa}}
		\lesssim  \|v^\eps*|f|^2\|_{\bB^{4\kappa}_{1+\frac\kappa2}}\|f\|_{H^1}\|\Delta_{> L}\cZ\|_{\bC^{-2\kappa}}
		\\\lesssim&\, \|f\|_{H^{5\kappa}}\|f\|_{L^2}\|f\|_{H^1}2^{-\kappa L}\|\cZ\|_{\bC^{-\kappa}}
		\lesssim \|f\|_{L^2}^{2-5\kappa}\|f\|_{H^1}^{1+5\kappa}2^{-\kappa L}\|\cZ\|_{\bC^{-\kappa}},
	\end{align*}
	where we used
	\begin{align*}
		\|v^\eps*|f|^2f\|_{\bB^{2\kappa}_{1,1}}\lesssim\|v^\eps*|f|^2\|_{\bB^{3\kappa}_{1+\frac\kappa2,1}}\|f\|_{L^{\frac{2+\kappa}\kappa}}+\|v^\eps*|f|^2\|_{L^{\frac1{1-2\kappa}}}\|f\|_{\bB^{3\kappa}_{\frac1{2\kappa},1}}\lesssim\|v^\eps*|f|^2\|_{\bB^{4\kappa}_{1+\frac\kappa2}}\|f\|_{H^1}
	\end{align*}
	in the second step.
	Substituting the above two estimates into \eqref{eq:gcubicd2} and choosing
	$$2^{2\kappa L}\simeq \cV^{\eps}(f)^{\frac18},$$
	we obtain
	\begin{equs}
		|\langle (v^\eps *|f|^2)f, \cZ\rangle |\lesssim \Big[\cV^\eps(f)^{\frac78}+\cV^\eps(f)^{\frac{7}{16}-\frac{5\kappa}4}\|f\|_{H^1}^{1+{5\kappa}}\Big]\|\cZ\|_{\bC^{-\kappa}}.
	\end{equs}
	Applying Young's inequality, we derive
	\begin{equs}
		|\langle (v^\eps *|f|^2)f, \cZ\rangle |\lesssim\delta\cV^\eps(f)+\delta \|f\|_{H^1}^2 +K(\|\cZ\|_{\bC^{-\kappa}}).
	\end{equs}
	The remainder of the proof proceeds with simpler arguments analogous to those used in the three-dimensional case, and the details are therefore omitted.
\end{proof}

Furthermore, for the bounds of partition functions as presented in Theorem \ref{th:partition}, we obtain the same result for $d=2$ without relying on the condition $v\geq0$.

We consider the finite-dimensional approximation of $\nu^\eps$ given by
\begin{align*}
	\d \mu_N^\eps(\Psi)\eqdef \frac1{\sZ_{\eps,N}}\exp\Big(-\cD[P_N\Psi])\Big)\d \mu_0(\Psi),\qquad \sZ_{\eps,N}\eqdef\int \exp\Big(-\cD[P_N\Psi]\Big)\d \mu_0(\Psi),
\end{align*}
with $P_N$ given as in Section \ref{sec:part},
where we set the renormalized potential energy given by
\begin{equs}
	\cD(u)\eqdef \frac12\int (v^\eps*\Wick{|\Psi|^2})\Wick{|\Psi|^2}-(a^{\eps}-m+1)\int \Wick{|u|^2}.
\end{equs}

\bt\label{th:partition-revised2d}
 Suppose the same condition as in Theorem \ref{thm:main1d=2}. For $f:\bC^{-\kappa}\to \mR$ with at most linear growth, it holds that
\begin{equs}[mom:muN-eps-revised2d]
	\sup_{\eps\in(0,1),N\geq \eps^{-1-\kappa}}\int\exp(f(P_N\Psi))\d \mu_N^\eps(\Psi)<\infty.
\end{equs}
Furthermore, for any finite dimensional projection $\widetilde P$ and $c_0, c_1\in\mR$,  it holds that
\begin{equs}[mom:mutildeP-revised2d]
	\Big|\log\int \exp\Big(-\cD(P_N\Psi) +c_0\la \widetilde P\Psi,\Psi\ra + c_1 \int \Wick{|P_N\Psi|^2} \Big)\,\dif \mu_0(\Psi) - \log \sZ_{\eps,N} \Big|\lesssim C_{\widetilde P}+1,
\end{equs}
with $ C_{\widetilde P}$ being a constant depending on $\widetilde P$.  
\et

Proof of Theorem \ref{th:partition-revised2d} is significantly simpler than the $d=3$ case treated in Section \ref{sec:part}. For further details, we refer to \cite[Section 3]{BG18}.

Now for the quantum part, we can proceed similarly to Section \ref{sec:gibbs} and Section  \ref{sec:CV-energy} to obtain the convergence of the quantum free energy. In this way, we also obtain  the bound
\begin{equs}[eq:Chiribella-app-1-proof-2D]
\|\mu^\lambda_{P,\lambda}- \tilde\mu^{\eps}_N \|_{L^1}^2 \le  C(\eps^{-8}N^{-1/8} + \lambda^{\delta}).
\end{equs}
which is similar to \eqref{eq:Chiribella-app-1} (here we used the same notation for the de Finetti measure $\tilde\mu^\lambda_{P,\lambda}$ and the Hartree measure $\tilde\mu^{\eps}_N=\mu^\eps_N\circ P^{-1}$ in finite dimensions). The main difference in 2D case is that we have a much better moment bound
\begin{equs}[eq:moment-m-2D]
\lambda^k \Tr [ \cN^k \Gamma_\lambda] \le C_{k} ( |\log \lambda|)^k
\end{equs}
for all $k \ge 1$. This bound can be obtained by the same way as in \eqref{eq:corr-thm-conclusion-app-Bk-cNk}, and the factor $( |\log \lambda|)^k$ comes from the non-interacting picture
$$
\lambda \Tr [\cN \Gamma_0] = \sum_{p\in \mathbb{Z}^2 }\frac{\lambda}{e^{\lambda(|p|^2+1)}-1} \le C |\log \lambda|.
$$
Consequently, we obtain the following improved version of  \eqref{eq:Chiribella-app-3}:
$$
	\int_{P\gH} |\langle \varphi, u\rangle|^{2k}\;\d\mu^\lambda_{P,\lambda}(u) \le C_{\varphi} \sum_{\ell=0}^k \lambda^\ell  \langle \varphi^{\otimes \ell}, (\Gamma_\lambda)_P^{(\ell)} \varphi^{\otimes \ell}\rangle \le C_{\varphi,k} |\log \lambda|^k
$$
for all $k\ge 1$. Thus by interpolation as in \eqref{eq:Chiribella-app-4-proof}, we deduce that for $\varphi\in C^\infty(\mT^2)$ with $\hat \varphi$ finitely supported
\begin{equs}[eq:Chiribella-app-4-2D]
 	\int_{P\gH} |\langle \varphi, u\rangle|^{2k}\; | \d\mu^\lambda_{P,\lambda} - \d\mu_N^\eps|(u)  \le C_{\varphi,k}  (\eps^{-8}N^{-1/8} + \lambda^{\delta})^{1/4} |\log \lambda|^k \to 0,
\end{equs}
where we choose the cut-off $N$ such that $\eps^{-8}N^{-1/8} \le \lambda^\delta$.

Given \eqref{eq:Chiribella-app-4-2D}, the bound \eqref{eq:CV-pdm-f-xinf-2D} thus follows from the same analysis in Section \ref{sec:CV-Gamma}. This concludes the proof sketch of Theorem \ref{thm:main10-2D}

\appendix
\renewcommand{\appendixname}{Appendix~\Alph{section}}
\renewcommand{\theequation}{A.\arabic{equation}}

\section{Notations and Besov spaces}
\label{sec:pre}

\subsection{Besov spaces}\label{sub:1}
In this section, we recall the definitions and some key properties of Besov spaces and paraproducts. For a more detailed introduction, we refer to \cite{BCD11, GIP15}.
Let $\theta_{-1},\theta\in C_c^\infty(\mR^3)$ be nonnegative radial functions on $\mathbb{R}^d$, such that

i. the support of $\theta_{-1}$ is contained in a ball, and the support of $\theta$ is contained in an annulus;

ii. $\theta_{-1}(z)+\sum_{j\geq0}\theta(2^{-j}z)=1$ for all $z\in \mathbb{R}^d$.

iii. $\textrm{supp}(\theta_{-1})\cap \textrm{supp}(\theta(2^{-j}\cdot))=\emptyset$ for $j\geq1$ and $\textrm{supp}(\theta(2^{-i}\cdot))\cap \textrm{supp}(\theta(2^{-j}\cdot))=\emptyset$ for $|i-j|>1$.

We call the pair $(\theta_{-1},\theta)$ a dyadic partition of unity, and refer to  \cite[Proposition 2.10]{BCD11} for its existence. The Littlewood-Paley blocks are then defined as follows:
$$\Delta_{-1}u=\mathcal{F}^{-1}(\theta_{-1}\mathcal{F}u),\quad \Delta_{j}u=\mathcal{F}^{-1}(\theta(2^{-j}\cdot)\mathcal{F}u), j\geq0.$$

Let $\mathcal{S}'$ be the space of distributions on $\mT^3$. For $\alpha\in \mathbb{R}, p,q\in [1,\infty]$, the H\"{o}lder-Besov space  is defined by
$$\bB^\alpha_{p,q}=\Big\{u\in\mathcal{S}'(\mathbb{T}^d):\|u\|_{\bB^\alpha_{p,q}}=\Big(\sum_{j\geq-1}(2^{j\alpha}\|\Delta_ju\|_{L^p})^q\Big)^{\frac1q}<\infty\Big\},$$
with the usual interpretation as $l^\infty$ norm in case $q=\infty$. For the shift operator $\tau_y$ introduced in \eqref{def:tauy}, it is easy to see that
\begin{equs}[eq:tauy]
	\|\tau_yf\|_{\bB^\alpha_{p,q}}=\|f\|_{\bB^\alpha_{p,q}}.
	\end{equs}

The following embedding results will  be frequently used. % (see e.g.  \cite{Tri78}).

\bl\label{lem:emb} (i) Let $1\leq p_1\leq p_2\leq\infty$ and $1\leq q_1\leq q_2\leq\infty$, and let $\alpha\in\mathbb{R}$. Then $\bB^\alpha_{p_1,q_1} \subset \bB^{\alpha-d(1/p_1-1/p_2)}_{p_2,q_2}$. (cf. \cite[Lemma~A.2]{GIP15})

(ii) Let $s\in \R$, $1\leq p\leq\infty$, $\delta>0$. Then
$\bB^s_{p,1}\subset \bB^{s}_{p,\infty}\subset \bB^{s-\delta}_{p,1}$.

Here  $\subset$ means  continuous and dense embedding.
\el

We also recall the following interpolation lemma.

\bl\label{lem:interpolation}

Let  $\theta\in[0,1]$ and $\alpha,\alpha_1,\alpha_2\in\mR$ satisfy $ \alpha=\theta \alpha_1+(1-\theta)\alpha_2,$
and $p,q,p_1,q_1,p_2,q_2\in[1,\infty]$ satisfy
$$
\frac{1}{p}=\frac{\theta}{p_1}+\frac{1-\theta}{p_2},\ \ \frac{1}{q}=\frac{\theta}{q_1}+\frac{1-\theta}{q_2}.
$$
It holds that
\begin{align}\label{DQ1}
	\|f\|_{\bB^\alpha_{p,q}}\leq \|f\|_{\bB^{\alpha_1}_{p_1,q_1}}^\theta\|f\|_{\bB^{\alpha_2}_{p_2,q_2}}^{1-\theta}.
\end{align}
(cf. \cite[Lemma 2.7]{ZZZ20})

\el

\subsection{Smoothing effect of heat flow}
\label{sec:heat}

We recall the following  smoothing effect of the heat flow $P_t=e^{t(\Delta-1)}$ (e.g. \cite[Lemma~A.7]{GIP15}, \cite[Proposition~A.13]{MW18}, \cite[Lemma 2.8]{ZZZ20}).
\vskip.10in
\bl\label{lem:heat}  Let $u\in \bB^{\alpha}_{p,q}$ for some $\alpha\in \mathbb{R}, p,q\in [1,\infty]$. Then for every $\delta\geq0$,
$$\|P_tu\|_{\bB^{\alpha+\delta}_{p,q}}\lesssim t^{-\delta/2}\|u\|_{\bB^{\alpha}_{p,q}}.$$
If $0< \beta< 2$, then
$$\|(\mathrm{I}-P_t)u\|_{L^p}\lesssim t^{\frac{\beta}2}\|u\|_{\bB^{\beta}_{p}}.$$
\el

\begin{lemma}\label{lemma:sch} (\cite[Lemma~A.9]{GIP15},\cite[Lemma~2.8, Lemma~2.9]{ZZZ20})
	Let $\alpha\in\R$.
	Then the following bounds hold for $\sI f=\int_0^\cdot P_{\cdot-s}f\,\dif s$
	\begin{equs}
			\|\sI f\|_{L^\infty_T\bC^{2+\alpha}}\lesssim \|f\|_{L_T^{\infty}\bC^{\alpha}}.
		\end{equs}
	If $0\leqslant 2+\alpha < 2$ then
	\begin{equs}
		\|\sI f\|_{C_T^{(2+\alpha)/2}L^\infty}\lesssim \|f\|_{L_T^\infty\bC^{\alpha}}.
		\end{equs}
\end{lemma}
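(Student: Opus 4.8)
The plan is to prove the two estimates of Lemma \ref{lemma:sch} by frequency-localizing $\sI f$ via the Littlewood--Paley blocks $\Delta_j$ and exploiting the frequency-localized smoothing of the heat semigroup $P_t=e^{-t}e^{t\Delta}$: for $g$ with Fourier support in an annulus $\{|\xi|\sim 2^j\}$ one has $\|P_\tau\Delta_j g\|_{L^\infty}\lesssim e^{-c\tau 2^{2j}}\|\Delta_j g\|_{L^\infty}$ for some $c>0$ (and the low block $\Delta_{-1}$ enjoys the bound with the factor $e^{-\tau}$ coming from the $+1$ in $-\Delta+1$). This is the key point: the naive Schauder bound $\|P_{t-s}g\|_{\bC^{2+\alpha}}\lesssim (t-s)^{-1}\|g\|_{\bC^\alpha}$ from Lemma \ref{lem:heat} produces the non-integrable singularity $\int_0^t(t-s)^{-1}\dif s$, and the block-wise exponential decay is exactly what cures it.

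For the first bound, since $\Delta_j\sI f(t)=\int_0^t P_{t-s}\Delta_j f(s)\,\dif s$, the block-smoothing gives
\[
\|\Delta_j\sI f(t)\|_{L^\infty}\lesssim \int_0^t e^{-c(t-s)2^{2j}}\dif s\cdot\sup_s\|\Delta_j f(s)\|_{L^\infty}\lesssim 2^{-2j}\,2^{-j\alpha}\|f\|_{L^\infty_T\bC^\alpha},
\]
with the $j=-1$ contribution bounded by $\int_0^t e^{-(t-s)}\dif s\le 1$. Multiplying by $2^{j(2+\alpha)}$ and taking the supremum over $j$ and $t$ yields $\|\sI f\|_{L^\infty_T\bC^{2+\alpha}}\lesssim\|f\|_{L^\infty_T\bC^\alpha}$.

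For the time-regularity bound, for $0\le t'<t\le T$ I would split
\[
\sI f(t)-\sI f(t')=\int_{t'}^t P_{t-s}f(s)\,\dif s+\int_0^{t'}(P_{t-t'}-\mathrm I)P_{t'-s}f(s)\,\dif s,
\]
localize each term in frequency, and use, respectively, $\int_{t'}^t e^{-c(t-s)2^{2j}}\dif s\lesssim\min\{t-t',\,2^{-2j}\}$ and $\|(P_{t-t'}-\mathrm I)\Delta_j g\|_{L^\infty}\lesssim\min\{1,\,(t-t')2^{2j}\}\|\Delta_j g\|_{L^\infty}$ (the latter being the second estimate of Lemma \ref{lem:heat} applied block-wise), to obtain in both cases
\[
\|\Delta_j(\sI f(t)-\sI f(t'))\|_{L^\infty}\lesssim\min\{t-t',\,2^{-2j}\}\,2^{-j\alpha}\|f\|_{L^\infty_T\bC^\alpha}.
\]
Summing over $j$ and splitting at the critical scale $2^{2j}\sim(t-t')^{-1}$: on the low-frequency part the geometric series $\sum (t-t')2^{-j\alpha}$ is dominated by its top term $\sim(t-t')^{1+\alpha/2}$ because $\alpha<0$, and on the high-frequency part $\sum 2^{-j(2+\alpha)}$ is dominated by its bottom term $\sim(t-t')^{(2+\alpha)/2}$ because $2+\alpha>0$; both contributions are $\lesssim(t-t')^{(2+\alpha)/2}\|f\|_{L^\infty_T\bC^\alpha}$. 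Taking $t'=0$ also gives continuity at the origin, where $\sI f(0)=0$.

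The main obstacle is precisely this combination: replacing the divergent Schauder integral by the frequency-localized exponential decay, and, for the Hölder-in-time estimate, carrying out the dyadic split at scale $(t-t')^{-1}$, where the two constraints $\alpha<0$ (equivalently $2+\alpha<2$) and $2+\alpha>0$ are exactly what make the low- and high-frequency geometric series summable. Everything else is routine; since this is the content of \cite[Lemma~A.9]{GIP15} and \cite[Lemmas~2.8--2.9]{ZZZ20}, one may alternatively cite those references directly.
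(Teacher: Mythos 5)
Your proof is correct and is essentially the standard argument behind \cite[Lemma~A.9]{GIP15} and \cite[Lemmas~2.8--2.9]{ZZZ20}; the paper itself gives no proof, only the citations. The two ingredients you identify are exactly the right ones: the frequency-localized smoothing $\|P_\tau\Delta_j g\|_{L^\infty}\lesssim e^{-c\tau 2^{2j}}\|\Delta_j g\|_{L^\infty}$ (with the $e^{-\tau}$ decay for $\Delta_{-1}$ coming from the mass term in $-\Delta+1$) replaces the non-integrable Schauder singularity, and the dyadic split at $2^{2j}\sim(t-t')^{-1}$ for the time-regularity estimate is precisely where the two constraints $\alpha<0$ and $2+\alpha>0$ enter to make the low- and high-frequency geometric sums convergent.

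One small caveat: as written, the lemma allows the endpoint $2+\alpha=0$, but at that endpoint the high-frequency sum $\sum 2^{-j(2+\alpha)}$ in your argument no longer converges, so your proof establishes the Hölder-in-time bound only for $0<2+\alpha<2$. At $2+\alpha=0$ the statement degenerates (by the paper's convention $C^0_TL^\infty$ is just $C_TL^\infty$), and it is also not recovered from the first bound, since $\bC^0=\bB^0_{\infty,\infty}$ does not embed into $L^\infty$. This endpoint is never used in the paper, so the gap is harmless in context, but a careful statement would take the strict inequality $0<2+\alpha<2$ for the second estimate.
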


\subsection{Paraproducts and commutators}\label{sec:para}

Now, we recall the following paraproduct introduced by Bony (see \cite{Bon81}). In general, the product $fg$ of two distributions $f\in \bC^\alpha, g\in \bC^\beta$ is well defined if and only if $\alpha+\beta>0$. In terms of Littlewood-Paley blocks, the product $fg$ of two distributions $f$ and $g$ can be formally decomposed as
$$fg=\sum_{j\geq-1}\sum_{i\geq-1}\Delta_if\Delta_jg=f\prec g+f\circ g+f\succ g,$$
with $$f\prec g=g\succ f=\sum_{j\geq-1}\sum_{i<j-1}\Delta_if\Delta_jg, \quad f\circ g=\sum_{|i-j|\leq1}\Delta_if\Delta_jg.$$
We also denote
\begin{align*}
	\succcurlyeq \, \eqdef \, \succ+\circ,
	\qquad \preccurlyeq\,\eqdef\, \prec+\circ.
\end{align*}
For $j\geq0$ we also use the notations
$$S_jf=\sum_{i\leq j-1}\Delta_if,$$
and $\theta_i=\theta(2^{-i}\cdot)$ for $i\geq0$.

It is easy to see that the support of Fourier of $S_jf\Delta_jg$ is contained in an annulus of the form $2^j\sA$. Let $\tilde \theta\in C^\infty_c(\mR^3)$ with support in an annulus such that $\tilde\theta =1$ on $\sA$. Let $K_j=\mathcal{F}^{-1}\theta_j$, $\tilde{K}_j=\mathcal{F}^{-1}\tilde \theta_j$ with
$\tilde\theta_j=\tilde\theta(2^{-j}\cdot)$.

The following results on paraproduct in  Besov space is from \cite{Bon81} (see also \cite[Lemma~2.1]{GIP15},  \cite[Proposition~A.7]{MW18}).

\begin{lemma}\label{lem:para}
	Let  $\beta\in\R$, $p, p_1, p_2, q\in [1,\infty]$ such that $\frac{1}{p}=\frac{1}{p_1}+\frac{1}{p_2}$. Then we have
	\begin{equs}%\label{eq:para1}
		\|f\prec g\|_{\bB^\beta_{p,q}}
		&\lesssim\|f\|_{L^{p_1}}\|g\|_{\bB^{\beta}_{p_2,q}},
		\\
		%	\begin{equation*}%\label{eq:para2}
			\|f\prec g\|_{\bB^{\alpha+\beta}_{p,q}}
			&\lesssim\|f\|_{\bB^{\alpha}_{p_1,q}}\|g\|_{\bB^{\beta}_{p_2,q}},   \qquad (\mbox{for }\alpha<0)
			%	\begin{equation*}%\label{eq:para3}
				\\
				\|f\circ g\|_{\bB^{\alpha+\beta}_{p,q}}
				&\lesssim\|f\|_{\bB^{\alpha}_{p_1,q}}\|g\|_{\bB^{\beta}_{p_2,q}}, \qquad
				(\mbox{for }\alpha+\beta>0).
			\end{equs}
			Furthermore, for $\alpha+\beta>0$
			\begin{align*}
				\|fg\|_{\bB^{\alpha\wedge\beta}_{p,q}}\lesssim\|f\|_{{\bB}^\alpha_{p_1,q}}\|g\|_{{\bB}^\beta_{p_2,q}}.
			\end{align*}
			Moreover, for $\alpha>0$, $p_3, p_4\in [1,\infty]$ satisfy $\frac1p=\frac1{p_3}+\frac1{p_4}$. Then it holds that
			\begin{align*}
				\|fg\|_{\bB^\alpha_{p,q}}\lesssim \|f\|_{\bB^\alpha_{p_1,q}}\|g\|_{L^{p_2}}
				+\|f\|_{L^{p_3}}\|g\|_{\bB^\alpha_{p_4,q}}.
			\end{align*}
		\end{lemma}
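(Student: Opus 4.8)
The final statement to prove is Lemma~\ref{lem:para}, the standard collection of paraproduct and resonant estimates in Besov spaces on the torus. Here is how I would organize the proof.

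\textbf{Plan.} The proof is entirely classical and goes block by block through the Littlewood--Paley decomposition; I would follow the scheme of \cite[Lemma~2.1]{GIP15} or \cite[Proposition~A.7]{MW18}, adapted to the periodic setting with the dyadic partition $(\chi,\theta)$ fixed in Subsection~\ref{sub:1}. First I would record the elementary spectral supports: $\Delta_i f \prec$--type sums $S_{j-1}f\,\Delta_j g$ have Fourier support in a dilated annulus $2^j\mathcal A$, while the resonant pieces $\Delta_i f\,\Delta_j g$ with $|i-j|\le 1$ have Fourier support in a ball $2^j\mathcal B$. Combined with the Bernstein inequalities on $\mathbb T^3$ (for a function with Fourier support in $2^j\mathcal B$ one has $\|\Delta_j h\|_{L^p}\lesssim 2^{jd(1/p_2-1/p)}\|\Delta_j h\|_{L^{p_2}}$, and $2^{j\sigma}\|\Delta_j h\|_{L^p}\simeq$ Besov norm up to summation), this reduces every claimed bound to a dyadic summation estimate.

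\textbf{Key steps in order.} (1) The low--high paraproduct bound $\|f\prec g\|_{\bB^\beta_{p,q}}\lesssim\|f\|_{L^{p_1}}\|g\|_{\bB^\beta_{p_2,q}}$: write $\Delta_j(f\prec g)=\sum_{|j-j'|\le N}\Delta_j\big(S_{j'-2}f\,\Delta_{j'}g\big)$, use H\"older with $1/p=1/p_1+1/p_2$ and $\|S_{j'-2}f\|_{L^{p_1}}\lesssim\|f\|_{L^{p_1}}$, then multiply by $2^{j\beta}$ and take $\ell^q$; the finite band in $j'$ makes the sum trivial. (2) The version with $\alpha<0$: same computation but now estimate $\|S_{j'-2}f\|_{L^{p_1}}\lesssim\sum_{i\le j'-2}\|\Delta_i f\|_{L^{p_1}}\lesssim 2^{-j'\alpha}\|f\|_{\bB^\alpha_{p_1,q}}$ using $\alpha<0$ for convergence of the geometric series, then combine $2^{j(\alpha+\beta)}\le 2^{j'\alpha}2^{j\beta}\cdot 2^{(j-j')\alpha}$ and sum. (3) The resonant bound for $\alpha+\beta>0$: write $\Delta_j(f\circ g)=\sum_{j'\ge j-N}\Delta_j\big(\sum_{|i-j'|\le 1}\Delta_i f\,\Delta_{j'}g\big)$, apply H\"older and Bernstein on $\mathbb T^3$ to lose no regularity, bound by $\sum_{j'\ge j-N}\|\Delta_{j'}f\|_{L^{p_1}}\|\Delta_{j'}g\|_{L^{p_2}}\lesssim\sum_{j'\ge j-N}2^{-j'(\alpha+\beta)}c_{j'}\|f\|_{\bB^\alpha_{p_1,q}}\|g\|_{\bB^\beta_{p_2,q}}$ with $(c_{j'})\in\ell^q$ of norm $\le 1$ (Young/H\"older for sequences), multiply by $2^{j(\alpha+\beta)}$, and sum the geometric tail since $\alpha+\beta>0$. (4) The product estimate $\|fg\|_{\bB^{\alpha\wedge\beta}_{p,q}}$ for $\alpha+\beta>0$ follows by decomposing $fg=f\prec g+f\circ g+f\succ g$ and applying (2) to $f\prec g$ (needs $\alpha<0$; if $\alpha\ge 0$ use (1) together with the embedding $\bB^\alpha_{p_1,q}\hookrightarrow L^{p_1}$ from Lemma~\ref{lem:emb}), symmetrically to $f\succ g$, and (3) to $f\circ g$; the worst regularity is $\alpha\wedge\beta$. (5) The last estimate, $\|fg\|_{\bB^\alpha_{p,q}}\lesssim\|f\|_{\bB^\alpha_{p_1,q}}\|g\|_{L^{p_2}}+\|f\|_{L^{p_3}}\|g\|_{\bB^\alpha_{p_4,q}}$ for $\alpha>0$, follows the same decomposition: bound $f\prec g$ and $f\succ g$ by (1) (and its mirror), and for $f\circ g$ use $\alpha>0$ in (3) after estimating one factor in $L^{p_2}$ or $L^{p_4}$ via $\bB^0_{\infty,\infty}\hookrightarrow$ nothing, so instead directly sum $\sum_{j'\ge j-N}2^{j\alpha}\|\Delta_{j'}f\|_{L^{p_1}}\|\Delta_{j'}g\|_{L^{p_2}}$, bounding $\|\Delta_{j'}g\|_{L^{p_2}}\lesssim\|g\|_{L^{p_2}}$ and $2^{j\alpha}\|\Delta_{j'}f\|_{L^{p_1}}\lesssim 2^{(j-j')\alpha}c_{j'}\|f\|_{\bB^\alpha_{p_1,q}}$, then sum the geometric series.

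\textbf{Main obstacle.} There is no genuine obstacle here — this is textbook material — but the step requiring the most care is the resonant estimate (3), where one must keep track of the $\ell^q$ sequence coming from the Besov norm and use the discrete Young/H\"older inequality correctly so as not to lose a logarithm when $q<\infty$; likewise, the switch between the $\alpha<0$ and $\alpha\ge0$ cases for the low--high piece in the product estimates requires invoking the Besov embedding of Lemma~\ref{lem:emb}, which I would state explicitly. All Bernstein inequalities are used in their periodic form, which is standard on $\mathbb T^3$ for the fixed partition of unity. I would present the argument concisely, proving (1)--(3) in detail and deducing (4)--(5) by the $\prec/\circ/\succ$ decomposition, referring to \cite{BCD11,GIP15,MW18} for the elementary Bernstein/support facts.
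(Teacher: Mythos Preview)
Your proposal is correct and follows the standard route, but note that the paper does not actually give its own proof of this lemma: it simply states the result and cites \cite{Bon81}, \cite[Lemma~2.1]{GIP15}, and \cite[Proposition~A.7]{MW18}. Your block-by-block argument via spectral supports, H\"older, and geometric summation is exactly the proof carried out in those references, so there is nothing to compare.
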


		\bl\label{lem:multi}
		(Duality.) Let $\alpha\in (0,1)$, $p,q\in[1,\infty]$, $p'$ and $q'$ be their conjugate exponents, respectively. Then the mapping  $\langle u, v\rangle\mapsto \int \overline u v \,\dif x$  extends to a continuous bilinear form on $\bB^\alpha_{p,q}\times \bB^{-\alpha}_{p',q'}$, and one has $|\langle u,v\rangle|\lesssim \|u\|_{\bB^\alpha_{p,q}}\|v\|_{\bB^{-\alpha}_{p',q'}}$ (cf.  \cite[Proposition~3.23]{MW17}).

		\el

		We also recall the following commutator estimate (\cite[Lemma 2.4]{GIP15}, \cite[Proposition A.9]{MW18}).
		
		\begin{lemma}\label{lem:com1}
			Let $\alpha\in (0,1)$ and $\beta,\gamma\in \R$ such that $\alpha+\beta+\gamma>0$ and $\beta+\gamma<0$, $p,p_1,p_2,p_3\in [1,\infty]$, $\frac{1}{p}=\frac{1}{p_1}+\frac{1}{p_2}+\frac{1}{p_3}$. Then there exist a trilinear bounded operator $C(f,g,h):\bB^\alpha_{p_1}\times \bB_{p_2}^\beta\times \bB^\gamma_{p_3}\to \bB^{\alpha+\beta+\gamma}_{p}$ satisfying
			$$
			\|C(f,g,h)\|_{\bB^{\alpha+\beta+\gamma}_{p}}\lesssim \|f\|_{\bB^\alpha_{p_1}}\|g\|_{\bB^\beta_{p_2}}\|h\|_{\bB^\gamma_{p_3}}
			$$
			and for smooth functions $f,g,h$
			$$
			C(f,g,h)=(f\prec g)\circ h - f(g\circ h).
			$$
		\end{lemma}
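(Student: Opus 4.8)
The final statement in the excerpt is Lemma~\ref{lem:com1}, the classical commutator estimate for $C(f,g,h)=(f\prec g)\circ h - f(g\circ h)$. Let me sketch how I would prove it.

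\textbf{Approach.} The plan is to prove the estimate in the regime where $f,g,h$ are smooth (so the pointwise identity $C(f,g,h)=(f\prec g)\circ h - f(g\circ h)$ makes sense), obtaining a bound with constant depending only on the Besov norms; the general case then follows by density (Lemma~\ref{lem:emb} gives $\bB^s_{p,1}$ dense in $\bB^s_{p,\infty}$, and one uses a slightly smaller exponent as a buffer) together with the continuity of the trilinear form in the stated topologies. The core is a Littlewood--Paley block decomposition: write $f\prec g = \sum_j S_{j-1}f\,\Delta_j g$, so that
\begin{equation*}
(f\prec g)\circ h - f(g\circ h) = \sum_{j}\sum_{|j-k|\le 1}\big(S_{j-1}f\,\Delta_j g\big)\Delta_k h - \sum_j \Delta_j f \cdot (g\circ h),
\end{equation*}
and the idea is to split $f = S_{j-1}f + (f - S_{j-1}f)$ in the second sum and reorganize so that the "diagonal" terms cancel, leaving a remainder of the form $\sum_{j}\sum_{|j-k|\le 1}\big[(S_{j-1}f - f)\Delta_j g\big]\Delta_k h$ plus genuinely off-diagonal contributions which are handled by brute force. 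The key point is that $S_{j-1}f - f = -\sum_{i\ge j-1}\Delta_i f$ has a gain: its $L^{p_1}$ norm at "frequency $\gtrsim 2^j$" is controlled by $2^{-j\alpha}\|f\|_{\bB^\alpha_{p_1}}$ since $\alpha>0$.

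\textbf{Key steps in order.} First, I would record the frequency-support bookkeeping: $S_{j-1}f\,\Delta_j g$ is Fourier-supported in an annulus $2^j\mathcal A$, so $(S_{j-1}f\,\Delta_j g)\Delta_k h$ contributes only to Littlewood--Paley blocks $\Delta_q$ with $q\lesssim j\sim k$; this confines the output to a manageable range. Second, perform the algebraic rearrangement above to isolate the commutator structure; this is where one sees that the pieces which are individually only in $\bB^{\beta+\gamma}$ (and $\beta+\gamma<0$, so no product estimate applies directly) actually cancel, and what survives is a sum of terms each carrying a factor $(f-S_{j-1}f)$ or $(S_{j-1}f - S_{k-1}f)$. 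Third, estimate each surviving term: using $\|\Delta_q(\cdot)\|_{L^p}$, Hölder with $\tfrac1p=\tfrac1{p_1}+\tfrac1{p_2}+\tfrac1{p_3}$, Bernstein's inequality, and the bounds
\begin{equation*}
\|\Delta_j g\|_{L^{p_2}}\lesssim 2^{-j\beta}\|g\|_{\bB^\beta_{p_2}},\qquad \|\Delta_k h\|_{L^{p_3}}\lesssim 2^{-k\gamma}\|h\|_{\bB^\gamma_{p_3}},\qquad \Big\|\sum_{i\ge j-1}\Delta_i f\Big\|_{L^{p_1}}\lesssim 2^{-j\alpha}\|f\|_{\bB^\alpha_{p_1}},
\end{equation*}
one finds that the $\Delta_q$-block of $C(f,g,h)$ is bounded by $2^{-q(\alpha+\beta+\gamma)}\|f\|_{\bB^\alpha_{p_1}}\|g\|_{\bB^\beta_{p_2}}\|h\|_{\bB^\gamma_{p_3}}$ up to a geometric sum over $j\gtrsim q$; the sum converges precisely because $\alpha+\beta+\gamma>0$. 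Taking the $\ell^\infty_q$ norm (with the $2^{q(\alpha+\beta+\gamma)}$ weight) gives the claimed $\bB^{\alpha+\beta+\gamma}_p$ bound. Fourth, extract the trilinear operator $C$ as the limit of the smooth approximations; linearity and the uniform bound make this routine.

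\textbf{Main obstacle.} The genuinely delicate point is the algebraic cancellation in the second step: one must choose the telescoping of $f$ and the grouping of the resonant indices so that the terms violating $\beta+\gamma>0$ exactly drop out, leaving only terms where the $\alpha$-regularity of $f$ is transferred as a frequency gain. Getting the index ranges right (the interplay between $|j-k|\le 1$ from the resonant product, $i\ge j-1$ from $f-S_{j-1}f$, and $q\lesssim j$ from frequency support) is the bookkeeping-heavy part, and it is essential that $\alpha<1$ so that $S_{j-1}f$ is the correct cutoff and no low-frequency obstruction appears. The rest (Hölder, Bernstein, Besov embeddings from Lemma~\ref{lem:emb}, and the density argument) is standard. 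I would model the write-up closely on the analogous proofs in \cite{GIP15} and \cite{MW18}, adapting only to the general integrability exponents $p_1,p_2,p_3$.
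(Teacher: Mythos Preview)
The paper does not give its own proof of Lemma~\ref{lem:com1}; it simply quotes the result with a citation to \cite[Lemma~2.4]{GIP15} and \cite[Proposition~A.9]{MW18}. Your sketch is essentially the standard argument from those references: write $(f\prec g)\circ h$ blockwise, subtract $f(g\circ h)$, and observe that what remains carries a factor $f-S_{j-1}f$ whose $L^{p_1}$ norm is $\lesssim 2^{-j\alpha}\|f\|_{\bB^\alpha_{p_1}}$, giving the gain needed for the geometric sum over $j\gtrsim q$ to converge under $\alpha+\beta+\gamma>0$. So there is nothing to compare against, and your outline is correct.

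One small remark: you say ``it is essential that $\alpha<1$ so that $S_{j-1}f$ is the correct cutoff and no low-frequency obstruction appears.'' In fact the proof in \cite{GIP15} only uses $\alpha>0$ (so that $\sum_{i\ge j-1}2^{-i\alpha}$ converges); the restriction $\alpha<1$ stated here is not needed for the argument to go through, and the original reference does not impose it. This is harmless for the paper's purposes, but you should not build the upper bound into your intuition for this particular commutator.
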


	We also recall the following commutators from \cite[Lemma A.1]{CC15}.
		
		\begin{lemma}\label{lem:com2}
			Let $\alpha\in (0,1)$ and $\beta\in \R$, $p,p_1,p_2\in [1,\infty]$, $\frac{1}{p}=\frac{1}{p_1}+\frac{1}{p_2}$. Let $\phi\in\mathcal{S}$, the space of Schwartz functions. Then for every $\eta\leq1$ it holds that
			$$
			\|\varphi(\eps \nabla)(f\prec g)-f\prec \varphi(\eps \nabla)g \|_{\bB^{\alpha+\beta-\eta}_{p}}\lesssim \eps^{\eta}\|f\|_{\bB^\alpha_{p_1}} \|g\|_{\bB^\beta_{p_2}}.
			$$
			Here $\varphi(\nabla)f=\mathcal{F}^{-1}(\varphi\mathcal F f)=(\mathcal{F}^{-1}\varphi)*f$ and the proportional constant is uniform in $\eps$.
Moreover, it holds that for $T>0$
$$\|[\sI,f\prec]g\|_{C_T\bC^{\alpha+\beta+2}}\lesssim \Big(\|f\|_{C_T\bC^{\alpha}}+\|f\|_{C^{\alpha/2}_TL^\infty}\Big)\|g\|_{C_T\bC^{\beta}}.$$
		\end{lemma}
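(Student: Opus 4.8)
\textbf{Proof plan for Lemma \ref{lem:com2}.}

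The plan is to prove the two commutator bounds separately, both by Littlewood--Paley analysis. For the first estimate, write $f\prec g = \sum_j S_{j-1}f\,\Delta_j g$, so that
\begin{equs}
	\varphi(\eps\mathcal{D})(f\prec g)-f\prec\varphi(\eps\mathcal{D})g = \sum_j \Big(\varphi(\eps\mathcal{D})\big(S_{j-1}f\,\Delta_j g\big) - S_{j-1}f\,\varphi(\eps\mathcal{D})\Delta_j g\Big).
\end{equs}
Each summand has Fourier support in a fixed dilate $2^j\mathcal{A}$ of an annulus (using that $S_{j-1}f\,\Delta_j g$ is spectrally supported there), so after applying a fattened Littlewood--Paley projector we may write the $j$-th term as a convolution in the first slot. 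First I would express $\varphi(\eps\mathcal{D})(S_{j-1}f\,\Delta_j g)(x) - S_{j-1}f(x)\,\varphi(\eps\mathcal{D})\Delta_j g(x)$ as an integral $\int \psi^\eps_j(y)\,(S_{j-1}f(x-y)-S_{j-1}f(x))\,\Delta_j g(x-y)\,\d y$ for a suitable kernel $\psi^\eps_j=\mathcal{F}^{-1}(\varphi(\eps\cdot)\tilde\theta_j)$ (this is the standard ``commutator identity'' trick). The increment $|S_{j-1}f(x-y)-S_{j-1}f(x)|$ is controlled by $\|f\|_{\bC^\alpha}\min(|y|^\alpha, 1)$ up to $L^{p_1}$ localization, so the key quantitative input is the weighted kernel bound $\int |\psi^\eps_j(y)|\,|y|^{\alpha}\,\d y \lesssim \eps^{\eta}\,2^{-j(\alpha-\eta)}$ for $\eta\le 1$, which follows from the Schwartz decay of $\varphi$ and a rescaling $y\mapsto 2^{-j}y$ combined with the elementary interpolation $|y|^\alpha = |y|^{\alpha-\eta}|y|^{\eta}$ together with $\int|\mathcal{F}^{-1}\varphi|(z)|z|^\eta\d z<\infty$. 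Summing over $j$ with the $\ell^\infty$ weight $2^{j(\alpha+\beta-\eta)}$ and using $\|S_{j-1}f\|_{L^{p_1}}\lesssim\|f\|_{L^{p_1}}$, $\|\Delta_j g\|_{L^{p_2}}\lesssim 2^{-j\beta}\|g\|_{\bB^\beta_{p_2}}$ (with Hölder in the two exponents) gives the claimed bound in $\bB^{\alpha+\beta-\eta}_p$. Uniformity in $\eps$ is manifest since all $\eps$-dependence is captured by the single factor $\eps^\eta$.

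For the second estimate on $[\sI,f\prec]g = \sI(f\prec g) - f\prec\sI(g)$, I would follow the standard argument (e.g.\ \cite[Lemma A.9]{GIP15} adapted to the half-space heat flow, cf.\ Lemma \ref{lemma:sch}). Decompose again into blocks: $[\sI,f\prec]g = \sum_j \big(\sI(S_{j-1}f\,\Delta_j g) - S_{j-1}f\,\sI(\Delta_j g)\big)$, and write $\sI(S_{j-1}f\,\Delta_j g)(t) = \int_0^t P_{t-s}(S_{j-1}f(s)\,\Delta_j g(s))\,\d s$. The commutator is handled by inserting and subtracting $\int_0^t S_{j-1}f(t)\,P_{t-s}\Delta_j g(s)\,\d s$; the resulting two error terms involve, respectively, the time increment $\|S_{j-1}f(t)-S_{j-1}f(s)\|_{L^\infty}\lesssim |t-s|^{\alpha/2}\|f\|_{C^{\alpha/2}_T L^\infty}$ and a spatial commutator between the heat semigroup $P_{t-s}$ and the paraproduct $S_{j-1}f(s)\prec$, whose kernel bound is exactly of the type handled for the first part (but now with the $\eps$ role played by $\sqrt{t-s}$ and with $\|f\|_{\bC^\alpha}$ controlling spatial increments). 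Each contribution is estimated block-by-block using $\|P_{t-s}\Delta_j g\|_{L^\infty}\lesssim e^{-c(t-s)2^{2j}}\|\Delta_j g\|_{L^\infty}$ plus the extra smoothing $\|P_{t-s}\|_{L^\infty\to\bC^{2+\sigma}}\lesssim (t-s)^{-1-\sigma/2}$; integrating $\int_0^t (t-s)^{-1+\alpha/2}e^{-c(t-s)2^{2j}}\d s \lesssim 2^{-j\alpha}$ and summing with the $\ell^\infty$ weight $2^{j(\alpha+\beta+2)}$ yields the bound in $C_T\bC^{\alpha+\beta+2}$ in terms of $(\|f\|_{C_T\bC^\alpha}+\|f\|_{C^{\alpha/2}_T L^\infty})\|g\|_{C_T\bC^\beta}$.

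The main obstacle is the bookkeeping in the first (elliptic/$\eps$) commutator: one must carefully verify that after the annular localization the smoothing operator $\varphi(\eps\mathcal{D})$ genuinely acts as a convolution whose kernel, rescaled to unit frequency, enjoys the weighted $L^1$-bound $\int |\mathcal{F}^{-1}\varphi|(z)(|z|^\eta+1)\,\d z<\infty$ uniformly, and to track how the gain of $\eps^\eta$ is traded against a loss of $2^{j\eta}$ at each frequency without destroying summability (which forces $\eta<\alpha$ in the weight count, handled by noting $\alpha+\beta-\eta$ can still be summed against $\ell^\infty$ because only an $\ell^\infty$ norm, not $\ell^1$, is needed on the block estimates for $\bB^\bullet_{\cdot,\infty}$). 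Everything else is a routine combination of Lemma \ref{lem:para}, Lemma \ref{lem:heat}, Lemma \ref{lemma:sch}, and Bernstein's inequality; I expect no conceptual difficulty beyond this.
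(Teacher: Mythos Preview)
Your strategy is correct and is essentially the content of the references the paper cites (the paper's own proof is just a pointer to \cite[Lemma~A.1]{CC15} for the first bound and \cite[Lemma~3.13]{HZZZ24} for the second, noting that the latter follows from the former). The second part of your plan, splitting $[\sI,f\prec]g$ into a time-increment piece controlled by $\|f\|_{C^{\alpha/2}_TL^\infty}$ and a spatial commutator with $P_{t-s}$ controlled via the first estimate, is exactly the standard argument.

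There is one concrete error in your execution of the first estimate, however. Your claimed weighted bound $\int|\psi^\eps_j(y)||y|^\alpha\,\d y\lesssim \eps^\eta 2^{-j(\alpha-\eta)}$ for the \emph{localized} kernel $\psi^\eps_j=\mathcal{F}^{-1}(\varphi(\eps\cdot)\tilde\theta_j)$ is false. After the rescaling $y\mapsto 2^{-j}y$ this integral equals $2^{-j\alpha}\int|\mathcal{F}^{-1}(\varphi(\eps 2^j\cdot)\tilde\theta)(z)||z|^\alpha\,\d z$, and when $\eps 2^j$ is small the inner integral is close to $|\varphi(0)|\int|\mathcal{F}^{-1}\tilde\theta||z|^\alpha\,\d z$, a fixed positive constant, not $O((\eps 2^j)^\eta)$. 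Your interpolation $|y|^\alpha=|y|^{\alpha-\eta}|y|^\eta$ does not rescue this, and for $\eta>\alpha$ the factor $|y|^{\alpha-\eta}$ is not even locally integrable.

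The fix is simpler than what you wrote: drop the localization in the kernel. The commutator identity already holds with the \emph{full} kernel $K_\eps=\eps^{-3}(\mathcal{F}^{-1}\varphi)(\cdot/\eps)$, and then $\int|K_\eps(y)||y|^\eta\,\d y=C_\varphi\,\eps^\eta$ directly. The frequency-$j$ gain does not come from the kernel but from Bernstein applied to the increment of $S_{j-1}f$: using $\|S_{j-1}f(\cdot-y)-S_{j-1}f\|_{L^{p_1}}\lesssim |y|\,\|\nabla S_{j-1}f\|_{L^{p_1}}\lesssim |y|\,2^{j(1-\alpha)}\|f\|_{\bB^\alpha_{p_1}}$ together with the Besov difference characterization $\|S_{j-1}f(\cdot-y)-S_{j-1}f\|_{L^{p_1}}\lesssim |y|^\alpha\|f\|_{\bB^\alpha_{p_1}}$, and interpolating these two, one gets the $j$-th block bounded by $\eps^\eta 2^{-j(\alpha+\beta-\eta)}\|f\|_{\bB^\alpha_{p_1}}\|g\|_{\bB^\beta_{p_2}}$ for $\eta\in[\alpha,1]$, which is the range actually needed for the second estimate.
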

		\begin{proof} The first result follows from \cite[Lemma A.1]{CC15}. The second result follows from the first result and we refer to \cite[Lemma 3.13]{HZZZ24} for a proof.
		\end{proof}
		
			\bl\label{veps} It holds that for $y\in\mathbb{R}^3,\delta\in(0,1), \alpha\in\mathbb{R},p\in[1,\infty]$
		$$\|\tau_y f-f\|_{\bB_p^{\alpha}}\lesssim |y|^\delta\|f\|_{\bB_p^{\alpha+\delta}}.$$
		Moreover,  if $(1+|x|^{\delta})v\in L^1(\mathbb{R}^3)$, then it holds that
		$$\|v^\eps*f-f\|_{\bB_p^{\alpha}}\lesssim \eps^\delta \|f\|_{\bB_p^{\alpha+\delta}}.$$
		\el	
		\begin{proof} The first result follows from \cite[Corollary 2.9]{HZZZ24}. We have
			$$v^\eps*f-f=\int v(y)(f(x-\eps y)-f(x))\,\dif y.$$
			Thus the result follows from the first result.
		\end{proof}
	
We recall the following result from \cite[Proposition 10]{BG18}.

		\begin{lemma}\label{lem:com3}
		Let $\alpha,\beta,\gamma\in \R$ such that $\alpha+\beta+\gamma>0$ and $\beta+\gamma<0$. Then there exist a trilinear bounded operator $D(f,g,h):H^\alpha\times \bC^\beta\times H^\gamma\to \R$ satisfying
		$$
		|D(f,g,h)|\lesssim \|f\|_{H^\alpha}\|g\|_{\bC^\beta}\|h\|_{H^\gamma}
		$$
		and for smooth functions $f,g,h$
		$$
		D(f,g,h)=\int f (g\circ h)  - \int (f\prec g) h .
		$$
	\end{lemma}

For $J_t^N$ introduced in Section \ref{sec:part} we have the following result.

	\begin{lemma}\label{lem:com4} For $t>0$, there exists $\sR_t$ of bounded multilinear forms on $\bC^{-1-\kappa}\times \bC^{-1-\kappa}\times H^{\frac12-\kappa}\times H^{\frac12-\kappa}$, $\kappa>0$, such that, for smooth $ f_1, f_2, g_1,g_2$, it holds that
		\begin{align*}
			\sR_t(f_1,f_2,g_1,g_2)=\int [(J_t^N)^2(g_1\prec f_1)](g_2\prec f_2)-((J_t^N)^2 f_1\circ  f_2) g_1g_2 \d x,
		\end{align*}
		and
		\begin{align*}
			|	\sR_t(f_1,f_2,g_1,g_2)|\lesssim \frac1{( t+1)^{1+\kappa}}\|f_1\|_{\bC^{-1-\kappa}}\|f_2\|_{\bC^{-1-\kappa}}\|g_1\|_{H^{\frac12-\kappa}}\|g_2\|_{H^{\frac12-\kappa}},
		\end{align*}
		with the proportional constant independent of $\eps, N$ and $t$.
		
	\end{lemma}
\begin{proof} We have
\begin{align*}&\Big|\int [(J_t^N)^2(g_1\prec f_1)](g_2\prec f_2)\d x- \int [g_1\prec(J_t^N)^2f_1](g_2\prec f_2)\d x\Big|
\\\lesssim&\,\|(J_t^N)^2(g_1\prec f_1)-g_1\prec(J_t^N)^2f_1\|_{H^{1+2\kappa}}\|g_2\prec f_2\|_{H^{-1-2\kappa}}.
\end{align*}
Now we bound the first term by similar argument as Lemma \ref{lem:com2}: We write $(J_t^N)^2=\mathcal{F}^{-1}\varphi\mathcal{F}$ with $\varphi(k)=\frac{\sigma_t^2(k)\chi_N(k)^2}{2\la k\ra^2}=\frac{-\chi_t(k)\chi_t'(k)\chi_N(k)^2}{\la k\ra (t+1)^2}$. Since the support of Fourier transform of $S_{j-1}f\Delta_jg$ is contained in an annulus of the form $2^j\sA$, we have
			\begin{align*}&(J_t^N)^2S_{j-1}g_1 \Delta_j f_1-S_{j-1}g_1(J_t^N)^2\Delta_jf_1
				\\=&-\int y \mathcal{F}^{-1}(\tilde\theta_j\varphi)(y) \int_0^1\nabla S_{j-1}g_1(x-\tau y)\,\dif \tau \Delta_j f_1(x-y) \,\dif y.
				%\\=&-\int y \mathcal{F}^{-1}(\varphi(\cdot))(y) \int_0^1\nabla S_{j-1}f(x-\tau y)\,\dif \tau \Delta_j g(x-y) \,\dif y.
			\end{align*}
	By direct calculation we derive
			\begin{align*}&\|(J_t^N)^2S_{j-1}g_1 \Delta_j f_1-S_{j-1}g_1(J_t^N)^2\Delta_jf_1\|_{L^2}
				\\\lesssim&\int |y|| \mathcal{F}^{-1}(\tilde\theta_j\varphi)(y)|   \,\dif y \|\nabla S_{j-1}g_1\|_{L^{2}}\|\Delta_jf_1\|_{L^{\infty}},
			\end{align*}
		Moreover, we obtain for $\delta\in[0,1]$
\begin{equs}[est:vx] &\big\||x|^\delta\mathcal{F}^{-1}(\tilde\theta(2^{-j}\cdot) \varphi)\big\|_{L^1(\mR^3)}
=2^{-j\delta}\big\||x|^\delta\mathcal{F}^{-1}(\tilde\theta  \varphi(2^j\cdot))\big\|_{L^1(\mR^3)}
\\\lesssim&\,2^{-j\delta}\big\|\mathcal{F}^{-1}[(I-\Delta)^2(\tilde\theta \varphi( 2^j\cdot))]\big\|_{L^2(\mR^3)}=2^{-j\delta}\big\|(I-\Delta)^2(\tilde\theta \varphi( 2^j\cdot))\big\|_{L^2(\mR^3)}
\\\lesssim&\,2^{-j\delta}\Big(1+\sum_{\eta,|\eta|\leq 4}| 2^j|^{|\eta|}\|\partial^\eta \varphi( 2^j\cdot)\|_{L^\infty(supp(\tilde\theta))}\Big)
\lesssim2^{-j(\delta+2-\kappa)}\frac{1}{(t+1)^{1+\kappa}},
\end{equs}
where in the last step we used on the support of  $\tilde \theta \varphi(2^j\cdot)$, $2^j\simeq t+1$. Thus we derive
	\begin{align*}\|(J_t^N)^2(g_1\prec f_1)-g_1\prec(J_t^N)^2f_1\|_{H^{1+2\kappa}}
\lesssim&\,\frac1{( t+1)^{1+\kappa}}\|f_1\|_{\bC^{-1-\kappa}}\|g_1\|_{H^{\frac12-\kappa}},
			\end{align*}
which combined with Lemma \ref{lem:para} implies that
\begin{align*}&\Big|\int [(J_t^N)^2(g_1\prec f_1)](g_2\prec f_2)\d x- \int [g_1\prec(J_t^N)^2f_1](g_2\prec f_2)\d x\Big|
\\\lesssim&\,\frac1{( t+1)^{1+\kappa}}\|f_1\|_{\bC^{-1-\kappa}}\|f_2\|_{\bC^{-1-\kappa}}\|g_1\|_{H^{\frac12-\kappa}}\|g_2\|_{H^{\frac12-\kappa}}.
\end{align*}
Using Lemma \ref{lem:com3} and Lemma \ref{lem:para} we obtain
\begin{align*}&\Big|\int g_1 \Big([(J_t^N)^2f_1]\circ (g_2\prec f_2)\Big)\d x- \int [g_1\prec(J_t^N)^2f_1](g_2\prec f_2)\d x\Big|
\\\lesssim&\,\|(J_t^N)^2f_1\|_{\bC^{1-2\kappa}}\|f_2\|_{\bC^{-1-\kappa}}\|g_1\|_{H^{\frac12-\kappa}}\|g_2\|_{H^{\frac12-\kappa}}
\\\lesssim&\,\frac1{( t+1)^{1+\kappa}}\|f_1\|_{\bC^{-1-\kappa}}\|f_2\|_{\bC^{-1-\kappa}}\|g_1\|_{H^{\frac12-\kappa}}\|g_2\|_{H^{\frac12-\kappa}},
\end{align*}
where in the last step we used \eqref{est:vx} to deduce $\|(J_t^N)^2f_1\|_{\bC^{1-2\kappa}}\lesssim \frac1{( t+1)^{1+\kappa}}\|f_1\|_{\bC^{-1-\kappa}}.$
Similarly using Lemma \ref{lem:com1} we have
\begin{align*}&\Big|\int g_1 \Big([(J_t^N)^2f_1]\circ (g_2\prec f_2)\Big) -((J_t^N)^2 f_1\circ  f_2) g_1g_2\d x\Big|
\\\lesssim&\,\frac1{( t+1)^{1+\kappa}}\|f_1\|_{\bC^{-1-\kappa}}\|f_2\|_{\bC^{-1-\kappa}}\|g_1\|_{H^{\frac12-\kappa}}\|g_2\|_{H^{\frac12-\kappa}}.
\end{align*}
Summarizing the above calculations, the result follows.
\end{proof}

		\renewcommand{\theequation}{B.\arabic{equation}}	
		\section{Proof of Theorem \ref{th:global}, Theorem \ref{th:1} and  Lemma \ref{lem:zz1}} \label{app:pro}
		\begin{proof}[Proof of Theorem \ref{th:global}] For fixed $\eps>0$, due to the smoothing effect of $v^\eps$,  $\cZ_\eps^{\<2c>}\in  C_T\bC^{-\frac12-\kappa}$  $\cZ_\eps^{\<3v>}\in C_T\bC^{-\frac12-\kappa}$  $\bP$-a.s..

			It is standard to derive a local in time solution in $\bC^{-\frac12-\kappa}$ by fixed point arguments in suitable space, following similar arguments as in \cite{DD03, MW17}. We then use invariant measure $\nu^\eps$ to construct global in time solutions. More precisely,
			recall the following potential term from $\nu^\eps$
			$$\cD[\Phi]\eqdef	\frac12\int \Wick{|\Psi(x)|^2}v^\eps(x-y)\Wick{|\Psi(y)|^2}\,\dif x\dif y\no
			-{(a^\eps-6 b^\eps-m+1)\int_{\mathbb T^d} \Wick{|\Psi(x)|^2}\,\dif x},$$
			which can be approximated by the following Galerkin approximation:
			\begin{align*}
				\cD[P_N\Phi]\eqdef&\,	\frac12\int (|\Psi_N(x)|^2-\E |\Psi_N(x)|^2)v^\eps(x-y)(|\Psi_N(y)|^2-\E|\Psi_N(y)|^2)\,\dif x\dif y
				\\&-{(a^\eps- 6b^\eps-m+1)\int (|\Psi_N(x)|^2-\E |\Psi_N(x)|^2)\,\dif x}
				\\=&\,\frac12\int \Big(|\Psi_N(x)|^2-\E |\Psi_N(x)|^2-(a^\eps- 6b^\eps-m+1)\Big)v^\eps(x-y)
				\\&\times\Big(|\Psi_N(y)|^2-\E|\Psi_N(y)|^2-(a^\eps-6 b^\eps-m+1)\Big)\,\dif x\dif y-T^\eps,
			\end{align*}
			with $\Psi_N=P_N\Psi$ for $P_Nf=\cF^{-1}1_{|k|\leq N}\cF f$ and $T^\eps=4\pi^3(a^\eps-6b^\eps-m+1)^2$. We can then apply the same arguments as in \cite[Lemma 5.3]{LewNamRou-21} to have $\cD[P_N\Psi]\to \cD[\Psi]$ in $L^1(\mu_0)$ for Gaussian free field $\mu_0$, as $N\to\infty$.
			We further construct the probability measure
			\begin{align}\label{def:nueps}
				\nu^\eps=\frac1{\sZ_\eps}e^{-\cD}\,\dif \mu_0, \qquad \sZ_\eps=\int e^{-\cD}\,\dif \mu_0,\end{align}
			with $V^\eps\geq -T^\eps$.
			
			Using  Galerkin approximation
			we establish that  the measure $\nu^\eps$ is an invariant measure of the solutions $\Psi^\eps$ to equation \eqref{eq:mainnew}.
			Another way to derive that $\nu^\eps$ is an invariant measure of the solutions to equation \eqref{eq:mainnew} is through the Dirichlet form approach. Since $\nu^\eps$ is absolutely continuous with respect to $\mu_0$, with its density in $L^p(\mu_0)$ for every $p\geq1$, we can readily apply the general Dirichlet form theory from \cite{AR91} to construct a Markov process that leaves $\nu^\eps$ as an invariant measure for equation \eqref{eq:mainnew}. Moreover, by following the same approach as in \cite[Theorem 3.9]{RZZ17}, we conclude that this Markov process coincides with the local solutions to equation \eqref{eq:mainnew}.
			
			Let $\zeta_{\Psi^\eps(0)}$ denote the blow-up time of $\Psi^\eps$ in $\bC^{-\frac12-\kappa}$ staring from $\Psi^\eps(0)$. We then use $\nu^\eps$ to apply Bourgain's argument \cite{Bou} to extend the local-in-time solution to a global one for $\nu^\eps$-a.s. initial data in $\bC^{-\frac12-\kappa}$, meaning that $\bP(\zeta_{\Psi^\eps(0)} = \infty) = 1$ for $\nu^\eps$-a.s. $\Psi^\eps(0) \in \bC^{-\frac12-\kappa}$ (see also \cite{DD03}).
			Moreover, by a general result from \cite{HM18}, we can show that the Markov semigroup formed by the solutions to equation \eqref{eq:mainnew} is strong Feller. This implies that for every $t \geq 0$, the map $\Psi^\eps(0) \to \bP(t < \zeta_{\Psi^\eps(0)})$ is continuous. Consequently, we conclude that $\bP(\zeta_{\Psi^\eps(0)} = \infty) = 1$ for every $\Psi^\eps(0) \in \bC^{-\frac12-\kappa}$. Moreover, by applying \cite[Corollary 3.9]{HM18}, the strong Feller property of the Markov semigroup, and the fact that $\nu^\eps$ is supported on $\bC^{-\frac12-\kappa}$, we conclude that $\nu^\eps$ is the unique invariant measure of the solutions to equation \eqref{eq:mainnew}. The result follows.
		\end{proof}
	
	\begin{proof}[Proof of Theorem \ref{th:1}]	
	During the proof, we use $\|\mZ_\eps(t)\|$, $\|(\mZ_\eps - \mZ)(t)\|$, and $\|\mZ(t)\|$ to denote the quantities $\|\mZ_\eps\|$, $\|\mZ_\eps - \mZ\|$, and $\|\mZ\|$ introduced in Section \ref{sec:dif}, where these norms are now considered on the interval $[0,t]$ for the random fields. We also introduce the following random time: Define for any $L\geq1$
		$$\tau^\varepsilon_L\eqdef\inf\{ t\geq0:\|\psi_\varepsilon(t)\|_{\bC^{-\frac12-\kappa}}\geq L\}\wedge T,\quad\rho_{L}^\varepsilon\eqdef\inf\{ t\geq0:\|\mZ_\eps(t)\|\geq L\}.$$
		We first have the following bound before $\tau_L^\eps\wedge\rho^\eps_{L_1}$, $L,L_1\geq1$:
		Set $$Q^\eps(t)\eqdef\|\psi_\eps(t)\|_{\bC^{-\frac12-\kappa}}+t^{\frac{1+3\kappa}{2}}\|\psi_\eps(t)\|_{\bC^{\frac{1}{2}+2\kappa}}
		+t^{\frac{3+8\kappa}{4}}\|\psi_\eps^{\sharp}(t)\|_{\bC^{1+3\kappa}}+1.$$
		It holds that for $t\leq \tau^\varepsilon_L\wedge\rho_{L_1}^\eps$
		\begin{equs}[boundq]Q^\eps(t)\lesssim C(L,L_1),
		\end{equs}
		with the proportional constant independent of $\eps$.
		In fact, by similar calculations as in \cite[Section 4]{ZZ18}  there exists $q>1$ such that for $0\leq t\leq \rho_{L_1}^\eps\wedge \tau_L^\eps$
		$$Q^\eps(t)^q\leq C(L_1)(\|\psi(0)\|_{\bC^{-\frac12-\kappa}}^q+1)+C(L_1)\int_0^tQ^\eps(r)^{3q}\,\dif r.$$
	 Then Bihari's inequality implies that there exists a short time $t^*=\tilde C(L,L_1)>0$ such that
		$$\sup_{t\in [0,t^*\wedge\tau^\varepsilon_L\wedge\rho_{L_1}^\eps]}Q^\eps(t)\leq C(L_1,L).$$
Consider the solution at time $t^*<t\leq \rho_{L_1}^\eps\wedge \tau_L^\eps$, then it can be viewed as a solution starting from $t-\frac{t^*}2$. A similar argument as above implies for $t^*<t\leq \rho_{L_1}^\eps\wedge \tau_L^\eps$
$$(\frac{t^*}2)^{\frac{1+3\kappa}{2}}\|\psi_\eps(t)\|_{\bC^{\frac{1}{2}+2\kappa}}
		+(\frac{t^*}2)^{\frac{3+8\kappa}{4}}\|\psi_\eps^{\sharp}(t)\|_{\bC^{1+3\kappa}}\lesssim C(L,L_1),$$
which implies \eqref{boundq}.

		For $L\geq0$ define
		$$\tau_L\eqdef\inf\{ t\geq0:\|\phi(t)\|_{\bC^{-\frac12-\kappa}}\geq L\}\wedge T,\qquad
		\bar{\rho}_L\eqdef\inf\{ t\geq0:\|\mZ(t)\|\geq L\},$$
		and
		$$\tilde \rho^\eps\eqdef\inf\{ t\geq0:\|(\mZ-\mZ_\eps)(t)\|\geq \eps^{\frac\kappa4}\}.$$
		Then using Lemma \ref{lem:con1}--Lemma \ref{lem:dif1} and \eqref{dif:esb}, \eqref{bd:cRY}, \eqref{bdd:mZ}  above, \eqref{boundq} and similar argument as \cite[Section 4]{ZZ18} we have for $L, L_i\geq1$ with $i=1,2,3$,
			\begin{equs}[convergence]\sup_{t\in[0,\tau_L\wedge \tau_{L_1}^\varepsilon\wedge \rho_{L_2}^\varepsilon\wedge\bar{\rho}_{L_3}\wedge \tilde \rho^\eps]}\|\psi^{\varepsilon}(t)-\phi(t)\|_{\bC^{-\frac12-\kappa}}\rightarrow^{\bP}0, \quad\varepsilon\rightarrow0.\end{equs}
		Moreover, we have the following estimates: for $\eta>0$
		$$\aligned &\bP(\sup_{t\in[0,T]}\|\psi^\varepsilon-\phi\|_{\bC^{-\frac12-\kappa}}>\eta)\\\leq&\, \bP(\sup_{t\in[0,\tau_L\wedge \tau_{L_1}^\varepsilon\wedge \rho_{L_2}^\varepsilon\wedge\bar{\rho}_{L_3}\wedge \tilde \rho^\eps]}\|\psi^\varepsilon-\phi\|_{\bC^{-\frac12-\kappa}}>\eta)+\bP(\tau_L\wedge\rho_{L_2}^\varepsilon\wedge\bar{\rho}_{L_3}\wedge \tilde \rho^\eps>\tau_{L_1}^\varepsilon)
\\&+\bP(T>\tau_L)+\bP(T>\tilde \rho^\eps)+\bP(T>\rho_{L_2}^\varepsilon)+\bP(T>\bar{\rho}_{L_3}).\endaligned$$
		The first term goes to zero  as $\varepsilon\rightarrow0$ by \eqref{convergence}. Also  for $L_1>L+\eta$
		$$\bP(\tau_L\wedge\rho_{L_2}^\varepsilon\wedge\bar{\rho}_{L_3}\wedge \tilde \rho^\eps>\tau_{L_1}^\varepsilon)\leq \bP(\sup_{t\in[0,\tau_L\wedge \tau_{L_1}^\varepsilon\wedge \rho_{L_2}^\varepsilon\wedge\bar{\rho}_{L_3}\wedge \tilde \rho^\eps]}\|\psi^\varepsilon-\phi\|_{\bC^{-\frac12-\kappa}}>\eta),$$
		which goes to zero by \eqref{convergence}.
		 The third term tends to zero as $ L$ go to $\infty$ by Theorem \ref{th:phi4}. The fourth term goes to zero as $\eps\to0$ by \eqref{bdd:mZ}.
		The last two terms go to zero uniformly over $\varepsilon\in(0,1)$ as $L_2,L_3$ go to $\infty$ by Lemma \ref{lem:bdZ} and \eqref{bd:mZ}.
		Thus the result follows.
		\end{proof}

	\begin{proof}[Proof of Lemma \ref{lem:zz1}] Let $ \tilde\Psi^\eps$ and $\tilde Z$ be solutions to \eqref{eq:mainnew} and \eqref{eq:lin} with general initial conditions, respectively.
		By the general results of \cite{HM18},
		$(\tilde\Psi^\eps, \tilde Z)$ is a Markov process on $(\bC^{-\frac12-\kappa})^{2}$, and we denote by $(P_t^\eps)_{t\geq0}$ the associated Markov semigroup.  To derive the desired structural properties about the limiting measure, we will follow the Krylov-Bogoliubov construction with a specific choice of initial condition that allows to exploit the uniform estimate from Theorem \ref{coming}.
		Namely, we denote by $\tilde\Psi^\eps$ the solution to \eqref{eq:mainnew} starting from
		$Z(0)$ where $Z$ is  the stationary solution to \eqref{eq:lin}, so that the process $\tilde\Psi^\eps-\tilde Z$ starts from the origin. In this case $\tilde Z$ is the same as the stationary solution $Z$.
		By Theorem \ref{coming} for every $T\geq 1$ and $\kappa>0$
		\begin{align*}
			\int_0^T\mathbf{E}\Big(\| (\tilde\Psi^\eps- \tilde Z)(t)\|_{\bC^{-\frac{1+\kappa}2}}^2\Big)\,\dif t & \lesssim T,
		\end{align*}
		where the implicit constant is independent of $T$. In fact, as the uniform bounds are independent of initial data, we can have the same estimates for the solution on $[n,n+1], n\in\mN_0$, i.e. by Proposition \ref{energy}
		\begin{align*}
			&\int_n^{n+1}\mathbf{E}\Big(\| (\tilde\Psi^\eps- \tilde Z)(t)\|_{\bC^{-\frac{1+\kappa}2}}^2\Big)\,\dif t
			\\ \lesssim &\int_n^{n+1}\E\|\psi_l(t)\|_{H^1}^2\,\dif t+\int_n^{n+1}\E\|\psi_h(t)\|_{\bB^{1-2\kappa}_4}^2\,\dif t
			\\&+\int_n^{n+1}\E\|\cZ^{\<3v0m>}(t)\|_{\bC^{-\frac{1+\kappa}2}}^2\,\dif t+\int_n^{n+1}\E\|\sI(\cR(Z))(t)\|_{\bC^{-\frac{1+\kappa}2}}^2\,\dif t
			\\\lesssim&\, \E\|\psi_l(n)\|_{L^2}^2+1\lesssim 1,
		\end{align*}
		where in the last step, we used the fact that the moment bounds for the stochastic terms are uniform with respect to time shifts (see e.g. \cite{MW18}) and Theorem \ref{coming}. In the second step we used  \eqref{boundpsih1} to have
		\begin{align*}\E\int_n^{n+1} \|\psi_h\|^2_{\bB^{1-2\kappa}_4}\,\dif s\lesssim& \,1+\E\Big( K(\|\mZ_\eps\|)\int_n^{n+1}\int_n^s(s-r)^{-1+\frac\kappa2}\|\psi_l(r)\|_{L^4}^2\,\dif r \dif s\Big)
			\\\lesssim &\,1+\E\int_n^{n+1}\|\psi_l\|_{H^1}^2\,\dif s+\E\int_n^{n+1}\cV^\eps(\psi_l)\,\dif s\lesssim1.
		\end{align*}
		Here the proportional constant independent of $n$. Taking sum for $n$ we derive the estimate.
		
		We then apply Krylov-Bogoliubov existence theorem (see \cite[Corollary 3.1.2]{DZ96}) as in the proof of \cite[Lemma 4.2]{SZZ21} to construct an invariant measure $\pi^\eps$ on $(\bC^{-\frac12-\kappa})^2$ for $(P_t^\eps)_{t\geq0}$ and the  desired stationary process $(\Psi^\eps, Z)$, defined to be the unique solution to \eqref{eq:mainnew1} and \eqref{eq:lin} obtained by sampling the initial datum  from $\pi^{\eps}$.
	\end{proof}
	
\renewcommand{\theequation}{C.\arabic{equation}}	
	\section{The ideal Bose gas}\label{sec:ideal-gas}

In the grand canonical ensemble, the non-interacting (ideal) Bose gas in $\bT^3$ at temperature $\lambda^{-1}>0$ and with chemical potential $-\vartheta_0<0$ is described by the Gibbs state on Fock space
\begin{equation*}
    \Gamma_0 = \cZ_0^{-1} e^{-\lambda \dG(-\Delta+\vartheta_0)},\quad \cZ_0 =\Tr e^{-\lambda (-\Delta+\vartheta_0)}.
    \end{equation*}
This model is exactly solvable. In particular, the total number of particles in $\Gamma_0$ is 
\begin{equs}\label{eq:N-kappa0-app}
M = \sum_{k\in \bZ^3} \frac{1}{e^{\lambda (|k|^2+\vartheta_0)}-1},
\end{equs}
which is proportional to $\lambda^{-3/2}$ when $\lambda\to 0$ if $\lambda^{1/2} \lesim \vartheta_0 \lesim \lambda^{-1}$.

The {\em Bose--Einstein condensation} is the phenomenon when the number of particles in the zero-momentum mode $M_0= \Tr [a_0^* a_0 \Gamma_0]=(e^{\lambda \vartheta_0}-1)^{-1}$ is comparable to $M$.
As Bose  \cite{Bose-24} and Einstein \cite{Einstein-24} realized in 1924, in the limit $\lambda\to 0$, i.e. $M\to \infty$, we have
$$
\frac{M_0}{M} \simeq \left[ 1 - \left( \frac{\lambda_{\rm c}}{\lambda} \right)^{3/2} \right]_+,\quad \lambda_{\rm c} =\pi \left( \frac{M}{\zeta(3/2)} \right)^{-2/3},   % \frac{1}{4 \pi} \left( \frac{N}{\zeta(3/2)} \right)^{-2/3},
$$
which implies a phase transition when $\lambda_c/\lambda$ crosses the critical value $1$. The parameter $\lambda_c^{-1}$ is called the critical temperature (here we are in a fixed volume setting, hence $\lambda_c^{-1}\backsimeq N^{2/3}$. A more detailed analysis using (\ref{eq:N-kappa0-app})
shows that  if we fix $\lambda/\lambda_c= \alpha\in (0,\infty)$, then
\begin{equs}\label{eq:BEC-phase-transition-simple}
\begin{cases}
M_0 \backsimeq M, \quad \vartheta_0 \backsimeq  \lambda^{1/2} &\quad \text{if}\quad  \alpha < 1 \quad \text {(condensed phase)}, \\
M_0 \backsimeq 1, \quad \vartheta_0 \backsimeq   \lambda^{-1} &\quad \text{if}  \quad \alpha > 1 \quad \text {(non-condensed phase)}.
\end{cases}
\end{equs}

To understand further details of the phase transition, we need to zoom in at the critical point with a specific rate of convergence. The choice $\vartheta_0 \backsimeq 1$,
as used in \cite{LewNamRou-21,FKSS22} and in the present paper, is special since it ensures that the contributions from all momentum modes (both zero and nonzero) are comparable, thus naturally leading to the emergence of the $\Phi^4_3$ theory. This choice places us in the non-condensed phase, but just slightly above the critical point. The condensed phase requires $\vartheta_0 \backsimeq \lambda^{1/2}$, which we do not consider here; see \cite{DeuNamNap-25} for recent  results in this case (with  a weaker interaction potential).

Concerning the total number of particles, we have the following expansion in terms of $\lambda$.

\begin{lemma}[Particle number of the ideal Bose gas]\label{lem:density-rho0}  For $\lambda>0$ small we have
\begin{equs}[eq:equivalent-sum-integral-rho0]
\sum_{k\in\mZ^3}\frac{\lambda}{e^{\lambda(|k|^2+1)}-1}=\frac{\pi^{3/2} \zeta\left(\frac{3}2\right)}{\sqrt{\lambda}} - 2\pi^2 +{2\pi^2 } \sum_{\ell\in2\pi\mZ^3\setminus\{0\}}\frac{e^{-|\ell|}}{|\ell|}
 + O\left( \sqrt{\lambda} \right).
\end{equs}
\end{lemma}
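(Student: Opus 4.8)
The plan is to expand the Bose occupation factor as a geometric series and resum by the Jacobi theta (Poisson summation) identity, which separates the genuine powers of $\lambda$ from the exponentially small finite-volume corrections. By positivity one may interchange sums and write, for $\lambda>0$,
$$\sum_{k\in\mZ^3}\frac{\lambda}{e^{\lambda(|k|^2+1)}-1}=\lambda\sum_{n\ge1}e^{-n\lambda}\,\Theta(n\lambda),\qquad \Theta(t):=\sum_{k\in\mZ^3}e^{-t|k|^2}.$$
Since $k\mapsto e^{-t|k|^2}$ is Schwartz, Poisson summation in each coordinate gives $\Theta(t)=(\pi/t)^{3/2}\Theta^*(t)$ with $\Theta^*(t):=\sum_{m\in\mZ^3}e^{-\pi^2|m|^2/t}$, so that
$$\sum_{k\in\mZ^3}\frac{\lambda}{e^{\lambda(|k|^2+1)}-1}=\frac{\pi^{3/2}}{\sqrt\lambda}\,\mathrm{Li}_{3/2}(e^{-\lambda})+\frac{\pi^{3/2}}{\sqrt\lambda}\sum_{m\in\mZ^3\setminus\{0\}}\sum_{n\ge1}\frac{e^{-n\lambda}}{n^{3/2}}\,e^{-\pi^2|m|^2/(n\lambda)}.$$

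For the zero mode I would invoke the classical small-$\lambda$ expansion of the polylogarithm: from the Mellin representation $\mathrm{Li}_{3/2}(e^{-\lambda})=\Gamma(3/2)^{-1}\int_0^\infty s^{1/2}(e^{s+\lambda}-1)^{-1}\,ds$ one obtains Robinson's formula $\mathrm{Li}_{3/2}(e^{-\lambda})=\zeta(3/2)+\Gamma(-1/2)\,\lambda^{1/2}+O(\lambda)=\zeta(3/2)-2\sqrt\pi\,\lambda^{1/2}+O(\lambda)$, whence $\pi^{3/2}\lambda^{-1/2}\mathrm{Li}_{3/2}(e^{-\lambda})=\pi^{3/2}\zeta(3/2)\lambda^{-1/2}-2\pi^2+O(\sqrt\lambda)$; this produces the first two terms of the statement.

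For the nonzero modes I would, for each fixed $m\ne0$, analyze the inner sum by a Laplace/saddle-point argument. After the substitution $s=n\lambda$ the summand reads $\lambda^{3/2}s^{-3/2}e^{-s-\pi^2|m|^2/s}$ with step $\lambda$, and the exponent $-s-\pi^2|m|^2/s$ is maximized at $s=\pi|m|$ with curvature of order one; hence the relevant $n$ lie in a window of width of order $\lambda^{-1}\gg1$ and the Riemann sum matches the integral up to a relative error exponentially small in $\lambda^{-1}$. This gives $\sum_{n\ge1}\tfrac{e^{-n\lambda}}{n^{3/2}}e^{-\pi^2|m|^2/(n\lambda)}=\lambda^{1/2}\int_0^\infty s^{-3/2}e^{-s-\pi^2|m|^2/s}\,ds+(\mathrm{error})$, where the integral is an explicit modified Bessel function $K_{1/2}$, equal to $\sqrt\pi\,(\pi|m|)^{-1}e^{-2\pi|m|}$. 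Summing over $m\ne0$, the prefactor $\pi^{3/2}\lambda^{-1/2}$ cancels the $\lambda^{1/2}$, and passing to the index $\ell=2\pi m\in 2\pi\mZ^3\setminus\{0\}$ turns this contribution into a constant times $\sum_{\ell\in2\pi\mZ^3\setminus\{0\}}e^{-|\ell|}/|\ell|$, the constant being fixed by the above evaluation; this is the third term.

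The main obstacle is to make the error in the last display uniform in $m$ and summable. What is needed is a quantitative bound of the shape
$$\Big|\sum_{n\ge1}\frac{e^{-n\lambda}}{n^{3/2}}\,e^{-\pi^2|m|^2/(n\lambda)}-\lambda^{1/2}\int_0^\infty s^{-3/2}e^{-s-\pi^2|m|^2/s}\,ds\Big|\lesssim\lambda^{3/2}\,e^{-c|m|}$$
(any decay in $|m|$ summable over $\mZ^3$ would do), so that after multiplication by $\pi^{3/2}\lambda^{-1/2}$ and summation over $m\ne0$ the total error is $O(\lambda)$; such a bound follows from the Euler--Maclaurin (or Poisson) correction to the $n$-sum together with the Gaussian concentration of the summand around $n\simeq\pi|m|/\lambda$, and can equivalently be organized by comparing $\lambda\sum_{n\ge1}e^{-n\lambda}(\pi/n\lambda)^{3/2}\big(\Theta^*(n\lambda)-1\big)$ with its continuous analogue and using the rapid decay of $\Theta^*(t)-1$ as $t\to0$. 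Collecting the three contributions and absorbing all remainders into $O(\sqrt\lambda)$ gives the lemma.
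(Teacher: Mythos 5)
Your proposal takes essentially the same route as the paper: expand the Bose factor as a geometric series, apply Poisson summation, extract the leading asymptotics from the $m=0$ dual mode (which reproduces the continuum integral, hence the polylogarithm), and evaluate the finite-volume correction from the $m\ne0$ modes via the Yukawa/$K_{1/2}$ identity together with a Riemann-sum error estimate. The only organizational difference is that you split dual modes immediately after the Jacobi theta identity $\Theta(t)=(\pi/t)^{3/2}\Theta^*(t)$, whereas the paper first subtracts the integral via the box-convolution device $f(x)=e^{-n\lambda|x|^2}-\1_{[-1/2,1/2]^3}*e^{-n\lambda|\cdot|^2}(x)$ before Poisson summation; these are equivalent decompositions, since $\sum_k e^{-t|k|^2}-\int e^{-t|p|^2}\,\d p=(\pi/t)^{3/2}\sum_{m\ne0}e^{-\pi^2|m|^2/t}$. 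Your Robinson expansion of $\mathrm{Li}_{3/2}$ is the same content as the paper's expansion \eqref{eq:expansion_integral} of the integral, and your proposed Riemann-sum error estimate (which you only sketch) is handled in the paper by bounding $|f(n)-\int_{n-1}^n f|$ by $\int_{n-1}^n|f'|$ and exploiting the Gaussian suppression $e^{-|\ell|^2/(4\lambda t)}$, yielding a per-$\ell$ error $O(\lambda/|\ell|^4)$ which is summable; your sketch is a reasonable placeholder for the same step.

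The one genuine issue is the numerical constant, which you wave away with ``the constant being fixed by the above evaluation; this is the third term.'' If you actually carry your own formulas through, the nonzero-mode contribution per $m\ne0$ is $\pi^{3/2}\lambda^{-1/2}\cdot\lambda^{1/2}\,\sqrt{\pi}\,(\pi|m|)^{-1}e^{-2\pi|m|}=\pi\,|m|^{-1}e^{-2\pi|m|}$, and re-indexing by $\ell=2\pi m$ (so $|m|=|\ell|/(2\pi)$) gives $2\pi^2\sum_{\ell\in 2\pi\mZ^3\setminus\{0\}}e^{-|\ell|}/|\ell|$, not $\pi^{3/2}\sum_\ell e^{-|\ell|}/|\ell|$ as the lemma states. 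The Jacobi identity, the Bessel evaluation $\int_0^\infty s^{-3/2}e^{-s-a^2/s}\,\d s=\sqrt{\pi}\,a^{-1}e^{-2a}$, and Robinson's formula are all standard and easy to cross-check, so this is a real discrepancy: either you have lost a factor somewhere, or the stated constant (and hence $C_0$ in \eqref{def:C0}) carries a typo. For what it is worth, the prefactor $2^{-3}$ in the paper's \eqref{eq:Poisson} does not agree with the theta inversion (which gives $\pi^{3/2}$), and the Yukawa evaluation inside \eqref{eq:Riemann-sum-Yukawa} reads $\int_0^\infty(4\pi t)^{-3/2}e^{-t-|\ell|^2/(4t)}\,\d t=e^{-|\ell|}/|\ell|$, whereas the heat-kernel representation of the Green's function of $1-\Delta$ on $\R^3$ gives $e^{-|\ell|}/(4\pi|\ell|)$; these two compounding prefactor issues reproduce exactly the factor $2\sqrt{\pi}$ separating your $2\pi^2$ from the stated $\pi^{3/2}$. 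You should resolve the constant before declaring the proof complete, rather than asserting agreement.
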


This formula is the same as \cite[Eq. (B.5)]{LewNamRou-21}, except for a rescaling by the volume $(2\pi)^3$ and a quantitative error estimate.

\begin{proof}
		Using $\lambda (e^{\lambda (|k|^2+1)}-1)^{-1}= \lambda \sum_{n\geq1} e^{-n\lambda (|k|^2+1)}$ and the Poisson summation formula
		$$\sum_{k\in\mZ^3}\widehat{f}(k)={(2\pi)^{3}}  \sum_{\ell\in2\pi \mZ^3}f(\ell)$$
		for $f(x)=e^{-n\lambda |x|^2}-\1_{[-\frac12,\frac12]^3}*e^{-n\lambda| x|^2}$, we have
		\begin{equs}[eq:Poisson]
			&\lambda\left(\sum_{k\in \mZ^3}\frac{1}{e^{\lambda(|k|^2+1)}-1}-\int_{\R^3}\frac{\d p}{e^{\lambda(|p|^2+1)}-1}\right) \nn\\
			&=\lambda\sum_{n\geq1}e^{-n\lambda}\sum_{k\in\mZ^3}\left(e^{-n\lambda |k|^2}-\int_{(-\frac12,\frac12)^3}e^{-n\lambda |k-p|^2}\,\d p\right)\nn\\
			&={\pi^{3/2}} \lambda \sum_{n\geq1}\frac{e^{-n\lambda}}{(n\lambda)^{\frac{3}2}}\sum_{\ell\in2\pi \mZ^3\setminus\{0\}} e^{-\frac{|\ell|^2}{4n\lambda}}.
		\end{equs}
		For every $\ell\in  2\pi \mZ^3\setminus\{0\}$, we have the convergence of the Riemann sum
		\begin{equs}[eq:Riemann-sum-Yukawa]
			 \sum_{n\geq1}\frac{\lambda}{(4\pi n\lambda)^{\frac{3}2}} e^{-n\lambda - \frac{|\ell|^2}{4n\lambda}}  \to \int_0^\ii \frac{1}{(4\pi t)^{\frac{3}2}} e^{-t -\frac{|\ell|^2}{4t}}\,\d t ={\frac1{4\pi}} \frac{e^{-|\ell|}}{|\ell|},
		\end{equs}
		when $\lambda\to 0$. Here in the last equality we used the Fourier transform of Yukawa potential, see~\cite[Theorem 6.23]{LieLos-01}. In fact, the convergence rate in \eqref{eq:Riemann-sum-Yukawa} can be estimated as
		\begin{align*}
			&\left| \sum_{n\geq1}\frac{\lambda }{(n\lambda)^{\frac{3}2}} e^{-n\lambda - \frac{|\ell|^2}{4n\lambda}} -  \int_0^\ii \frac{1}{t^{\frac{3}2}} e^{-t -\frac{|\ell|^2}{4t}}\,\d t \right| \\
			&= \left| \sum_{n\geq1} \left( \frac{\lambda}{(n\lambda)^{\frac{3}2}} e^{-n\lambda - \frac{|\ell|^2}{4n\lambda}} -  \int_{n-1}^n \frac{\lambda}{(t\lambda)^{\frac{3}2}} e^{-\lambda t -\frac{|\ell|^2}{4\lambda t}}\,\d t \right) \right| \\
			&\le \sum_{n\ge 1} \left| \frac{\lambda}{(n\lambda)^{\frac{3}2}} e^{-n\lambda - \frac{|\ell|^2}{4n\lambda}} -  \int_{n-1}^{n} \frac{\lambda}{(t\lambda)^{\frac{3}2}} e^{-\lambda t -\frac{|\ell|^2}{4\lambda t}}\,\d t \right|  \le  \sum_{n\ge 1} \int_{n-1}^{n} \left|  \partial_t \left(  \frac{\lambda}{(t\lambda)^{\frac{3}2}} e^{-\lambda t -\frac{|\ell|^2}{4\lambda t}} \right) \right| \d t \\
			&\lesim \int_0^\infty \left(\frac{1}{\sqrt \lambda t^{5/2}} + \frac{\sqrt \lambda}{t^{3/2}} + \frac{|\ell|^2}{4\lambda^{3/2} t^{7/2}} \right)  e^{-\lambda t -\frac{|\ell|^2}{4\lambda t}} \d t \\
			&\lesim \int_0^\infty  \left(\frac{1}{\sqrt \lambda t^{5/2}} \left( \frac{\lambda t}{|\ell|^2}\right)^{3}  + \frac{\sqrt \lambda}{t^{3/2}} \left( \frac{\lambda t}{|\ell|^2}\right)^{2}  + \frac{|\ell|^2}{\lambda^{3/2} t^{7/2}} \left( \frac{\lambda t}{|\ell|^2}\right)^{4}  \right) e^{-\lambda t} \d t   \\
			&\lesim  \frac{\lambda^{5/2}}{|\ell|^4} \int_0^\infty t^{1/2} e^{-\lambda t} \d t \lesim \frac{\lambda}{|\ell|^4}.
		\end{align*}
		Here we used $e^{-|\ell|^2/(4t \lambda)} \lesssim_s (t\lambda/|\ell|^2)^s$ for all $s\in \{2,3,4\}$. Since $|\ell|^{-4}$ is summable in $2\pi \mathbb{Z}^3\backslash\{0\}$, we deduce from \eqref{eq:Poisson} and \eqref{eq:Riemann-sum-Yukawa}
		\begin{equs}[eq:equivalent-sum-integral-rho0]
			\sum_{k\in\mZ^3}\frac{\lambda}{e^{\lambda(|k|^2+1)}-1}= \int_{\R^d}\frac{\lambda \d k}{e^{\lambda (|k|^2+1)}-1}+ \sum_{\ell \in 2\pi \mathbb{Z}^3\backslash\{0\}} {2\pi^2} \frac{e^{-|\ell|}}{|\ell|} + O\left( \sqrt{\lambda} \right).
		\end{equs}
		The conclusion follows from  from the well-known formula
		\begin{equs}[eq:expansion_integral]
			\int_{\R^3}\frac{\lambda^{3/2} \d k}{e^{\lambda (|k|^2+1)}-1} =  \int_{\R^3}\frac{\d k}{e^{|k|^2+\lambda}-1}
			=  \pi^{3/2}\zeta\left(\frac{3}2\right) -
			2 \pi^2 \sqrt{\lambda} +O(\lambda)_{\lambda \to0^+}.
		\end{equs}
	\end{proof}

\end{document}